\providecommand\email[1]{\href{mailto:#1}{#1}}
\newcommand\abox[1]{\begin{minipage}{0.4\textwidth}\small\noindent\ignorespaces 
#1\end{minipage}}
\providecommand\addressi[1]{\def\theAddressi{\abox{\footnotemark[1]#1}}}
\def\theAddressi{}
\providecommand\addressii[1]{\def\theAddressii{\abox{\footnotemark[2]#1}}}
\def\theAddressii{}
\title{{\large{General Yang-Mills type gauge theories for $p$-form gauge fields:\\ 
{From physics-based ideas to a mathematical framework} \\ \emph{or} From Bianchi identities to twisted Courant algebroids}}}
\author{Melchior Gr\"utzmann\footnotemark[1]{}$\ $ and Thomas Strobl\footnotemark[2]}
\email{melchiorG@gMail.com}
\email{Strobl@math.univ-lyon1.fr}
\date{July 17, 2014}
\newcommand\void[1]{}
\numberwithin{equation}{section}
\newtheorem{vdef}{Definition}[section]
\newtheorem{prop}[vdef]{Proposition}
\newtheorem{lemma}[vdef]{Lemma}
\newtheorem{thm}{Theorem}[section]
\newcommand\qedSymbol{\ensuremath{\Box}}
\renewenvironment{proof}[1][]{\par\noindent\textbf{Proof#1: }\newline\noindent\ignorespaces}{\hfill\qedSymbol\par\medskip}
\theoremstyle{definition}
\newtheorem{rem}[vdef]{Remark}
\newtheorem{example}[vdef]{Example}
\newcommand\defbb[2]{\newcommand#1{{\mathbb{#2}}}}
\newcommand\deffrak[2]{\newcommand#1{{\mathfrak{#2}}}}
\newcommand\defcal[2]{\newcommand#1{{\mathcal{#2}}}}
\newcommand\redefcal[2]{\renewcommand#1{{\mathcal{#2}}}}
\newcommand\defrm[2]{\newcommand#1{{\mathrm{#2}}}}
\def\[{\begin{equation}}
\def\]{\end{equation}}
\def\beq{\begin{equation}}  
\def\eeq{\end{equation}}  
\def\ba{\begin{eqnarray}}
\def\ea{\end{eqnarray}}
\def\bal{\begin{align}}  
\def\eal{\end{align}}
\def\be{\begin{equation}}
\def\ee{\end{equation}}
\def\<{(}
\def\>{)}
\def\1{\mathbb{1}}
\def\2{\frac12}
\def\:{\colon}
\def\A{\mathcal{A}}
\DeclareMathOperator\ad{ad}
\def\alt{{\text{alt.}}}
\def\b{\beta}
\def\c{\gamma}
\redefcal\C{C}
\DeclareMathOperator\CDO{CDO}
\def\conn_#1#2{\nabla_{\!#1\,}#2}
\def\smooth{\mathcal{C}^\infty}
\def\:{\colon}
\def\cycl{{\text{cycl.}}}
\defcal\CD{D}
\defrm\ud{d}
\def\udCE{{\ud_{\mathrm{CE}}}}  
\defrm\uD{D}
\def\D{\uD}
\def\ED{{{}^E\!{}\mathrm{D}}}
\DeclareMathOperator\Der{Der}
\let\Euler=\epsilon
\def\e{\varepsilon}
\def\ue{\epsilon}
\def\embed{\hookrightarrow}
\DeclareMathOperator\End{End}
\DeclareMathOperator\ev{ev}
\defcal\CF{F}
\def\lie{\g}
\def\Eco{\Gamma}
\def\Econn_#1{\Ena_{\!#1\,}}
\deffrak\g{g}
\deffrak\h{h}
\DeclareMathOperator\Hom{Hom}
\defcal\I{I}
\DeclareMathOperator\id{Id}
\def\Koszul{\Euler}  
\DeclareMathOperator\Lie{Lie}
\def\L_#1{\mathcal{L}_{#1\,}}
\def\laction{\circlearrowright}
\def\m{\mu}
\defcal\CM{M}  
\def\Mo{{\CM_1}}
\def\Mt{{\CM_2}}
\def\M{\CM}
\defbb\N{N}
\def\Ena{{{}^E{}\nabla}}
\def\Econn_#1{\Ena_{\!#1\,}}
\def\Wconn_#1{\Wna_{\!#1\,}}
\def\Wna{{{}^W\nabla}}
\def\EO{{{}^E{}\Omega}}
\DeclareMathOperator\Part{Part} 
\renewcommand\pmatrix[2]{\left(\begin{array}{#1}#2\end{array}\right)}
\def\pt{\mathrm{pt}}
\def\Qo{{Q_1}} 
\def\Qt{{Q_2}}
\def\R{\Real}
\defbb\Real{R}
\def\ER{{{}^E\!{}R}}
\DeclareMathOperator\rk{rk}
\providecommand\eqref[1]{(\ref{#1})}
\def\s{\sigma}
\def\x{\sigma}  
\def\S{\Sigma}
\def\smooth{{C^\infty}}
\def\susu{\twoheadrightarrow}
\def\lto{\longrightarrow}
\def\xto{\xrightarrow}
\DeclareMathOperator\Unsh{Un}
\defbb\V{V}  
\defbb\VV{V}  
\def\W{W}
\deffrak\X{X}
\def\XM{{\X_\bullet(\CM)}}
\def\Xv{{\X_\bullet^{\mathrm{vert}}(\CM)}}
\def\Xz{\X_0^{\mathrm{vert}}(\CM)}
\def\Xmo{\X_{-1}^{\mathrm{vert}}(\CM)}
\defbb\Z{Z}
\newcommand\zero{\item[0.]}
\newcommand\newpassage{4mm}
\begin{document}
\maketitle
\vskip-2ex
\centerline{\def\thefootnote{\fnsymbol{footnote}}%
  \theAddressi \quad \theAddressii \\
}
\medskip
\begin{abstract}
 Starting with minimal requirements from the physical experience with higher gauge theories, i.e.~gauge theories for a tower of differential forms of different form degrees, we discover that all the structural identities governing such theories can be concisely recombined into a so-called Q-structure or, equivalently, an $L_\infty$-algebroid. This has many technical and conceptual advantages: Complicated higher bundles become just bundles in the category of Q-manifolds in this approach (the many structural identities being encoded in the one operator $Q$ squaring to zero), gauge transformations are generated by internal vertical automorphisms in these bundles and even for a relatively intricate field content the gauge algebra can be determined in some lines only and is given by the so-called derived bracket construction. 

This article aims equally at mathematicians and theoretical physicists; each more physical section is followed by a purely mathematical one. While the considerations are valid for arbitrary highest form-degree $p$, we pay particular attention to $p=2$, i.e.~1-~and 2-form gauge fields coupled non-linearly to scalar fields (0-form fields). The structural identities of the coupled system correspond to a Lie 2-algebroid in this case and we provide different axiomatic descriptions of those, inspired by the application, including e.g.~one as a particular kind of a vector-bundle twisted Courant algebroid. 
\end{abstract}
\newpage
\tableofcontents
\section{Introduction and Results}
\subsection{Motivation}
These are basic and general considerations on higher gauge theories, i.e.~gauge theories where the standard connection 1-forms of Yang-Mills (YM) gauge theories are replaced by a whole tower of such gauge fields of different form degrees. Standard Yang-Mills theories are governed by some (finite-dimensional, quadratic) Lie algebra $\lie = \Lie(G)$, where $G$ is the  structure group of the underlying principal bundle. On the other hand, when one attempts to construct an interacting theory of 1-form gauge fields, various consistency conditions would lead one to a constraint that has to be satisfied by the interaction constants $C^a_{bc}$, namely\footnote{There is a second condition that one obtains: The constants $\eta_{ab}$ entering the ``free'' kinetic term in a functional is also constrained; it has to correspond to an ad-invariant scalar product.}
\begin{equation} \label{Jacobi}
C^a_{eb} C^e_{cd} + \cycl(bcd) = 0 \, .
\end{equation}
This constraint is easily recognized as the Jacobi identity of a Lie algebra. Thus, also in the other direction, one is led by itself to a Lie algebra $\lie$ which, in a second step, then can be related to a principal bundle as mentioned before. In this paper we want to follow this strategy for the much more involved context of gauge fields of form degrees up to some $p \in \N$, including 0-forms for completeness. 

At the place of \eqref{Jacobi} one then obtains a much more involved set of equations for the unknown parameters introduced in an ansatz for the multiple interactions. Due to the fact that we permit also the 0-forms, these are even differential equations. It is one of the goals of this paper to identify this coupled system of structural equations with an underlying mathematical structure which it corresponds to, in generalization of the incomparably simpler relation of \eqref{Jacobi} with a structural Lie algebra $\lie$, needed for the construction of an ordinary Yang-Mills theory. 

Another goal is, however, to provide one mathematical formulation of what ``consistency conditions'', often imposed in more physically oriented constructions, can mean. In particular, we aim at isolating a basic or minimal set of such requirements, which we believe that the physics community would usually want to have and which we found to be fulfilled in all the examples we know or are aware of. 

Two, three remarks are in place here in this context: First, there exists a very general and elegant approach to the \emph{deformation} of free gauge theories to interacting ones, given by a BRST-BV-formalism, cf.~\cite{BBH00} for a review. There one starts with just the kinetic terms of the fields that one wants to consider and the related, simple-to-find free gauge transformations. The result of the procedure provides the most general possible action that contains the original terms to lowest order and that preserves the ``number'' of the gauge symmetries, although not their form; e.g.~starting with $r$ 1-form fields $A^a$ and the free (abelian) action functional 
\begin{equation}\label{Skin}
S[A] = \int_\Sigma \eta_{ab} \, \ud A^a \wedge *\ud A^b \, ,
\end{equation}
one is led to the most general Yang-Mills functional for a quadratic Lie algebra $\lie$ of dimension $r$ (cf.~the above discussion including the previous footnote), the abelian curvature or field strength $\ud A^a$ being effectively replaced by its non-abelian generalization $F^a = \ud A^a + \frac{1}{2}C^a_{bc}A^b A^c$ in the above functional. There are several advantages of this procedure: except for possible problems of global nature or of convergence (one works in the setting of formal power series) the approach is completely systematic and extensive, providing the most general (formal) deformation moreover already up to trivial ones, that can be induced by changes of coordinates on the space of fields and that is captured by BRST-exact terms vanishing in the cohomological treatment; furthermore, the resulting theory already comes in its BV-form, thus ready for a subsequent quantization. There is, however, one decisive drawback of this formalism: The calculations are generically very technical and lengthy and the resulting constraints on the parameters of the deformation do not come with a mathematical interpretation. This does not pose a problem if they are of a comparatively simple form like in Equation \eqref{Jacobi} (accompanied by a second condition expressing the ad-invariance of $\eta$ in this case), where one retrieves a more mathematical interpretation rather easily. But it does so in a context like the present one, where the structural equations that one finds can easily fill a page (cf., e.g., results like in \cite{Bizdadea09,Bizdadea09b,Bizdadea10}).

Second, what is also transparent from the above YM example already, in the case of blowing up the number of equations by adding another permitted form degree, it may be useful and will prove so to separate equations that replace the defining condition of the Lie algebra, Eq.~\eqref{Jacobi}, from objects defined on it, like the ad-invariant tensor $\eta$, that is needed for the functional  (cf.~the discussion above) and thus the dynamics of the physical fields, but not yet for the underlying geometry of a principal bundle. Note that a metric $h$ on $\Sigma$ entered the functional (in terms of the Hodge duality operation $*$) and in more complicated theories the gauge transformations leaving invariant a functional can and will depend on $h$ also (cf.~\cite{DSM} for an example), while some important part of the gauge transformations, to be specified below, will prove to be independent of this additional structure. Both these observations lead us to focus on generalizations of the curvature or field strength in a first step, where neither $\eta$ nor $h$ enters. 

This leads us also to postpone questions of the global bundle structure to a separate investigation: one may want to find the generalization of the structural Lie algebra, that governs the theory already on a local level, in a first instance. 
This restricted focus permits one to work in a local chart on the spacetime or base manifold $\Sigma$, to replace a connection by a set of 1-forms (in the ordinary YM case, but generalized to include higher form degrees here) and to deal with the question of how to represent an (adequately generalized) connection on a possibly non-trivial (higher) bundle in a second step only.\footnote{In fact, this investigation appeared in the mean time already, at least on the arXiv, cf.~\cite{KS07}. The present article thus can be seen, not solely but also, as a possible introduction or motivation from the perspective of physics to \cite{KS07},
which has an essentially mathematical focus. We will describe the resulting global picture in subsection \ref{sec:phys} below, cf., e.g., Fig.~\ref{fig1}.} 

\subsection{Structure of the article}
The paper is organized as follows.  In the two subsections \ref{sec:phys} and \ref{sec:math} we summarize the main results of the present paper. 

In Section~\ref{s:Bian} we expand in detail on the first part of what is explained in subsections \ref{sec:phys} below. In particular, we draw the necessary mathematical conclusions from the physical requirements, which we believe to be minimal or mandatory for any higher gauge theory. The main result of these considerations is summarized in Theorem~\ref{thm:phys1} below. In Section~\ref{s:examQ} we unravel the geometric structure of Q2-manifolds, as they appear as what replaces the structural Lie algebra of an ordinary YM gauge theory when considering a higher gauge theory with 0-form, 1-form, and 2-form gauge fields. They are in equivalence with Lie 2-algebroids or 2-term $L_\infty$-algebroids. In Section~\ref{s:examQ} we provide, in particular, a description generalizing the one of a Lie 2-algebra as a crossed module, in the Appendix~\ref{s:Linf} we complement this by a descriptions in terms of multiple brackets as usual for $L_\infty$-algebras (cf., also, subsection \ref{sec:math} below).

In Section~\ref{s:gauge} we take another step in the construction of a ``physically acceptable'' higher gauge theory, focusing on the gauge symmetries in this context. The main result of this investigation is summarized in Theorems~\ref{thm:phys2} and \ref{theo:Lie}) below. Having been led to regard derived brackets in this context, we take another view on Q2-manifolds or Lie 2-algebras from this perspective in the subsequent Section~\ref{sec:derived}. This motivates the definition of what we call a V-twisted Courant algebroid. We derive its elementary properties in that section and provide sufficient conditions under which it is equivalent to a Lie 2-algebroid again. 

Appendix~\ref{s:Linf} reviews briefly the relation between Q-structures and $L_\infty$-algebras/ algebroids in general.  

\subsection{Chronology of the results}\label{s:hist}

Sections \ref{s:Bian}, \ref{s:examQ}, and \ref{s:gauge} were written already in 2005, Section  \ref{sec:derived} between 2005 and 2008 (as visible also by several talks of TS and a partial distribution of preliminary versions). The Introduction and the Appendix, as well as some fine-tuning of the remainder, is from the end of 2013/beginning of 2014, where we finally finished the article. For TS the results of this work as of 2005 were among the main motivations for starting his work with Alexei Kotov on characteristic classes and Q-bundles, the first results of which appeared as a preprint in 2007 on the arXiv and is now published right behind this article in the present journal \cite{KS07}. We attempted to leave the Sections  \ref{s:Bian}, \ref{s:examQ}, \ref{s:gauge}, and \ref{sec:derived} as much as possible in their original form from the time of 2005 to 2008, including an updated perspective only in the two subsections to follow (to which we then also moved important formulas and statements from those older sections, like Equation \eqref{ideal} and Theorem \ref{thm:phys1}, respectively).

For what concerns our work summarized in section  \ref{sec:derived}, we found out later that there were parallel developments arriving at a similar notion of a vector-bundle twisted Courant algebroid as the one given in the present paper. It is interesting to see that quite different motivations and starting points led to the same or similar objects of interest at about the same time.  In the following, we comment on the relation in more detail: Chen, Liu and Sheng have introduced $E$-Courant algebroids in \cite{CLS08}.  An $E$-Courant algebroid as defined by them is a $V$-twisted Courant algebroid as defined in this paper after identifying their vector bundle $E$ with our $V$ iff in addition their inner product is surjective to this vector bundle.  Conversely a $V$-twisted Courant algebroid is a $V$-Courant algebroid in their sense iff the anchor $\rho$ factors through $\mathfrak{D}V$, the derived module of $V$.  Similarly, Li-Bland introduced  an $AV$-Courant algebroid in \cite{LiB11}. This is a  $V$-twisted Courant algebroid in our sense and, conversely, every $V$-twisted Courant algebroid is an $AV$-Courant algebroid iff the anchor map factors through a Lie algebroid $A$ and the action of the twisted Courant algebroid on $V$ induces an action of $A$ on $V$.
  

\subsection{Acknowledgments}
We thank the TPI in Jena and in particular A.~Wipf for the support of our work in 2005 and 2013, the ESI in Vienna for support in 2007, as well as the ICJ in Lyon for likewise support in 2008 and 2013. 
We also thank the anonymous referee for helpful suggestions, such as a change of the subtitle. T.S.~thanks A.~Kotov for many valuable discussions during our stimulating parallel work. 

\subsection{Mathematical framework for physical theories} \label{sec:phys}
Let us now describe our approach in some detail and summarize the main results of the paper. As mentioned above, we consider a tower of differential forms. Let us denote them collectively by $A^\alpha$ and call them the ``gauge fields'' of the theory under investigation. So $A^\alpha$ collects differential forms of various form degrees, in general at each level of a different number: $A^\alpha = (X^i, A^a,B^D, \ldots)$, where $(X^i)_{i=1}^n$ are $n$ 0-forms (or scalar fields), $(A^a)_{a=1}^r$ are $r$ 1-forms (called, in a somewhat misleading way, ``vector fields'' in the physics literature), $(B^C)_{C=1}^s$ are $s$ 2-forms etc., up to some highest form degree $p$. At this stage of the consideration we do not want to attach any geometrical interpretation to those differential forms yet, treating them in a most elementary way possible in the first step and believing in the power of ``physically oriented'' argumentation, supposed to lead us to a geometrically interesting picture more or less by itself in the end. 

Next we make a most general ansatz for the ``field strengths'' $F^\alpha$ compatible with the form degrees and not using any additional structures (like a metric $h$ on $\Sigma$), since we believe that these objects, which in the end should represent in one way or another some generalized curvature(s), should not depend on anything else but the data specified up to here. Most physicists would agree, moreover, that $F^\alpha$ should start with the exterior derivative acting on the gauge fields, $F^\alpha = \ud A^\alpha + \ldots$, and that this should be complemented by terms, represented by  the dots, that are polynomial in the gauge fields of form degree 1 and higher, but, to stay as general as possible, with a priory unrestricted coefficient functions of the scalar fields $X^i$. This ansatz can be considered as still too restrictive, however. First, one may possibly want to replace the first term in $F^\alpha$ by a coefficient matrix $M^\alpha_\beta$---depending on the $X^i$s---times $\ud A^\beta$; second, one may want to permit also $\A^\beta \wedge \ud A^\gamma$-contributions to the remaining  terms indicated by the dots above. However, as we will argue in the main text, both of these apparent generalizations of our ansatz can be mapped in a first transformation to a situation of the simpler form above, provided only that the matrix $M^\alpha_\beta$ is invertible.\footnote{Correspondingly, also theories that have field strengths containing Chern-Simons type of contributions can be tackled by the present formalism and, e.g., the Theorem \ref{thm:phys1} below is applicable also in such situation. This is exemplified for instance in \cite{SLS14}.} Thus, anticipating this argument, without loss of generality we can thus assume the field strengths to have the form
\begin{equation} \label{curv}
F^\alpha = \ud A^\alpha + t^\alpha_\beta(X) A^\beta + \tfrac{1}{2} C^\alpha_{\beta \gamma}(X) A^\beta \wedge A^\gamma +\tfrac{1}{6} H^\alpha_{\beta \gamma\delta}(X) A^\beta \wedge A^\gamma \wedge A^\delta + \ldots ,
\end{equation}
where the coefficients or coefficient functions  $t^\alpha_\beta$,  $C^\alpha_{\beta \gamma}$, $H^\alpha_{\beta \gamma\delta}$, $\ldots$ are assumed to be fixed for defining the theory (on a local level and at this stage of constructing the theory). Their choice is now assumed to be constrained by some kind of ``physical consistency condition'', like one that would lead to \eqref{Jacobi} in the case of only 1-form gauge fields $A^a$. It is thus now our task and declared goal to specify one or several possible consistency requirements, motivated by a more or less physical argument, to display the resulting constraints on the coefficients in some explicit form, and, in a final step, to give them some algebraic-geometrical meaning so as to unravel what replaces the structural Lie algebra $\lie$ of a standard YM theory in this much more general setting. 

Before turning to these issues two more remarks on the nomenclature and the setting, however. We chose to include scalar fields (i.e.~0-forms) into that what we call gauge fields. In some sense this is unconventional, since they normally describe matter, while the conventional gauge fields (the 1-forms) represent the interaction forces. There are two reasons for this: First, it is suggested by the study of the Poisson sigma model (PSM) \cite{Ikeda94, Str94}: As argued in  \cite{Str04b} and \cite{KS07}, the PSM can be seen as a ``Chern Simons (CS) theory''  of the Lie algebroid $T^*M$ associated to a Poisson manifold $M$; like the CS-theory corresponds to the Pontryagin class, also the PSM, and more generally all AKSZ sigma models \cite{AKSZ}, correspond to characteristic classes (cf.~also \cite{FRS13}). In this context, all the tower of differential forms correspond to a single map or section of an adequate higher bundle, and it in particular also includes the 0-forms. 

The second reason for including the 0-forms is that even if one wants to consider the 0-forms $X^i$ as matter fields, it has advantages to include them right away.\footnote{In this case the ``1-form field strength'' $F^i \equiv \ud  X^i - \rho^i_a(X) A^a$ should be rather interpreted as a (generalized) covariant derivative. All the considerations concerning the ideal $\I$ below will apply also to those.} Suppose one considers an ordinary YM-connection with scalar fields being a section of an associated (vector) bundle induced by a representation $\rho$ of $\lie$
on a vector space or, more generally, by a $\lie$-action on a manifold $M$ that serves as a fiber of the associated bundle. In a local chart, the connection and the section correspond to a collection of 0-forms and 1-forms, $X^i$ and  $A^a$, respectively. On the other hand, $E:=\lie \times M\to M$ becomes a Lie algebroid in a canonical way and the data of the fields $X^i$ and $A^a$ are the same as a vector bundle morphism from $T\Sigma \to E$; moreover, the covariant derivative of the scalar fields and the curvature of the connection combine into a field strength map, cf.~\cite{Str04b,Str09,KS07}. There are many more Lie algebroids
(cf., e.g., \cite{Mack87,CaWe99})
 than those coming from a Lie algebra action $\rho$; if one intends to treat the 0-forms in a second step only, one misses all these additional possibilities. From a more physical perspective,  scalar fields are sometimes seen to deform the structural identities underlying the gauge fields of degrees at least one in a highly nontrivial way (cf, e.g., \cite{Str04b} or some supergravity theories). Thus also from this perspective it seems reasonable to not exclude a geometrical explanation of such structures (in terms of some Lie algebroids for example) from the outset by the restriction to a two-step procedure. 

The second remark concerns the nomenclature. We have chosen to call the $A^\alpha$-fields \emph{gauge fields} and the combinations $F^\alpha$ as in \eqref{curv} \emph{field strengths}, following a ``physics oriented'' wording. We in particular avoid to call them generalized connections and curvatures, as it may seem more adequate from a (superficial) mathematical perspective. The reason for this is the following. As already obvious from the previously given example of the PSM, that what replaces the structural Lie algebra $\lie$ (as well as its integration $G$) of an ordinary principal bundle can and in general will be a bundle itself, in fact, even some kind of higher bundle. So, the fibers of what replace principal and associated bundles will be bundles themselves (and it may be useful  to choose auxiliary connections for technical reasons, that must be distinguished strictly from the ``dynamical'' gauge fields $A^\alpha$). But even worse, a Poisson manifold for example does not always give rise to an integrating Lie groupoid, there can be obstructions in the integration process \cite{CrF01}, while still the above mentioned characteristic classes as well as the PSM do exist as meaningful objects for any Poisson manifold $M$. To cover all these examples in a common mathematical framework it thus is indicated to work with bundles (over the base $\Sigma$ or its tangent space $T\Sigma$) where one does not need the analogue of the structural Lie group $G$, but it is sufficient to work with its Lie algebra $\lie$. In the ordinary setting this is the so-called Atiyah algebroid $A\to T\Sigma$ associated to a principal bundle $P\to \Sigma$, where $A$ can be identified with $TP/G$ and, as a bundle over $T\Sigma$, its fiber is isomorphic to $\lie$ only (cf.~\cite{KS07} for further details). Now a connection in $P$, i.e.~an ordinary gauge field in physics language, corresponds to a \emph{section} in $A$, to be distinguished from connections in $A$, that exist as well as on any bundle. (The generalization of $A$ will be what was called a $Q$-bundle in \cite{KS07}.) These potential sources of a possible confusion with other natural connections lead us to avoid calling $A^\alpha$ a generalized connection.

We now come back to our principal goal, finding conditions for the restriction of the coefficients appearing in \eqref{curv}. The approach nearest at hand may seem to look at gauge transformations and to require some generalized form of invariance of the field strengths. This, however, has a decisive disadvantage: There is a lot of freedom in defining gauge transformations, and, as examples show (cf., e.g., the Dirac sigma model (DSM) \cite{DSM}), they may depend on several additional structures not needed for the definition of the above field strengths.

The main idea for the departure is, therefore, to use a generalization of the Bianchi identities. Normally, they would contain a covariant derivative, that we have not yet defined as well, not yet knowing the adequate algebraic-geometric setting to use. Not wanting to use any additional new parameters/coefficients, we thus will want to formulate a minimalistic version of Bianchi that at least most people in the field would agree with. Regarding the standard Bianchi identity for an ordinary YM theory, $\ud F^a + C^a_{bc}A^b\wedge F^c = 0$, we note that the application of the exterior derivative on the field strength becomes ``proportional'' to the field strength itself, albeit with a field dependent ``prefactor''. 

Mathematically we express this as follows: We require that the exterior derivative $\ud$ applied to any monomial containing a field strength $F^\alpha$, i.e.~terms of the form $\ldots \wedge  A^\alpha \wedge F^\beta \wedge A^\gamma \wedge \ldots$, yields expressions that again contain at least one field strength. In other words, we regard the ideal $\I$ generated by expressions of the form \eqref{curv} within the (graded commutative, associative) algebra $\A$ generated by abstract elements $A^\alpha$ and $\ud A^\alpha$. The condition we require is then simply
\beq \boxed{\ud \I \subset \I } \, ,\label{ideal} \eeq
i.e.~$\I$ has to be what is called a differential ideal. As innocent and weak as this condition may seem, it turns out to restrict the coefficient functions of \eqref{curv} already very drastically, at least if the dimension $d$ of spacetime $\Sigma$ exceeds the highest permitted form degree $p$ of $A^\alpha$ by two, i.e.~$d \geq p+2$. In that case, we will find that \eqref{ideal} holds true iff the following vector field squares to zero:
\begin{equation} \label{Q1}
Q=  \left(t^\alpha_\beta(x) q^\beta + \frac{1}{2} C^\alpha_{\beta \gamma}(x) q^\beta q^\gamma +\frac{1}{6} H^\alpha_{\beta \gamma\delta}(x) q^\beta  q^\gamma q^\delta + \ldots\right) \dfrac{\partial}{\partial q^\alpha} .
\end{equation}
Here we introduced for each gauge field $A^\alpha$ an abstract coordinate $q^\alpha$, carrying a degree that equals the form degree of  $A^\alpha$ (for the degree zero variables we also use $q^i \equiv x^i$ to distinguish them from the others, since they can appear non-polynomially).
These coordinates commute (anticommute) iff the respective differential forms commute (anticommute). Also, since by its definition, $F^\alpha = \ud A^\alpha + \ldots$ has a form degree that exceeds the one of $A^\alpha$ by precisely one, the vector field $Q$ is an odd vector field of degree $+1$. In general, the condition
\beq\label{Qsquare}
Q^2 \equiv \tfrac{1}{2} [Q,Q] = 0
\eeq
subsummarizes a whole list of equations and in particular reduces precisely to \eqref{Jacobi} in the case of coordinates of only degree one. 

This observation turns out to be rather easy to prove, but so general that we consider it worth being called a fundamental theorem in the context of higher gauge theories:
\begin{thm} \label{thm:phys1} If $\dim \Sigma \geq p+2$ the generalized Bianchi identities \eqref{ideal} hold true if and only if the vector field \eqref{Q1} squares to zero, $Q^2=0$.
\end{thm}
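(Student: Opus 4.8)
The plan is to reduce the differential-ideal condition \eqref{ideal} to a statement about the individual elements $\ud F^\alpha$, to rewrite those modulo $\I$, and to recognise the obstruction to $\ud F^\alpha\in\I$ as the components of $Q^2$. Since $\ud$ is a degree-$1$ derivation of $\A$ and $\I$ is generated by the $F^\alpha$, one has $\ud\I\subset\I$ if and only if $\ud F^\alpha\in\I$ for every $\alpha$: write a general element of $\I$ as $\sum_\alpha\omega_\alpha\wedge F^\alpha$, apply the Leibniz rule, note that $\sum_\alpha\ud\omega_\alpha\wedge F^\alpha\in\I$ automatically, and specialise the $\omega_\alpha$.

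Write $F^\alpha=\ud A^\alpha+Q^\alpha(A)$, where $Q^\alpha(q)=t^\alpha_\beta(x)q^\beta+\tfrac12 C^\alpha_{\beta\gamma}(x)q^\beta q^\gamma+\dots$ are the component functions of \eqref{Q1}. Using $\ud(\ud A^\alpha)=0$ and the Leibniz rule,
\[
\ud F^\alpha \;=\; \ud\big(Q^\alpha(A)\big)\;=\;\frac{\partial Q^\alpha}{\partial q^\beta}(A)\wedge\ud A^\beta .
\]
Substituting $\ud A^\beta=F^\beta-Q^\beta(A)$ for every $\beta$ — including the $0$-forms, where $F^i=\ud X^i+Q^i(A)$ — turns this into
\[
\ud F^\alpha \;=\; \frac{\partial Q^\alpha}{\partial q^\beta}(A)\wedge F^\beta \;-\; \frac{\partial Q^\alpha}{\partial q^\beta}(A)\wedge Q^\beta(A).
\]
The first summand lies in $\I$, so $\ud F^\alpha\in\I$ if and only if $R^\alpha:=\frac{\partial Q^\alpha}{\partial q^\beta}(A)\wedge Q^\beta(A)$ does. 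Up to the conventional sign of left-versus-right derivatives on the odd coordinates, the polynomial $\frac{\partial Q^\alpha}{\partial q^\beta}\,Q^\beta$ is precisely the $q^\alpha$-component of $Q^2=\tfrac12[Q,Q]$; since $Q$ is odd, the operator $Q\circ Q$ has vanishing second-order part, so $Q^2$ is a genuine (even) vector field and $Q^2=0$ is equivalent to $\frac{\partial Q^\alpha}{\partial q^\beta}\,Q^\beta=0$ for all $\alpha$ as polynomial identities in the $q$'s. In particular, if $Q^2=0$ then every $R^\alpha$ vanishes identically, so $\ud F^\alpha=\frac{\partial Q^\alpha}{\partial q^\beta}(A)\wedge F^\beta\in\I$; this ``if'' direction uses no hypothesis on $\dim\Sigma$.

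For the converse one must deduce $\frac{\partial Q^\alpha}{\partial q^\beta}\,Q^\beta=0$ from $R^\alpha\in\I$. Observe that $R^\alpha$ is a form of degree $|\alpha|+2\le p+2$ that is polynomial in the $A^\beta$ with no factor of any $\ud A^\beta$. At the purely algebraic level this already forces $R^\alpha=0$: because $F^\alpha=\ud A^\alpha+Q^\alpha(A)$, the set $\{A^\alpha,F^\alpha\}$ is again a free generating set of $\A$ (the change of generators being unipotent and hence invertible), and with respect to it $\I$ is simply the ideal of those elements all of whose monomials carry a factor $F^\gamma$; hence $\A/\I$ is the free graded-commutative algebra on the $A^\alpha$ alone, and a $\ud A$-free element lies in $\I$ only if it is zero. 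Thus, in the abstract algebra $\A$, $\ud\I\subset\I$ is equivalent to $Q^2=0$.

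The role of the hypothesis $\dim\Sigma\ge p+2$ is to make this equivalence persist when $\I$ is tested, as it is meant to be, on honest differential forms on $\Sigma$. If $\dim\Sigma=d<p+2$, the ``top'' obstructions $R^\alpha$ with $|\alpha|+2>d$ vanish identically for degree reasons, so the corresponding abstract relations are no longer implied and one recovers only the weaker constraints coming from the lower components of $Q^2$. If instead $d\ge p+2$, then to any nonzero homogeneous $R^\alpha$ of degree $\le p+2$ one can associate a base ($\mathbb{R}^d$, say) and a field configuration — the $1$-forms proportional to distinct coordinate differentials, the $2$-forms to distinct coordinate $2$-forms, and so on, with the $0$-fields $X^i$ taken generic rather than constant so that $\I$ does not collapse — on which $R^\alpha$ evaluates to a form not contained in $\I$; combined with the computation above this yields the theorem. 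I expect this detection step to be the only genuine obstacle: checking that $d\ge p+2$ is sharp enough to see every non-vanishing component of $Q^2$ by a concrete choice of forms, i.e.\ that the degree-$\le p+2$ monomials in the gauge fields can be arranged to be independent of the relevant part of $\I$. The chain $\ud\I\subset\I\Leftrightarrow\ud F^\alpha\in\I\Leftrightarrow R^\alpha\in\I$ and the identification of $R^\alpha$ with the components of $Q^2$ are routine bookkeeping with the Leibniz rule.
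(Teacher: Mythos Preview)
Your argument is correct and closely parallels the paper's. Both compute $\ud F^\alpha$ and identify the obstruction to membership in $\I$ with the components of $Q^2$; the paper packages this as the operator identity $Q_1\circ\CF=-\CF\circ Q_2-a^*\circ Q_2^2$ for $\CF:=Q_1\circ a^*-a^*\circ Q_2$, which is exactly your substitution $\ud A^\beta=F^\beta-Q^\beta(A)$ rewritten in map language.

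Where you differ is in the converse. You argue purely algebraically: the unipotent change of free generators $\{A^\alpha,\ud A^\alpha\}\to\{A^\alpha,F^\alpha\}$ shows that a $\ud A$-free element lies in $\I$ only if it vanishes, so in the \emph{abstract} algebra $\A$ the equivalence $\ud\I\subset\I\Leftrightarrow Q^2=0$ holds with no dimension hypothesis, and $\dim\Sigma\ge p+2$ enters only to ensure the abstract relations survive evaluation on actual forms. The paper instead reads the Bianchi condition in its formulation \eqref{bian} as an integrability statement for the system $F^\alpha=0$: assuming it, any pointwise value $(A^\alpha)_\sigma$ extends to a local solution with $F=0$; if $(Q_2)^2 q^{\alpha_0}=:\varphi\neq0$ one then chooses $(A^\alpha)_\sigma$ with $a^*\varphi(\sigma)\neq0$---possible precisely because $\deg\varphi\le p+2\le\dim\Sigma$---to obtain a contradiction. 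Your route is arguably cleaner and makes transparent that the dimension bound is about \emph{detecting} the relations on $\Sigma$ rather than about the algebra itself; the paper's route stays closer to the physical formulation \eqref{bian}. Your last paragraph is admittedly only a sketch, but the paper is equally brief at the corresponding step.
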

Thus any higher gauge theory is associated to what is called a differential graded manifold or a Q-manifold  
$\CM$. 
(In the case of an ordinary YM theory this corresponds to a reformulation of the Lie algebra $\lie$ in terms of its Chevalley-Eilenberg complex; in this case, simply $ (C^\infty(\CM), Q) \cong (\Lambda^\cdot \lie^*, \udCE)$). The most important part of the above theorem is not that a Q-manifold, as a generalization of a Lie algebra, gives rise to some generalized Bianchi identities, but the reverse direction: Even if Bianchi identities are formulated in as weak a version as (\ref{bian}), one \emph{necessarily}
deals with a Q-manifold on the target if one observes it or not. This is the strength of the above observation. 

Together with \cite{BKS,KS07} (and the considerations on the gauge symmetries to follow) this paves the way for an elegant, general, and simple picture of higher gauge theories. Their field content (and symmetries) is compactly described by a section $a$ in a Q-bundle (cf.~Fig.~\ref{fig1} below):
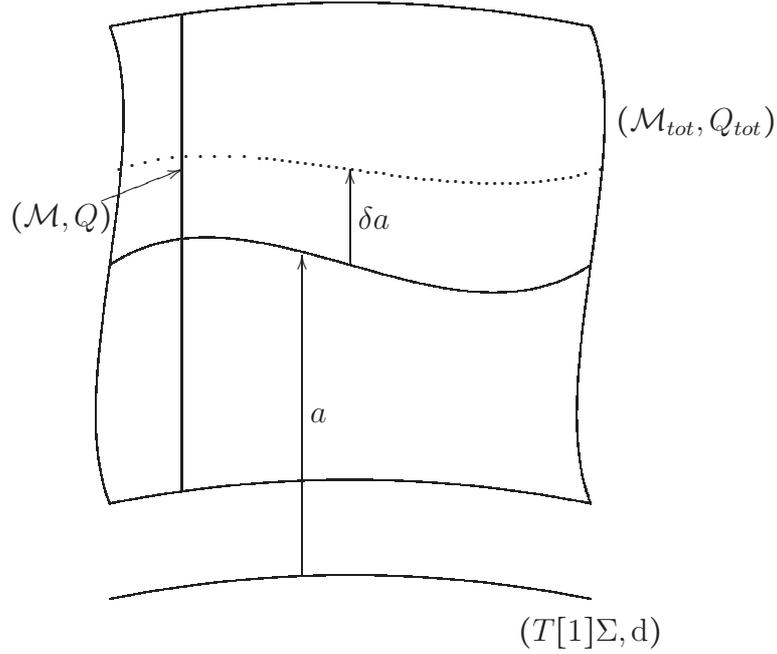
\begin{figure}
$$\begin{xy}/r0.15pc/:
  (-50,-70)*{}; (50,-70)*{} **\crv{(0,-60)}+(0,-7)*{
  (T[1]\Sigma,\ud)};
  (-50,0)*{}; (50,0)*{} **\crv{(-20,20) & (20,-20)};  {\ar_{\textstyle a}(-10,-65)*{};(-10,2)*{}};
  (-50,-50)*{}="A"; (50,-50)*{} **\crv{(0,-40)};    (72,30)*{(\M_{tot},Q_{tot})};
  (50,-50)*{}; (50,50)*{} **\crv{(40,-25)&(60,25)};
  (50,50)*{}; (-50,50)*{} **\crv{(0,60)};
  "A"; (-50,50)*{} **\crv{(-60,-25)&(-40,25)};
  (-35,-47.5)*{};(-35,52.5)*{} **\dir{-};
  {\ar(-60,10)*{(\M,Q)};(-35,20)*{}};
  (-48,20)*{}; (52,20)*{} **\crv{~*=<4pt>{.} (-20,30) & (20,10)};  {\ar_{\textstyle \delta a}(0,0)*{};(0,20)*{}};
\end{xy}
$$
\caption{\label{fig1}  The resulting geometry governing all higher gauge theories: The generalization of a connection, called a \emph{gauge field} in this work, corresponds to a section $a$ of a graded bundle $\CM_{tot}$. The bundle is a Q-bundle and (the principal part of) gauge transformations correspond to vertical inner automorphisms. The section is a Q-morphism iff the generalization of curvature, called the \emph{field strength} here, vanishes. In physical theories, however, it is almost exclusively those gauge fields which have a non-vanishing field strength that, up to gauge invariance, are of interest.}
\end{figure}
The tangent bundle $T\Sigma$ of spacetime $\Sigma$ can be viewed as a graded manifold when fiber-linear coordinates are regarded as carrying degree 1, functions on this graded manifold $T[1]\Sigma$ correspond to differential forms, and the nilpotent, degree 1 de Rham differential $\ud$ equips it with a Q-structure. The Q-manifold $(\CM,Q)$ that we introduced above in the context of (\ref{Q1}) can be viewed as a target if we interpret the tower of gauge fields $A^\alpha$ as corresponding to a degree-preserving map from $T[1]\Sigma$  to 
$\CM$.\footnote{We will explain all this in more detail in the main text, but we want to give an overview of the resulting picture already as this point as a guide for what we are going to obtain or derive as a final framework.} As a straightforward generalization of the considerations in \cite{BKS} this can be reinterpreted as a section $a$ in a trivial Q-bundle, $(\CM_{tot},Q_{tot}) := (\CM \times T[1]\Sigma, Q+\ud)$,  an observation that proves essential in the context of describing the gauge symmetries (in \cite{BKS} for the PSM, here for general higher gauge theories). One of the observations fundamental to \cite{KS07} is, on the other hand, that this picture generalizes also to the twisted case, i.e.~that a connection in an ordinary principal bundle corresponds just to a section of a (possibly non-trivial) Q-bundle. 

We add a remark in parenthesis: The section $a \colon T[1]\Sigma \to \CM_{tot}$, corresponding to our gauge fields, is a section of the category of ($\N_0$-)graded manifolds, and, not, in the sense of the category augmented by a Q-structure. It becomes an ``honest'' section of the category of Q-bundles, i.e.~a Q-morphism, iff all the field strengths $F^\alpha$ vanish. As nice as this may seem from a purely mathematical perspective, as inadequate it is from the one of physics: It would exclude all the particles (like photons, gluons, etc) for which gauge fields were introduced in physics in the first place. In fact, even in topological models, such as the PSM, the DSM, or the AKSZ sigma models, the gauge fields become Q-morphisms only ``on-shell'', i.e.~on the level of the Euler-Lagrange equations of the corresponding action functionals, cf.~\cite{BKS,DSM,KS08} for the proofs, respectively.

Let us now address the gauge symmetries. Similarly to the general ansatz (\ref{curv}), we make one for the gauge symmetries
\begin{equation}\label{gaugedelta0}
\delta^{(0)}_{\epsilon} A^\alpha = \ud \epsilon^\alpha +  \left(s^\alpha_\beta(X)  + D^\alpha_{\gamma \beta}(X) A^\beta  +\tfrac{1}{2} G^\alpha_{ \gamma\delta\beta}(X) A^\gamma \wedge A^\delta + \ldots \right)\wedge \epsilon^\beta,
\end{equation}
for some collection $\epsilon^\alpha$ of differential forms of degree up to  $p-1$, where we may allow the total gauge transformations to also 
contain a part proportional to $F^\alpha$ again:
\begin{equation} \label{gaugedelta}
\delta_{\epsilon} A^\alpha = \delta^{(0)}_\epsilon A^\alpha + O(F^\alpha) \, . 
\end{equation} 
It is important to remark that the terms $O(F^\beta)$ have to be in the ideal $\I$ for the following to be still true. This still includes higher gauge theories as they appear as a part of gauged supergravity for example, cf.~\cite{SSW11,SLS14,SKS14}, and where they play an important role, but excludes terms containing e.g.~$*F^\alpha$ (as they appear in the DSM \cite{DSM}) or terms containing ``contractions'' of $F$ (with the inverse of the part of $A$ that may describe a vielbein) as they typically appear in the gravitational part of supergravity theories. In analogy to (\ref{ideal}), we now require that also our gauge transformations do not leave the ideal 
$\I$ generated by the field strengths,
\beq \boxed{\delta_\varepsilon \I \subset \I } \, .\label{deltaI} \eeq
The result we obtain, illustrated and proven in detail in the main text, is the following (cf.~also Fig.~\ref{fig1})
\begin{thm} \label{thm:phys2}
Provided $\dim \Sigma \geq p+1$ and $Q^2=0$, infinitesimal gauge transformations of the form (\ref{gaugedelta}) leave the field strength ideal $\I$ invariant, Eq.~\eqref{deltaI}, iff their principal part $\delta^{(0)}_\varepsilon$ corresponds to vertical inner automorphisms of the Q-bundle $(\CM_{tot},Q_{tot})$. Moreover, if the infinitesimal gauge transformations are of the more restricted form (\ref{gaugedelta0}), $\delta_\varepsilon \equiv \delta^{(0)}_\varepsilon$, one does not need to require $Q^2=0$, but one can conclude it together with the generalized Bianchi identities (\ref{bian}). 
\end{thm}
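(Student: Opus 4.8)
The plan is to translate the statement about the ideal $\I$ into a statement about the vector field $Q$ on $\CM$ together with a second odd vector field generating the gauge transformations, and then to recognize the combined structure as an inner automorphism of $(\CM_{tot},Q_{tot})$. First I would set up, in parallel to Theorem~\ref{thm:phys1}, the dictionary between differential forms on (a chart of) $\Sigma$ and the graded algebra $\A$: to each gauge field $A^\alpha$ and each gauge parameter $\epsilon^\alpha$ we associate coordinates $q^\alpha$ of the appropriate degree, and since the $\epsilon^\alpha$ have form degree one less than $A^\alpha$, the ansatz \eqref{gaugedelta0} is encoded by an odd, degree $0$ vector field $\epsilon^\cdot \tfrac{\partial}{\partial q^\cdot}$ contracted against the coefficient expression; more precisely, $\delta^{(0)}_\epsilon$ corresponds on $\CM_{tot}=\CM\times T[1]\Sigma$ to the Lie derivative $\L_{}X_\epsilon$ along the vertical vector field whose components are exactly the bracketed polynomial in \eqref{gaugedelta0} times $\epsilon^\beta$. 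The statement ``$\delta^{(0)}_\epsilon$ is a vertical inner automorphism of the Q-bundle'' then means precisely that this vertical vector field is of the form $[Q_{tot},\iota_\epsilon]$ (a derived-bracket / homotopy expression) for a suitable vertical, degree $-1$ contraction-type operator $\iota_\epsilon$ built from $\epsilon$; so the content of the theorem is the equivalence of $\delta_\epsilon\I\subset\I$ with this homotopy formula, under the standing hypotheses.

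Next I would prove the forward direction. Assuming $Q^2=0$ and that $\delta_\epsilon$ (including its $O(F)$-tail, which lies in $\I$) preserves $\I$, I would apply $\delta_\epsilon$ to a generator $F^\alpha$ of $\I$ and expand. Because $\delta_\epsilon$ commutes with $\ud$ (both act on the free graded-commutative algebra $\A$ and $\delta_\epsilon$ is, by construction, built to intertwine with $\ud$ on the $A$'s via the $\ud\epsilon$-term), the condition $\delta_\epsilon F^\alpha \in \I$ becomes a collection of polynomial identities in the $q$'s once one strips off the overall $\epsilon^\beta$ and collects by powers of the degree $\geq 1$ coordinates. Here I use $\dim\Sigma\geq p+1$: this guarantees that no monomial of the relevant top degree is forced to vanish for dimensional reasons, so that ``the expression lies in $\I$'' genuinely forces the coefficient of the $F$-free part to vanish identically, rather than only modulo forms that happen to be zero on a low-dimensional $\Sigma$. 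The resulting identities are exactly the statement that the vertical vector field $X_\epsilon$ satisfies $[Q,X_\epsilon]=0$ modulo the $F$-dependent corrections, i.e.\ that $\delta^{(0)}_\epsilon$ is an infinitesimal automorphism of $(\CM,Q)$; that it is moreover \emph{inner}, i.e.\ of the form $[Q_{tot},\iota_\epsilon]$, follows by writing $\iota_\epsilon$ explicitly as the degree $-1$ operator that inserts $\epsilon$ (this is where the $\ud\epsilon$-term of \eqref{gaugedelta0} is reproduced as $[\ud,\iota_\epsilon]$ and the remaining terms as $[Q,\iota_\epsilon]$), and checking that $\L_{}X_\epsilon = [Q_{tot},\iota_\epsilon]$ term by term.

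For the converse direction I would simply run this computation backwards: if $\delta^{(0)}_\epsilon$ is the vertical inner automorphism $[Q_{tot},\iota_\epsilon]$ with $Q^2=0$, then since $Q_{tot}^2=0$ we get $[Q_{tot},[Q_{tot},\iota_\epsilon]]=0$, which says the automorphism preserves the $Q_{tot}$-structure and hence maps the ``$F$-locus'' to itself; concretely, applying $\delta^{(0)}_\epsilon+O(F)$ to the generators $F^\alpha$ and using the homotopy identity plus $\ud$-equivariance lands one back inside $\I$, so $\delta_\epsilon\I\subset\I$. For the last sentence of the theorem — the case $\delta_\epsilon\equiv\delta^{(0)}_\epsilon$ with no $O(F)$-tail, where one does not assume $Q^2=0$ — I would argue that $\delta^{(0)}_\epsilon\I\subset\I$ now yields \emph{both} families of identities: collecting the terms in $\delta^{(0)}_\epsilon F^\alpha$ that are linear in $F$ reproduces the generalized Bianchi identities \eqref{bian} (the ``prefactor'' relating $\ud F$ to $F$ being read off from the $\epsilon$-independent structure), while the terms with no $F$ give the closure identities that assemble into $Q^2=0$; concretely one shows $\L_{}X_\epsilon(Q\text{-structure functions})$ produces $Q^2$ as its $F$-free part, so vanishing of that part is $Q^2=0$, and this in turn is exactly what Theorem~\ref{thm:phys1} identifies with \eqref{bian}.

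The main obstacle I anticipate is the bookkeeping in the forward direction: one must be careful that ``$\delta_\epsilon F^\alpha\in\I$'' is equivalent to the vanishing of a specific $F$-independent polynomial, and this equivalence is exactly where $\dim\Sigma\geq p+1$ is used — one has to verify that every monomial in the $q$'s that appears can be realized by an actual differential form on $\Sigma$ of the needed degree (so that nothing is killed accidentally), which requires tracking the maximal form degree occurring, namely $p+1$ (from $\ud\epsilon$ with $\epsilon$ of degree $p-1$ wedged appropriately, or from $\ud$ of a degree-$p$ field strength). The second delicate point is pinning down the precise operator $\iota_\epsilon$ realizing the inner automorphism and checking $\L_{}X_\epsilon=[Q_{tot},\iota_\epsilon]$ rather than merely $[Q_{tot},X_\epsilon]=0$; the distinction between ``automorphism preserving $Q$'' and ``\emph{inner} automorphism'' is the genuinely new content here and must be addressed head-on, presumably by exhibiting $\iota_\epsilon$ as the natural fiberwise contraction on $\CM_{tot}$ and verifying the homotopy formula on coordinates.
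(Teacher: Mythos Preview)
Your overall framework is right --- encoding $\delta^{(0)}_\epsilon$ via a vertical degree $-1$ vector field $\ue$ on $\CM_{tot}$ and aiming for $\delta^{(0)}_\epsilon = a^*\circ[Q_{tot},\ue]$ is exactly what the paper does. But your forward direction misses the mechanism that actually pins down the coefficients, and this leaves a genuine gap.

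The paper writes the ansatz as $\delta^{(0)}_\e A^\alpha = \ud\e^\alpha + \e^\beta a^*V^\alpha_\beta$ with undetermined $V^\alpha_\beta$, and computes $\delta^{(0)}_\e F^\alpha$ explicitly:
\[
\delta^{(0)}_\e F^\alpha = \ud\e^\beta\, a^*\bigl(V^\alpha_\beta - \partial_\beta Q_2^\alpha\bigr) - \e^\beta\, a^*\bigl[\partial_\beta(Q_2^\gamma\partial_\gamma Q_2^\alpha) + O(V^\alpha_\beta-\partial_\beta Q_2^\alpha)\bigr] - (-1)^{|\beta|}\e^\beta\,\CF(V^\alpha_\beta)\,.
\]
The crucial move is then that at any point $\sigma\in\Sigma$ one can choose $\e$ so that $\ud\e(\sigma)\neq 0$ while $\e(\sigma)=0$, and vice versa. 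This forces the $\ud\e$-coefficient and the $\e$-coefficient to vanish \emph{separately}. The first gives $V^\alpha_\beta = \partial_\beta Q_2^\alpha$ directly --- and this \emph{is} the inner form, since $(a^*\circ[Q_{tot},\ue])(q^\alpha) = \ud\e^\alpha + \e^\beta a^*(\partial_\beta Q_2^\alpha) + (\text{$F$-terms})$. The second gives $\partial_\beta(Q_2^2)^\alpha = 0$, automatic if $Q_2^2=0$.

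Your proposal instead argues: first deduce $[Q,X_\epsilon]=0$ (closedness), then separately check innerness ``term by term''. But $\ud_Q$-closed does not imply $\ud_Q$-exact in general --- the cohomology $H^0_{\ud_Q}$ can be nontrivial --- so the first step is weaker than what you need, and it is unclear how the second step would succeed without having already determined the coefficients. The $\e$/$\ud\e$ separation bypasses this entirely: it nails $V^\alpha_\beta = \partial_\beta Q_2^\alpha$ in one stroke, and the $\ud\e^\alpha$ term in the ansatz supplies the $[Q_1,\ue]$ piece for free. You never pass through a closedness statement.

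For the final clause (no $O(F)$-tail, no assumption $Q^2=0$): the paper's argument is sharper than ``the $F$-free terms assemble into $Q^2=0$''. From $\partial_\beta(Q_2^2)^\alpha = 0$ for all $|\beta|\geq 1$ one concludes $(Q_2^2)^\alpha$ depends only on degree-zero coordinates; but $(Q_2^2)^\alpha$ has strictly positive degree, hence vanishes. This degree argument is what you should make explicit.
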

Let us briefly explain what inner automorphisms of a Q-manifold should be thought of to our mind. An automorphism should be degree preserving diffeomporphism that leaves invariant the vector field $Q$. Infinitesimally this corresponds to degree zero vector field annihilated by $\ad_Q$. It is an immediate consequence of (\ref{Qsquare}) that $\ad_Q$ squares to zero as well. We call an inner automorphism of a Q-manifold one that is generated by the image of $\ad_Q$, i.e.~by vector fields of the form  $\ad_Q(\epsilon)$, where $\epsilon$ is a degree -1 on the Q-manifold. If the Q-manifold is an ordinary Lie algebra (shifted in degree by one) $\lie[1]$, this corresponds to the usual definition of inner automorphisms, i.e.~to  the adjoint action of the integrating group $G$ on $\lie=\Lie(G)$ (generated infinitesimally by the left action on itself). For $(T[1]\Sigma, \ud)$ this definition of inner automorphism gives infinitesimally precisely the Lie derivatives of vector fields on $\Sigma$, so they correspond to the canonical lift of a diffeomorphism of $\Sigma$ to its tangent bundle. 

Gauge transformations of an ordinary YM-theory are generated by vertical automorphisms of the underlying principal bundle $P$. They correspond precisely to inner vertical automorphisms in the above sense for the associated Q-bundle $T[1]P/G$. The advantage of the present formulation is that like this it generalizes in a straightforward way to \emph{all} higher gauge theories. 

Some further explanations are in place: First, additional structures entering a particular theory can certainly constrain the symmetries permitted in the respective case. For example, in the Poisson sigma model or, more generally, the AKSZ sigma models, the inner automorphisms also have to preserve the symplectic form defined on the fibers and that are compatible with the Q-structure on the fibers. For the PSM, for example, this implies that its gauge symmetries are generated infinitesimally by the Lie subalgebra $\Omega^1_{closed}(M)$ of $\Gamma(T^*M)$ ($T^*M$ viewed upon as the Lie algebroid of the Poisson manifold $M$ in this case). Second, there are cases where the gauge transformations including the additional terms of the form $O(F)$ in (\ref{gaugedelta}) can be given a similarly good geometrical interpretation. For example, as will be shown in \cite{SLS14}, for the higher gauge theory sector of the (1,0) superconformal model in six spacetime dimensions \cite{SSW11}, the $O(F)$-part is essential for an off-shell closed algebra of gauge transformations on the one hand. On the other hand, in this case the gauge transformations (\ref{gaugedelta}) correspond to nothing but inner automorphisms of another Q-bundle associated to the one in Fig.~\ref{fig1}, where the typical fiber $(\CM,Q)$ is replaced by $(T[1]\CM, \ud_{\CM} 
+ \L_Q)$ (cf.~also \cite{KS07,SaS13}).\footnote{Any Q-bundle $(\CM_{tot},Q_{tot}) \to (\CM_1,Q_1)$ can be lifted to a tangent Q-bundle $(T[1]\CM_{tot}, \ud_{\CM_{tot}} 
+ \L_{Q_{tot}}) \to (T[1]\CM_1, \ud_{\CM_1} + \L_{Q_1})$. Using the canonical 0-section $\sigma \colon \CM_1\to T[1]\CM_1$, we can pull back $T[1]\CM_{tot}$ to become a bundle over $\CM_1$, which turns out to be compatible with the respective Q-structures
.
}


Third, it is important to be aware that $Q_{tot}$ projects to $\ud$ on the base of the bundle, and it is like this that $\ud \varepsilon^\alpha$ is generated in (\ref{gaugedelta0}), (\ref{gaugedelta}); this becomes possible only by going to the picture of a bundle, even in the case of a local treatment, where the bundle (at least when defined as in \cite{KS07}) is trivial. 

In the present description, the vector fields generating the gauge transformation always form a Lie algebra. They are vertical vector fields of degree zero of the form $\ad_{Q_{tot}}(\epsilon)$ and they close with respect to the usual commutator of vector fields. They are parameterized by vertical vector fields $\epsilon$ of degree minus one \emph{up to $\ad_{Q_{tot}}$-closed} contributions. On the vector fields $\epsilon$ one discovers that one has an induced bracket of the form 
\begin{equation}\label{derQtot}
[\epsilon,\epsilon']_{Q_{tot}} \equiv [\ad_{Q_{tot}}(\epsilon), \epsilon'] \, ,
\end{equation} called the ``derived bracket'' (following Cartan and \cite{YKS03en}). For example, if one considers the Q-manifold $(\lie[1], \udCE)$, the derived bracket just reproduces the Lie algebra bracket. While in general, however, this bracket is a Leibniz--Loday algebra only and not necessarily antisymmetric, it becomes so for the coexact elements, i.e.~for those $\epsilon$ where one considers the quotient by closed contributions (which do not change the gauge transformation, as argued above). In this way one obtains the Theorem  \ref{theo:Lie} in the main text.\footnote{To simplify the notation in that section, we called the total Q-structure simply $Q$ and renamed the Q-structure on the fiber by $Q_2$.}

In a conventional treatment of gauge symmetries (i.e.~along the lines as described e.g.~in \cite{HT}), they form a so-called \emph{open} algebra for a generic higher gauge theory. This is not in contradiction to the present, to our mind more elegant formulation of its gauge symmetries. To resolve the apparent paradox, it is useful to consider the PSM---as a simpler toy-model, showing already all the essential features of this behavior. In \cite{BKS} it is shown, in particular, that one can obtain the usual, only on-shell closed algebra of gauge symmetries of the PSM from a incomparably shorter calculation in the above-mentioned framework. This generalizes to the context of any higher gauge theory. For instance for a higher gauge theory containing 0-forms $X^i$, 1-forms $A^a$, and 2-forms $B^C$ one obtains in this way the result of Proposition \ref{prop:onshellclosed} below. We also will comment further on this issue in \cite{SLS14,GSS14}. 
 
While one can still discuss the usefulness or necessity  of using the exterior derivative $\ud$ in vector analysis or $ \udCE$ for Lie algebras, there will be no doubt in the case of $p \ge 2$; already for $p=2$ there is, in general, a whole list of coupled differential equations satisfied by the structural quantities, cf.\ Eqs.~\eqref{B1}--\eqref{B7} in the main text below, which are to be used in almost every calculation related to such a gauge theory and which can be replaced and subsummarized in a single operator $Q$ squaring to zero. The technical usefulness becomes particularly obvious in the above-mentioned context of the calculation of the commutator of gauge symmetries, but also already for the concrete form of the Bianchi identities (cf., e.g., the derivation of Eqs.~\eqref{BFi}--\eqref{BFB} in the main text below).

Let us briefly come back to Fig.~\ref{fig1} and the corresponding two theorems, Thm.~\ref{thm:phys1} and \ref{thm:phys2}. Not only the base of the Q-bundle is in fact a bundle in ordinary differential-geometric terms, but even worse so for the fibers: the Q-manifold $(\CM,Q)$ isomorphic to the  typical fiber is generically a rather intricate (higher) bundle structure itself. We will come back to this issue in more detail in subsection \ref{sec:math} below. What we consider important in this context is that the infinite-dimensional space of gauge fields is described as sections in \emph{finite-dimensional} (graded) bundles. This gives additional mathematical control. The situation can be compared with the interest in Lie algebroids (for a definition cf., e.g., \ref{def:Loid}): While for infinite-dimensional Lie algebras topological questions usually play an important role and are intricate to deal with, the situation simplifies greatly when this infinite-dimensional Lie algebra comes from the sections of a Lie algebroid. 

Likewise, the above operator $Q$ (or also $Q_{tot}$) must not be confused also with a BV- or BRST-operator. Although the BV-BRST operator $Q_{BV}$  also squares to zero, $Q_{BV}^2=0$,  and is of degree +1, it is defined over the \emph{infinite-dimensional} graded manifold of (gauge) fields, ghosts, and their antifields, and, more importantly, it, in general, needs many more structures to be defined, namely those needed for an action functional that we have not yet addressed much up to here (and for which there are many options still open).\footnote{Correspondingly, also the discovery of the role of the derived bracket in the context of the gauge symmetries, cf.~Thm.~\ref{theo:Lie}, is not or at most vaguely related of the role of the derived bracket in the BV-context as observed in \cite{YKS97}.}

There are very particular models of topological nature, like the Poisson sigma model or, more generally, the AKSZ-sigma models \cite{AKSZ}, on the other hand, where there \emph{is} a close relation of the two Q-structures $Q$ and $Q_{BV}$: $Q$ corresponds to a defining Q-structure on the target of the sigma model, which is also equipped with a compatible symplectic form $\omega$, $\ud$ is a Q-structure on its source $T[1]\Sigma$, all very much like in Fig.~\ref{fig1}. Then $a \colon T[1]\Sigma \to \CM$ is no more degree preserving (so as to capture ghosts and antifields), but a general supermap and $\Omega_{BV}$ is, roughly speaking, the ``difference'' of $Q$ and $\ud$ induced on this space of supermaps $\CM_{BV}$. Using $\omega$ and a canonical measure on $T[1]\Sigma$, one then induces also a (weakly) symplectic form $\omega_{BV}$ on $\CM_{BV}$ and shows that $Q_{BV}$ is Hamiltonian, i.e.~$Q_{BV}=\{ S_{BV}, \cdot \}_{BV}$, where $S_{BV}$ is the BV-action, extending the classical action of the PSM and AKSZ-sigma model, respectively (cf., e.g., 
\cite{AKSZ,Catt01,Royt06}).

This changes quite drastically, however, in the case of non-topological, physically relevant theories; already for an ordinary YM gauge theory the BV operator is not constructed in such a simple way (unfortunately). Even for \emph{relatively} simple generalizations like (for the PSM) the topological DSM \cite{DSM} or (for YM) the non-topological Lie algebroid Yang-Mills theories \cite{Str04b}, the BV operator is even not known at the present time; however, in both cases the target Q-structures are very well known, corresponding to Lie algebroids in both cases. And, again, the field content for any such a theory, and, more generally, for any higher gauge theory, its described compactly by sections in the finite-dimensional Q-bundle as sketched in Fig.~\ref{fig1}. 

Although originally intended to do so, we will not address the construction of action functionals within this paper, leaving it for eventual later work (but cf.~also \cite{Str04b,KS08,SLS14}).

\subsection{Further mathematical results} \label{sec:math}
Here we briefly summarize the ideas and results contained in the more mathematical sections \ref{s:examQ} and \ref{sec:derived}. It mainly consists in reformulating supergeometrically defined objects appearing in the Q-bundles as potential fibers in 
more conventional algebraic-geometrical terms. 

Our graded or supermanifolds, on which the homological vector fields $Q$ are defined, always will be $\N_0$ graded manifolds, being simultaneously supermanifolds by means of the induced $\Z/2\Z$-grading. These are conventionally called simply
 N-manifolds and are defined as follows (for further details see also \cite{Royt02,Gru09}): An N-manifold is a ringed second countable Hausdorff topological space $\M=(M,\smooth)$ where the structure sheaf, which by abuse of notation we also denote as $\smooth$, is locally of the form smooth functions in tensor product with the exterior algebra in the odd coordinates tensor product with the polynomial algebra in the even coordinates not of degree 0. This implies that it 
 consists of a tower of affine fibrations.  In degree 0 we just have a smooth manifold $M$.  In degree 1, also after truncation of a higher degree N-manifold, we have an odd vector bundle $E[1]$, i.e.\ the fiber-linear coordinates of $E$ are declared as functions of degree 1.  Their parity is odd and thus the function algebra are the $E$-forms.  The topological structure of such an N-manifold comes from the smooth base only.  

Morphisms of N-manifolds are morphisms of ringed spaces that preserve the $\N_0$-grading of the structure sheaf.  Coordinate changes are particular local isomorphisms. Like this one finds e.g.~that the coordinates of degree 1 transform as the coordinates of a vector bundle so that any N-manifold $\CM$  with degrees up to 1, i.e.~an N1-manifold,\footnote{In general, we call an N$k$-manifold an N-manifold, where the highest coordinate degree is $k$. \label{fn:Nk}} can be canonically identified with vector bundles $E\to M$, $\CM \cong E[1]$.  

In degree 2, coordinate changes contain, however, also affine terms beside the terms one would expect for a vector bundle.  Globally an N2-manifold $\M$ is an affine bundle over the base $E[1]$ modeled after the pullback $\pi_1^*V[2]$ of a vector bundle $V\to M$ where $\pi_1\colon E[1]\to M$.  Since the fibers $V_x$ for $x\in M$ of this bundle are contractible, there exists a (non-canonical) global section (splitting) that identifies the affine bundle $\M$ with the graded vector bundle $E[1]\oplus V[2]$ over the base $M$. In all of our theorems on Q2-manifolds, i.e.~N2-manifolds equipped with a degree +1 homological vector field, we will need such a splitting permitting us to identify the underlying graded manifold with the direct sum of two vector bundles $E\to M$ and $V\to M$, $\CM \cong E[1]\oplus V[2]$. 

As mentioned already in the previous subsection, the definition of an ordinary Lie algebra $\lie$ is equivalent to a Q1-manifold over a point, i.e.~with coordinates of degree 1 only. The corresponding N1-manifold gives the vector space (vector bundle $E$ over a point), $\CM \cong \lie[1]$ (as graded vector spaces), the functions $C^\infty(\CM)$ are identified with elements of $\Lambda^\cdot \lie^*$, and the degree +1 vector field $Q$ corresponds precisely to the Chevalley-Eilenberg differential, $Q\cong \udCE$, from which one certainly can also reconstruct the Lie bracket on the vector space. 

A general Q1-manifold can be seen to be equivalent to a Lie algebroid, as was first observed in \cite{Vai97}. For completeness, we briefly provide a definition:
\begin{vdef} \label{def:Loid} A \emph{Lie algebroid} consists of a vector bundle $E\to M$ together with a bundle map $\rho \colon E \to TM$ (over the identity) and a Lie algebra $(\Gamma(E),[\cdot,\cdot])$ on the sections of $E$ such that for all $\psi_1$, $\psi_2 \in \Gamma(E)$ and $f \in C^\infty(M)$ one has the following Leibniz rule
\begin{equation} \label{Leib}
[\psi_1, f \psi_2] = f [\psi_1,  \psi_2] + \rho(\psi_1)f \,\cdot  \psi_2 \;.
\end{equation}
\end{vdef}
These data can be recovered from a homological vector field of degree 1 on an N1-manifold. In local coordinates $x^i$ and $\xi^a$ of degree 0 and 1, respectively, it always is of the form 
\beq \label{Q0} Q = \rho^i_a(x) \xi^a \frac{\partial}{\partial x^i}
- \2 C^a_{bc}(x) \xi^b \xi^c \frac{\partial}{\partial \xi^a} \, . \eeq
The axioms can be reconstructed
by using $Q^2=0$ \emph{and} the behavior of the coefficient functions $ \rho^i_a$ and 
$C^a_{bc}$ with respect to (degree preserving) coordinate changes on $\CM \cong E[1]$ (so as to recover also \eqref{Leib}). The details of this process can be found in the main text. It is also true, moreover, that the Q-morphisms (morphisms of N-manifolds such that their pullback intertwine the two Q-structures) correspond precisely to Lie algebroid morphisms in the sense of \cite{MaH90} (cf.~\cite{BKS} for an explicit proof). 
We summarize this in 
\begin{thm}[Vaintrob 97] \label{Vain}
A Q1-manifold is in 1:1-correspondence with a Lie algebroid. This is an equivalence of categories.
\end{thm}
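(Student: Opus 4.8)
The plan is to exploit the identification recalled above, under which an N1-manifold $\CM$ is the same thing as a vector bundle $E\to M$, with $\CM\cong E[1]$ and $C^\infty(\CM)=\Gamma(\Lambda^\bullet E^*)$ (the degree being the $E$-form degree); crucially this graded algebra is generated over $C^\infty(M)$ by its elements of degrees $0$ and $1$. A Q-structure on $\CM$ is then a degree $+1$ graded derivation $Q$ of $\Gamma(\Lambda^\bullet E^*)$ squaring to zero, and by the degree-counting already noted it must locally be of the form \eqref{Q0}. First I would read off, from $Q$ alone and before imposing $Q^2=0$, a bundle map $\rho\colon E\to TM$ and an antisymmetric bracket on $\Gamma(E)$: the anchor by $\langle\rho(\psi),\ud f\rangle:=\langle Qf,\psi\rangle$ for $f\in C^\infty(M)$, $\psi\in\Gamma(E)$ (well defined because $Q$ is a derivation and $f$ is even, so $f\mapsto\langle Qf,\psi\rangle$ is a vector field, and it is $C^\infty(M)$-linear in $\psi$), and the bracket by the Cartan-type formula
\beq
\langle Q\alpha,\psi_1\wedge\psi_2\rangle=\rho(\psi_1)\langle\alpha,\psi_2\rangle-\rho(\psi_2)\langle\alpha,\psi_1\rangle-\langle\alpha,[\psi_1,\psi_2]\rangle, \qquad \alpha\in\Gamma(E^*).
\eeq
One checks that the right-hand side is $C^\infty(M)$-linear in $\alpha$ — the derivative terms produced by $Q(h\alpha)=(Qh)\wedge\alpha+h\,Q\alpha$ cancel against those coming from the anchor terms — so that $[\psi_1,\psi_2]$ is genuinely a section of $E$; antisymmetry is manifest, and the Leibniz rule \eqref{Leib} falls out of the same formula using once more that $Q$ is a derivation. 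In coordinates this reproduces $\rho(e_a)=\rho^i_a\partial_{x^i}$ and $[e_a,e_b]=C^c_{ab}e_c$ from \eqref{Q0}, and the construction is inverse to the standard one sending a Lie algebroid to its Chevalley–Eilenberg differential $Q=\udCE$ (a derivation being determined by its values on generators).

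Second, I would show $Q^2=0$ is equivalent to the remaining axiom. Since $Q^2$ is a degree $+2$ graded derivation of $\Gamma(\Lambda^\bullet E^*)$, it is determined by its action on the generators, which sit in degrees $0$ and $1$; hence $Q^2=0$ iff $Q^2$ annihilates $C^\infty(M)$ and $\Gamma(E^*)$. Evaluating $Q^2f=0$ against $\psi_1\wedge\psi_2$ gives $\rho([\psi_1,\psi_2])=[\rho(\psi_1),\rho(\psi_2)]$, and then evaluating $Q^2\alpha=0$ against $\psi_1\wedge\psi_2\wedge\psi_3$ gives the Jacobi identity. Conversely, a Lie algebroid in the sense of Definition~\ref{def:Loid} has $\rho$ automatically a morphism of brackets (the usual consequence of Leibniz plus Jacobi), so its $Q=\udCE$ kills all generators under $Q^2$ and therefore $Q^2=0$. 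This establishes the 1:1-correspondence on objects.

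Third, for the equivalence of categories I would observe that a morphism of N1-manifolds $\CM\to\CM'$, i.e.\ a degree-preserving algebra homomorphism $\varphi^*\colon\Gamma(\Lambda^\bullet E'^*)\to\Gamma(\Lambda^\bullet E^*)$, is determined by its degree-$0$ and degree-$1$ parts; these amount to a smooth base map $\underline\varphi\colon M\to M'$ together with a $C^\infty(M')$-linear map $\Gamma(E'^*)\to\Gamma(E^*)$, equivalently a vector bundle morphism $\Phi\colon E\to E'$ covering $\underline\varphi$, and conversely any such pair extends uniquely to an algebra homomorphism. Imposing Q-compatibility $\varphi^*Q'=Q\varphi^*$ only on the generators of degrees $0$ and $1$ then translates, via the formulas above, into exactly the two conditions defining a morphism of Lie algebroids in the sense of Mackenzie–Higgins — relatedness of the anchors and of the brackets, the latter in the pullback-of-sections formulation valid for an arbitrary base map. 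Functoriality of both assignments is then immediate and one checks they are mutually inverse on morphisms, giving the claimed equivalence; cf.\ \cite{BKS,MaH90}.

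The part I expect to be the main obstacle is not the algebra itself but the bookkeeping at two points: the tensoriality check that the Cartan-type formula really defines a \emph{section} of $E$ (equivalently, that $\udCE$ extends to a well-defined derivation), and, on the morphism side, the correct translation of Q-compatibility into bracket-relatedness when the base map $\underline\varphi$ is neither injective nor surjective — it is precisely there that the subtlety of the Mackenzie–Higgins notion of a Lie algebroid morphism appears and a naive coordinate argument is insufficient. By contrast, once reduced to generators, the equivalence of $Q^2=0$ with the Jacobi identity is a routine, if slightly lengthy, computation.
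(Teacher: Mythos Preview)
Your proposal is correct and complete. The main difference from the paper's treatment (in Section~\ref{s:examQ}) is one of style: the paper works in local coordinates, reading off $\rho^i_a$ and $C^c_{ab}$ from \eqref{Q0} and then deriving the Leibniz rule \eqref{Leib} by tracking how $C^c_{ab}$ transforms under the degree-preserving changes $\widetilde{x^i}=\widetilde{x^i}(x)$, $\widetilde{\xi^a}=M^{\tilde a}_b(x)\xi^b$---the non-tensorial piece proportional to $\rho^i_a\,\partial_i M^{\tilde a}_b$ is what forces the anchor term in the Leibniz rule. You instead take the invariant Cartan/Chevalley--Eilenberg route, defining $\rho$ and $[\cdot,\cdot]$ by duality and checking tensoriality directly from the derivation property of $Q$. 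Your approach has the advantage of being manifestly coordinate-free and of making the role of the degree-$0$/degree-$1$ generation of $\Gamma(\Lambda^\bullet E^*)$ explicit (which you use cleanly for both $Q^2=0\Leftrightarrow$ Jacobi and for morphisms); the paper's approach is more hands-on and closer in spirit to the coordinate analysis it carries out for the Q2 case. The paper also sketches a third route in Section~\ref{sec:derived}, recovering the bracket as the derived bracket $[\phi,\psi]_E=[[\phi,Q],\psi]$ on degree $-1$ vector fields. On morphisms, both you and the paper defer the detailed verification that Q-compatibility reproduces the Mackenzie--Higgins conditions to \cite{BKS}.
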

In the present paper we treat the analogue question for one degree higher, i.e.~we address the description of a Q2-manifold. Instead of two coefficient functions $\rho$ and $C$ satisfying two equations resulting from $Q^2=0$, one ends up with five coefficient functions satisfying seven coupled differential equations, cf.~Eqs.~\eqref{Q} and \eqref{B1}--\eqref{B7}, respectively, below. For an orientation we first collect related results, corresponding to special cases.

If the Q2 manifold is one over a point, it corresponds to a (semistrict) Lie 2-algebra or, equivalently, to a 2-term $L_\infty$-algebra. A special case of this are strict Lie 2-algebras, which, according to \cite{Baez03vi}, are equivalent to Lie algebra crossed modules. We thus here define a Lie 2-algebra by means of this description:
\begin{vdef} A \emph{strict Lie 2-algebra} \label{strictL2}
is a pair $(\lie,\h)$ of Lie algebras 
together with a homomorphism $t \colon \h\to \lie$ and a  representation
$\alpha \colon \lie\to\Der(\h)$ such that  $\, t(\alpha (x)v) =[x,t(v)]$   
and $ \alpha(t(v))w =[v,w]$  hold true for all $x \in \lie$, $v,w \in \h$.  
\end{vdef}
We remark in parenthesis that this definition can be shortened:
\begin{lemma} A strict Lie 2-algebra $(\lie,\h)$  is the same as an intertwiner, $t \colon \h \to \g$, from a $\g$-representation $\h$ to the adjoint representation on $\g$ such that $t(v)\cdot v = 0$ for all $v\in \h$.
\end{lemma}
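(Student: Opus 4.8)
The plan is to show that the data in the shortened description are mutually equivalent to the data in Definition~\ref{strictL2}, by unpacking exactly what ``$\h$ is a $\g$-representation with an intertwiner $t\colon\h\to\g$ to the adjoint representation'' means and then checking that the extra condition $t(v)\cdot v=0$ is precisely what produces the Lie bracket on $\h$ together with its remaining axioms. In one direction this is a matter of forgetting structure; in the other, one must reconstruct the bracket on $\h$ from the action. So most of the content lies in the reconstruction direction.

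\smallskip

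First I would fix notation: write $\alpha\colon\g\to\Der(\h)$ (equivalently $\g\to\End(\h)$, since we have not yet put a bracket on $\h$) for the representation, and abbreviate $\alpha(x)v =: x\cdot v$; the statement that $t$ is an intertwiner to the adjoint representation is exactly $t(x\cdot v)=[x,t(v)]$, which is the first of the two compatibility identities in Definition~\ref{strictL2}. It remains to produce, from these data plus $t(v)\cdot v=0$, a Lie bracket on $\h$ satisfying $[v,w]=\alpha(t(v))w$ and the axiom that $\alpha$ is a representation of this Lie algebra. The natural move is to \emph{define} $[v,w]_\h := \alpha(t(v))w = t(v)\cdot w$. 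Then the second compatibility identity of Definition~\ref{strictL2} holds by definition, and the first one we already have; the only real work is to verify that $[\cdot,\cdot]_\h$ is antisymmetric and satisfies the Jacobi identity. Antisymmetry is where $t(v)\cdot v=0$ enters: polarizing the quadratic identity $t(v)\cdot v=0$ in $v$ (i.e.\ replacing $v$ by $v+w$ and subtracting) gives $t(v)\cdot w + t(w)\cdot v = 0$, which is precisely $[v,w]_\h = -[w,v]_\h$. Conversely, antisymmetry of $[\cdot,\cdot]_\h$ immediately gives back $[v,v]_\h=0$, i.e.\ $t(v)\cdot v=0$, so in the other direction the ``extra condition'' is automatic and we genuinely are just forgetting data.

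\smallskip

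For Jacobi I would compute, for $u,v,w\in\h$,
\[
[[u,v]_\h,w]_\h = t\big(\alpha(t(u))v\big)\cdot w = t\big(u\cdot v\big)\cdot w = [\,t(u),t(v)\,]\cdot w = t(u)\cdot\big(t(v)\cdot w\big) - t(v)\cdot\big(t(u)\cdot w\big),
\]
using first the intertwiner identity $t(x\cdot v)=[x,t(v)]$ and then that $\alpha$ is a Lie algebra representation of $\g$. The right-hand side is $[u,[v,w]_\h]_\h - [v,[u,w]_\h]_\h$, which is the Leibniz (hence Jacobi) identity for $[\cdot,\cdot]_\h$. Finally, the statement that $\alpha$ is a representation of the Lie algebra $(\h,[\cdot,\cdot]_\h)$, i.e.\ $\alpha([v,w]_\h) = [\alpha(v),\alpha(w)]$ as operators on $\h$ --- wait, more precisely that $\alpha(t(v))$ together with the $\h$-action organize into the crossed-module action --- follows the same way: $\alpha(t([v,w]_\h)) = \alpha([t(v),t(w)]) = [\alpha(t(v)),\alpha(t(w))]$, again by the intertwiner identity and $\alpha$ being a $\g$-representation. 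I expect the main obstacle to be purely bookkeeping: making sure that ``$\h$ is a $\g$-representation'' is read at the linear level (no pre-existing bracket on $\h$ needed) so that the argument is not circular, and being careful that the polarization giving antisymmetry is over a field of characteristic $\neq 2$, which is implicit here. No estimates or hard analysis arise; the whole proof is a short diagram-chase through the two defining identities.
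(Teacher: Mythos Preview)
Your approach is exactly what the paper has in mind: it gives no detailed proof, only the remark that ``the Lie algebra structure on $\h$ can then be \emph{defined} by means of $[v,w]_\h := t(v)\cdot w$,'' and you have correctly unpacked this. The polarization argument for antisymmetry and the Jacobi computation via the intertwiner identity and the $\g$-representation property are both right.

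There is one small gap. In Definition~\ref{strictL2} the map $\alpha$ is required to take values in $\Der(\h)$, i.e.\ each $\alpha(x)$ must be a derivation of the bracket $[\cdot,\cdot]_\h$ you just constructed. Your final paragraph drifts toward checking something else (compatibility of $\alpha\circ t$ with brackets), which is not one of the axioms. What you actually need is, for $x\in\g$ and $v,w\in\h$,
\[
x\cdot[v,w]_\h = [x\cdot v,\,w]_\h + [v,\,x\cdot w]_\h \, ,
\]
i.e.\ $x\cdot(t(v)\cdot w) = t(x\cdot v)\cdot w + t(v)\cdot(x\cdot w)$. This follows immediately from the intertwiner identity $t(x\cdot v)=[x,t(v)]$ together with $\alpha$ being a $\g$-representation, so the fix is one line. (That $t$ is a Lie algebra homomorphism you have already implicitly established via $t([v,w]_\h)=t(t(v)\cdot w)=[t(v),t(w)]$.)
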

The data of Definition \ref{strictL2} can be retrieved easily with all their properties; e.g.~the Lie algebra structure on $\h$ can then be \emph{defined} by means of $[v,w]_\h := t(v) \cdot w$. 

The definition of a semistrict Lie 2-algebra or simply \emph{a Lie 2-algebra} is analogous, 
but contains an additional map $H \colon \Lambda^3 \lie \to \h$, satisfying a ``coherence-type'' consistency relation. $H$ turns out to control the Jacobiators of brackets on $\g$ and $\h$ and enters a modification of the definition above in several places. For $H=0$ one recovers a strict Lie 2-algebra. (In fact, one may extract its precise definition from specializing the Definition \ref{def:Lie2d} below to the case of $M$ being a point.)

Alternatively, one can describe a Lie 2-algebra also by means of a 2-term $L_\infty$-algebra. We recapitulate its definition in detail in the Appendix \ref{s:Linf1} (cf.~also \cite{SHLA,SHLA2} for the original definitions). Essentially this consists of a 2-term complex $\h[-1] \xto{t} \g$,\footnote{Note that in the context of N-manifolds, $\h[-1]$ means that the linear coordinates are of degree -1, which implies that the \emph{elements} of this vector space are of degree +1. The shifting is slightly changed with respect to the previous occurrence of $\g$ and $\h$, but this is due to different but equivalent conventions, cf.~also Appendix \ref{s:Linf1}.} a unary bracket $[\cdot]_1=t$ of degree -1, a binary bracket $[\cdot,\cdot]_2$ of degree 0 and a ternary bracket $[\cdot,\cdot,\cdot]_3$ of degree 1 
that have to fulfill several quadratic relations which can be understood as generalized Jacobi identities. One of those is, e.g., for $x,y,z\in\g$
\begin{equation}
 [x,[y,z]_2]_2+[z,[x,y]_2]_2 +[y,[z,x]_2]_2 = [\,[x,y,z]_3]_1 \, ,
\end{equation}
showing that the 2-bracket forms a Lie algebra iff the image of the 3-bracket lie in the kernel of $t$. It is thus not surprising that this 3-bracket is the above mentioned map $H$.  
It is thus again a non-vanishing $H$ that impedes the appearance of ordinary Lie algebras and that controls their deviation from being so on the other hand. In particular, $H$ itself has to fulfill a Jacobi-type quadratic relation. Further details on this perspective of a Q2-manifold over a point and $L_\infty$-algebras in general are deferred to Appendix \ref{s:Linf1}.

Finally, concerning Q2-manifolds with an honest base manifold $M$, there is the following result (see also \cite{Royt02}). 
\begin{thm}[Roytenberg]$\quad$\\ \label{thm:Roy}
A QP2-manifold is in 1:1-correspondence with a Courant algebroid.
\end{thm}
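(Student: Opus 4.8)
The plan is to set up the dictionary between the supergeometric data of a QP2-manifold and the conventional Courant-algebroid axioms, then check that the integrability conditions match on both sides. Recall that a QP2-manifold is a graded symplectic $\N$-manifold $\M$ whose symplectic form $\omega$ has degree $2$, together with a degree $+1$ homological vector field $Q$ that is symplectic, which one can always take to be Hamiltonian, $Q = \{\Theta,\cdot\}$, for a degree $3$ function $\Theta$, the \emph{cubic Hamiltonian}; the condition $Q^2=0$ is then equivalent to the \emph{classical master equation} $\{\Theta,\Theta\}=0$. First I would choose a splitting of the underlying N2-manifold, as explained in subsection \ref{sec:math}, identifying it with $E[1]\oplus E^*[1]$-type data: degree $0$ coordinates $x^i$ (the base $M$), degree $1$ coordinates $\xi^a$ pairing with degree $1$ coordinates $p_a$ via $\omega$, plus degree $2$ coordinates $p_i$ conjugate to $x^i$. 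The body of $\M$ in degree $1$ is then a pseudo-Euclidean vector bundle $(E\to M, \langle\cdot,\cdot\rangle)$, where the metric comes from the restriction of $\omega$ to degree $1$; this will be the underlying bundle of the Courant algebroid.

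Next I would write the most general degree $3$ function $\Theta$ in these coordinates: schematically $\Theta = \rho^i_a(x)\,p_i\,\xi^a - \tfrac{1}{6}T_{abc}(x)\,\xi^a\xi^b\xi^c$ (in a split/isotropic frame; in a general frame one also gets a $\Gamma$-term quadratic in $\xi$ and linear in the degree-one conjugate momenta, encoding a connection). From $\Theta$ one \emph{defines} the Courant-algebroid structure by derived brackets: the anchor $\rho\colon E\to TM$ is read off from the $p_i\xi^a$-term; the bilinear operation on $\Gamma(E)$ is $[\psi_1,\psi_2] := \{\{\psi_1,\Theta\},\psi_2\}$, where sections of $E$ are viewed as degree $1$ functions; and $\langle\cdot,\cdot\rangle$ is the fiber metric from $\omega$. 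One then checks the familiar list of Courant-algebroid identities --- $\rho$ intertwines $[\cdot,\cdot]$ with the Lie bracket of vector fields, the Leibniz rule $[\psi_1,f\psi_2] = f[\psi_1,\psi_2] + (\rho(\psi_1)f)\psi_2$, compatibility of the bracket with the metric, and the Jacobi/Leibniz identity $[\psi_1,[\psi_2,\psi_3]] = [[\psi_1,\psi_2],\psi_3] + [\psi_2,[\psi_1,\psi_3]]$ --- by turning each into a graded-Jacobi-identity manipulation of $\{\cdot,\cdot\}$ together with $\{\Theta,\Theta\}=0$. Conversely, given a Courant algebroid, one reconstructs $\M = T^*[2]E[1]$ (or its split model), defines $\omega$ canonically and $\Theta$ from the structure functions $\rho^i_a$ and $T_{abc}$, and verifies $\{\Theta,\Theta\}=0$ is equivalent to the Courant axioms one started from. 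Finally I would observe that this correspondence is functorial: symplectomorphisms intertwining the $\Theta$'s correspond to Courant-algebroid isomorphisms, and likewise for the split/coordinate-change data, so one gets an equivalence of categories, or at least a bijection on objects as stated.

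The main obstacle is bookkeeping rather than conceptual: one must be careful about the splitting (the degree $2$ coordinates transform affinely, not linearly, so the naive identification $\M\cong E[1]\oplus V[2]$ is non-canonical and the connection-like $\Gamma$-term in $\Theta$ is frame-dependent) and about sign/factor conventions in the derived-bracket formulas, since the Courant bracket is naturally the non-skew Dorfman bracket while $\{\{\cdot,\Theta\},\cdot\}$ produces exactly that, so one should phrase the axioms in Leibniz--Loday form to avoid spurious symmetrization terms. The one genuinely substantive point to pin down is that \emph{every} degree $2$ symplectic $\N$-manifold with a degree $3$ homological Hamiltonian arises this way --- i.e.\ the classification of such $\M$ up to symplectomorphism as $T^*[2](E[1])$ for a pseudo-Euclidean bundle $E$ --- which is precisely Roytenberg's structural result and may be invoked from \cite{Royt02}.
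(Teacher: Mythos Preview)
Your proposal is essentially correct and follows the standard Roytenberg argument. However, there is nothing to compare it against: the paper does \emph{not} supply its own proof of this theorem. Theorem~\ref{thm:Roy} is stated in the introductory survey (subsection~\ref{sec:math}) as a known result attributed to Roytenberg, with a citation to \cite{Royt02}, and is used only to situate the paper's own Theorems~\ref{thm1.5} and \ref{thm1.6}.

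The closest the paper comes to an argument is in Section~\ref{sec:derived} (around Eqs.~\eqref{dBrack1}--\eqref{dInner}), where it exhibits the derived-bracket construction on $T^*[2]T[1]M$ with Hamiltonian $\tilde Q=\xi^i p_i$ and remarks that ``such a construction can be extended to the general setting of a Courant algebroid \ldots\ One then finds that Courant algebroids are in one-to-one correspondence to symplectic Q2-manifolds \cite{Royt02}.'' That is exactly the one-direction sketch you give (QP2 $\Rightarrow$ Courant via derived brackets), but the paper explicitly defers the full 1:1 statement to Roytenberg rather than proving it. Your identification of the key structural input---that every degree~2 symplectic $\N$-manifold is symplectomorphic to $T^*[2]E[1]$ for a pseudo-Euclidean $E$---and your remark that this must be invoked from \cite{Royt02} are both accurate; the paper does not attempt to reprove that point either.
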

Some comments are in order: A QP-manifold is a Q-manifold (in the above sense) equipped with a compatible symplectic form $\omega$. QP2 signifies that the underlying N-manifold is an N2-manifold in our sense (cf.~footnote \ref{fn:Nk}) or, equivalently, that the 2-form $\omega$ has total degree $4=2+2$. 
A Courant algebroid can be defined as follows:
\begin{vdef}
\label{def:Cour} A \emph{Courant algebroid} consists of a vector bundle $E\to M$ together with a bundle map $\rho \colon E \to TM$ (over the identity), a Leibniz--Loday algebra $(\Gamma(E),[\cdot,\cdot])$ on the sections of $E$, and fiber metric   $\<.,.\>$ on $E$
such that for all $\psi_1$, $\psi_2 \in \Gamma(E)$ one has the following rules
\begin{align} \label{Coura}
  \< \psi_1,[\psi_2,\psi_2]\>&=  \tfrac{1}{2}\rho(\psi_1)\<\psi_2,\psi_2\>  = \<[\psi_1,\psi_2],\psi_2\> \, .
\end{align}
\end{vdef}
Here some further explanations may be in place: First, we call Leibniz--Loday algebra an algebra, where the left-multiplication is a derivation, i.e.
\begin{equation}\label{JaCou}
[\psi_1,[\psi_2,\psi_3]] = [[\psi_1,\psi_2],\psi_3] +[\psi_2,[\psi_1,\psi_3]]
\end{equation}
 holds true $\forall \psi_k \in \Gamma(E)$. Moreover, one can deduce the Leibniz rule \eqref{Leib} from the axioms. So, \emph{if} the bracket $[\cdot , \cdot]$ were antisymmetric, this would be a Lie algebroid (with an additional fiber metric). However, the first equation of \eqref{Coura} prohibits this: The inner product being non-degenerate, the symmetric part of the bracket can vanish only in the relatively uninteresting case of an identically vanishing map $\rho$ 
(in which case one obtains a bundle of quadratic Lie algebras over $M$---ad-invariance following from the remainder of \eqref{Coura} then). So, it is precisely the inner product (together with $\rho$) that governs the symmetric part of the bracket. The second equation of \eqref{Coura} expresses a (for non-vanishing $\rho$ not just pointwise) ad-invariance of the fiber-metric.

This completes the explanations of the terms appearing in the above theorem. As an aside we mention that one may also rephrase the axioms of a Courant algebroid in terms of the anti-symmetrization of the above bracket, 
\begin{equation}\label{anti}
[\phi,\psi]_2=\tfrac12\left([\phi,\psi] -[\psi,\phi]\right) \, .
\end{equation}
In that case, however, (\ref{JaCou}) does not yield a Jacobi identity for this new bracket. It thus does not come as a surprise that this can be rephrased in terms of an (infinite-dimensional) 2-term $L_\infty$-algebra $[.]_1\:V_1\to V_0$ \cite{Royt98}: Here the vector spaces are $V_0=\Gamma(E)$ and $V_1=\smooth(M)$, respectively, the map $[.]_1$ is provided by means of the de Rham differential $\ud$ followed by the transpose of the anchor map and a subsequent use of the inner product (to identify $E^*$ with $E$ and likewise so for the sections). Between two elements of $V_0$ the degree 0 binary bracket is given by \eqref{anti}, while for $f\in V_1$ and $\phi \in V_0$ one has $[\phi,f]_2:=\rho(\phi)f$.
The degree 1 ternary bracket, finally, can be non-vanishing only on $[.,.,.]_3:\Lambda^3V_0\to V_1$, where it is given by $[\psi_1,\psi_2,\psi_3]_3=\tfrac13\<[\psi_1,\psi_2]_2,\psi_3\>+\cycl$  All higher brackets vanish for degree reasons.  

We now turn to describing our main result of section \ref{s:examQ}. As in the previously mentioned results on Q1-manifolds, Lie 2-algebras, and QP2-manifolds, an essential ingredient of it is a definition (specifying to what precisely the Q2-manifold is equivalent). To obtain this definition, we proceeded as sketched above for Q1-manifolds, i.e.~analyzed the behavior of coefficient functions with respect to coordinate changes on the N2-manifold, which we identified with the sum of two vector bundles (in a non-canonical way, but different choices being related by an isomorphism). We first provide the generalization of Definition \ref{strictL2} to the case of a general base $M$ of $\CM$, trying to stay as close as possible to the known one for $M=\pt$: 
\begin{vdef}\label{d:sLie2}
A \emph{strict Lie 2-algebroid} or a \emph{crossed module of Lie algebroids} is a pair of 
 $(E,V)$ of Lie algebroids over the same base $M$, where $\rho_V \colon V \to TM$ vanishes,
together with a homomorphism $t \colon V\to E$ and a representation $\Ena$ of $E$ on $V$
such that  $t\left(\Econn_\psi v\right) = [\psi,t(v)]$
 and  $\Econn_{t(v)}w = [v,w]$   hold true for all $\phi,\psi\in\Gamma(E)$, $v,w\in\Gamma(V)$.
\end{vdef}
The generalization of a representation of a Lie algebra $\g$ on a vector space to a representation of a Lie algebroid $E$ on a vector bundle $V$ (over the same base) uses the notion of an $E$-covariant derivative: This is a map $\Ena\:\Gamma(E)\times\Gamma(V)\to\Gamma(V)$ such that for all $\psi\in\Gamma(E)$, $v\in\Gamma(V)$, and $f\in\smooth(M)$
\[\label{EConn}
  \Ena_{f\psi} v = f \Ena_\psi v, \quad\quad 
  \Ena_\psi (f\cdot v) = \rho(\psi)[f]\cdot v +f\Ena_\psi v.
\]
This becomes a representation iff its $E$-curvature, defined by means of  $\ER(\phi,\psi)v:=[\Econn_\phi,\Econn_\psi]v -\Econn_{[\phi,\psi]}v$, vanishes,$\ER=0$.  This flatness condition implies also that $\Econn_\psi$ is a derivation of the (bundle of) Lie algebra(s) $(\Gamma(V),[.,.])$.

\begin{vdef}\label{def:Lie2d}  A (semi-strict) \emph{Lie 2-algebroid} is defined as in Definition \ref{d:sLie2} except that the (still antisymmetric) brackets on $\Gamma(E)$ and $\Gamma(V)$ are no more required to satisfy the Jacobi identity and 
 $\Ena$ is no more flat in general. Instead, all these deviations are governed by a single $E$-3-form $H\in\Omega^3(E,V)$ with values in $V$:  For all $\psi_k\in\Gamma(E)$, $v,w\in\Gamma(V)$ and $f\in\smooth(M)$ one has
\begin{align}
  [\psi_1,[\psi_2,\psi_3]] +\cycl &= t\left(H(\psi_1,\psi_2,\psi_3)\right), \label{HJacobi} \\
   \ER(\psi_1,\psi_2)v &= H(\psi_1,\psi_2,t(v)), \label{HRep}\\
  \ED H &= 0 \, . \label{DH}
\end{align}
\end{vdef}
Here $\ED$ is the exterior $E$-covariant derivative associated to $\Ena$. We did not need to specify the Jacobiator of the bracket on $\Gamma(V)$, since it follows at once from \eqref{HRep} and $\Econn_{t(v)}w = [v,w]$: 
\begin{equation}
[u,[v,w]] + \cycl = H(t(u),t(v),t(w)) \qquad \forall u,v,w \in \Gamma(V) \, .
\end{equation} 
It is noteworthy that in this formulation, the coherence condition for $H$ takes the simple form \eqref{DH}. 
In this paper we will then prove the following:
\begin{thm}\label{thm1.5} $\quad$  \\ A Q2-manifold is in 1:1-correspondence with a (semi-strict) Lie 2-algebroid \\ up to isomorphism.
\end{thm}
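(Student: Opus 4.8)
The plan is to set up an explicit dictionary between the local coordinate data of a Q2-manifold and the algebraic data of a semi-strict Lie 2-algebroid, and then check that the two sides transform correctly under changes of the defining choices. Concretely, I would first fix a splitting $\CM \cong E[1]\oplus V[2]$ as guaranteed by the discussion of N2-manifolds above, with local coordinates $x^i$ (degree $0$), $\xi^a$ (degree $1$), $b^A$ (degree $2$). The most general degree $+1$ vector field then has the form
\begin{equation*}
Q = \rho^i_a \xi^a \partial_{x^i} + \left(-\tfrac12 C^a_{bc}\xi^b\xi^c + t^a_A b^A\right)\partial_{\xi^a} + \left(-\tfrac{1}{2}\Econn{}^A_{aB}\,\xi^a\xi^b?\ \right)\cdots
\end{equation*}
— more precisely, the $\partial_{b^A}$-component is of the form $\rho^i_a$-type terms killed by degree, a term $\Econn$-linear in $\xi\cdot b$, and a term $H$-cubic in $\xi$ — so that the five coefficient functions appearing are exactly $\rho$, $C$, $t$, $\Econn$ (an $E$-connection on $V$), and $H\in\Omega^3(E,V)$. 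I would write $Q$ once and for all in this form (this is Eq.~\eqref{Q} promised in the text), expand $Q^2=0$ by collecting coefficients of the independent monomials $\xi^3$, $\xi b$, $\xi^4$, $\xi^2 b$, $b^2$, etc., and read off the seven equations \eqref{B1}--\eqref{B7}. Matching these against Definitions \ref{d:sLie2} and \ref{def:Lie2d}: the $\xi^3\partial_\xi$-part gives the $H$-twisted Jacobi identity \eqref{HJacobi}, the $\xi b$-part gives $t\circ\Econn = [\,\cdot\,,t(\cdot)]$ together with $\Econn_{t(v)}w=[v,w]$, the $\xi^2 b$-part gives the curvature identity \eqref{HRep} $\ER = H(\cdot,\cdot,t(\cdot))$, and the top $\xi^4$-part gives the coherence law $\ED H=0$, Eq.~\eqref{DH}. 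The Leibniz rules \eqref{Leib} for $E$ and \eqref{EConn} for $\Econn$ do \emph{not} come from $Q^2=0$; they come, exactly as in the Q1/Vaintrob case (Theorem \ref{Vain}), from demanding that the coefficient functions transform correctly under degree-preserving coordinate changes, which is the second ingredient of the argument.

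The next step is therefore to analyze coordinate changes on $\CM$ preserving the $\N_0$-grading: in degree $1$ one has $\xi^a \mapsto (A^{-1})^a_b(x)\,\tilde\xi^b$, i.e.\ linear fibre transformations of $E$, forcing $\rho$ to be a bundle map $E\to TM$ and $C$ to produce a bracket obeying the Leibniz rule \eqref{Leib}; in degree $2$ one has $b^A \mapsto (B^{-1})^A_B(x)\tilde b^B + \tfrac12 N^A_{bc}(x)\tilde\xi^b\tilde\xi^c$, i.e.\ linear transformations of $V$ \emph{plus} the affine $\xi^2$-shift reflecting that $\CM$ is only an affine bundle over $E[1]$ before splitting. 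I would show that under the linear parts, $t$ transforms as a bundle map $V\to E$, $\Econn$ transforms as an $E$-connection (equivalently $\ER$ is tensorial), and $H$ transforms as a genuine $V$-valued $E$-$3$-form; and that under the affine $N$-shift, the bracket on $E$, the connection $\Econn$, and $H$ all get shifted by the torsion-like terms built from $N$ in such a way that the \emph{structure} (Definition \ref{def:Lie2d}) is preserved while its representing cochains are changed by a coboundary. This last point is what upgrades Theorem \ref{thm1.5} from a bijection of local data to the ``up to isomorphism'' statement: two Q2-manifolds are isomorphic iff the associated Lie 2-algebroid data differ by a combination of vector-bundle isomorphisms $E\to E$, $V\to V$ and such an $N$-shift, and I would check that this is precisely the natural notion of isomorphism of (semi-strict) Lie 2-algebroids.

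Finally I would assemble the correspondence into an equivalence: define functors both ways (Q2-manifold $\leadsto$ the tuple $(E,V,\rho,[\cdot,\cdot],t,\Econn,H)$ after a choice of splitting, and back via the explicit $Q$ above), verify they are mutually inverse up to the isomorphisms identified in the previous paragraph, and check functoriality on morphisms — a Q-morphism (pullback intertwining the $Q$'s) must correspond to the appropriate morphism of Lie 2-algebroids, generalizing the Q1 statement cited after Theorem \ref{Vain}. The main obstacle I anticipate is \emph{not} the $Q^2=0\leftrightarrow$\eqref{HJacobi}--\eqref{DH} matching, which is a bookkeeping exercise once $Q$ is written in the normal form, but rather the careful treatment of the degree-$2$ affine coordinate freedom: one must show that the $N$-shift acts on the seven structure equations consistently (i.e.\ that \eqref{B1}--\eqref{B7} are stable under it), identify exactly which combinations of $(\rho,C,t,\Econn,H)$ are splitting-independent, and confirm that what remains is captured by an honest isomorphism of the axiomatic objects rather than by extra, unwanted data. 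A secondary delicate point is making sure the signs and factors in passing between the ``$E$-form'' description (coordinates $\xi$ of degree $+1$) and the $\h[-1]\to\g$ conventions of the $L_\infty$ picture are consistent, so that the $H$ here really is the ternary bracket of Appendix \ref{s:Linf1}; this is routine but error-prone and worth isolating as a lemma.
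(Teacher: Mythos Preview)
Your proposal is correct and follows essentially the same route as the paper: write $Q$ in local coordinates on a chosen splitting $E[1]\oplus V[2]$, expand $Q^2=0$ into the seven component equations, extract the geometric interpretation of the coefficients (and the Leibniz rules) from degree-preserving coordinate changes, and track how the affine degree-$2$ shift acts on $(C,\Econn,H)$ to establish the ``up to isomorphism'' clause. The paper carries this out in Section~\ref{s:examQ} (Proposition~\ref{prop:double} and Theorem~\ref{theo1}), with the transformation behaviour in Eqs.~\eqref{trafo1}--\eqref{trafo5} and the splitting-change formulas in Eqs.~\eqref{split1}--\eqref{split3}; your anticipated ``main obstacle'' is exactly the $L$-contributions treated there.
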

Similarly to Theorem \ref{Vain}, this statement can be turned also into an equivalence of categories. 


In this context it is important to remark that the notion of a morphism of Lie 2-algebroids (as defined above in Definition \ref{def:Lie2d}) is less restrictive than the one of  strict Lie 2-algebroids (Definition \ref{d:sLie2}). This can be seen as follows: A definition of a morphism of a Lie 2-algebroid can be tailored from the corresponding Q-morphisms and the identification given in the above theorem  (such as one can do it for ordinary Lie algebroids and Q1-manifolds). On the one hand, the change of a splitting of the Q-manifold corresponds to a Q-morphism. 
On the other hand, it corresponds to an element $B\in\Omega^2(E,V)$ and $H$ changes under this $B$ (in a slightly tricky way, cf.~Equation \eqref{split3} below) and in particular this can be used to generate a non-zero $H$ from a vanishing one. 

A remark to the nomenclature seems important. One of the goals here is to give a definition of a mathematical structure equivalent to a Q2-manifold on the one hand and generalizing the definition of a Lie 2-algebra as given in \cite{Baez03vi} on the other hand. In other words, we strive at a definition equivalent to the \emph{super}-geometrical notion of a Q2-manifold in terms of notions used only in \emph{ordinary} differential geometry. This can be viewed as the same philosophy underlying the Theorems \ref{Vain} and \ref{thm:Roy}---with the difference that in these two cases the definitions \ref{def:Loid} and \ref{def:Cour} were already established as a geometrical notion before the equivalence  (expressed in the two theorems mentioned above) to the (simpler) super- or graded-geometrical structure was observed. Some also call Q$k$-manifolds simply Lie $k$-algebroids \emph{by definition}. To emphasize the non-tautological nature of Theorem \ref{thm1.5}, it may thus be useful to add the specification ``semi-strict'' in the equivalence, with \emph{semi-strict} Lie 2-algebroids being then \emph{introduced} (to our knowledge for the first time) in 
Definition \ref{def:Lie2d} above. 

Note that as a simple corollary of the above theorem, we find that any Q2-manifold gives rise to an infinite-dimensional Lie 2-algebra or also 2-term $L_\infty$-algebra, $\Gamma(V) \xto{t} \Gamma(E)$. In fact, a Lie 2-algebroid provides the additional control over such an infinite-dimensional Lie 2-algebra in the same way as a Lie algebroid provides one for its infinite-dimensional Lie algebra of sections. 

Certainly, a QP2-manifold is also a Q2-manifold. The infinite-dimensional Lie 2-algebra found in the above way is, however, \emph{not} the same one as the one constructed in \cite{Royt98}: There the  2-term complex is $\Omega^0(M) \xto{t} \Gamma(E)$ and, containing  the de Rham differential  $\ud$, neither $t$ nor the Jacobiator are  $C^\infty(M)$-linear, 
whereas in our construction one has $\Omega^1(M) \xto{t} \Gamma(E)$ rendering  the map $t$ and the Jacobiator $C^\infty(M)$-linear.

Further perspectives on Q$2$-manifolds in terms of particular  $L_\infty$-algebras or algebroids are deferred to Appendix \ref{s:Linf}).

There is also another way of obtaining the axioms for a Lie algebroid from studying a Q1-manifold than the one we sketched above (by a study of components of $Q$ with respect to changes of coordinates). This is provided by regarding the Q-derived bracket on the space of degree $-1$ vector fields on the Q-manifold $\CM$: Let $\psi$, $\psi'$ be two such vector fields, then
\begin{equation}
[\psi,\psi']_Q := [[\psi,Q],\psi']] \label{QderLie}
\end{equation}
is another vector field of degree -1 on $\CM$. While in the case of a Q1-manifold one easily identifies vector fields of degree -1 on $\CM = E[1]$ with sections in the Lie algebroid $E$ and the bracket \eqref{QderLie} with the Lie bracket between them, so that at the end one rediscovers also by this procedure the usual axioms of a Lie algebroid, for a Q2-manifold one obtains a \emph{different} picture. First of all, the derived bracket does not need to be antisymmetric anymore, since $[\psi,\psi]$ can be non-zero now as, in contrast to a Q1-manifold, a Q2-manifold carries non-trivial degree -2 vector fields (for more details and explanations we refer to section \ref{sec:derived}). The derived bracket, however, always gives a Leibniz--Loday algebra, satisfying a condition of the form \eqref{JaCou}. These observations, which are motivated moreover by our study of the gauge transformations (cf., in particular, Equation \eqref{derQtot})\footnote{Note that $\ad_Q(\psi) \equiv [Q,\psi] = [\psi,Q]$ since both vector fields are odd.}, lead to an attempt of finding another description of a Q2-manifold $\CM$. In fact, since $[\cdot,\cdot]$, restricted to degree -1 (and thus odd) vector fields, yields a $C^\infty(M)$-linear symmetric map from the degree -1 vector fields on $\CM$ to the degree -2 vector fields, and  since vector fields of negative degrees on an N-manifold can always be identified with sections of vector bundles, one swiftly extracts from the above an axiomatic formulation very close to the one of a Courant algebroid, with the difference that now the inner product can take values in another bundle. 

\begin{vdef}
\label{def:VCour} A \emph{$V$-twisted Courant algebroid} consists of a vector bundle $W\to M$, a bundle map $\rho \colon W \to TM$, a Leibniz--Loday algebra $(\Gamma(W),[\cdot,\cdot])$, a non-degenerate surjective symmetric product $\<.,.\>$ taking values in another vector bundle $V \to M$, and  a $W$-connection  $\Wna$ on $V$
such that for all $\psi_1$, $\psi_2 \in \Gamma(W)$ one has the following axioms
\begin{align} \label{VCoura}
  \< \psi_1,[\psi_2,\psi_2]\>&=  \tfrac{1}{2}\Wna_{\psi_1}\<\psi_2,\psi_2\>  = \<[\psi_1,\psi_2],\psi_2\> \, .
\end{align}
\end{vdef}
It is easy to see that this reduces to an ordinary Courant algebroid if $V=M \times \R$ and $\Wna(1)=0$ (in this case of the trivialized bundle we can identify its sections with functions and if the unit section is parallel, $\Wna$ reduces to $\rho$ on these functions). More generally, we can say that a $\rk V=1$, $V$-twisted Courant algebroid admitting a non-vanishing, $W$-constant section $v\in\Gamma(V)$ (i.e.~$v(x)\neq 0 \; \forall x\in M$ and $\Wconn_\psi v=0 \; \forall \psi \in \Gamma(W)$) is in a canonical correspondence with a Courant algebroid. However, line-bundle twisted Courant algebroids are a strictly more general notion than the one of a Courant algebroid. 

A nontrivial example of a $V$-twisted Courant algebroid can be provided as follows: Let $A$ be a Lie algebroid together with a flat $A$-connection ${}^A\nabla$ on a vector bundle $V$. Then the couple $W$, $V$, becomes a $V$-twisted Courant algebroid if one takes $W:=A\oplus \Hom(A,V) \cong A\oplus A^*\otimes V$ and equips it with the bracket
\begin{equation}\label{ACourant}
[X\oplus\alpha,Y\oplus\beta] = [X,Y] \oplus [\imath_X,\uD]\beta -\imath_Y\uD\alpha +H(X,Y,\bullet)
\end{equation}
where $\uD\:\Omega^k(A,V)\to\Omega^{k+1}(A,V)$ is the exterior covariant derivative constructed from ${}^A\nabla$, $[.,.]$ the Lie algebroid bracket on $A$, and $H\in\Omega^3(A,V)$ is any $A$-3-form with values in $V$ satisfying $\uD H=0$.  The other maps are obvious, for example $\Wconn_{X\oplus\alpha} v={}^A\conn_X v$.  Certainly, this reduces to the standard Courant algebroid $W\cong TM \oplus T^*M$ twisted by a closed 3-form $H$ if $A$ is chosen be the standard Lie algebroid $TM$ (and $V$ and $\Wna$ as mentioned above), in which case \eqref{ACourant} reduces to the known, $H$-twisted Courant-Dorfman bracket 
\begin{equation} \label{standardtwistCour}
[X\oplus\alpha,Y\oplus\beta] = [X,Y] \oplus \L_X \beta -\imath_Y\ud\alpha + \imath_Y \imath_X H \, .
\end{equation}
While an ordinary Courant algebroid reduces to a quadratic Lie algebra when $M$ is a point, the ``decoupling'' of $\rho$ and $\Wna$ has the advantage that now we can also host (non-Lie) Leibniz--Loday algebras in this framework. For $M=\pt$, the definition reduces to a Leibniz--Loday algebra together with an invariant inner product with values in a vector space that has a representation.  Also, conversely, one can promote \emph{any} Leibniz--Loday algebra $W$ to a $V$-twisted Courant algebra $W$ in a \emph{canonical} way, cf.~Proposition \ref{canonical} below.


The main result of section \ref{sec:derived} is then the following (for further details, cf.~Proposition~\ref{p:FWder} and Theorem~\ref{thm:der})
\begin{thm}$\,$ \\ Up to isomorphism and  provided that $\rk V>1$, there is a 1:1 correspondence between Lie 2-algebroids $V \xto{t} E$ and $V$-twisted Courant algebroids $W$ fitting into the following exact sequence \begin{equation} \label{VEW}
0\lto\Hom(E,V) \lto W\xto{\;\;\pi\; \;} E\lto 0
\end{equation} with $\rho_W = \rho_E \circ \pi$. In particular, if $\rk V >1$ and after the choice of a splitting of \eqref{VEW}, 
the bracket on a $V$-twisted Courant algebroid $W$ as above has the form
\begin{equation}\label{VCourant2}
[X\oplus a,Y\oplus b]_W = [X,Y]_E - t(a(Y)) \oplus [\imath_X,\ED]b -\imath_Y\ED a +\imath_Y \imath_X H+ b \circ t \circ a -  a\circ t \circ b  ,
\end{equation}
where all the operations on the r.h.s.~are the induced ones on the corresponding Lie 2-algebroid.\label{thm1.6}
\end{thm}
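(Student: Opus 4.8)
The plan is to establish the correspondence in both directions, using the already-proven Theorem~\ref{thm1.5} (Q2-manifolds $\leftrightarrow$ semi-strict Lie 2-algebroids) as a pivot and the derived-bracket construction \eqref{QderLie} as the bridge to $V$-twisted Courant algebroids. First I would start from a Lie 2-algebroid $V \xrightarrow{t} E$ over $M$ with all its structure data $(\rho_E, [\cdot,\cdot]_E, \Ena, H)$ as in Definition~\ref{def:Lie2d}, and \emph{construct} the candidate $W := E \oplus \Hom(E,V)$, which visibly sits in the exact sequence \eqref{VEW} with $\pi$ the projection onto $E$ and $\Hom(E,V)$ as the kernel; the anchor is defined by $\rho_W := \rho_E \circ \pi$ and the $W$-connection on $V$ is the pullback $\Wna_{X\oplus a} := \Ena_X$. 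The bracket is declared by formula \eqref{VCourant2}, and the symmetric pairing $\langle\cdot,\cdot\rangle\colon W\times W\to V$ is the obvious one, $\langle X\oplus a, Y\oplus b\rangle := a(Y)+b(X)$ (or half of it, matching conventions); one checks it is non-degenerate and surjective onto $V$ precisely when $\rk V>1$ is \emph{not} needed here but becomes essential for the reverse direction. The work in this direction is the verification of the two axioms in \eqref{VCoura} and the Leibniz--Loday / Jacobi-type identity \eqref{JaCou} for \eqref{VCourant2}; this is a direct but lengthy computation that uses \eqref{HJacobi}, \eqref{HRep}, \eqref{DH}, the compatibility relations $t(\Econn_\psi v)=[\psi,t(v)]$ and $\Econn_{t(v)}w=[v,w]$, and the $E$-connection derivation property. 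The terms $b\circ t\circ a - a\circ t\circ b$ are exactly what is forced by requiring the Loday--Jacobi identity once the $H(X,Y,\bullet)$ and the $-t(a(Y))$ pieces are present.

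For the reverse direction I would take a $V$-twisted Courant algebroid $W$ fitting into \eqref{VEW} with $\rho_W=\rho_E\circ\pi$ and reconstruct the Lie 2-algebroid data. The anchor $\rho_E$ and the bracket $[\cdot,\cdot]_E$ on $\Gamma(E)$ are induced because $\pi$ intertwines anchors and, using a splitting $s\colon E\to W$, one sets $[\psi,\psi']_E := \pi([s\psi,s\psi']_W)$; one must check this is well-defined (independent of the splitting modulo the kernel, i.e.\ that the kernel $\Hom(E,V)$ is an ``ideal'' in the appropriate sense) and antisymmetric after the standard correction. The map $t\colon V\to E$ is obtained from the inner product: since $\langle\cdot,\cdot\rangle$ is surjective onto $V$ and $\rk V>1$, the kernel $\Hom(E,V)$ is an isotropic subbundle whose annihilator relation with the pairing produces, together with the de~Rham-type part of the Loday bracket restricted to sections of the kernel, the expected boundary map; concretely $t$ is read off from how $\Gamma(\Hom(E,V))$ brackets, analogously to the $\ud$-term in \eqref{standardtwistCour}. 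The $E$-connection $\Ena$ on $V$ is defined from the $W$-connection $\Wna$ composed with the splitting, and $H\in\Omega^3(E,V)$ is extracted as the obstruction to $[\cdot,\cdot]_E$ satisfying Jacobi, exactly as in \eqref{HJacobi}; one then verifies \eqref{HRep} and the coherence \eqref{DH} from the Loday identity \eqref{JaCou} and the two axioms \eqref{VCoura} of the $V$-twisted Courant algebroid. Finally I would check that different splittings of \eqref{VEW} change the reconstructed data by a $B\in\Omega^2(E,V)$ exactly as splittings of the Q2-manifold do (cf.\ the remark after Theorem~\ref{thm1.5} and the transformation law \eqref{split3}), so the correspondence is canonical only up to isomorphism, and that the two constructions are mutually inverse.

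The main obstacle I expect is twofold and is exactly where the hypothesis $\rk V>1$ enters. First, reconstructing $t$ and showing $\Hom(E,V)$ is isotropic and behaves like an ideal: one must rule out ``accidental'' non-isotropic complements, and when $\rk V=1$ the pairing-with-values-in-a-line-bundle does not pin down the kernel rigidly enough (this is precisely why line-bundle twisted Courant algebroids are genuinely more general than Courant algebroids, as noted after Definition~\ref{def:VCour}), so the exact sequence \eqref{VEW} need not be recoverable. Second, the bookkeeping in verifying that \eqref{VCourant2} satisfies \eqref{VCoura} and \eqref{JaCou} is delicate because of the three ``extra'' terms beyond the naive $[X,Y]_E\oplus([\imath_X,\ED]b-\imath_Y\ED a)$; I would organize this by splitting each identity into its $\Gamma(E)$-component and its $\Gamma(\Hom(E,V))$-component, treating the $C^\infty(M)$-linearity (Leibniz) properties separately from the purely algebraic ones, and systematically using \eqref{DH} to cancel the $\ED H$ terms that arise. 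A clean way to sidestep much of the second difficulty is to route the whole argument through Theorem~\ref{thm1.5}: identify $W$-with-its-structure on the one side with a Q2-manifold via the derived bracket on degree $-1$ vector fields (as sketched around \eqref{QderLie} and in the text preceding Definition~\ref{def:VCour}), and then invoke Theorem~\ref{thm1.5} to get the Lie 2-algebroid; the residual task is then only to show that \emph{this} Q2-manifold is the same as the one attached to $V\xrightarrow{t}E$, and to read off the explicit bracket formula \eqref{VCourant2} by computing the derived bracket in the splitting, which is bounded-length computation rather than an open-ended one.
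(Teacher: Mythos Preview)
Your overall two-direction strategy matches the paper's, and routing through the Q2-manifold/derived-bracket picture is exactly what the paper does for the forward direction (Proposition~\ref{p:FWder} is proved by identifying the given data with the derived bracket on $\X_{-1}$ of the associated Q2-manifold). However, there are two concrete errors that would derail the argument.

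First, your $W$-connection is wrong. You set $\Wna_{X\oplus a}:=\Ena_X$, but then the axiom $\langle\psi_1,[\psi_2,\psi_2]\rangle=\tfrac12\Wna_{\psi_1}\langle\psi_2,\psi_2\rangle$ fails: computing $[Y\oplus b,Y\oplus b]_W$ from \eqref{VCourant2} gives $-t(b(Y))\oplus\ED(b(Y))$, so the left side is $\Econn_X(b(Y))-a(t(b(Y)))$, forcing $\Wna_{X\oplus a}v=\Econn_Xv-a(t(v))$ (this is Eq.~\eqref{rhoQCont} in the paper). This matters structurally because in the reverse direction the paper extracts both $\Ena$ \emph{and} $t$ from $\Wna$ via this very formula, not from the bracket as you propose.

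Second, and more seriously, you have misidentified where $\rk V>1$ enters. An isotropic splitting of \eqref{VEW} always exists regardless of $\rk V$ (the paper constructs one explicitly: take any splitting $j_0$, form $L(X,Y)=\langle j_0X,j_0Y\rangle$, and correct by $-\pi^*L^\#$). The rank condition is \emph{not} used to recover $t$ or to get isotropy. It is used at a later, sharper step: after defining $[\cdot,\cdot]_E$, $H$, $\Ena$, $t$ from the splitting, one writes the actual bracket as the candidate \eqref{VCourant2} plus six $C^\infty(M)$-bilinear correction tensors $\delta_1,\dots,\delta_6$, and the invariance axiom \eqref{fwInvar} reduces their vanishing to a pointwise linear-algebra statement: any $\Delta\in\Hom(V_1,V_2)^*\otimes V_1$ with $a(\Delta(a))=0$ for all $a\in\Hom(V_1,V_2)$ must vanish when $\dim V_2>1$, while for $\dim V_2=1$ any $\Delta\in\Lambda^2V_1$ survives (Lemma~\ref{p:Jh=0}). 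This lemma is the genuine crux of the reverse direction and the precise reason the theorem fails for line bundles; your proposal does not contain it or anything playing its role.
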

In the above formula, $X$, $Y \in \Gamma(E)$ and $a$, $b$ are either viewed as elements of $\Omega^1(E,V)$ (in the first two terms following the $\oplus$) or as homomorphisms from $E$ to $V$ so that they can be composed with the fixed map $t \in \Hom(V,E)$ of the Lie 2-algebroid. 

The similarity of formula \eqref{VCourant2} with Equations \eqref{standardtwistCour} and \eqref{ACourant} is striking, in particular in view of the fact that, in contrast to those examples, now the (anti-symmetric) Lie 2-algebroid bracket $[ \cdot , \cdot ]_E$ does no more satisfy a Jacobi condition in general, while still \eqref{VCourant2} obeys the left-derivation property \eqref{JaCou}. Similarly, $\ED$ does no more square to zero in general (such as $\uD$ and $\ud$ in the analogous formulas \eqref{ACourant} and \eqref{standardtwistCour}, respectively),
but rather gives the $E$-curvature \eqref{HRep}. It is here where the simple $t$-contributions to the new bracket become essential. 

Theorems \ref{thm1.5} and \ref{thm1.6} certainly also induce a 1:1 correspondence between the above mentioned ``exact V-twisted Courant algebroids'' (with $\rk V >1$) and  Q2-manifolds $\CM$ (with at least two independent generators of $C^\infty(\CM)$ of degree 2). Exact $V$-twisted Courant algebroids thus provide an alternative, ordinary differential geometric description of the graded-geometrical notion of a Q2-manifold---with respective restrictions provided in the parenthesis of the previous sentence.

The condition of $\rk V \neq 1$ seems necessary, since there exist examples of line bundle twisted Courant algebroids not corresponding to Q2-manifolds (in the above way). In fact, one to our mind interesting problem left open in the analysis of the present paper is a classification of line bundle twisted Courant algebroids, i.e.~$V$-twisted Courant algebroids $W$ with $\rk V = 1$. 

On the technical level this restriction on the rank can be traced back to the following simple fact of linear algebra (proven in Lemma \ref{p:Jh=0}): Let $V_1$ and $V_2$ be two vector spaces. Then any linear operator $\Delta \colon \Hom(V_1,V_2) \to V_1$ satisfying $a(\Delta(a))=0$ for any $a \in \Hom(V_1,V_2)$ has to necessarily vanish \emph{iff} $\dim V_2 >1$, while, e.g., for $V_2 = \R$ one may identify $\Delta$ with an element of $V^1 \otimes V^1$ and then any $\Delta \in \Lambda^2 V_1$ obviously satisfies the condition for symmetry reasons. 

In view of the different descriptions of QP2 manifolds as infinite dimensional Lie 2-algebras sketched before, it is at this point no more astonishing that if the QP2 manifold is regarded as a Q2 manifold (by the forgetful functor), one does not obtain the usual Courant algebroid in the above fashion. Instead, as before, $M \times \R$ is replaced by $T^*M$, so one obtains a particular $T^*M$-twisted Courant algebroid in the above identification (when applied to a QP2 manifold viewed upon as a Q2 manifold). 

The bracket \eqref{standardtwistCour} is also well-defined for sections of $TM \oplus \Lambda^k T^*M$ if $H \in \Omega^{k+2}_{closed}(M)$, (without the $H$-twist, this bracket goes back to Vinogradov, cf.~\cite{Vino90}). One obtains similar properties to those of the standard Courant algebroid, but, for example, the symmetrized contraction of its elements provides an inner product which now takes values in $V=\Lambda^{k-1} T^*M$. In section \ref{sec:derived} we propose an axiomatization of these properties:  Very much as \cite{Xu97} does it for the case of $k=1$, leading  (together with its later reformulations/simplifications) to the Definition \ref{def:Cour} above, we propose the definition of a \emph{Vinogradov algebroid} (cf.~Definition \ref{def:Vino} below). We also show that $V$-twisted Courant algebroids are particular Vinogradov algebroids. 

Section \ref{sec:derived} is a relatively long section and contains further potentially interesting subjects (like the axiomatization of the bracket \eqref{standardtwistCour} when applied to higher form degrees into a Vinogradov algebroid and a simple non-standard Q-description of $H$-twisted standard Courant and Vinogradov algebroids). We do not want to summarize all this here, so as to keep some tension for the interested reader.

\vskip\newpassage

Finally, in the Appendix \ref{s:Linf} we provide a further reformulation of Q$k$-manifolds in terms of $k$-term $L_\infty$-algebroids, which we also define there so as to generalize the more standard relation between Q-manifolds over a point and $L_\infty$-algebras \cite{SHLA}.

\newpage
\section{From Bianchi identities to Q-structures} \label{s:Bian}
We want to construct a gauge theory for some $p$-form gauge field. As examples
show, at least if we want to go beyond a trivial abelian version of such a theory with $p\ge 1$, 
this necessarily involves a tower of gauge fields, where the highest form
degree is $p$. For definiteness we take $p=2$, but the generalization to
arbitrary $p$ is immediate. We thus consider a collection of 2-forms $B^D$,
together with 1-form fields $A^a$, and in general also 0-form fields
$X^i$. The scalar fields are permitted to take values in some internal
manifold $M$, which is not necessarily a vector space---so in general the
theory is a sigma model. Denoting spacetime by $\Sigma$, one has
 $X \colon \Sigma \to M$, and the particular collection of scalar fields
 $X^i=X^i(\x)$ ($\x \in \Sigma$) results after a (local) choice of
 coordinates $x^i$ 
on $M$, $X^i = X^*(x^i)$.\footnote{It is
tempting to exchange the notation of $M$ and
$\S$; then $M$ would be spacetime, and if $\S$ is nontrivial, one has a
sigma model. However, here we conform to previous publications.}  In
many examples there are no scalar fields and the tower of gauge fields starts
with 1-forms only. This just corresponds to the choice $M$ a point, in which
case one may drop the map $X$ as it contains no information. However, even in this case, scalar fields would then enter the theory in a second step representing matter; intending to stay as general as possible, we permit their presence thus right from the beginning. 

One of the main ingredients in a gauge theory are the field strengths. We
also know how they should start, namely with the exterior derivative acting
on the respective gauge field, possibly then modified by adding wedge
products of gauge fields to this expression. We thus make the most
general ansatz of this kind:
\ba  F^i &:=& \ud X^i - \rho^i_a A^a \, , \label{Fi}\\
        F^a&:=& \ud A^a + \2 C^a_{bc} A^b A^c - t^a_D B^D\, , \label{Fa}\\
        F^D&:=& \ud B^D + \Eco^D_{aC} A^a B^C - \frac{1}{6} 
        H^D_{abc} A^a A^b A^c \, ,
\label{FB}
\ea
where wedge products between differential forms are understood 
and
all the yet unrestricted coefficients can be functions of  $X$.
For some purposes it will prove convenient to use a more condensed
notation: We then denote the field strengths 
collectively by $F^\alpha$ and likewise 
the gauge fields by $A^\alpha$ (so $A^i \equiv X^i$, $A^D \equiv B^D$).


We now want to require a  very mild form of  Bianchi identities. In
ordinary YM-gauge theories Bianchi identities tell us that $\ud
F^a\equiv -C^a_{bc} A^b F^c$. We want to relax/generalize this
condition on the complete set of field strengths $F^\alpha$ by requiring
that the exterior derivative of them produces terms that all 
\emph{contain} field strengths again, $\ud F^\alpha = \lambda^\alpha_\b F^\b$,
where $\lambda^\alpha_\b$ may be  arbitrary, field dependent factors.
Interpreted in this sense, we require that one has:
\beq \boxed{\ud F^\alpha |_{F^\beta=0} \equiv 0}\, . \label{bian}\ee
This equation has to be understood in such a way that for \emph{every} choice of the gauge fields $A^\alpha$ satisfying $F^\alpha =0$ one has $\ud F^\alpha =0$.
We remark right away that this is satisfied also for generalizations
such as for 
higher YM theories as discussed in \cite{Baez02jn}
or the Algebroid YM theories of \cite{Str04b}. Moreover, there are
occasions where
the $X$-fields above, considered as part of the gauge field here, can
be identified with some Higgs type scalar fields  and $F^i$ then
corresponds to the covariant
derivative of this scalar field $X^i$. In standard
YM-theory e.g.~we may use the fact that the square of the
exterior covariant derivative
is proportional to the curvature to show
that also in this enlarged setting 
\eqref{bian} is satisfied.

Let us rephrase the 
requirement \eqref{bian}
in more mathematical terms: Within
the differential graded commutative algebra $(\A,\ud)$
generated by abstract elements $A^\alpha$ and $\ud A^\alpha$ there is an ideal $\I$
generated by $F^\alpha$ as 
defined in Eqs.~\eqref{Fi}--\eqref{FB}, i.e.~by the choice of
structural functions $\rho^i_a(X)$, $\ldots$, $H^D_{abc}(X)$. The
generalized Bianchi identities hold iff (by definition) this ideal is
an invariant subspace with respect to the exterior derivative, i.e.~iff Eq.~(\ref{ideal}) holds true.

There are some further possible reformulations of the above
conditions. 
One may rephrase them as saying e.g.~that the
structural functions must be such that the quotient
algebra $\A/\I$ forms a so-called free differential algebra
(cf.~\cite{Sull77b} or, for a physical
example in low dimensions, e.g.~\cite{Izq99}). Since this quotient
corresponds to looking at vanishing field strengths only, which will
\emph{not} be our field equations when interested in constructing
a (non-topological) generalization of Yang-Mills theories, we avoid
this perspective, such as that we avoid the notation $\ud F^\alpha
\approx 0$ instead of \eqref{bian} since for non-topological
theories the ``on-shell'' equality would not coincide with the on-shell
notion dictated by field equations. Finally, Eq.~\eqref{bian} may
also be viewed as an integrability condition in the following sense:
For all fields $A^\alpha$ which satisfy $F^\alpha(\s)=0$ for some point $\s
\in \S$, it follows automatically (by the choice of structural
functions) that also $\left(\ud F^\alpha\right)(\s)=0$.


The main purpose of this section is twofold: First, we want to show
that the simple ``physical'' and rather general requirement 
\eqref{bian}
automatically and naturally leads to
Q-structures.\footnote{By definition (but cf.~also the more detailed
 explanation below), Q-structures, introduced in the 
 context of topological field theories in \cite{AKSZ}\label{ct:Scha} 
  are homological vector fields of degree one on a
supermanifold.}
Secondly, to obtain the explicit form
of the generalized Bianchi identities (cf., e.g.,
Eqs.~\eqref{BFi}--\eqref{BFB}
below), the use of this formalism shortens calculations extremely.

For this purpose we first unify the gauge fields into a single map.
The scalar fields $X ^i(\x)$ describe some map  $a_0 \colon \Sigma \to
M$ ($a_0 \equiv X$ in previous notation); as already mentioned, they
are the pullback
of coordinate functions 
on $M$ to functions on $\Sigma$  by precisely this map (possibly
restricted to local charts). To imitate
this for the other components $A^a, 
B^D, \ldots$ of the gauge 
field, we introduce auxiliary spaces spanned by coordinates 
$\xi^a, b^D, \ldots$, taking these spaces to be linear and thus 
defined by the respective linear coordinates: $\W = \langle \xi_a   \rangle$,
$\V = \langle b_I   \rangle$, where $\xi_a$ is a basis 
of  vectors 
dual to $\xi^a$ (linear coordinates on a vector space are a basis in the
dual vector space) etc. 

Since every 1-form on $\S$ may be viewed as a (fiberwisely linear)
function on $T\S$, we may thus
regard $A^a$ as the pullback of the coordinate function $\xi^a$ by a map from
$T\S$ to $\W$. A local coordinate system on $T\S$ can be chosen in the form
$\x^\mu, \theta^\mu$, where $\theta^\mu=\ud \x^\mu$. Since the natural
multiplication of 1-forms is the antisymmetric wedge product, it is near at
hand to declare the fiber coordinates $\theta^\mu$ of $T\S$ as anti-commuting;
this is taken into account by the notation $T[1]\S$: the coordinates $\x^\mu$
of the base are of degree zero (and thus even and commuting), while
$\theta^\mu$ is of degree one (and in particular odd). In this way we may interpret $A^a$ as
the pullback of $\xi^a$ to functions on $T[1]\S$ (i.e.~differential forms) by
a map $a_1 \colon T[1]\S \to \W[1]$, $A^a = (a_1)^* \xi^a$; here we declared
the coordinates $\xi^a$ to be of degree one, too, so that the map $a_1$ becomes
degree preserving, and $(a_1)^*$ a morphism of algebras.  Likewise,
 $B^D = (a_2)^* b^D$, where $a_2 \: T[1]\S \to \V[2]$, and $b^D$ are declared to
be of degree two, since $B^D$ is a 2-form (that is a function on $T[1]\S$ of
degree two) and we want $a_2$ to be degree preserving again.\footnote{If one considers supersymmetric or supergravity theories, the form degree does not determine if a homogeneous element is commuting or anti-commuting. If one has bosonic and fermionic 0-forms, for example, then already $M$ will be a corresponding supermanifold. Likewise, there may be anticommuting and commuting 1-forms in a supergravity theory; to describe such objects one would introduce a bidegree on the target, so that the first degree corresponds to the form degree and the other one to the commutativity property. We will not go into further details about this possible generalization here; as a simple prototype of such a theory one may look at supergravity theories reformulated as Poisson sigma models with a super-Poisson structure on the target as given in \cite{Strobl:1999zz}.}

If $X^i$ denote local coordinates on $U \subset M$, we can assemble the above
maps into a single map $a$ from (open domains of) $T[1]\S$ to $U \times \W[1] \times
\V[2]$. It is tempting to require this direct product form of the target
manifold only on local charts $U$. Globally we may permit a twist (nontrivial
transition functions) in changing the vector spaces when changing charts in
$M$.  In fact, what we permit is a super-manifold $\CM_2$ of degree 2: Locally
it is described by degree 0 coordinates $x^i$, degree 1 coordinates $\xi^a$,
and degree 2 coordinates $b^D$.\footnote{\label{fn:pos} 
In this and the following section we will 
consider only non-negatively graded supermanifolds. (We will often call an integer graded manifold with its induced superalgebra of functions determined by the parity of the degrees simply a supermanifold, understanding implicitly from the context that we consider the full $\Z$-grading (and not just modulo 2)). Speaking of the degree of it we will 
then always mean the maximal degree of ``local coordinate functions'' (local generators). 
Later on also supermanifolds with (coordinate) functions of negative degrees will play 
some role, cf in particular Section~\ref{sec:derived}.} 
The transition functions need to be degree
preserving (by definition). This implies that forgetting about the degree two
coordinates one has a vector bundle $E \to M$ with typical fiber
$\W$. Similarly, $x^i, b^D$ define a vector bundle $V \to M$ with typical fiber
$\V$.  (Since the transition functions of $b^D$s may also contain a part
quadratic in the degree one coordinates $\xi^a$, the total space
$\CM_2$ does not correspond to the direct sum vector bundle $E \oplus
V$ canonically, but does so only after an additional choice of
section; we will come back to this subtlety in the subsequent section).

Thus in sum we obtain a degree preserving map $a \colon \CM_1 \to \CM_2$,
where $\CM_1 = T[1]\S$ is a degree one super-manifold.\footnote{More generally,
we may also permit a nontrivial fibration over $\CM_1$, such that $\CM_2$ is
only a typical fiber, leading to the notion of ``$Q$-bundles''. This
is dealt with in \cite{KS07}, while in this paper we restrict to
trivial bundles (except for the target bundle, encoded in the
supermanifold $\CM_2$). Still, the present 
formulas remain correct also in the more general setting if they are
reinterpreted as local ones.}
Choosing local coordinates $q^\alpha : = (x^i, \xi^a, b^D)$
on $\CM_2$ this gives rise to 
functions on $\CM_1$ by means of  the pullback $a^*$, which then 
just yields the set of  local 0-forms, 1-forms, and 2-forms,
respectively, with which we started this section, $A^\alpha = a^*(q^\alpha)$. 

The de Rham differential $\ud$ on $\S$ corresponds to a degree one
vector field 
on $T[1]\S \equiv \CM_1$, namely $\Qo = \theta^\mu 
\partial/\partial \x^\mu$. It squares to zero, $(\Qo)^2 \equiv 0$, and thus defines 
what is called a $Q$-structure on $\CM_1$. We now define a likewise vector field on $\CM_2$ 
using the above data. It is sufficient to know its action on coordinates $q^\alpha$: 
\ba \Qt x^i &:=& \rho^i_a \xi^a \, , \quad \label{Qi}\\
     \Qt \xi^a& := &- \2 C^a_{bc} \xi^b \xi^c +t^a_D b^D\, , 
     \label{Qa}\\
        \Qt b^D&:= & -\Eco^D_{aC} \xi^a b^C + \frac{1}{6} 
        H^D_{abc} \xi^a \xi^b \xi^c \, .\label{QB}
\ea
This vector field is obviously also of degree one, i.e.~it raises the degree
of the respective coordinate by precisely one unit. Furthermore we
introduce (cf.~also \cite{BKS})
\beq \CF := \Qo \circ a^* - a^* \circ \Qt \qquad \Rightarrow \qquad \CF (q^\alpha) 
\equiv F^\alpha \, . \label{CF}\ee

Up to here we only rewrote the initial data contained in
Eqs.~\eqref{Fi} --  \eqref{FB}. We now come to analyze
Condition~\eqref{bian}. Denoting by $\ev_\x$ evaluation of a
function on $\CM_1=T[1]\S$ at a point $\x \in \S$, in the present
language the generalized Bianchi identities (in their formulation as
integrability conditions for $F^\alpha=0$ equations) can be translated into the 
requirement that for every $\x \in \S$
the composed operator $\ev_\x \circ \Qo \circ
\CF$ vanishes whenever 
$a\colon\CM_1 \to \CM_2$ is such that the operator $\ev_\x
\circ \CF$ vanishes.

Now, using  
$(\Qo)^2 \equiv 0$, we find the identity
\be
\ev_\x \circ \Qo \circ \CF =- \ev_\x \circ \left( \CF \circ \Qt
  +a^* \circ (\Qt)^2\right). \label{preBian}
\ee
Thus, for \eqref{bian} to hold true
it is obviously sufficient that the vector field $\Qt$ 
squares to zero. But also the converse is true, provided only that the
dimension of spacetime $\S$ is at least $p+2$ ($p$ is the degree of
the supermanifold $\CM_2$, in our discussion above $p=2$): Assume that
$(\Qt)^2 \neq 0$, that $a\colon \CM_1 \to \CM_2$ is such that
$\ev_\x \circ \CF \equiv 0$, and that the integrability
condition holds true. Then there exists at least one
basic coordinate function $q^{\alpha_0} \in C^\infty(\CM_2)$ for which
$(\Qt)^2 q^{\alpha_0}=: \varphi$ is a non-vanishing function on $\CM_2$.
Note that this function has at most degree $p+2$, so $a^* \varphi$ is a
differential form on $\S$ of at most degree $p+2$. 
Applying \eqref{preBian} to $q^{\alpha_0}$ under the above assumptions,
we obtain $\ev_\x a^* \varphi =0$, valid for every $\s \in \S$. 
If we are able to show that
there always exists some $a$ fulfilling $(F^\alpha)_\x=0$ for which this
equation is violated, we obtained a contradiction, proving that $\Qt$
must square to zero. For this to be the case, we need the condition
$\dim \S \ge p+2$, because, otherwise, \emph{any} violation
$\varphi$ of $(\Qt)^2=0$ in a degree higher than $\dim \S$ will not contribute
to $\ev_\x a^* \varphi$. On the other hand, it is precisely the
integrability of 
 the equations $F^\alpha =0$ that  ensure that for \emph{every} choice of
 $\ev_\x \circ a$ (i.e.~for every choice of $(A^\alpha)_\x$)
we can 
find a map $a$, defined at least in a neighborhood of $\x \in \S$, such
that also $(F^\alpha)_\x=0$. This concludes the proof.

We thus managed to translate the relatively weak form of Bianchi identities 
\eqref{bian}, $\ud F^\alpha \sim \sum F^\b$ (where the sum contains
arbitrary field dependent coefficients), 
into a $Q$-structure on a target super manifold $\CM_2$. This gives the first part of Theorem \ref{thm:phys1} mentioned in the introduction, or, in the
particular case $p=2$, it means in detail:
\begin{thm} \label{theo:nilpot}
For a spacetime dimension of at least 4, the nilpotency of the
super vector field $\Qt$
defined in Eqs.~\eqref{Qi} -- \eqref{QB} is necessary and
sufficient for the validity of \eqref{bian}. 
\end{thm}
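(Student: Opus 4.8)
The plan is to make precise the sketch already given in the surrounding text, specializing to $p=2$ and $\dim\Sigma\ge 4$, and to organize it as an ``iff'' with the easy direction first. First I would record the algebraic identity \eqref{preBian}, namely $\ev_\x\circ\Qo\circ\CF = -\ev_\x\circ(\CF\circ\Qt + a^*\circ(\Qt)^2)$, which follows purely from $(\Qo)^2=0$ and the definition \eqref{CF} of $\CF$ together with the fact that $a^*$ is an algebra morphism and $\Qt$ a derivation; this needs only to be checked on the generators $q^\alpha$ and then extends by the Leibniz rule, so I would present it as a one-line computation. From this identity the sufficiency direction is immediate: if $(\Qt)^2=0$ then $\ev_\x\circ\Qo\circ\CF = -\ev_\x\circ\CF\circ\Qt$, and whenever $a$ is chosen so that $\ev_\x\circ\CF\equiv 0$ (i.e.\ all $(F^\alpha)_\x$ vanish) the right-hand side vanishes, giving $\ev_\x\circ\Qo\circ\CF\equiv 0$, which is exactly \eqref{bian} reformulated as an integrability condition.

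For the necessity direction I would argue by contradiction. Assume \eqref{bian} holds but $(\Qt)^2\neq 0$; then there is a generator $q^{\alpha_0}$ with $\varphi:=(\Qt)^2 q^{\alpha_0}\in C^\infty(\CM_2)$ not identically zero. Because $\CM_2$ has degree $p=2$ and $\Qt$ raises degree by one, $\varphi$ has degree at most $p+2=4$, hence $a^*\varphi$ is a differential form on $\Sigma$ of degree at most $4\le\dim\Sigma$, so it is \emph{a priori} capable of being nonzero somewhere. Evaluating \eqref{preBian} on $q^{\alpha_0}$ for any $a$ with $\ev_\x\circ\CF\equiv 0$ gives $\ev_\x(a^*\varphi)=0$ for all $\x\in\Sigma$. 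The crux is then to exhibit, for a suitably chosen point $\x$ and suitably chosen values of the fields at that point, a map $a$ defined near $\x$ such that $(F^\alpha)_\x=0$ yet $(a^*\varphi)_\x\neq 0$, contradicting the previous line.

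This last step is the main obstacle, and I would handle it as follows. Since $\varphi\not\equiv 0$ as a function on $\CM_2$, there is a point $m\in\CM_2$ (lying over some $x_0\in M$) and a choice of values for the degree-$1$ and degree-$2$ fibre coordinates $(\xi^a,b^D)$ at that point with $\varphi$ evaluated on those data nonzero; concretely $a^*\varphi$ is a polynomial in the forms $A^a, B^D$ (with $X$-dependent coefficients) whose value at $\x$ depends only on the $0$-jet $(X^i(\x))$ and the pointwise values of $A^a,B^D$ and their relevant wedge products at $\x$. The freedom $\dim\Sigma\ge p+2=4$ is exactly what guarantees there are enough linearly independent cotangent directions $\ud\x^\mu$ at $\x$ to realize arbitrary pointwise values of a form of degree up to $4$, in particular of $(a^*\varphi)_\x$; and the integrability of the system $F^\alpha=0$ (noted in the text: for every prescribed $0$-jet $(A^\alpha)_\x$ one can solve for a map $a$ with $(F^\alpha)_\x=0$ on a neighbourhood of $\x$) lets us impose $(F^\alpha)_\x=0$ while keeping those pointwise values of $A^a,B^D$ free. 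Choosing the free data to be the point $m$ above then yields $(a^*\varphi)_\x\neq 0$, the desired contradiction. Finally I would remark that $\dim\Sigma\ge 4$ is sharp here: in lower dimension a violation $\varphi$ concentrated in form degrees $3$ or $4$ pulls back to zero for degree reasons, so no contradiction can be derived, which is why the bound enters. This completes the equivalence and hence Theorem~\ref{theo:nilpot}.
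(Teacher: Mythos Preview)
Your proposal is correct and follows essentially the same approach as the paper: you use the identity \eqref{preBian} for sufficiency, and for necessity you argue by contradiction that a nonzero $\varphi=(\Qt)^2q^{\alpha_0}$ of degree $\le 4$ can be detected by $a^*\varphi$ at a point where $(F^\alpha)_\x=0$, with $\dim\Sigma\ge 4$ ensuring enough cotangent directions. The only minor comment is that invoking ``integrability of $F^\alpha=0$'' to achieve $(F^\alpha)_\x=0$ for arbitrary prescribed $(A^\alpha)_\x$ is slightly more than needed---it suffices that $F^\alpha=\ud A^\alpha+(\text{polynomial in }A^\beta)$ lets you freely adjust the $1$-jet of $A^\alpha$ at $\x$ to kill $(F^\alpha)_\x$---but this is exactly how the paper phrases it too.
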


The relation $(\Qt)^2=0$ contains \emph{all} the conditions we
want to place on the 
structural functions in \eqref{Fi} -- \eqref{FB}. They contain the 
generalization of the Jacobi identity of the structure constants in ordinary 
Yang-Mills gauge theories. In the subsequent section we will display all 
the conditions (in the example $p=2$), cf.~Eqs.~\eqref{B1} --
\eqref{B7} below, discuss
their general structure and present examples of it
such as Lie 2-algebras or Lie algebroids (an alternative perspective, containing Courant 
algebroids and their generalizations will be discussed separately in Section~\ref{sec:derived}).

In the present section we still want to display the explicit form of
the Bianchi identities that we obtain when \emph{using} the nilpotency 
of $\Qt$. One finds
\begin{align}  \ud F^i &\equiv \rho^i_{a,j}A^a F^j - \rho^i_a F^a \, ,
        \label{BFi}\\
      \ud   F^a&\equiv \frac12C^a_{bc,i}A^b A^c F^i -t^a_{D,i}B^D F^i
         -C^a_{bc} A^b F^c - t^a_D F^D\, ,  \label{BFa}\\
      \ud   F^D&\equiv -\Eco^D_{aC,i} A^aB^C F^i +\Eco^D_{aC} B^C F^a
        +\frac16H^D_{abc,i} A^aA^bA^c F^i  \nonumber\\
      &\quad  -\frac12H^D_{abc} A^aA^b F^c -\Eco^D_{aC} A^a F^C \, .
      \label{BFB}
\end{align}
Here the power of the employed super language becomes particularly transparent: 
Whereas in every explicit calculation one needs to make heavily use 
of the Eqs.~\eqref{B1} -- \eqref{B7} below, one may derive the above 
conditions within some lines by use of the following obvious two relations
(which
follow directly from the defining Equation~\eqref{CF} and the fact
that $\Qo$, $\Qt$ are nilpotent vector fields, cf.~also \cite{BKS,Str04b})
\ba  \Qo  \circ \CF &=& - \CF \circ  \Qt \label{dF} \\
   \CF(\phi) &=& F^\alpha \, a^*(\partial_\alpha \phi) \label{FLeibniz} 
\ea
where the second equation holds for all functions $\phi$ on the target
and $\partial_\alpha$ denotes a left derivation w.r.t.~$q^\alpha$. 
We illustrate this for $\alpha=i$: 
Applying the left hand side of \eqref{dF} to $x^i$, we get $\ud F^i$. On the 
other hand $\CF (\Qt x^i) = \CF(\rho^i_a\xi^a) = \CF(\rho^i_a) a^*(\xi^a)+
a^*(\rho^i_a) \CF(\xi^a) = \rho^i_{a,j} \CF(x^j) A^a + \rho^i_a F^a$.
(Following 
usual physics notation,
we did not explicitly display $a^*$ when acting on 
functions over $M$ such as  $\rho^i_{a,j}\equiv \partial_j \rho^i_a$,
which are then understood as
functions over $\Sigma$ via the ``scalar fields'' $X^i$).

   
The result above is readily specialized to particular cases of interest. 
E.g., for $M$ a point, we merely drop all terms containing $F^i$. If there is 
no $B$-field (i.e.~there are no degree two generators on $\CM_2$), we 
put $F^D$ and $B^D$ to zero wherever they occur. Combining both steps
we regain 
the Bianchi identities of ordinary Yang-Mills theory. 

\vskip\newpassage

Before continuing we briefly comment on some further
generalization. In the ansatz \eqref{Fi}--\eqref{FB} we could
have permitted also terms proportional to $\ud A^\alpha$, adding e.g.~a
term $\nu^a_i \ud X^i$ to the r.h.s.~of \eqref{Fa} etc. Since
such terms enter field strengths $F^\beta$ only for
values of $\beta$ corresponding to a higher form degree than the one
of $A^\alpha$, it is always possible to replace $\ud A^\alpha$ by $F^\alpha$ by
merely renaming coefficient functions in $F^\beta$. Thus, in the above
example of Eq.~\eqref{Fa}, we may instead equally well add $\nu^a_i
F^i$ to the ansatz, merely by
replacing $C^a_{bc}$ with new structure functions. Constructing
out of these new structural functions---except for the additional
coefficients $\nu^\b_\alpha$, which we may just ignore at this point---the 
vector field $Q_2$ as before, Eqs.~\eqref{Qi}--\ref{QB}, 
Theorem~\ref{theo:nilpot}
still holds true: While the ideal $\I$ generated by field strengths changes with 
the change of structural functions in the first step of the rewriting, it does not change by adding or dropping the terms with lower degree field strengths on the right hand sides in their definition and thus $\ud \I \subset \I$ gives the
same conditions.

More generally, the ideal $\I$ remains unaltered by every
redefinition of the form $\widetilde F^\alpha = M^\alpha_\b F^\b$ if the field
dependent matrix $M^\alpha_\b$ is invertible. If the form degree
of $A^\alpha$ is denoted by $|\alpha|$, then $M^\alpha_\b$ has form
degree $|\alpha| - |\b|\ge 0$. It thus is a lower triangular matrix, where
the non-vanishing off-diagonal components correspond to $\nu^\alpha_\b$
introduced above. The (block) diagonal pieces, those where $|\alpha| =
|\b|$, correspond to $X$-dependent coefficient matrices in front of 
$\ud A^\alpha$ in the definition of the new field strength $\widetilde
F^\alpha$---the matrix $M^\alpha_\beta$ is thus invertible, if and only if each of its 
diagonal components is nonzero everywhere. For example, instead of 
\eqref{Fi} we would consider
$\widetilde F^i = M^i_j \ud X^j - \widetilde \rho^i_a A^a$. Now, if the
matrix $M^i_j(X)$ is invertible (for every point in the target $M$),
we may introduce $\rho^i_a := (M^{-1})^i_j \widetilde \rho^j_a$, and
this is the first component of the vector field $Q_2$ of Theorem
\ref{theo:nilpot}, 
and so on.  Only
redefinitions of the above form with a non maximal and possibly even
varying rank of the
diagonal pieces of $M^\alpha_\b$ would require a new and in general more
intricate analysis. 

Also coordinate transformations on $M$, or, more
generally, on the supermanifold $\CM_2$, will induce
redefinitions of field strengths. In an active interpretation of coordinate transformations, i.e.~viewing them as (local expressions of) diffeomorphisms instead of a 
change of a chart (passive interpretation in physics terminology), the ideal $\I$ changes in general, but in a covariant way. In this active interpretation $\I$ remains unchanged only under very particular super-diffeomorphisms, namely those that leave the $Q$-structure invariant; we will come back to them in the context of gauge symmetries in section \ref{s:gauge} below. 

If $M$ is not a point, 
the field strength components $F^\alpha$ themselves do, however, \emph{not} transform covariantly, except for the lowest ones with $|\alpha| =0$. (Only the complete set
of the whole ideal $\I$ does). 
 This can be corrected by means of the choice of connections in
the vector bundles $E$ and
 $V$.\footnote{The bundles $E$ and $V$ and these connections are not to
be confused with the (total) bundle of the gauge theory and the gauge fields,
respectively. The first one, the total bundle, is assumed to be trivial
within this paper---for a generalization cf.~\cite{KS07}---and we
on purpose did not call $A^\alpha$ a connection since this would require
further geometrical justification, and in some way is even the wrong
notion. The choice of $M$, $E$, $V$, and possibly connections on the
latter two, are fixed background data; they replace the choice of a
Lie algebra in ordinary Yang-Mills theories. Only the gauge fields $A^\alpha$ 
are dynamical.} One way of seeing this is to first determine the
transformation behavior of all the structural coefficient functions 
under a change of coordinates on $\Mt$---we will do this within the
subsequent section---and then to correct for the overall behavior of
$F^a$ and $F^D$ by appropriate additions. There is a much simpler route,
however, using the fact
that $\CF$ is a (graded) Leibniz-type operator
over $a^*$, cf.~Eq.~\eqref{FLeibniz}:
A change of local frame in the vector bundle $E$, e.g.,
corresponds to a change of coordinates $\xi^a \mapsto
\widetilde{\xi^a} = M^{\tilde a}_b \xi^b$ on $\Mt$, where $M^{\tilde a}_b$ are
functions on the base\footnote{which is not to be confused with
spacetime $\S$---both $M$ and $E$ are data of the target.}
 $M$ of $E$. With \eqref{FLeibniz} this implies $F^{\widetilde{a}}= M^{\tilde
 a}_b(X) F^b +  M^{\tilde a}_{b,i}(X) F^i  A^b$, 
 where $M^{\tilde a}_{b,i} \equiv \partial
 M^{\tilde a}_{b}/\partial x^i$. Only the first term of the two
 corresponds to a tensorial behavior. From this it is obvious
 that if $\Gamma^a_{ib}$ and
 $\Gamma^D_{iC}$ denote connection coefficients in local frames in $E$
 and $V$, respectively, the redefined field strengths 
 \ba F^a_{(\Gamma)} &:=& F^a + \Gamma^a_{ib} F^i  A^b \label{FaG}
 \\
     F^D_{(\Gamma)} &:=& F^D + \Gamma^D_{iC} F^i  B^C
  \label{FBG} \ea
(together with the unaltered $F^i$) are tensorial. In some gauge invariant 
action functionals $S$  there may be
a contribution of the form $\Lambda_i F^i$, where $\Lambda_i$ is a
Lagrange multiplier field;\footnote{Cf, e.g., \cite{Str04b,KS08}.} in such a case one may replace $F^a_{(\Gamma)},
F^D_{(\Gamma)}$ by the simpler expressions
$F^a, F^D$ in the remaining part of the action $S$---at
the mere expense that the new, redefined Lagrange multipliers loose
their tensorial behavior. For other functionals these contributions
may be mandatory, however. 

Finally, an interesting alternative to the use of $\CF$ or the auxiliary connections needed for a tensorial behavior of the field strengths like in Eqs.~\eqref{FaG} and \eqref{FBG} is to introduce a map $f \colon \CM_1 \to T[1]\CM_2$ defined by means of $f^*(q^\alpha)=a^*(q^\alpha)$ and $f^*(\ud q^\alpha)=\CF (q^\alpha)$;\footnote{This was found later in the paper \cite{KS07}.} 
 equipping $T[1]\CM_2$ with an appropriate canonical Q-structure, $f$ becomes a Q-morphism for \emph{every} map $a \colon \CM_1 \to \CM_2$ and generalizes the Chern-Weil map in the case of non-trivial Q-bundles as is useful for constructing corresponding characteristic classes. In particular, $f^* \colon \smooth(T[1]\CM_2 \to \Omega^\bullet(\Sigma))$ is an algebra morphism then, $f^*(x^i\xi^a \ud \xi^D) = X^i A^a \wedge F^D$ etc. 

\newpage

\section{Q-structures for $p=2$ --- Lie 2-algebroids}
\label{s:examQ}
In this section we study general vector fields of the form
\eqref{Qi}--\eqref{QB}. Dropping the index 2 for notational
convenience within \emph{this} section,
\beq \label{Q} Q = \rho^i_a \xi^a \frac{\partial}{\partial x^i}
- \2 C^a_{bc} \xi^b \xi^c \frac{\partial}{\partial \xi^a} +t^a_D b^D
\frac{\partial}{\partial \xi^a}  -\Eco^D_{aC} \xi^a b^C
\frac{\partial}{\partial b^D}
+ \frac{1}{6} 
        H^D_{abc} \xi^a \xi^b \xi^c \frac{\partial}{\partial b^D} \eeq
is the most general degree one vector field on a degree two
supermanifold (cf.~also footnote \ref{fn:pos}). 
The range of indices is $i = 1, \ldots, n$, $a = 1,
\ldots r$, $D=1 \ldots s$, respectively,
where, in previous notation,  $n = \dim M$, $r = \dim V$, and $s  = \dim W$.
Note that $C^c_{ab}$ and $H^D_{abc}$ are completely
antisymmetric in their lower indices by construction. 
Requiring $Q$ to be homological, that is to square to
zero and thus to define a Q-structure, one obtains the following identities to
hold true: 
\begin{align}
  \rho^j_{[a}\rho^i_{b],j} -\frac12\rho^i_c C^c_{ab}  &= 0  \label{B1}\\
  \rho^i_a t^a_D &= 0  \label{B2}\\
  3C^e_{[ab} C^d_{c]e} +3\rho^i_{[c} C^d_{ab],i} - t^d_D H^D_{abc} &= 0
    \label{B3}\\
  \rho^i_c t^a_{D,i} -\Eco^F_{cD} t^a_F -C^a_{bc}t^b_D &= 0  \label{B4}\\
  \rho^i_{[b}\Eco^D_{a]F,i}  +\frac12 \Eco^D_{eF}C^e_{ab}
    -\Eco^D_{[aC}\Eco^C_{b]F} +\frac12H^D_{abc}t^c_F &= 0  \label{B5}\\
  t^a_{(A} \Eco^D_{aC)} &= 0  \label{B6}\\
  \Eco^D_{[aF}H^F_{bcd]} +\rho^i_{[a}H^D_{bcd],i} -\frac32H^D_{e[ab}C^e_{cd]}  &= 0  \label{B7}
\end{align}
The first two equations result upon twofold application of $Q$ to
coordinates $X^i$, the next three to $\xi^a$, and the last two when
$Q^2$ is applied to $b^D$ (the result being put to zero
in all cases). Round/square brackets indicate symmetrization/antisymmetrization
of enclosed
indices of equal type (in Eq.~\eqref{B5}, e.g., this concerns the indices $a$
and $b$ in the respective two terms, and only them).

We now discuss the meaning of these equations in more standard,
index-free terms. We start with $M =\pt$ and $s=0$, i.e.~there are no $X^i$
and $b^D$ coordinates. Correspondingly,  in \eqref{Q} only the term
containing $C^c_{ab}$  survives. From the above set of equations, the
only nontrivial one is Eq.~\eqref{B3}, reducing to $C^e_{[ab}
C^d_{c]e}=0$, which is the Jacobi identity for the coefficients
$C^a_{bc}$. Thus in this simple case, in which the supermanifold
$\CM$ has a
trivial base and is actually only of degree one,
one finds the Q-manifold
$(\CM,Q)= (\lie[1],\ud_{CE})$, where $\lie$ is some Lie algebra and
$\ud_{CE}$ the corresponding Chevalley-Eilenberg
differential. Super vector spaces of degree one
with Q-structure are equivalent to
Lie algebras. 

Let us in a next step drop the restriction that $M$ is a point,
i.e.~we deal with a general degree one supermanifold (which,
certainly, may also be viewed as a particular degree two
supermanifold) with Q-structure. This leaves the first two
terms in \eqref{Q}. The corresponding equations, \eqref{B1} and
\eqref{B3} with $t^d_D \equiv 0$, are recognized as the
structural equations for a Lie algebroid: A Lie algebroid is a vector
bundle $E$ over $M$ together with a bundle map $\rho \colon E \to TM$,
called the anchor map, 
and a Lie algebra defined on the sections of $E$ satisfying the
Leibniz rule $[\psi, f
\varphi] = f [\psi, \varphi] + \rho(\psi) f \, \varphi$.\footnote{For
later use we mention that we call $E$ an \emph{almost Lie algebroid} if it
is defined as above with the mere difference that the bracket $[
\cdot, \cdot ]$ needs to
be antisymmetric only, but not necessarily to satisfy the Jacobi
identity. If $\rho \equiv 0$ in an (almost) Lie algebroid, the bracket
defines a fiberwise
product and $E$ becomes a bundle of (almost) Lie algebras.}
{} From these
data one can infer that $\rho$ is also a morphism of Lie algebras, $\rho([\psi, 
\varphi]) =  [\rho(\psi), \rho(\varphi)]$. Choosing a local frame
$\xi_a$ in $E$ and local coordinates $x^i$ on $M$, the anchor gives
rise to structural functions $\rho^i_a=\rho^i_a(x)$ 
via $\rho(\xi_a) = \rho^i_a \partial_i$. Likewise, the bracket between
two sections from the basis induces $C^c_{ab}=C^c_{ab}(x)$:
$[\xi_a,\xi_b]= C^c_{ab} \xi_c$. Now it is easy to see that the
morphism property of $\rho$ implies Eq.~\eqref{B1} and the Jacobi
identity for the Lie bracket on $E$, together with the Leibniz rule,
yields (the respective remainder of) Eq.~\eqref{B3}. 

For the reverse direction, i.e.~to recover the index free geometrical
structure from the above formulas, one needs to address coordinate
changes on the Q-manifold $\CM$. By definition, coordinate changes on
a graded manifold need to be degree preserving; therefore, 
$\widetilde{x^i}= \widetilde{x^i}(x)$,
$\widetilde{\xi^a} = M^{\tilde a}_b(x) \xi^b$. Thus one deals with a
vector bundle $E$ over the base $M$ of $\CM$ and $\CM = E[1]$ (by
definition, for a vector bundle the shift in degree
concerns only the fiber coordinates). Implementing this coordinate
change on $Q$, we see that $\rho^i_a$ transforms like a tensor, but
$C_{ab}^c$ does not. Indeed, 
$\partial_i=\frac{\partial \widetilde{x^j}}{\partial x^i}
\widetilde{\partial_j} + \frac{\partial \widetilde{\xi^a}}{\partial
x^i} \widetilde{\partial_a}$, where 
$\partial_i = \partial/\partial x^i$, $\partial_a = \partial/\partial \xi^a$
etc. So, there is a contribution proportional to $\rho^i_a$ and the
$x$-derivative of $M^{\tilde a}_b$ to the transformation law of
$C_{ab}^c$. Since $C_{ab}^c \xi_c = [\xi_a, \xi_b]$ is to hold true
in every frame---after all we do not want to have frame-dependent
definitions of a bracket---this results in the Leibniz property of the
bracket. The rest is then obvious.

Summarizing, a degree one supermanifold with Q-structure, 
called Q1-manifold for simplicity,  is \emph{tantamount}
to a Lie algebroid, $(\CM,Q)=(E[1],{}^E\ud)$.\footnote{This was observed first by Vaintrob \cite{Vai97}, including also the compact formulation of Lie algebroid morphisms (a proof of equivalence of this definition with the standard, more complicated one of \cite{MaH93} can be found in \cite{BKS}). --- Recall \label{f:8}  that we 
restricted to non-negatively graded supermanifolds; Q-manifolds with 
this restriction are also called NQ-manifolds in the literature, cf., e.g., 
\cite{Sch93,AKSZ}.} Here ${}^E\ud$ is the
canonical differential that generalizes $\ud_{CE}$ from above for
$M=\pt$ and the de Rham differential for $E=TM$, $\rho = \id$,
the standard Lie algebroid (cf., e.g., \cite{CaWe99} for
${}^E\ud$  or just
use the above formulas for $Q$, reinterpreted correspondingly, as a
possible definition). 
 
In order to reveal the differential geometry contained in a general
``Q2-manifold'', i.e.~a degree two graded manifold with Q-structure, 
Eq.~\eqref{Q} together
with Eqs.~\eqref{B1}--\eqref{B7}, we follow the same strategy: We
first look at degree preserving coordinate changes on $\CM$,
\beq \widetilde{x^i}= \widetilde{x^i}(x), \qquad 
\widetilde{\xi^a} = M^{\tilde a}_b(x) \xi^b , \qquad
 \widetilde{b^D}  = N^{\tilde D}_C(x)  b^C +\frac12 L^{\tilde D}_
{ab}(x)\xi^a\xi^b  \, . \label{trafo} \eeq
 If there were no $\xi^a$ coordinates,
i.e.~$r=0$, similarly to before we would obtain $\CM = V[2]$ for some
vector bundle $V$ over $M$. The above $L$-terms imply that in general
$\CM$ is not just the direct sum of two vector bundles $E$ and $V$,
but instead one has the sequence of supermanifolds
\beq V[2] \to \CM \to  E[1] \, . \label{sequence} \eeq
(The first map is an embedding, characterized 
by $\xi^a =0$, and the second map a (surjective)
projection, dropping the $b^D$ coordinates). 
Only after the choice of a
section $\imath \colon E[1] \to \CM$, characterized locally by $b^D = L^D_{ab}(x) \xi^a \xi^b$, 
this gives rise to $\CM = E[1] \oplus V[2]$. The \emph{difference} between two 
sections is a 
section $B$ of $\Lambda^2 E^* \otimes V$---similarly to the fact that the
difference between two connections in a vector bundle
corresponds to a tensorial object.
In what follows the choice of a section 
shall be understood, so that in terms of ordinary differential geometry 
there is a total space corresponding to the Whitney sum  
$E \oplus V$ of two vector bundles over $M$. However, let us mention that when the Q-manifold under discussion carries 
additional structures, like a compatible symplectic form, this may not be the best adapted description. 


So, under the above assumption, a (non-negatively) graded manifold of (maximal) degree two gives rise to
a vector bundle $E \oplus V$ 
over $M$. Next we
determine the transformation properties of the coefficients in
\eqref{Q}. For later purposes we keep also $L$-contributions,
disregarding them only afterwards within the rest of this
section. With \eqref{trafo} one
obtains:
\begin{eqnarray}
  \widetilde{\rho^i_a} &=& \frac{\partial\tilde x^\imath}{\partial
x^j} \rho^j_b (M^{-1}) ^b_{\tilde a}  \label{trafo1} \\
  \widetilde{C_{ab}^c} M^{\tilde a}_d M^{\tilde b}_e     
   &=& M^{\tilde c}_f C_{de}^f
    -2M^{\tilde c}_{d],i}\rho^i_{[e}
    +M^{\tilde c}_f t^f_D (N^{-1})^D_{\tilde E} L^{\tilde E}_{de}
   \label{trafo2} \\
  \widetilde {t^a_D} &=& M^{\tilde a}_b t^b_C  (N^{-1})^C_{\tilde D}
  \label{trafo3}
  \\
  \widetilde {\Eco^C_{aE}} M^{\tilde a}_b N^{\tilde E}_D 
   &=&  N^{\tilde C}_E \Eco^E_{bD}
    -N^{\tilde C}_{D,i} \rho^i_b
    -L^{\tilde C}_{db} t^d_D  \label{trafo4} \\
  \widetilde {H^D_{def} } M^{\tilde d}_a M^{\tilde e}_b M^{\tilde f}_c
   &=& N^{\tilde D}_C H^C_{abc} 
    +3N^{\tilde D}_F \Eco_{a]C}^F (N^{-1})^C_{\tilde E} L^{\tilde E}_{[bc}
    +3\rho^i_{[a} L^{\tilde D}_{bc],i}
    -3L^{\tilde D}_{d[a}C^d_{bc]}
  \nonumber\\  &&
    -3N^{\tilde D}_{C,i}\rho^i_{[a} \big(N^{-1}\big)^C_{\tilde E}
    L^{\tilde E}_{bc]}
    -3L^{\tilde D}_{d[a} t^d_C (N^{-1})^C_{\tilde E} L^{\tilde E}_{bc]}
  \label{trafo5}
\end{eqnarray}
Note that $C_{ab}^c$, $H^D_{abc}$ and $\Eco_{aD}^C$ are affected by the affine 
$L$-contributions. Only the two quantities $\rho^i_a$ and $t_D^a$ are
not; they both correspond to bundle maps, namely $t \colon V
\to E$ and $\rho \colon E \to TM$ or, equivalently, to sections of
$V^* \otimes E$ and $E^* \otimes TM$, respectively.
 If $L$-terms are suppressed, $H$ is likewise just a 
section of $\Lambda^3 E^* \otimes V$. But the nature of all three
objects, $H$,
the bracket on $E$ induced by $C^c_{ab}$, and the object corresponding
to $\Eco_{aD}^C$ discussed further below 
can change if $L^D_{ab}$ is not forced to vanish. 
For the next steps we will first restrict to vanishing $L$-contributions, i.e.~to
a fixed choice of a section of \eqref{sequence}.

Even under this assumption, the nature of $\Eco_{aD}^C$ is more
intricate than that of $H$. 
As anticipated by the notation,  $\Eco_{aD}^C$ are
the coefficients of a kind of connection. More precisely, 
introduce an
$E$-connection $\Ena$ in $V$, i.e.\ for every $\psi \in \Gamma(E)$ an
operator $\Econn_\psi \colon \Gamma(V)
\to \Gamma(V)$ satisfying the Leibniz rule \mbox{\( \Econn_\psi (fv) =
\rho(\psi)f \cdot v +f \Econn_\psi v \)} such that the assignment $\psi
\mapsto \Econn_\psi$ is $C^\infty(M)$-linear in
$\psi$.\footnote{More abstractly, $\Econn_\cdot$ corresponds to a map from $E$
  to $A(V)$, the Atiyah algebroid of $V$. $A(V)$ is a vector bundle over $M$
  itself, namely the Lie algebroid
  corresponding to the canonical Lie groupoid of automorphisms of the
  vector bundle $V$.}
 This definition reduces to the one of an ordinary covariant
derivative in the case of $E=TM$. Then $\Eco_{aD}^C$ are just the
E-connection coefficients in a local frame, $\Econn_{\xi_a} b_D =
\Eco_{aD}^C b_C$; now Eq.~\eqref{trafo4}  follows from
\eqref{trafo} (both without the respective
$L$-contribution, \eqref{trafo} interpreted as a change of frames in
the two vector bundles) by means of the properties of $\Ena$ and vice versa.     

Finally the coefficients $C_{ab}^c$ induce a
bracket $[ \cdot , \cdot ]$ on sections of $E$  via $[\xi_a,\xi_b] :=
C_{ab}^c \xi_c$, which is antisymmetric by construction and, 
\emph{according to} \eqref{trafo2}
 without $L$-terms, satisfies a Leibniz
rule. 

Now, using the above data, we are to reinterpret the conditions
ensuring $Q^2=0$. In this way we arrive at:
\begin{prop} \label{prop:double}
A Q-structure on a supermanifold of degree two (a Q2-manifold)  
with section of the sequence \eqref{sequence} as described above 
is equivalent to a Lie 2-algebroid.
\end{prop}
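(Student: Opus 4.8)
The plan is to read the seven scalar identities \eqref{B1}--\eqref{B7}, which are equivalent to $Q^2=0$ for the vector field \eqref{Q}, directly as the defining relations of a Lie 2-algebroid in the sense of Definition \ref{def:Lie2d}, using the dictionary already assembled above. The chosen section of \eqref{sequence} yields two vector bundles $E,V\to M$; the frame-independent coefficients $\rho^i_a$ and $t^a_D$ give the bundle maps $\rho\colon E\to TM$ and $t\colon V\to E$; the $C^c_{ab}$ define an antisymmetric bracket $[\cdot,\cdot]$ on $\Gamma(E)$ which obeys a Leibniz rule because of the transformation law \eqref{trafo2} (with $L\equiv 0$); the $\Eco^C_{aD}$ define an $E$-connection $\Ena$ on $V$ because of \eqref{trafo4}; and, with $L$-contributions switched off, $H^D_{abc}$ is a genuine tensor $H\in\Omega^3(E,V)$. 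Thus both sides of the claimed equivalence refer to one and the same tuple $(E,V,\rho,t,[\cdot,\cdot],\Ena,H)$, and the proposition reduces to matching the structural equations term by term.

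First I would dispose of the ``linear'' relations. Equation \eqref{B1} states exactly that $\rho$ intertwines the bracket on $\Gamma(E)$ with the Lie bracket of vector fields, i.e.\ that $E$ is an almost Lie algebroid; equation \eqref{B2} is $\rho\circ t=0$; equation \eqref{B4}, in a local frame, is the relation $t(\Econn_\psi v)=[\psi,t(v)]$ for all $\psi\in\Gamma(E)$, $v\in\Gamma(V)$; and equation \eqref{B6}, $t^a_{(A}\Eco^D_{aC)}=0$, says precisely that the $C^\infty(M)$-bilinear map $(v,w)\mapsto\Econn_{t(v)}w$ is antisymmetric, which is what allows one to \emph{define} the bracket on $\Gamma(V)$ by $[v,w]:=\Econn_{t(v)}w$, so that the last relation of Definition \ref{def:Lie2d} then holds tautologically (and its Jacobiator follows, as noted after that definition, from \eqref{HRep}).

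Next come the three ``quadratic'' relations measuring the failure of strictness. Equation \eqref{B3} is the Jacobiator identity \eqref{HJacobi}, $[\psi_1,[\psi_2,\psi_3]]+\cycl=t\bigl(H(\psi_1,\psi_2,\psi_3)\bigr)$, once its cyclic sums are rewritten with the antisymmetrizations of \eqref{B3}; equation \eqref{B5} is the curvature identity \eqref{HRep}, $\ER(\psi_1,\psi_2)v=H(\psi_1,\psi_2,t(v))$, once one recognizes its first three terms as the frame expression of $\ER$; and equation \eqref{B7} is the coherence condition $\ED H=0$ of \eqref{DH}, its three summands being the connection, anchor-derivative and bracket contributions to the exterior $E$-covariant derivative of the $V$-valued $3$-form $H$. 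This gives the forward implication. For the converse, starting from a Lie 2-algebroid one chooses local coordinates $x^i$ on $M$ together with local frames $\xi_a$ of $E$ and $b_D$ of $V$, reads off $\rho^i_a,C^c_{ab},t^a_D,\Eco^C_{aD},H^D_{abc}$ (antisymmetric in the lower frame indices where required), assembles \eqref{Q}, and checks that \eqref{HJacobi}, \eqref{HRep}, \eqref{DH}, the morphism property of $\rho$, $\rho\circ t=0$ and the defining relations for $t$ and $\Ena$ translate back into precisely \eqref{B1}--\eqref{B7}; hence $Q^2=0$. Independence of the construction from the choices of frames is guaranteed by the transformation laws \eqref{trafo1}--\eqref{trafo5}.

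The main obstacle is purely the index bookkeeping: checking that the precise (anti)symmetrization patterns and numerical factors in \eqref{B5} and \eqref{B7} reproduce the index-free curvature operator $\ER$ and the exterior covariant derivative $\ED H$, and keeping straight which of the items in Definition \ref{def:Lie2d} are definitions --- so that, e.g., $\Econn_{t(v)}w=[v,w]$ and the Jacobi defect of the bracket on $\Gamma(V)$ become automatic once \eqref{B6} and \eqref{HRep} hold --- versus which are genuine constraints. No individual step is deep; the care lies in making the correspondence frame-independent and in not double-counting the relations that \eqref{B6} renders tautological.
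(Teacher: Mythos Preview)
Your approach is correct and essentially identical to the paper's own argument: the paper also builds the dictionary from the transformation laws \eqref{trafo1}--\eqref{trafo5} (with $L\equiv 0$) to identify $\rho$, $t$, $[\cdot,\cdot]$, $\Ena$, $H$ as the geometric data on $E\oplus V$, and then matches \eqref{B1}--\eqref{B7} clause by clause with the axioms of Definition~\ref{Lie2def} (the in-section formulation equivalent to the Definition~\ref{def:Lie2d} you cite). One cosmetic remark: the morphism property of $\rho$ encoded in \eqref{B1} is not what makes $E$ an ``almost Lie algebroid'' (that is antisymmetry plus the Leibniz rule \eqref{Leibn}); the paper actually notes that \eqref{B1} is a \emph{consequence} of \eqref{Leibn}, \eqref{Jac} and $\rho\circ t=0$, so in the direction Lie 2-algebroid $\Rightarrow$ $Q^2=0$ it need not be verified as an independent axiom.
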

\begin{vdef} \label{Lie2def} A \emph{Lie 2-algebroid} is a complex of vector bundles
\begingroup\makeatletter\ifx\SetFigFont\undefined%
\gdef\SetFigFont#1#2#3#4#5{%
  \reset@font\fontsize{#1}{#2pt}%
  \fontfamily{#3}\fontseries{#4}\fontshape{#5}%
  \selectfont}%
\fi\endgroup%
\beq   \mbox{
\setlength{\unitlength}{3947sp}%
\begin{picture}(1297,885)(601,-361)
\thicklines
{\put(1351,314){\vector( 1, 0){375}}
}%
{\put(1876,164){\vector(-2,-1){540}}
}%
{\put(751,314){\vector( 1, 0){375}}
}%
{\put(1276,164){\vector( 0,-1){300}}
}%
{\put(676,164){\vector( 2,-1){540}}
}%
\put(601,239){\makebox(0,0)[lb]{\smash{\SetFigFont{12}{14.4}{\rmdefault}{\mddefault}{\updefault}{V}%
}}}
\put(1201,239){\makebox(0,0)[lb]{\smash{\SetFigFont{12}{14.4}{\rmdefault}{\mddefault}{\updefault}{E}%
}}}
\put(1801,239){\makebox(0,0)[lb]{\smash{\SetFigFont{12}{14.4}{\rmdefault}{\mddefault}{\updefault}{TM}%
}}}
\put(1201,-361){\makebox(0,0)[lb]{\smash{\SetFigFont{12}{14.4}{\rmdefault}{\mddefault}{\updefault}{M}%
}}}
\put(826,389){\makebox(0,0)[lb]{\smash{\SetFigFont{12}{14.4}{\rmdefault}{\mddefault}{\updefault}{$t$}%
}}}
\put(1426,389){\makebox(0,0)[lb]{\smash{\SetFigFont{12}{14.4}{\rmdefault}{\mddefault}{\updefault}{$\rho$}%
}}}
\end{picture}
}
 \label{compl} 
\eeq 
together with an antisymmetric bracket $[ \cdot , \cdot ]$ on $E$, an $E$-connection $\Ena$ 
on $V$, and an ``anomaly'' $H \in \Gamma(\Lambda^3 E^* \otimes V)$ satisfying
the following compatibility conditions: 
\begin{align} [\psi_1,f\psi_2] &= \rho(\psi_1)f \, \psi_2 + 
f[\psi_1,\psi_2] \, ,  \label{Leibn}\\
  {}[\psi_1,[\psi_2,\psi_3]\,] +\textup{cycl(123)} &= t \big(H(\psi_1,
  \psi_2, \psi_3)\big) \, , 
    \label{Jac}\\{}
[\Econn_{\psi_1},\Econn_{\psi_2}]v - \Econn_{[\psi_1,\psi_2]}v &= 
H\big(\psi_1,\psi_2,t(v)\big) \, ,  \label{repr1}
\end{align}
as well as  $t(\Econn_\psi v) = [\psi, t(v)]$ and $\Econn_{t(v)} w = -
\Econn_{t(w)} v$. In addition, $H$ has to satisfy
\beq
  \Econn_{\psi_1} H(\psi_2,
  \psi_3, \psi_4) -\frac32 H([\psi_1,\psi_2],\psi_3,\psi_4)
  +\textup{Alt}(1234)= 0  \label{strange} \, .
\eeq
\end{vdef}
We will now make further remarks on the statement of the above
proposition and some parts of the above definition. In this way we
will arrive also at another, somewhat simpler form of the description of Q2-manifolds (or of what we just called Lie 2-algebroids), provided by Theorem
\ref{theo1} below.\footnote{In this part of the paper we provide a 
classical differential geometry description of a Q2-manifold. The relatively simple notion of a Q2 manifold in graded geometry encodes relatively involved data in the classical geometry setting. These are, however, chosen so as to mimic the standard definition of a Lie algebroid (corresponding to Q1-manifolds) and, simultaneously,  generalize existing definitions of strict and semi-strict Lie 2-algebras. It may be reasonable to call the above or any equivalent data in \emph{classical} differential geometry \emph{semi-strict} Lie 2-algebroids, so as to clearly distinguish from definitions like the ones given in the Appendix or increasingly popular conventions where Q$k$-manifolds are considered as a defintion of Lie $k$-algebroids (cf also the corresponding discussion in the Introduction following Theorem \ref{thm1.5}).}

In the above, $\psi_i \in \Gamma(E)$, $f \in
C^\infty(M)$, and $v,w \in \Gamma(V)$. \eqref{compl} defining a
complex just means that $ \rho \circ t = 0$, as corresponds to
Eq.~\eqref{B2}. The three projections going to one and the same
copy of the base $M$ means that the bundle maps $t$ and $\rho$ cover
the identity map on $M$. Eqs.~\eqref{Jac}, \eqref{repr1}, and
\eqref{strange} correspond to Eqs.~\eqref{B3}, \eqref{B5},
and \eqref{B7}, respectively; the two equations in the text are
\eqref{B4} and \eqref{B6}. From Eqs.~\eqref{Leibn} and
\eqref{Jac} together with $ \rho \circ t = 0$ one may show that
$\rho$ is a morphism of brackets:
\beq   \rho\big([\psi_1,\psi_2]\big) = [\rho (\psi_1), \rho (\psi_2)] \, , 
\label{rhomor}  \eeq
where the bracket on the r.h.s.~denotes the commutator of 
vector fields (i.e.~the bracket in the standard Lie algebroid $TM$). 
This equation, finally, reduces to \eqref{B1} in components.

Condition \eqref{Leibn} is the Leibniz rule for the bracket on
$E$, $t \circ H$ defining its Jacobiator according to
\eqref{Jac}. So, for $H=0$, or even for $H$ living only in the
kernel of $t$, $E$ becomes a Lie algebroid. As follows from
\eqref{repr1}, $H$ also governs the $E$-curvature $\ER$ of $\Ena$; indeed,
$$ \ER(\psi_1,\psi_2)  \equiv [\Econn_{\psi_1},\Econn_{\psi_2}] -
\Econn_{[\psi_1,\psi_2]} \quad , \qquad 
\ER \in \Gamma(\Lambda^2E^* \otimes \End(V)) \, , 
$$  where $[\Econn_{\psi_1},\Econn_{\psi_2}]$ denotes the ordinary commutator of derivations, 
is nothing but the generalization of
curvature to the context of $E$-connections in $V$ for the 
almost Lie algebroid $E$. Denoting by $\EO^p(M,V)$ the sections of
$\Lambda^p E^* \otimes V$, the $E$-$p$-forms with values in $V$, where $V$ is some
vector bundle over the base $M$ of $E$, $\ER$ becomes an $E$-2-form
with values in the endomorphisms of $V$. Note that we used
Eq.~\eqref{rhomor} to show 
that the operator $\ER(\psi_1,\psi_2)$ is
$C^\infty(M)$-linear and thus indeed defines an 
endomorphism of $V$. $H=0$ implies a flat E-connection, $ \ER=0$, which 
is a \emph{representation} of $E$ on $V$.  The condition
\eqref{strange} implies that the Jacobi condition for $[ \cdot ,
\cdot ]$  and the
representation property of $\Ena$ are violated in a relatively mild,
controlled way.

Eq.~\eqref{strange} can be rewritten in a more compact form.
Introduce the generalization of an exterior covariant derivative
$\ED \colon \EO^\cdot (M,V) \to \EO^{\cdot \, +1} (M,V)$ 
by means of a generalized Cartan 
formula
\begin{align} \label{genCar} \ED \omega (\psi_0, \ldots , \psi_p) = 
  &  \sum_i (-1)^i \Econn_{\psi_i}
  \omega(\psi_0,\ldots,\widehat{\psi_i},\ldots,\psi_p)  \nonumber \\
  & +\sum_{i<j} (-1)^{i+j} \omega\big( [\psi_i,\psi_j], \psi_0, \ldots,\widehat{\psi_i},\ldots\widehat{\psi_j},\ldots,\psi_p \big)
  \, .  
\end{align}
Note that this defines a map in the indicated manner only because
$\Ena$ and the bracket satisfy a Leibniz rule---with the same anchor
$\rho$. 
Then the condition to be satisfied by $H$ is nothing but simply
$\ED H=0$. 

At this point it is worthwhile to return to the question of changes 
of the section of \eqref{sequence}. As
already mentioned above it corresponds to an $E$-2-form $B$ with
values in $V$. For this purpose we thus need to interpret
Eqs.~\eqref{trafo1} to \eqref{trafo5} for the case that
both $M^{\tilde a}_b$ and
$N^{\tilde D}_C$ are the identity and $L$ equals $B$.\footnote{Note
that this is
\emph{not} the only option one has. Indeed, $L$ may be a function of
$M$ and $N$ and then changes of the frame in $E$ and $V$ induced by
$M$ and $N$ result in a different nature of e.g.~the bracket, which
then will in general \emph{not} satisfy a Leibniz rule as before. This
is what happens in the case of the Courant algebroid, which we will
discuss briefly in a later section; here additional structures, present
only in that case, give rise to a natural unique relation of $L$ to
changes of frames.} This induces the following changes of the bracket,
the $E$-connection, and the anomaly: 
\begin{eqnarray} [\psi_1,\psi_2] &\mapsto&\widetilde{ [\psi_1,\psi_2]}
\equiv
 [\psi_1,\psi_2] + t(B(\psi_1,\psi_2)) \label{split1}
\\ \Econn_\psi v &\mapsto& ^E{}\widetilde{\nabla}_{\!\psi\,} v \equiv  \Econn_\psi v +
B(\psi,t(v))\label{split2} 
\\ H &\mapsto& H +\ED B -C \, ,  \label{split3}
\end{eqnarray}
where $C(\psi_1,\psi_2,\psi_3) \equiv B(t(B(\psi_1,\psi_2)),\psi_3) +\cycl$
Note in this context that $\ED$ does not define a differential for two
reasons: first, $\Ena$ has a non-zero curvature, and, secondly, the
bracket on $E$ does not satisfy the Jacobi condition---and these two
facts do \emph{not} cancel against one another. Correspondingly, the
anomaly $H$ up to changes of the section is not just an element of a
cohomology. The new and the old $H$ are closed only w.r.t.~the
new and the old  $E$-derivatives, $\ED$ and  
$^E\!\!{}\widetilde{D}$, respectively. 


Finally,
$\Ena$ and $t$ may be used to define naturally a bracket on $V$ according to 
\([v,w]_V:= \Econn_{t (v)} w \). According to Prop.~\ref{prop:double}, 
this bracket is antisymmetric. Moreover, $\rho_V := \rho \circ t = 0$
defines a canonical 
anchor map from $V$ to $TM$ with respect to
which $[ \cdot , \cdot ]_V$ may be seen to 
satisfy a Leibniz rule, i.e.~just $[v, fw]_V = f[v,w]_V$. 
In this way also $V$ becomes an 
almost Lie algebroid, in fact a bundle of almost Lie algebras,
with Jacobiator determined by $H$ via
 \beq [v_1,[v_2,v_3]_V]_V +
\cycl(123) = H(tv_1,tv_2, tv_3) \, ,\label{Jac2} \eeq
as can be derived from the equations in Prop.~\ref{prop:double}.  
Also, $t \colon V \to E$ may be seen to be a morphism of brackets, 
$t\big([v,w]_V\big) = [t (v), t(w)]$. Finally, it remains to check in what sense $\Econn_\psi$ 
respects the bracket on $V$; one obtains
\beq \Econn_\psi ([v,w]_V) = [ \Econn_\psi (v),w]_V + [v, \Econn_\psi (w)]_V + 
H(\psi, t(v),t(w)) \, . \label{der} \eeq
In summary, 
\begin{thm} \label{theo1}
A Q2-manifold with section or a Lie 2-algebroid is equivalent to an
almost Lie algebroid $(E,[ \cdot , \cdot ])$ and a bundle of almost
Lie algebras $(V,[ \cdot , \cdot ]_V)$ together with a morphism $t
\colon V \to E$, covering the identity on $M$, an $E$-connection on
$V$ such that \beq \Econn_{t(v)} w= [v,w]_V \quad , \qquad t(\Econn_{
\psi} v) = [\psi, t(v)] \, , \label{repr2} \eeq as well as an
anomaly $H\in \EO^3(M,V)$ such that the Jacobiators on $E$ and $V$ are
$t \circ H$ and $H \circ t$, respectively, and the representation and
derivation property of $\Ena$ are violated according to the
$E$-curvature $\ER \cdot = H( \, , \, , t(\cdot))$ and
Eq.~\eqref{der}, respectively.  $H$ fulfills the consistency
condition $\ED H=0$.

A change of section of \eqref{sequence} corresponds to $L \in
\EO^2(M,V)$ and induces Eqs.~\eqref{split1} to \eqref{split3} as
well as $[v,w]_V \mapsto [v,w]_V + L(t(v),t(w))$, with all remaining
structures unmodified. 
%
\end{thm}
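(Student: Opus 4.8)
The plan is to read Theorem~\ref{theo1} as a mere re-packaging of Proposition~\ref{prop:double} together with the remarks that follow it, so that nothing genuinely new has to be proved: one only has to verify that the list of data and identities stated in the theorem is \emph{interchangeable} with the one in Definition~\ref{Lie2def}, after which the asserted equivalence with Q2-manifolds equipped with a section of \eqref{sequence} is exactly the content of Proposition~\ref{prop:double}. Accordingly I would organize the proof into three parts: (i) from a Lie 2-algebroid in the sense of Definition~\ref{Lie2def} produce the data of Theorem~\ref{theo1} and check its defining properties; (ii) the reverse, recovering Definition~\ref{Lie2def} from the theorem's data; (iii) the behaviour under a change of the section.

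For part (i) I would \emph{define} the bracket on $V$ by $[v,w]_V := \Econn_{t(v)} w$ and verify the claimed structure term by term. Antisymmetry is precisely the relation $\Econn_{t(v)} w = -\Econn_{t(w)} v$ of Definition~\ref{Lie2def}. The Leibniz rule $[v,fw]_V = f[v,w]_V$ follows from $\Econn_{t(v)}(fw) = \rho(t(v))f\cdot w + f\Econn_{t(v)} w$ together with $\rho\circ t = 0$ (equation \eqref{B2}), so that $(V,[\cdot,\cdot]_V)$ with anchor $\rho_V := \rho\circ t = 0$ is a bundle of almost Lie algebras. The $V$-Jacobiator \eqref{Jac2} comes out by writing $[v_1,[v_2,v_3]_V]_V = \Econn_{t(v_1)}\Econn_{t(v_2)} v_3$, cyclically antisymmetrizing, and inserting \eqref{repr1} with $\psi_1 = t(v_1)$, $\psi_2 = t(v_2)$; the term $\Econn_{[t(v_1),t(v_2)]} v_3$ is re-absorbed using $t(\Econn_\psi v) = [\psi,t(v)]$ and the fact that $t$ is a morphism of brackets. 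That $t$ is a morphism, $t([v,w]_V) = [t(v),t(w)]$, is just $t(\Econn_\psi v) = [\psi,t(v)]$ evaluated at $\psi = t(v)$. Finally \eqref{der} is obtained by expanding $\Econn_\psi([v,w]_V) = \Econn_\psi\Econn_{t(v)} w$, commuting $\Econn_\psi$ past $\Econn_{t(v)}$ via \eqref{repr1}, and identifying the resulting $\Econn_{t(u)} w$ term with $u = \Econn_\psi v$ by means of $t(u) = [\psi,t(v)]$. The remaining three assertions --- that the $E$-Jacobiator equals $t\circ H$, that the $E$-curvature is $\ER(\cdot,\cdot) = H(\cdot,\cdot,t(\cdot))$, and that $\ED H = 0$ --- are literally Eqs.~\eqref{Jac}, \eqref{repr1} and \eqref{strange}, the last of these having already been identified with $\ED H=0$ in the excerpt via the generalized Cartan formula \eqref{genCar} (which is well-defined precisely because $\Ena$ and $[\cdot,\cdot]$ share the anchor $\rho$).

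Part (ii) is the opposite bookkeeping. Given $(E,[\cdot,\cdot])$, $(V,[\cdot,\cdot]_V)$, $t$, $\Ena$ and $H$ as in the theorem, every object occurring in Definition~\ref{Lie2def} is already present; \eqref{Leibn} and $t(\Econn_\psi v) = [\psi,t(v)]$ are assumed outright, \eqref{Jac} and \eqref{repr1} are the stated descriptions of the $E$-Jacobiator and the $E$-curvature, and \eqref{strange} is $\ED H = 0$. The one relation of Definition~\ref{Lie2def} not assumed verbatim, $\Econn_{t(v)} w = -\Econn_{t(w)} v$, is exactly the antisymmetry of $[\cdot,\cdot]_V$ once one uses $\Econn_{t(v)} w = [v,w]_V$ from \eqref{repr2}. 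Hence the two packages of data are literally the same, and Proposition~\ref{prop:double} closes the loop back to Q2-manifolds with a section; the reconstruction of $[\cdot,\cdot]_V$ from $\Ena$ and $t$ being forced, nothing is gained or lost in passing between the two formulations.

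For part (iii), the difference of two sections of \eqref{sequence} is, as noted after that sequence, an element $B\in\EO^2(M,V)$, and its effect on the structure functions is read off from \eqref{trafo1}--\eqref{trafo5} by setting $M^{\tilde a}_b=\delta^a_b$, $N^{\tilde D}_C=\delta^D_C$ and $L=B$. Then $\rho$ and $t$ are unchanged by \eqref{trafo1} and \eqref{trafo3} (so the complex and $t$ stay fixed), \eqref{trafo2} yields \eqref{split1}, \eqref{trafo4} yields \eqref{split2}, and in \eqref{trafo5} the terms linear in $L=B$ reorganize, using \eqref{genCar}, the Leibniz rules and $\rho\circ t=0$, into $\ED B$, while the single $L$-quadratic term produces $-C$ with $C(\psi_1,\psi_2,\psi_3)=B(t(B(\psi_1,\psi_2)),\psi_3)+\cycl$, which is \eqref{split3}. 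Since $t$ does not change, $[v,w]_V=\Econn_{t(v)} w$ transforms, by \eqref{split2}, into $\Econn_{t(v)} w + B(t(v),t(w))=[v,w]_V+B(t(v),t(w))$, and no further structure is touched. The step I expect to be the real obstacle is this last one: correctly extracting $\ED B$ from the accumulation of $L$-linear terms in \eqref{trafo5} and keeping in mind that $\ED$ itself is altered by the change of section --- so that the $H$-ambiguity is only a class relative to the simultaneously changed $E$-derivative, consistently with the remark following \eqref{split3}. Everything else reduces to routine Leibniz-rule manipulations.
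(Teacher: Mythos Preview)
Your proposal is correct and follows essentially the same route as the paper: Theorem~\ref{theo1} is stated there explicitly ``In summary,'' i.e.\ as a repackaging of Proposition~\ref{prop:double} together with the intervening remarks that introduce $[v,w]_V:=\Econn_{t(v)}w$, derive \eqref{Jac2} and \eqref{der}, rewrite \eqref{strange} as $\ED H=0$, and read off \eqref{split1}--\eqref{split3} from \eqref{trafo1}--\eqref{trafo5} with $M=N=\id$, $L=B$. Your three-part organization matches this exactly; the only cosmetic point is that in part~(i) you invoke ``$t$ is a morphism of brackets'' inside the $V$-Jacobiator computation before stating it, so you may want to swap the order of those two verifications.
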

Instead of the first three lines we could have said also 
``A Q2-manifold with section is equivalent to a complex of almost 
Lie algebroids 
\begingroup\makeatletter
\ifx\SetFigFont\undefined%
\gdef\SetFigFont#1#2#3#4#5{%
  \reset@font\fontsize{#1}{#2pt}%
  \fontfamily{#3}\fontseries{#4}\fontshape{#5}%
  \selectfont}%
\fi\endgroup%
\beq \mbox{
\setlength{\unitlength}{3947sp}%
\begin{picture}(2120,885)(601,-361)
\thicklines
{\put(1276,314){\vector( 1, 0){300}}
}%
\put(1351,389){\makebox(0,0)[lb]{\smash{\SetFigFont{12}{14.4}{\rmdefault}{\mddefault}{\updefault}{$t$}%
}}}
{\put(2251,314){\vector( 1, 0){300}}
}%
\put(2251,389){\makebox(0,0)[lb]{\smash{\SetFigFont{12}{14.4}{\rmdefault}{\mddefault}{\updefault}{$\rho$}%
}}}
{\put(1726,164){\vector( 0,-1){300}}
}%
{\put(906,172){\vector( 2,-1){749.200}}
}%
{\put(2699,169){\vector(-2,-1){870}}
}%
\put(601,239){\makebox(0,0)[lb]{\smash{\SetFigFont{12}{14.4}{\rmdefault}{\mddefault}{\updefault}{(V,[.,.])}%
}}}
\put(1651,-361){\makebox(0,0)[lb]{\smash{\SetFigFont{12}{14.4}{\rmdefault}{\mddefault}{\updefault}{M}%
}}}
\put(2626,239){\makebox(0,0)[lb]{\smash{\SetFigFont{12}{14.4}{\rmdefault}{\mddefault}{\updefault}{(TM,[.,.])}%
}}}
\put(1651,239){\makebox(0,0)[lb]{\smash{\SetFigFont{12}{14.4}{\rmdefault}{\mddefault}{\updefault}{(E,[.,.])}%
}}}
\end{picture}
}
  \label{compl2} 
\eeq
together with an $E$-connection on $V$ such that \ldots''. 
In this sequence $TM$ is regarded as 
the standard Lie algebroid over 
$M$ and arrows are morphisms. Since $t$ and $\rho$ cover the identity
map, the 
morphism of almost Lie algebroids then implies that
$\rho_V = \rho \circ t$ and this has to  vanish due to the complex 
property. Note also that we did not try to formulate minimal 
conditions: e.g.~Eqs.\ \eqref{Jac2} and  \eqref{der} 
may be established from the rest as obvious from our derivation. 

Of particular interest is certainly the special case of vanishing 
anomaly, where the situation simplifies considerably: 
\begin{prop} 
 A Q2-manifold with a section such that $H =0$ is equivalent to a 
complex of Lie algebroids 
\begingroup\makeatletter
\ifx\SetFigFont\undefined%
\gdef\SetFigFont#1#2#3#4#5{%
  \reset@font\fontsize{#1}{#2pt}%
  \fontfamily{#3}\fontseries{#4}\fontshape{#5}%
  \selectfont}%
\fi\endgroup%
\beq \mbox{
\setlength{\unitlength}{3947sp}%
\begin{picture}(1297,885)(601,-361)
\thicklines
{\put(1351,314){\vector( 1, 0){375}}
}%
{\put(1876,164){\vector(-2,-1){540}}
}%
{\put(751,314){\vector( 1, 0){375}}
}%
{\put(1276,164){\vector( 0,-1){300}}
}%
{\put(676,164){\vector( 2,-1){540}}
}%
\put(601,239){\makebox(0,0)[lb]{\smash{\SetFigFont{12}{14.4}{\rmdefault}{\mddefault}{\updefault}{V}%
}}}
\put(1201,239){\makebox(0,0)[lb]{\smash{\SetFigFont{12}{14.4}{\rmdefault}{\mddefault}{\updefault}{E}%
}}}
\put(1801,239){\makebox(0,0)[lb]{\smash{\SetFigFont{12}{14.4}{\rmdefault}{\mddefault}{\updefault}{TM}%
}}}
\put(1201,-361){\makebox(0,0)[lb]{\smash{\SetFigFont{12}{14.4}{\rmdefault}{\mddefault}{\updefault}{M}%
}}}
\put(826,389){\makebox(0,0)[lb]{\smash{\SetFigFont{12}{14.4}{\rmdefault}{\mddefault}{\updefault}{$t$}%
}}}
\put(1426,389){\makebox(0,0)[lb]{\smash{\SetFigFont{12}{14.4}{\rmdefault}{\mddefault}{\updefault}{$\rho$}%
}}}
\end{picture}
}  
  \label{compl1} 
\eeq 
together with a representation $\Ena$ of 
$E$ on $V$ satisfying the compatibility conditions \eqref{repr2}. 
\label{cor:Liecomp}
\end{prop}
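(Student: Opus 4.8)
The plan is to obtain Proposition~\ref{cor:Liecomp} as nothing more than the $H=0$ specialization of Theorem~\ref{theo1} (equivalently of Proposition~\ref{prop:double} together with Definition~\ref{Lie2def}), which I would already have in hand; there is no new computation to perform, only a careful reading of which structures and identities survive when the anomaly is set to zero.

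In the forward direction I would start from the data of Theorem~\ref{theo1} and impose $H=0$. Then the Jacobiator of the bracket on $E$, which is $t\circ H$ by \eqref{Jac}, vanishes, so together with the Leibniz rule \eqref{Leibn} the pair $(E,[\cdot,\cdot])$ becomes a genuine Lie algebroid; the $E$-curvature $\ER(\psi_1,\psi_2)\cdot=H(\psi_1,\psi_2,t(\cdot))$ vanishes by \eqref{repr1}, so $\Ena$ is flat, i.e.\ a representation of $E$ on $V$; the Jacobiator of $[\cdot,\cdot]_V$, which is $H\circ t$ by \eqref{Jac2}, vanishes, and since $\rho_V=\rho\circ t=0$ the Leibniz term for $V$ is trivial, so $(V,[\cdot,\cdot]_V)$ is a Lie algebroid, namely a bundle of Lie algebras. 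Equation \eqref{der} then collapses to the statement that each $\Econn_\psi$ is a derivation of $[\cdot,\cdot]_V$, which holds automatically, and the coherence condition \eqref{strange}, namely $\ED H=0$, is vacuous. The remaining ingredients of Theorem~\ref{theo1} persist verbatim: the relations \eqref{repr2}, the morphism property \eqref{rhomor} of $\rho$, the morphism property $t([v,w]_V)=[t(v),t(w)]$ of $t$, and the complex identity $\rho\circ t=0$. This exhibits a complex of Lie algebroids $V\xto{\,t\,}E\xto{\,\rho\,}TM$ over $M$ (with $TM$ the standard Lie algebroid and the arrows Lie algebroid morphisms covering $\id_M$) together with a representation $\Ena$ of $E$ on $V$ satisfying \eqref{repr2}, as claimed.

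For the converse I would run Theorem~\ref{theo1} backwards: given such a complex and representation, choose local coordinates $x^i$ on $M$ and frames in $E$ and $V$, read off $\rho^i_a$, $C^a_{bc}$, $t^a_D$, $\Eco^D_{aC}$ from $\rho$, $[\cdot,\cdot]_E$, $t$, $\Ena$ respectively, set $H^D_{abc}\equiv0$, form the degree-one vector field $Q$ of \eqref{Q}, and check $Q^2=0$, i.e.\ \eqref{B1}--\eqref{B7}. With $H=0$, equation \eqref{B7} and the $H$-terms of \eqref{B3} and \eqref{B5} drop out, so \eqref{B3} is precisely the Jacobi identity for $[\cdot,\cdot]_E$ and \eqref{B5} is the flatness $\ER=0$, both holding by hypothesis; \eqref{B1} is the morphism property \eqref{rhomor}, which follows from the Jacobi identity on $E$ and $\rho\circ t=0$; \eqref{B2} is $\rho\circ t=0$; \eqref{B4} is the component form of $t(\Econn_\psi v)=[\psi,t(v)]$ from \eqref{repr2}; and \eqref{B6} is $\Econn_{t(v)}w=-\Econn_{t(w)}v$, which I would deduce from \eqref{repr2} (since $\Econn_{t(v)}w=[v,w]_V$) together with antisymmetry of the Lie algebroid bracket on $V$. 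Hence $Q^2=0$ and one recovers a Q2-manifold with section and vanishing anomaly.

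I do not expect any genuine obstacle: there is no analytic content, only bookkeeping against the already-established Theorem~\ref{theo1}. The one spot that deserves a second look is exactly the last one --- using \eqref{repr2} to transport antisymmetry of $[\cdot,\cdot]_V$ onto $\Eco^D_{aC}$ (that is, onto \eqref{B6}); without this observation one would be forced to list antisymmetry of the $V$-bracket as an extra axiom in the statement. As a final remark I would note, in parallel with Theorems~\ref{Vain} and~\ref{theo1}, that the two assignments above are mutually inverse and functorial, so the stated 1:1 correspondence can be upgraded to an equivalence of categories.
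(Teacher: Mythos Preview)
Your proposal is correct and mirrors exactly how the paper treats this proposition: it is presented as an immediate specialization of Theorem~\ref{theo1} to $H=0$, with no new computation required. The paper does not spell out the verification of \eqref{B1}--\eqref{B7} in the converse direction as you do, but your bookkeeping is accurate, including the observation that \eqref{B6} is recovered from the first compatibility in \eqref{repr2} together with antisymmetry of $[\cdot,\cdot]_V$; the paper's only additional remark is the one you also make, that the derivation property $\Econn_\psi([v,w]_V)=[\Econn_\psi v,w]_V+[v,\Econn_\psi w]_V$ can either be included in the notion of representation or derived from the remaining hypotheses.
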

In the notion of a representation of one Lie algebroid on another one we 
may include $\Econn_\psi ([v,w]_V) = [ \Econn_\psi (v),w]_V + [v,
\Econn_\psi (w)]_V$, or otherwise we may even derive 
this condition from the other assumptions in the proposition.

It remains to specialize to the case of $M=\pt$, 
i.e.~to ``Q-vector spaces''. E.g.~then Prop.~\ref{cor:Liecomp} reduces to 
\begin{prop} A Q2-vector space with section such that $H=0$ is 
equivalent to a Lie 2-algebra. 
\end{prop}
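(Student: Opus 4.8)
The plan is to obtain the statement as the $M=\pt$ specialization of Proposition~\ref{cor:Liecomp}, so that the real content is just to recognize that over a point the data ``complex of Lie algebroids $V\xto{t}E\xto{\rho}TM$ together with a representation $\Ena$ of $E$ on $V$ satisfying \eqref{repr2}'' is literally Definition~\ref{strictL2}. First I would record the standard fact that a Lie algebroid over a point is the same as a finite-dimensional Lie algebra: there is no $TM$, so $\rho$ vanishes identically, the Leibniz rule \eqref{Leibn} becomes vacuous because $\smooth(\pt)=\R$, and the bracket is simply an $\R$-bilinear antisymmetric operation obeying Jacobi. Hence, over a point, the complex \eqref{compl1} degenerates to a two-term complex of Lie algebras $\h\xto{\;t\;}\lie\xto{\;0\;}0$ with $t$ a Lie algebra homomorphism (the morphism-of-brackets property of $t$ from Theorem~\ref{theo1} reducing to $t([v,w]_\h)=[t(v),t(w)]_\lie$).

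Next I would unpack the flat $E$-connection. Since $\rho=0$, the two conditions \eqref{EConn} collapse to $\R$-bilinearity, so $\Ena$ is nothing but a linear map $\alpha\colon\lie\to\End(\h)$, $\alpha(x)v:=\Econn_x v$; its flatness, which is guaranteed here by $H=0$ (cf.~\eqref{repr1}), is exactly the representation identity $\alpha([x,y])=[\alpha(x),\alpha(y)]$, and \eqref{der} with $H=0$ upgrades this to $\alpha\colon\lie\to\Der(\h)$. The compatibility conditions \eqref{repr2} then read $\alpha(t(v))w=[v,w]_\h$ and $t(\alpha(x)v)=[x,t(v)]_\lie$, which are precisely the two identities of Definition~\ref{strictL2}. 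This establishes the 1:1 correspondence at the level of objects; the converse direction is just the observation that a quadruple $(\lie,\h,t,\alpha)$ as in Definition~\ref{strictL2} reassembles into the $M=\pt$ data of Proposition~\ref{cor:Liecomp}. Exactly as in Theorem~\ref{Vain} it then promotes to an equivalence of categories once Q-morphisms of Q2-vector spaces are transported through the dictionary to morphisms of Lie-algebra crossed modules. Invoking \cite{Baez03vi}, a strict Lie 2-algebra is in particular a Lie 2-algebra, which is the statement as phrased (with the sharper conclusion that one in fact lands on the \emph{strict} Lie 2-algebras).

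The whole argument is bookkeeping for a diagram that collapses, so I expect no genuine obstacle; the only point deserving a word of care is the status of the condition $H=0$. By \eqref{split3} a change of section by $B\in\Lambda^2\lie^*\otimes\h$ replaces $H$ by $\ED B-C$, which need not vanish, so ``$H=0$'' is not independent of the chosen section. The precise content is therefore that a Q2-vector space \emph{equipped with a section for which $H=0$} is the same as a strict Lie 2-algebra on the nose, whereas passing to isomorphism classes of Q2-vector spaces (letting the section vary) recovers general, possibly non-strict, Lie 2-algebras---in keeping with the fact \cite{Baez03vi} that crossed modules of Lie algebras constitute exactly the strict part of the $2$-category of Lie 2-algebras. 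I would make this qualification explicit either in the statement or immediately afterwards.
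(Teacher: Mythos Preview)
Your proposal is correct and follows essentially the same route as the paper: the paper obtains the statement precisely by specializing Proposition~\ref{cor:Liecomp} to $M=\pt$ and then providing the dictionary $\lie=E$, $\h=V$, $\alpha(x)\equiv\Econn_x$, noting (as you do) that flatness of $\Ena$ gives the representation property while Equation~\eqref{der} with $H=0$ ensures $\alpha$ lands in $\Der(\h)$. One terminological remark: in this section the paper \emph{defines} ``Lie 2-algebra'' to mean the differential crossed module (i.e.\ the strict notion), so your closing invocation of \cite{Baez03vi} to pass from strict to general Lie 2-algebras is not needed for the statement as written here---though your observation about the section-dependence of the condition $H=0$ is well taken and mirrors the paper's subsequent comment that nonvanishing $H$ yields the semistrict case.
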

\begin{vdef} A Lie-2-algebra (a differential crossed module) 
is a pair $(\lie,\h)$ of Lie algebras 
together with a homomorphism $t \colon \h\to \lie$ and a  representation
$\alpha \colon \lie\to\Der(\h)$ such that $\, t((\alpha x)v) =[x,t(v)]$   
and $ \alpha(t(v))w =[v,w]_\h$  for all $x \in \lie$, $v,w \in \h$ hold true.  
\end{vdef}
Here we adapted to the notation of \cite{Baez02jn}. The relation is 
established easily via $\lie =E$, $\h = V$ and $\alpha(x) \equiv \Econn_x$ for 
$x \in \lie \cong \Gamma(\lie)$, $\alpha_{aD}^C \equiv \Gamma_{aD}^C$.\footnote{$\alpha$ 
is not an arbitrary representation on $\h$, 
$\alpha \in \End(\h)$ and $[\alpha(x),\alpha(y)] = \alpha([x,y])$---the flatness property of $\Ena$, 
but the fact that $\h$ is a Lie algebra itself is taken into account 
by restricting to $\Der(\h) \subset  \End(\h)$: 
$\alpha(x) \in \Der(\h) \Leftrightarrow \alpha(x) ([v,w]) =  [\alpha(x)(v),w]
+[v,\alpha(x)(w)]$---which follows from our Condition~\eqref{der} with $H=0$.} 
The case of a Q2-vector space with non-vanishing $H$ (and fixed
section $\imath$) corresponds to a 
semistrict Lie 2-algebra in the nomenclature of Baez, cf.~e.g.~\cite{Baez03vi}. 


\newpage

\section{Gauge transformations} \label{s:gauge}
Here we introduce a notion of infinitesimal gauge
transformations. The present section is intended to
parallel Section~\ref{s:Bian}, where we introduced an as general as possible
notion of field strength of gauge fields, imposing conditions that a
physicist would generically consider mandatory---including certainly
ordinary Yang--Mills gauge theories as particular example, but also
known generalizations like those of \cite{Str04b}, \cite{SaWei:2005bp,SaWit:2005hv}, \cite{BM05}, \cite{ACJ05} or \cite{BS05}. 
Such as we required the field strengths $F^\alpha$ to start with $\ud
A^\alpha$, likewise now we want to require that the gauge
transformations 
for $A^\alpha$ start with $\ud \epsilon^\alpha$ (except those for the zero
forms $X^i$ certainly). Again this is satisfied in all the 
examples we just mentioned---at least after choosing appropriate parameters. Correspondingly, infinitesimal gauge
transformations are
parameterized by quantities $(\epsilon^\alpha) :\equiv
(\epsilon^a,\mu^D)$ which are differential
forms on $\Sigma$ of form degree one less
than the respective $A^\alpha$ (so $\mu^D$ are 1-forms); due to the index
$\alpha$, in general $\epsilon^\alpha$ should be considered as functions of
$X^i(\s)$ as well---otherwise formulas are at least frame
dependent---but we postpone this issue to a more careful
discussion below. We thus arrive at the following rather general ansatz: 
\begin{align}  \delta_\e X^i &= \bar\rho^i_a  \e^a  \nonumber\\
  \delta_\e A^a &= \ud\e^a +\bar C^a_{bc}A^b \e^c -\bar t^a_D \mu^D +O(F^i) \nonumber\\
  \delta_\e B^D &= \ud\mu^D -\bar\Eco^D_{aC} \e^a B^C +\bar\Eco^{\prime\,D}_{aC}
    A^a \mu^C +\frac12 \bar H^D_{abc} A^a  A^b \e^c\,
    +O(F^i,F^a) \,.  \label{gsymm0}
\end{align}
Here $O(F^\alpha)$ is supposed to denote contributions that vanish for
(the respective)
$F^\alpha=0$. They are required already to obtain frame-independence of
formulas, while there exist several options to achieve this 
(cf.~also the discussion in \cite{BKS}). Moreover, there may be
additional, model dependent contributions necessary to render a
particular action functional $S$ invariant w.r.t.~gauge transformations.
Such contributions may use additional data present in $S$,
including even a metric $h$ on $\S$; an illustration is provided by
the Dirac sigma model \cite{DSM}, where the $A^a$ variations contain
contributions proportional to $\ast F^i$, where $\ast$ denotes the
Hodge duality operation induced by $h$ on $\S$, which is 2-dimensional
in this case (moreover, such a contribution makes sense only in this
dimension). 
In this way one
could even consider completely general
$F^a$ contributions to $\delta A^a$, or likewise
contributions to $\delta X^i$. We excluded them in the above (thus
only ``rather general'') ansatz, where we assume that the additions live in the ideal $\I$ and are homogeneous of the correct form-degree (without any additional structures); that is why, e.g., only $F^i$ is permitted to enter $\delta A^a$ in \eqref{gsymm0}. 

The coefficients in the ansatz~\eqref{gsymm0} are arbitrary
functions of $X(\s)$ at this point. The notation chosen anticipates
the result one obtains, where one simply drops the bars
(and also the prime in the third contribution to the last formula).

In analogy to the Bianchi identities \eqref{bian}, \eqref{ideal}
we require that gauge transformations of field strengths result into
combinations that \emph{contain} field strengths again, i.e.~we
require: $\delta_\e F^\alpha = \L_\b^\alpha 
F^\c$ for all possible choices of gauge parameters 
$\e^\alpha$ and field configurations $A^\alpha$---where  the coefficients
$\L_\b^\alpha$ depend 
linearly on $\e^\alpha$ (and, in principle, its first derivatives) and are
permitted to be also field dependent---or,  in other
words, we require
\beq  \boxed{\delta_\e F^\alpha|_{F^\b=0} \equiv 0 \quad \textrm{or}  \quad \delta_\e
\I \subset \I } \; .
\label{gauge}\eeq
So the ideal $\I$, defined in Section~\ref{s:Bian}, should not only
be invariant w.r.t.~the operator $\ud$ but also w.r.t.~the operator
$\delta_\e$ (for arbitrary choices of the gauge transformation
parameters $\e^a$ and $\mu^D$). 

Such relations are certainly always satisfied in all the examples
mentioned above. In particular, even if some of the $X^i$ fields are
considered as scalar fields (and not as part of the gauge field) and,
correspondingly, the respective $F^i = \ud X^i - \rho^i_a A^a$ then
has the interpretation of a covariant derivative of $X^i$, again it is
a minimal requirement on what one would usually like to
call a covariant derivative.
 
Following the same strategy as in the Section~\ref{s:Bian}, the
Condition~\eqref{gauge} is now used to restrict the coefficient
functions in the ansatz~\eqref{gsymm0}. There we obtained 
Theorem \ref{theo:nilpot}. Consequently, 
we now assume that $Q_2$
squares to zero in the definition of field strengths.\footnote{At the
end of the following proof we will
make a brief remark on what still can be said if we drop this
requirement.}
We obtain
\begin{thm} \label{theo:gauge1}  
For a spacetime dimension of at least 3, equality of the coefficients
in \eqref{gsymm0} with the respective unbared (and unprimed)
coefficient in the
field strengths \eqref{Fi}--\eqref{FB}---or, likewise, 
in the (nilpotent) super vector field $\Qt$
in Eqs.~\eqref{Qi}--\eqref{QB}---is necessary and
sufficient for the validity of \eqref{gauge}.
\end{thm}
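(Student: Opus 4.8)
The plan is to run, in parallel with Section~\ref{s:Bian}, the same super-geometric machinery that turned the Bianchi identities into a $Q$-structure, now with an extra contraction encoding the gauge parameter. First I would repackage the ansatz~\eqref{gsymm0}: its barred structural functions assemble into a degree $+1$ vector field $\bar Q_2$ on $\CM_2$ of exactly the shape \eqref{Qi}--\eqref{QB} (with barred coefficients), and the parameters $(\e^\alpha)=(\e^a,\mu^D)$ define, following \cite{BKS}, a vertical odd vector field $\e$ of degree $-1$ on the trivial bundle $\CM_2\times T[1]\S$, with associated contraction $\imath_\e$ on $\smooth(\CM_2)$. A direct comparison of \eqref{gsymm0} with \eqref{Qi}--\eqref{QB} shows that the principal part $\delta^{(0)}_\e$ is nothing but the action on the section $a$ of the vertical degree-$0$ vector field $\ad_{\Qo+\bar Q_2}(\e)\equiv[\,\Qo+\bar Q_2,\e\,]$, the $\ud\e^\alpha$ term arising precisely from the $\Qo=\ud$ part --- which is also the super-geometric reason why such a gauge transformation must start with $\ud\e^\alpha$. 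The $O(F^\alpha)$ corrections in \eqref{gsymm0} are, by hypothesis, valued in $\I$.

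Next I would compute $\delta_\e F^\alpha$ by applying $\delta_\e$ to $F^\alpha=\CF(q^\alpha)$ and using $\CF=\Qo a^*-a^*\Qt$ with the \emph{true} $\Qt$ (nilpotent, by hypothesis together with Theorem~\ref{theo:nilpot}), the Leibniz identity \eqref{FLeibniz}, and \eqref{dF}. The outcome is a splitting of $\delta_\e F^\alpha$ into a piece lying in $\I$ plus an ``anomalous'' remainder of the shape $a^*\bigl(\imath_\e((\bar Q_2-\Qt)q^\alpha)+\ldots\bigr)$, which is a polynomial in $A^\beta$, $F^\beta$, $\e^\beta$, $\ud\e^\beta$ that is \emph{linear} in the six coefficient differences $\bar\rho-\rho$, $\bar C-C$, $\bar t-t$, $\bar\Eco-\Eco$, $\bar\Eco'-\Eco$, $\bar H-H$ (with coefficients built from the unbarred structural functions). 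When all six vanish the whole expression lies in $\I$, which gives the ``sufficient'' direction; the $O(F^\alpha)$ contributions are then controlled by bootstrapping $\ud\I\subset\I$ and $\delta_\e\I\subset\I$ up the form-degree filtration, exactly as in Section~\ref{s:Bian}. This computation is the gauge-theoretic counterpart of the derivation of \eqref{BFi}--\eqref{BFB} and uses no structural input beyond $(\Qt)^2=0$.

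For necessity I would impose \eqref{gauge}: the anomalous remainder must itself lie in $\I$, hence must vanish for every field configuration with $F^\alpha=0$. By the integrability argument of Section~\ref{s:Bian} --- for any prescribed jet of the $A^\alpha$ at a point $\s\in\S$ one can choose $a$ with $F^\alpha(\s)=0$ realizing that jet --- the anomaly then has to vanish identically as a differential form on $\S$. Since $\delta_\e F^\alpha$ has the same form degree as $F^\alpha$, namely at most $p+1=3$, this detects \emph{all} six differences precisely when $\dim\S\ge3$: in lower dimension the top anomalous components would drop out, most notably the degree-$3$ one carrying $\bar H-H$, just as a too-low-dimensional base fails to register $(\Qt)^2$ in the Bianchi case. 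Reading off the coefficients of the independent field monomials appearing in \eqref{gsymm0} then forces $\bar\rho=\rho$, $\bar C=C$, $\bar t=t$, $\bar\Eco=\Eco$, $\bar\Eco'=\Eco$, $\bar H=H$. The main obstacle is the organization of the second step: one must track carefully the $\ud\e^\alpha$ contributions, which mix with the $\Qo$-part, and verify that once $\bar Q_2=\Qt$ the full variation genuinely closes into $\I$ with no leftover constraint --- a computation similar in spirit to the Bianchi derivation but appreciably longer. (As flagged in the footnote to the theorem, if one instead takes $\delta_\e\equiv\delta^{(0)}_\e$ and does \emph{not} assume $(\Qt)^2=0$, the same computation leaves additional terms proportional to $a^*\bigl(\imath_\e((\Qt)^2 q^\beta)\bigr)$; these have form degree at most $p+1=3$, so imposing \eqref{gauge} in dimension $\ge3$ forces both $(\Qt)^2=0$ and the matching of coefficients.)
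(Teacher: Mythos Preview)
Your overall strategy parallels the paper's, but there is a technical snag in the first step, and the paper's execution is more direct.

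The packaging of the barred coefficients into a single degree-$+1$ vector field $\bar Q_2$ of the shape \eqref{Qi}--\eqref{QB} does not quite work: the ansatz \eqref{gsymm0} carries \emph{six} independent coefficient families ($\bar\rho,\bar C,\bar t,\bar\Eco,\bar\Eco',\bar H$), whereas a vector field of that shape has only five (a single $\Eco$). Writing $\delta^{(0)}_\e=\ad_{\Qo+\bar Q_2}(\e)$ therefore already presupposes $\bar\Eco=\bar\Eco'$, which is part of what must be \emph{proved}. You correctly list six differences later, so you are aware of the distinction, but the $\bar Q_2$-framing is inconsistent with it.

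The paper sidesteps this by not assembling the barred data into a vector field at all. It writes
\[
\delta^{(0)}_\e A^\alpha = \ud\e^\alpha + \e^\beta\, a^*V^\alpha_\beta
\]
with $V^\alpha_\beta$ a \emph{general} function on $\CM_2$ of the appropriate degree (so that $\bar\Eco$ sits in $V^D_a$ and $\bar\Eco'$ in $V^D_C$, independently), and computes $\delta^{(0)}_\e F^\alpha$ directly from $F^\alpha=\ud A^\alpha-a^*Q_2^\alpha$. A short calculation gives
\[
\delta^{(0)}_\e F^\alpha = \ud\e^\beta\, a^*\!\bigl(V^\alpha_\beta-\partial_\beta Q_2^\alpha\bigr)
 - \e^\beta\, a^*\!\bigl[\partial_\beta(Q_2^\gamma\partial_\gamma Q_2^\alpha)+O(V-\partial Q_2)\bigr]
 - (-1)^{|\beta|}\e^\beta\,\CF(V^\alpha_\beta),
\]
whose last term lies in $\I$ by \eqref{FLeibniz}. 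The decisive move is then not the integrability argument you invoke, but the pointwise independence of $\e^\beta(\s_0)$ and $\ud\e^\beta(\s_0)$: choosing $\e$ with $\ud\e(\s_0)\neq0$ and $\e(\s_0)=0$ isolates the first term and forces $V^\alpha_\beta=\partial_\beta Q_2^\alpha$ directly---this is precisely where $\dim\S\ge p+1=3$ enters, so that the relevant forms are not killed by degree. The remaining $\e^\beta$-coefficient is $\partial_\beta\bigl((Q_2)^2\bigr)^\alpha$, which vanishes by the assumed nilpotency of $Q_2$. Your route via restricting to $F=0$ and reading off monomials would eventually reach the same conclusion, but the $\ud\e$/$\e$ separation makes the argument essentially a one-liner.
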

Again, a likewise statement holds true for every choice of $p$, provided
only the dimension of $\Sigma$ exceeds $p$ by at least one (otherwise one
still has sufficiency, but necessity is maybe the more striking part). 
Note that we assumed the Bianchi
identities to be satisfied, so \emph{any}
contribution of the form $O(F^\alpha)$, i.e.~lying inside the ideal $\I$, to \eqref{gsymm0} does not
modify the result!

To prove the statement we will content ourselves with a halfway
Q-language at this point, not yet
introducing any further abstractness beyond what we already did
so far, while still avoiding any lengthy direct calculation in this manner.
In a condensed notation we may summarize the ansatz~\eqref{gsymm0}
as follows:
\beq \delta^{(0)}_\e A^\alpha \equiv
\ud \e^\alpha + \e^\b a^*V^\alpha_\b \label{gauge0}
\, ,
\eeq
where $V^\alpha_\b$ is a function on $\CM_2$ (of degree $|\alpha| - |\b| +1\ge
0$, vanishing otherwise as well as for $|\b|=0$),
collecting all the yet undetermined
coefficients of the ansatz. The superscript $(0)$
was used, moreover, so as to indicate that we dropped the
terms $O(F^\alpha)$, which are irrelevant for the present
purposes. We assume for simplicity, moreover,
that $\e^\alpha$ 
depends on $\s$ only (and not also
on fields).\footnote{Both steps, dropping $O(F^\alpha)$ terms and having $\e^\alpha$
depend on $\s$ only, imply, among others, that we fixed a frame or target space coordinates for
the calculation. But this is legitimate at this point.} We also kept
the slightly more suggestive notation $\ud$ instead of $Q_1$ here.

Now we are to calculate $ \delta^{(0)}_\e F^\alpha$, where, according to
\eqref{CF}, $F^\alpha= \ud A^\alpha -a^* Q_2^\alpha$ with  $Q_2^\alpha$ denoting 
 the (left) components of the super vector field
$Q_2$ in a basis $\partial_\alpha \equiv \partial /\partial q^\alpha$. A
rather simple calculation (using at this point only $\ud^2=0$) yields
\begin{eqnarray}\delta^{(0)}_\e F^\alpha  &= 
&  \ud \e^\b \, a^* (V^\alpha_\b - \partial_\b
Q_2^\alpha) \nonumber \\ 
  & & - \e^\b \, a^* \left[ \partial_\b (Q_2^\gamma \partial_\gamma
  Q_2^\alpha) + O(V^\alpha_\b - \partial_\b
Q_2^\alpha) \right] 
\nonumber \\ 
  & &- (-1)^{|\b|} \e^\b \CF(V^\alpha_\b) \, , \label{Fchangeold} 
\end{eqnarray}
where $ O(V^\alpha_\b - \partial_\b
Q_2^\alpha)$ denotes terms vanishing with $V^\alpha_\b = \partial_\b
Q_2^\alpha$ and, as before, $\CF \equiv \ud a^* - a^* Q_2$.
So, obviously the sum of the first two lines has to vanish.
For every point $\s_0 \in \S$ we may choose $\e^\alpha$ such that $\ud \e^\alpha(\s_0)$
is non-zero and $\e^\alpha(\s_0)$ vanishes and vice versa. Also the map $a \colon
T[1]\S \to \CM_2$ is at our disposal. Correspondingly,
if the dimension of $\S$ exceeds $p$, this requires
\beq V^\alpha_\b = \partial_\b
Q_2^\alpha \qquad \textrm{and} \qquad  \partial_\b (Q_2^\gamma \partial_\gamma
  Q_2^\alpha) = 0 \, \qquad \textrm{for} \quad
   |\b| \ge 1 \, 
   . \label{consistency}
   \eeq
The second equation is fulfilled identically for $(Q_2)^2=0$, while
the first one gives the claimed necessity. 

On the other hand, if we restrict ourselves to gauge transformations of the form \eqref{gauge0} (and in general only then!), one may even
\emph{conclude} $(Q_2)^2=0$ from \eqref{gauge}. In this case one needs
only $\dim \S \ge p+1$. The reason is that according to the second Equation~\eqref{consistency} $Q_2^\gamma \partial_\gamma
  Q_2^\alpha \equiv (Q_2 \circ Q_2)^\alpha$ can only be a function of degree zero, which contradicts with the fact that it must have a strictly positive degree by knowing that $Q_2$ has degree plus one. This gives the result summarized in Theorem \ref{thm:phys2}.

For convenience of the reader, we display the resulting
formulas 
explicitly: 
\begin{align}  \delta^{(0)}_\e X^i &= \rho^i_a  \e^a \, ,  \label{g1a}\\
  \delta_\e^{(0)} A^a &= \ud\e^a + C^a_{bc}A^b \e^c +t^a_D \mu^D \, , 
    \label{g2a}\\
  \delta_\e^{(0)} B^D &= \ud\mu^D -\Eco^D_{aC} \e^a B^C +\Eco^D_{aC}
    A^a \mu^C +\frac12 H^D_{abc} A^a A^b \e^c \, ,  \label{g3a}
\end{align}
while, according to \eqref{gsymm0} and the above result, the full
transformation then takes the form 
\begin{align}
  \delta_\e X^i &= \delta^{(0)}_\e X^i  \label{g1b}\\
  \delta_\e A^a &= \delta^{(0)}_\e A^a +F^i \lambda^a_i  \label{g2b} \\
  \delta_\e B^D &= \delta^{(0)}_\e B^D +F^i \lambda^D_i 
    +F^a \lambda^D_a   \,, \label{g3b}
\end{align}
for some coefficients $\lambda^\alpha_\b$ depending on $\e$ (and the gauge fields). We will
specify two covariant options for a choice of $\lambda$s below. Again we
mention that this is not yet the most general possible form; there may
be further model- and/or dimension-dependent contributions of the more
general form that they just vanish with vanishing $F^\alpha$\footnote{If, for example, we add a term containing $*F^\alpha$ linearly to, say, $\delta A^\beta$, $*$ denoting the Hodge dual with respect to some metric on $\S$, then the gauge transformation of $F^\beta = \ud A^\beta + \ldots$ produces a term linear in $\ud *F^\alpha$ from which, in general, we cannot say that it belongs to the ideal $\I$ (while it still would vanish together with $F^\alpha=0$).}---but in any case the above $\delta_\e$ (as
well as $\delta^{(0)}_\e$) is always an important part of such gauge
transformations as well.

\vskip\newpassage

In the following we will now refine the Q-language so as to also be
able to cover the gauge transformations of the form \eqref{g1a}--\eqref{g3a} or even \eqref{g1b}--\eqref{g3b}. We restrict ourselves
to infinitesimal ones and we will not strive to present their
geometrical meaning here (this is rather subject of a more
mathematical accompanying paper \cite{KS07}); the main goal here
is to provide a technical tool, which e.g.~permits one to calculate
the commutator of two gauge transformations in an efficient manner.
Again we want to be model independent here and thus we aim at
calculating the generic part of such a commutator,
$[\delta^{(0)}_\e,\delta^{(0)}_{\bar \e}] A^\alpha$;
every explicit calculation, making use of \eqref{B1}--\eqref{B7}, 
would be rather tedious and lengthy---the method presented
below\footnote{Cf.~also \cite{BKS} for the same idea, illustrated
there at the example of the
Poisson sigma model.} shortens the calculation considerably; also it
provides a relation to so-called derived brackets \cite{YKS96},
which were studied in the mathematical literature and which we will
then develop further for our particular case in a subsequent section. 

In Section~\ref{s:Bian} we found that $A^\alpha = a^* (q^\alpha)$, the
map $a \colon \Mo \to \Mt$ containing all the information about our
gauge fields. Likewise, we now could attempt
to cast gauge transformations into the form
$\delta_\e A^\alpha = (\delta_\e a)^* (q^\alpha)$ with $(\delta_\e
a)^* = a^* \circ  V_\e$  for some degree preserving vector field $V_\e$ on
$\Mt$ (parameterized somehow by $\e$). Observing moreover that $\e^\alpha =
(0,\e^a,\mu^D)$ naturally carries an upper index of $\Mt$, it is natural
to regard these coefficients as components of a vector field
on $\Mt$, $\e := \e^\alpha \partial_\alpha$. There are two obvious problems
with this ansatz: First, $\e$ should depend on $\s$ not just via the
map $X$ and, second, in this manner we would not be able to have $\ud = Q_1$ enter respective formulas. 

Both issues are cured simultaneously by the following construction
\cite{BKS,KS07}: we can trivially extend the map $a$ from before
to a map \beq a \colon \Mo \to
\M := \Mo \times \Mt \, , \label{aext}
\eeq
which, by definition, is the identity on the
first factor. For notational simplicity we do not introduce a new
symbol for this extended map, henceforth $a$ always corresponds to
this extended map; as before $A^\alpha = a^* (q^\alpha)$, since (local)
functions $q^\alpha$ on $\Mt$ can be regarded also as functions on $\M$
(together with coordinate functions $\s^\m$ on $\Mo$ they provide
coordinate functions on the product supermanifold $\M$). Obviously,
$\M$ naturally becomes a Q-manifold itself; just add the two nilpotent
vector fields to obtain
\beq Q = Q_1 + Q_2 \, . \eeq
This is again a nilpotent degree one vector field since by definition
$Q_1$ and $Q_2$ supercommute (i.e.~they anticommute). In other words,
given the two {complexes} $(C^\infty(\Mo), Q_1)$ and  $(C^\infty(\Mt),
Q_2)$, one forms a double complex in the standard way,  
$C^\infty(\Mo) \otimes C^\infty(\Mt) \cong C^\infty(\CM)$. In this
extended setting, the operator $\CF$ of
Equation~\eqref{CF} is replaced by
\beq \CF := \Qo \circ a^* - a^* \circ Q \qquad \Rightarrow \qquad \CF (q^\alpha) 
\equiv F^\alpha \, , \label{CFext}\ee
since essentially all remains the same only that the target
$(\Mt,\Qt)$ is replaced by $(\M,Q)$---merely with an additional restriction on 
\eqref{aext} that $a$ is the identity on the first factor.\footnote{In other
words, $a$ is a section of the bundle $\M \to \Mo$. Cf.~\cite{KS07}
for the obvious generalization to nontrivial (total) bundles, which,
despite its simplicity,
may be seen to generalize the notion of an ordinary
connection in a nontrivial principal bundle (while certainly
in this case a connection is not just
a globally defined Lie algebra valued 1-form on the base of that bundle).} Note that
this extended $\CF$ vanishes identically on functions coming from
$\Mo$ (i.e.~constant along $\Mt$).

The space of vector fields $\X_\bullet(\M)= \Gamma(T\M)$ of every Q-manifold $\M$ carries canonically the structure of a complex. If $\CM$ is a supermanifold of degree $p$
(cf.~Section~\ref{s:examQ}), the space $\X_\bullet(\M)$ has also elements of
negative degree, bounded from below by $-p$. The differential is
provided by the graded commutator with $Q$, $\ud_Q \colon
V \mapsto [Q,V]$, since the bracket
$[ \cdot , \cdot ]$ maps derivations of $C^\infty(\M)$ into itself,
$\ud_Q \equiv [Q,\cdot]$ is of degree one since $Q$ is, and
$(\ud_Q)^2=0$ due to $Q^2 = \frac{1}{2} [Q,Q]=0$ as one shows easily by means of
the (graded) Jacobi identity of the bracket.

For the purpose of gauge symmetries, we will be interested in a
subcomplex $(\Xv,\ud_Q)\subset (\XM,\ud_Q)$,  spanned by ``vertical''
vector fields, i.e.~vector fields parallel to $\Mt$.\footnote{Here we made
use of the fact that $\ud_{Q_1}$ maps $\Xv$ into itself: $\ue =\ue^\alpha
\partial_\alpha \in \Xv
\Rightarrow [Q_1,\ue] = Q_1(\ue^\alpha)
\partial_\alpha \in \Xv$.}
 This complex is
isomorphic to $(C^\infty(\Mo),Q_1) \otimes (\X_\bullet(\Mt),\ud_{Q_2})$,
as one verifies easily (cf.~also Eq.~\eqref{complex}  below for
more details). In this form we recognize that both of the potential problems of the ansatz 
of symmetries generated by vector fields are cured at the same time: while still dealing essentially with vector fields on the target, the coefficient functions now depend also on $\Mo$ and there is $Q_1\cong \ud$ in the game as well. 

Let $V \in \Xz$, a vertical vector field of degree zero,
and set $\delta A^\alpha = (a^* \circ V) (q^\alpha)$. Together with
\eqref{CFext} this implies
\beq \delta F^\alpha = ( \Qo \circ a^*\circ V - a^*\circ V \circ Q)(q^\alpha)
\equiv \CF ( V(q^\alpha)) + (a^* \circ \ud_Q(V))(q^\alpha) \, . \label{Ftrafo1}
\eeq
Thus if $\dim \S \ge p+1$, $\delta \I \subset \I$ is tantamount 
to requiring $V$ to be $\ud_Q$-closed. The solution found before in
Theorem \ref{theo:gauge1} corresponds to an exact $V$, i.e.~$V=\ud_Q
\ue$, where $\ue$ is 
a degree minus one vertical vector field,
$\ue \in \Xmo$. We will now make this more explicit (and thereafter
add a remark about the $\ud_Q$-cohomology).

The degree $k$ elements of
$C^\infty(\Mo)$ may be identified with $\Omega^k(\S)$ and $Q_1$ is the
de Rham differential $\ud$ on $\S$. Correspondingly,
\begin{align}
 (\Xmo,\ud_Q)
\cong & \; (\Omega^0(\S),\ud) \otimes (\X_{-1}(\Mt),\ud_{Q_2})
  \nonumber \\ &  \oplus
 (\Omega^1(\S),\ud) \otimes
(\X_{-2}(\Mt),\ud_{Q_2})  \; \oplus \ldots \nonumber \\
  &  \oplus  (\Omega^{p-1}(\S),\ud) \otimes
(\X_{-p}(\Mt),\ud_{Q_2})
  \, .  \label{complex}
\end{align}
Thus, for $p=2$, the most general $\ue \in \Xmo$ takes the form
\beq \ue = \ue^a \frac{\partial}{\partial \xi^a} + \ue^D
\frac{\partial}{\partial b^D} 
\eeq
with $\ue^a \in C^\infty(\S) \otimes C^\infty(M)\cong
C^\infty(\S\times M)$ and  $\ue^D = (\ue^D)_{\mathrm{main}} + \xi^a
\ue^D_a$, where $(\ue^D)_{\mathrm{main}}
\in \Omega^1(\S) \otimes C^\infty(M)$ and $\ue^D_a \in
C^\infty(\S\times M)$. Here we used that the degree zero
part of $C^\infty(\Mt)$ is $C^\infty(M)$ and that $\xi^a
\frac{\partial}{\partial b^D} \in \X_{-1}(\Mt)$; 
everything is understood as
defined only locally over the target base $M$ certainly. We now
introduce
\beq \e^a := a^* \ue^a \; , \quad  \mu^D \equiv \e^D:= a^* \ue^D \, ;
\eeq
thus obviously, $\e^a \in  C^\infty(\S)$ and $\mu^D \in
\Omega^1(\S)$.\footnote{Note: $\e^a(\s) = \ue^a(\s,X(\s))$---it was,
among others, this extra \emph{explicit}
$\s$-dependence that made it necessary to extend the target as
in \eqref{aext}: For a fixed choice of $X \colon \S \to M$ that is
an embedding this would not have been necessary, but, on the other
extreme side---which, however, is \emph{always} realized in ordinary YM
gauge theories---if the
image of $X$ is just a point in $M$, one now
still has the possibility for a nontrivial
$\s$-dependence of $\e^a$. The situation with $\mu^D$ is similar, but
a bit more involved: The coefficient $\ue^D_a$ corresponds to a part
of $\mu^D$ that is not only $X$, but also $A^a$--dependent. We mostly
will consider only the case with $\ue^D_a=0$.}
Now, to prove the above claim, we just
perform the computation:
\begin{align}
\left(a^* \circ \ud_Q(\ue)\right)(q^\alpha)& =
  a^* \left(\ue(Q_2^\alpha)\right)  + \left(a^* \circ Q\right) (\ue^\alpha) 
  \nonumber \\ 
& =  a^*(\ue^\b\partial_\b Q_2^\alpha) - \CF (\ue^\alpha) + \left(Q_1 \circ
a^*\right)  (\ue^\alpha) \nonumber \\
& = \ud \e^\alpha + \e^\b \, a^*(\partial_\b Q_2^\alpha) - F^\beta \, a^*
( \partial_\b \ue^\alpha ) \, .
  \end{align}
Comparison with Eqs.~\eqref{gauge0}, \eqref{consistency}, and 
\eqref{g1b}--\eqref{g3b} shows that indeed
\beq \boxed{\delta_\e A^\alpha = (a^* \circ [Q,\ue] )\, (q^\alpha)}
\label{gaugebox} \eeq
with particular coefficients $\lambda^\alpha_\b$ in
\eqref{g1b}--\eqref{g3b}, namely
\beq \lambda^\alpha_\b = 
-a^*(\partial_\b \ue^\alpha)\, , \label{lambdas1} \eeq
i.e., in 
detail, $\lambda^a_i = 
-a^*(\partial_i \ue^a)$, $\lambda^D_i = 
-a^*(\partial_i \ue^D) \equiv -
a^*\left(\partial_i (\ue^D)_{\mathrm{main}}\right) -
 A^a a^*(\partial_i \ue^D_a)$, and $\lambda^D_a = 
-a^*(\ue^D_a)$. For the particular case of prime interest that
$\ue^D =(\ue^D)_{\mathrm{main}}$, this simplifies to
\beq \lambda^a_i = 
-a^*(\partial_i \ue^a) \, , \quad  \lambda^D_i = 
-a^*(\partial_i \ue^D) \, , \quad \lambda^D_a =0 \, ,   \label{lambdas2}
\eeq
where $\lambda^D_i$ depends on the choice of $X$ but not also on
$A^a$.

Note that, given a fixed map $a$, i.e.~a
fixed choice of $A^\alpha$, we can always adapt for an arbitrary choice of
$\e^\alpha$ \emph{and}  $\lambda^\alpha_\b$ in \eqref{g1b}--\eqref{g3b}
by an appropriate choice of
$\ue^\alpha$. On the other hand, from the point of view of
Eq.~\eqref{gaugebox}, gauge transformations are governed by
``vertical vector fields'' on the target, which in turn determine, for
\emph{every} independent choice of $A^\alpha$, both $\e^\alpha$ and
$\lambda^\alpha_\b$. 

We use the opportunity for a side remark about \emph{one} alternative covariant
choice of $\lambda$s in  \eqref{g1b}--\eqref{g3b}. At the
expense of introducing a connection in both  $E$ and $V$, say those
used to define the field strengths \eqref{FaG} and \eqref{FBG},
one may choose
\beq \lambda^a_i=\e^b\Gamma^a_{ib} \qquad
\lambda^D_i=\mu^C\Gamma^D_{iC} \label{lambdas3}
\eeq 
besides $\lambda^D_a =0$. This  has the advantage that now already $\e^\alpha \equiv a^*(\ue^\alpha)$
determines $\lambda^\alpha_\b$ uniquely (in \eqref{lambdas1} also the first germ of $\ue^\alpha$ at the section $a$ is needed). To see that this choice is indeed
covariant, one can argue by comparison with Eqs.~\eqref{lambdas1},
\eqref{lambdas2}: By construction, Eq.~\eqref{gaugebox}
gives a covariant gauge transformation---note that here covariance
means that $\delta_\e$ is to act as a derivation, but the r.h.s.~of
\eqref{gaugebox} obviously does so since $[Q,\e]$ is a
vector field.\footnote{Cf.~also \cite{BKS} for a  more detailed
discussion of this issue.} In detail this implies that when we change
local frames in $E$ and $V$ according to \(\tilde{\xi^a}=M^a_b \xi^b\),
\(\tilde{b^D}=N^D_C b^C\), every transformation of $\lambda$s such as
those of Eqs.~\eqref{lambdas1}, \eqref{lambdas2}, namely
\begin{align}  \widetilde{\lambda^a_i} &= 
    M^{\tilde a}_b \lambda^b_i - M^{\tilde a}_{b,i}\e^b 
  \label{lambdatrafo1} \\
  \widetilde{\lambda^D_a} &=
    N^{\tilde D}_F \lambda^F_c
   (M^{-1})^c_{\tilde a}  \label{lambdatrafo2} \\
  \widetilde{\lambda^D_i} &= 
    N^{\tilde D}_C\lambda^C_i - N^{\tilde D}_{C,i}\mu^C
    \label{lambdatrafo3}
  \, ,
\end{align}
give a covariant gauge transformation in 
\eqref{g1b}--\eqref{g3b}.
(We suppressed writing pullback by $X$ here: E.g., $M^{\tilde a}_b$ in
\eqref{lambdatrafo1} is rather $X^*(M^{\tilde a}_b) \equiv M^{tilde a}_b(X)$ and likewise
$M^{\tilde a}_{b,i}$ denotes $X^*(\partial_i M^{\tilde a}_b)$.)
The above choice of $\lambda$s using background connections
obviously fulfills the transformation requirements
\eqref{lambdatrafo1}--\eqref{lambdatrafo3}. Note that the
method of obtaining these transformation properties is
incomparably simpler than an elementary one where, in a first step, one would 
compute all changes in \eqref{g1a}--\eqref{g3a} using 
\eqref{trafo1}--\eqref{trafo5}. 

In the considerations around Eq.~\eqref{Ftrafo1} we obtained only that
$V$ needs to be $\ud_Q$-closed, but not necessarily $\ud_Q$-exact. This
reconciles with the result of Theorem \ref{theo:gauge1} since 
underlying \eqref{gauge} is a particular form of
gauge transformations, namely the one specified in
Eqs.~\eqref{gsymm0}---starting with exact forms on $\S$.
This already gives the restriction
to trivial elements of the otherwise nontrivial cohomology of
$\ud_Q$. It may be interesting to relax this condition, however.

Note that the $\ud_Q$--cohomology on vertical vector fields
can be computed by means of an algebraic
version of the K\"unneth formula, in detail:  $H_{\ud_Q}^0(\M) =
\bigoplus_{k=0}^2 \left(H_{\textrm{deRham}}^k(\S) \otimes
H_{\ud_{Q_2}}^{-k}(\Mt)\right)$. E.g.~a change of coordinates on the
supermanifold $\Mt$ induces a symmetry only if it leaves the Q-structure invariant, i.e.~if its (degree zero) generator $V=V^\alpha \partial_\alpha$ commutes with $Q_2$. 
Thus such a generator corresponds to an element with $k=0$
in the previous sum. 


\vskip\newpassage

We now come to the announced application of the above framework for
gauge transformations. As a warm-up exercise we return to
Eq.~\eqref{Ftrafo1} and use it to
compute the explicit form of the change of field strengths $F^\alpha$
under the generic part $\delta^{(0)}_\e$ of gauge transformations of
our gauge fields. Needless to say that again every explicit
calculation would be very lengthy, making heavy use of
\eqref{B1}--\eqref{B7}. We first observe that locally and for every
fixed choice of $(\e^a,\mu^D)$ we can choose frames in $E$ and $V$
such that $\ue^a$ and $\ue^D$ are constant along $\Mt$---correspondingly,
in \emph{this} particular frame, $\delta_\e A^\alpha = \delta^{(0)}_\e
A^\alpha$ and we may make use of \eqref{Ftrafo1} with $V=\ud_{Q}\ue$:\footnote{Alternatively, here we may equally well
use \eqref{Fchangeold} together with Eqs.~\eqref{consistency}.}
\ba
  \delta^{(0)}_\e F^i &=& \e^a \, \rho^i_{a,j} \, F^j \, , \label{deltaFi} \\
  \delta^{(0)}_\e F^a &=&-C^a_{bc}\, \e^b F^c + \left(C^a_{cb,i}\, \e^bA^c
  -t^a_{D,i}\, \mu^D \right)F^i \, , \label{deltaFa} \\
  \delta^{(0)}_\e F^D &=& -\Eco^D_{aC} \, \e^a F^C + \left(\Eco^D_{aC}
  \, \mu^C  - H^D_{abc} \,
  \e^b A^c \right)  F^a \nonumber \\
  && + \left[\tfrac{1}{2} H^D_{abc,i}\, \e^a A^bA^c -\Eco^D_{aC,i}
  \left(\e^a B^C + \mu^CA^a \right) \right] F^i \, . \label{deltaFB}
\ea
For example, the first line was obtained as follows:
$\delta_\e F^i=\CF([Q,\ue]x^i)=\CF(\rho^i_a\ue^a) =
X^*\left(\partial_j(\rho^i_a\ue^a)\right) F^j$,
where in the last step we  used \eqref{FLeibniz}. 


We now come to the commutator of gauge transformations. For this, we first make
the following simple but important observation
\begin{thm} \label{theo:Lie}
Infinitesimal gauge transformations, written in form of
\eqref{gaugebox},
form a Lie algebra isomorphic to $(\Xmo/\ker\ud_Q,[\cdot ,\cdot]_Q)$,
with 
the derived bracket:
\beq [\ue,\bar \ue]_Q = [\ud_Q(\ue),\bar \ue] \equiv  [[\ue,Q],\bar
\ue] \, . \label{induced}
\eeq 
\end{thm}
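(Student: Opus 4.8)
The plan is to view \eqref{gaugebox} as defining, for each $\ue\in\Xmo$, an operator $\delta_\ue$ on the space of gauge fields, and to show that $\ue\mapsto\delta_\ue$ descends to a Lie-algebra isomorphism from $\big(\Xmo/\ker\ud_Q,[\cdot,\cdot]_Q\big)$ onto the gauge transformations with their commutator bracket. First I would promote \eqref{gaugebox}, which asserts $\delta_\ue A^\alpha=(a^*\circ[Q,\ue])(q^\alpha)$, to the operator identity $\delta_\ue\circ a^*=a^*\circ\ud_Q(\ue)$ on all of $C^\infty(\CM)$: since $\delta_\ue$ acts as a derivation, $a^*$ is an algebra morphism, $\delta_\ue$ annihilates the coordinates pulled back from $\Mo=T[1]\Sigma$, and $\ud_Q(\ue)=[Q,\ue]$ is a vertical vector field, the chain rule propagates the identity from the generators $q^\alpha$ to arbitrary functions. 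Then, letting $a$ range over all sections of $\CM\to\Mo$, one sees that $\delta_\ue\equiv 0$ forces $\ud_Q(\ue)=0$; hence $\ue\mapsto\delta_\ue$ factors through an injective linear map out of $\Xmo/\ker\ud_Q$, whose image is exactly the set of gauge transformations of the form \eqref{gaugebox}.

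Next I would compute the commutator directly from this reformulation. Because $\ud_Q(\ue)$ and $\ud_Q(\bar\ue)$ are even (degree-$0$) vector fields, composing the operator identities gives
\[
 [\delta_\ue,\delta_{\bar\ue}]\circ a^*=a^*\circ\big[\,\ud_Q(\ue),\,\ud_Q(\bar\ue)\,\big] \, ,
\]
with the ordinary commutator on the right. The key step is the bracket identity
\[
 \big[\,\ud_Q(\ue),\,\ud_Q(\bar\ue)\,\big]=\ud_Q\big(\,[\,\ud_Q(\ue),\bar\ue\,]\,\big)=\ud_Q\big(\,[\ue,\bar\ue]_Q\,\big) \, ,
\]
which follows from the graded Jacobi identity applied to $Q,\ud_Q(\ue),\bar\ue$ together with $\ud_Q^2=0$: one has $\ud_Q\big([\ud_Q(\ue),\bar\ue]\big)=[(\ud_Q)^2\ue,\bar\ue]+[\ud_Q(\ue),\ud_Q(\bar\ue)]$ and the first term vanishes. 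This is the only place the structural hypothesis enters, through $(\ud_Q)^2=0\Leftrightarrow Q^2=0$. Combining the two displays yields $[\delta_\ue,\delta_{\bar\ue}]=\delta_{[\ue,\bar\ue]_Q}$; in particular the gauge transformations close under commutator and thus form a Lie algebra, a subalgebra of the operator algebra on field space, with bracket transported from the derived bracket.

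Finally I would check that $[\cdot,\cdot]_Q$ genuinely descends to a Lie bracket on $\Xmo/\ker\ud_Q$. Well-definedness: if $\ud_Q\ue=0$ then $[\ue,\bar\ue]_Q=[\ud_Q\ue,\bar\ue]=0$, and if $\ud_Q\bar\ue=0$ then $\ud_Q([\ue,\bar\ue]_Q)=[\ud_Q\ue,\ud_Q\bar\ue]=0$, so in both cases $[\ue,\bar\ue]_Q\in\ker\ud_Q\cap\Xmo$. Antisymmetry in the quotient: a further use of graded Jacobi gives $[\ue,\bar\ue]_Q+[\bar\ue,\ue]_Q=\ud_Q\big([\ue,\bar\ue]\big)$, and since $[\ue,\bar\ue]$ is again a (degree $-2$) vertical vector field, this symmetric part is $\ud_Q$-exact, hence $\ud_Q$-closed, hence lies in $\ker\ud_Q\cap\Xmo$. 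So the induced bracket on $\Xmo/\ker\ud_Q$ is well-defined and antisymmetric, and the Jacobi identity for it then comes for free from the isomorphism with the commutator Lie algebra established above. I expect the only non-formal point --- the main obstacle --- to be the kernel computation: justifying that a vertical vector field on the graded bundle $\CM\to\Mo$ is zero as soon as all its pullbacks $a^*\big((\ud_Q\ue)(q^\alpha)\big)$ along sections $a$ vanish. This is handled by the functor-of-points description of graded manifolds (equivalently, since the bundle is trivial, every fibre point equals $a(\sigma)$ for some section $a$); everything else is bookkeeping with the graded commutator.
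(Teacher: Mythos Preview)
Your proposal is correct and follows the same route as the paper: the core is the identity $[\ud_Q(\ue),\ud_Q(\bar\ue)]=\ud_Q([\ud_Q(\ue),\bar\ue])$, obtained from graded Jacobi and $(\ud_Q)^2=0$, which shows $[\delta_\ue,\delta_{\bar\ue}]=\delta_{[\ue,\bar\ue]_Q}$ exactly as in the paper's Eq.~\eqref{comm}.

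The only real difference is in how the Lie-algebra structure on the quotient is established. The paper outsources this to the general derived-bracket result Proposition~\ref{p:YKSder} (due to Kosmann-Schwarzbach), whereas you verify well-definedness and antisymmetry by hand via $[\ue,\bar\ue]_Q+[\bar\ue,\ue]_Q=\ud_Q([\ue,\bar\ue])$ and then pull back the Jacobi identity through the injective bracket-preserving map into the operator commutator algebra. Your argument is more self-contained and avoids invoking an external proposition; the paper's is shorter but relies on the cited general fact. Your treatment of injectivity (the ``only non-formal point'') is also more explicit than the paper's, which simply states that gauge transformations are parameterized by $\ue$ modulo $\ker\ud_Q$ without further comment.
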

To prove this we observe that infinitesimal gauge transformations
\eqref{gaugebox} are generated by  degree minus one vertical
vector fields \emph{modulo} elements that are annihilated by
$\ud_Q \equiv [Q, \cdot]$. Such elements are called \emph{coexact} in
\cite{YKS96}, in generalization of coexact differential forms. 
It is now easy to convince oneself that the commutator of two gauge
transformations \eqref{gaugebox} gives rise to a new bracket
between vector fields:
\beq 
 [\delta_\ue , \delta_{\bar \ue} ] A^\alpha = \left(a^* \circ
 [\ud_Q(\ue), \ud_Q(\bar \ue)]\right) (q^\alpha) =
 \left( a^* \circ \ud_Q ( [\ud_Q(\ue),\bar \ue])\right) (q^\alpha) \, ,
\label{comm}
 \eeq
where we made use of the (graded) Leibniz property of $\ud_Q$ 
w.r.t.~$[ \cdot , \cdot ]$---noting that
$\ud_Q(\ue)$ is of degree zero---and of 
$(\ud_Q)^2=0$. This proves \eqref{induced}. The induced or
\emph{derived} bracket
$[\ue,\bar \ue]_Q = [\ud_Q(\ue),\bar \ue]$ on $\Xmo$ is \emph{not}
a Lie bracket; however, as a result of a general construction,
recalled in the subsequent section, on the
quotient by the kernel of the differential $\ud_Q$ it is,
cf.~Proposition \ref{p:YKSder} below.

It is possible to characterize Q2-manifolds 
also by means of its derived bracket---as an alternative to the
description found in Section~\ref{s:examQ}. Since
we were now naturally led to this bracket, and since this formulation
is well-suited also for comparison with the usual formulation of what
is a Courant algebroid, we will reanalyze
Q2-manifolds from this perspective in the subsequent section.

The above theorem
implies in particular that if gauge transformations are
interpreted---in generalization
of \cite{BKS}---in the sense advocated in the present section, they close
off-shell---and this also in theories which in the \emph{usual}
terminology have an
\emph{open algebra} of gauge transformations. In fact,
as a kind of corollary of Theorem \ref{theo:Lie}, we will now verify
\begin{prop}  For the 
gauge transformations \eqref{g1a}--\eqref{g3a} one finds
\ba [\delta^{(0)}_\e,\delta^{(0)}_{\bar \e}] X^i &=&
\delta^{(0)}_{\widetilde \e}
X^i  \label{gcom1}\\{}
  [\delta^{(0)}_\e,\delta^{(0)}_{\bar \e}] A^a &=&
  \delta^{(0)}_{\widetilde \e}
  A^a -C^a_{bc,i} \e^b
    \bar \e^{c} F^i  \label{gcom2}\\{}
  [\delta^{(0)}_\e,\delta^{(0)}_{\bar \e}] B^D &=&  
  \delta^{(0)}_{\widetilde \e}
 B^D 
     +H^D_{abc} \e^b  \bar \e^{c} F^a \nonumber \\ 
  & & +\,\Eco^D_{aC,i}
     (\e^a\bar \mu^{C}-\bar \e^{a}\mu^C) F^i
  - H^D_{abc,i} \e^{a}\bar\e^b A^c F^i  \label{gcom3}
\ea
where $\,{\widetilde \e}^a 
\equiv C^a_{bc}   \e^b \bar \e^c$ 
and  
    $\,{\widetilde \e}^D
    \equiv  \Eco^D_{aC}(
 \e^a \bar \e^{C}-\bar \e^{a}\e^C ) 
 -  H^D_{abc}\e^a\bar \e^{ b}
  A^c \, .  $
\label{prop:onshellclosed}  
\end{prop}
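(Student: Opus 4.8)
The plan is to exploit Theorem~\ref{theo:Lie}, which tells us that in the Q-language the commutator of two gauge transformations is automatically \eqref{comm}, i.e. $[\delta_{\ue},\delta_{\bar\ue}]A^\alpha=(a^*\circ\ud_Q[\ud_Q(\ue),\bar\ue])(q^\alpha)$. The whole content of the proposition is thus to unpack this identity in components for $p=2$ and to compare the right-hand side with $\delta^{(0)}_{\widetilde\e}A^\alpha$ plus ideal terms. Concretely, I would take $\ue,\bar\ue\in\Xmo$ with $(\ue)^a,(\bar\ue)^a$ constant along $\Mt$ and $\ue^D=(\ue^D)_{\mathrm{main}}$ likewise (the frame in which $\delta_\e=\delta^{(0)}_\e$, as in the derivation of \eqref{deltaFi}--\eqref{deltaFB}), compute the vector field $w:=[\ud_Q(\ue),\bar\ue]=[\ue,\bar\ue]_Q\in\Xmo$ explicitly, and then apply $a^*\circ\ud_Q$ to it. By the same manipulation as in \eqref{Ftrafo1}, $(a^*\circ\ud_Q(w))(q^\alpha)=\ud\e^\alpha_w+\e^\beta_w\,a^*(\partial_\beta Q_2^\alpha)-F^\beta a^*(\partial_\beta w^\alpha)$ with $\e^\alpha_w:=a^*(w^\alpha)$; the first two terms are precisely $\delta^{(0)}_{\e_w}A^\alpha$, so it only remains to (i) identify $\e_w$ with the stated $\widetilde\e$ and (ii) collect the $F^\beta$-terms $-F^\beta a^*(\partial_\beta w^\alpha)$ and check they agree with the displayed curvature corrections in \eqref{gcom1}--\eqref{gcom3}.

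For step (i), computing $w=[\ud_Q(\ue),\bar\ue]$ requires $\ud_Q(\ue)=[Q,\ue]$, which is a degree-zero vertical vector field whose components involve $Q_2^\gamma\partial_\gamma\ue^\alpha-\ue^\gamma\partial_\gamma Q_2^\alpha$ type expressions; bracketing again with $\bar\ue$ and pulling back, the terms where both derivatives hit the $\ue$'s vanish (the $\ue$'s being $\Mt$-constant), leaving $w^\alpha$ built from one $\partial Q_2$ and the two parameters. Pulling back with $a^*$ and using the explicit form \eqref{Qi}--\eqref{QB} of $Q_2$ gives $\widetilde\e^a=C^a_{bc}\e^b\bar\e^c$ and $\widetilde\e^D=\Eco^D_{aC}(\e^a\bar\e^C-\bar\e^a\e^C)-H^D_{abc}\e^a\bar\e^b A^c$ — exactly the antisymmetrization of the structure-function contributions, which is the content of the "derived bracket = gauge algebra" statement. (One should note the $H$-term in $\widetilde\e^D$ is $A^a$-dependent, reflecting that the composite parameter is field-dependent, which is allowed.) For step (ii), one needs the $\Mt$-derivatives of $w^\alpha$; since $w$ is itself $X$-dependent through the structure functions, $\partial_i w^\alpha$ produces terms like $C^a_{bc,i}\e^b\bar\e^c$, $\Eco^D_{aC,i}(\ldots)$, $H^D_{abc,i}(\ldots)$, and $\partial_a w^D$ produces the $H^D_{abc}\e^b\bar\e^c$ piece multiplying $F^a$ in \eqref{gcom3} — matching the displayed formulas once signs are tracked.

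The main obstacle I expect is purely bookkeeping: getting all the signs and the (anti)symmetrizations right in the double bracket $[[\ue,Q],\bar\ue]$ for the graded vector fields, particularly the sign $(-1)^{|\b|}$ appearing in expressions like \eqref{Fchangeold}, and keeping straight which terms are "principal" ($\delta^{(0)}_{\widetilde\e}$) versus which are the explicit $F$-corrections. A secondary subtlety is that the identification $\e_w=\widetilde\e$ should be checked to be frame-independent modulo $\ker\ud_Q$ (closed contributions) — but since we have fixed the special frame in which $\delta=\delta^{(0)}$, this is automatic here, and Theorem~\ref{theo:Lie} already guarantees the abstract statement; the proposition is just its coordinate incarnation. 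Alternatively, and perhaps more economically, one can bypass the double bracket altogether: use \eqref{deltaFi}--\eqref{deltaFB} to compute $\delta^{(0)}_\e(\delta^{(0)}_{\bar\e}A^\alpha)$ directly from \eqref{g1a}--\eqref{g3a}, noting $\delta^{(0)}_\e(\ud\bar\e^\alpha)=\ud(\delta^{(0)}_\e\bar\e^\alpha)$ and that $\delta^{(0)}_\e$ acts on the explicit $A^b$, $B^C$ factors by \eqref{g2a}--\eqref{g3a} and on the structure functions $C^a_{bc}(X)$ etc. by $\delta^{(0)}_\e X^i=\rho^i_a\e^a$; antisymmetrizing in $\e\leftrightarrow\bar\e$ and using \eqref{B1}--\eqref{B7} to simplify, the non-$F$ terms collapse to $\delta^{(0)}_{\widetilde\e}A^\alpha$ and the leftover terms, which fail to combine precisely because the Jacobi-type identities only hold modulo $H\cdot t$ and modulo the curvature $\ER$, are exactly the $F$-proportional remainders. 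Either route works; I would present the first (Q-language) one as the conceptually cleaner proof and remark that it is "a kind of corollary of Theorem~\ref{theo:Lie}" as the text already announces.
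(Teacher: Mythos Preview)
Your proposal is correct and follows essentially the same approach as the paper: choose the frame in which $\ue,\bar\ue$ are $\Mt$-constant so that $\delta_\ue=\delta^{(0)}_\e$, invoke Theorem~\ref{theo:Lie} and Eq.~\eqref{comm} to reduce the commutator to $\delta_{\widetilde\ue}A^\alpha$ with $\widetilde\ue=[\ue,\bar\ue]_Q$, compute $\widetilde\ue$ explicitly (the paper obtains $\widetilde\ue^\gamma=-(-1)^{|\beta|}\bar\e^\beta\e^\alpha\partial_\alpha\partial_\beta Q_2^\gamma$, which is your ``one $\partial Q_2$ with two parameters'' phrased more precisely), and then split $\delta_{\widetilde\ue}A^\alpha=\delta^{(0)}_{\widetilde\e}A^\alpha+F^\beta\widetilde\lambda^\alpha_\beta$ via Eq.~\eqref{lambdas1}. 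The paper emphasizes exactly your observation that although $\ue,\bar\ue$ are $\Mt$-constant, $\widetilde\ue$ is not---its $\xi^a$-dependence in $\widetilde\ue^D$ produces both the $A$-dependent $\widetilde\e^D$ and the $F^a$-term in \eqref{gcom3}.
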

The idea is simple: As when computing the variation of
field strengths w.r.t.~$\delta^{(0)}_\e$, for a given choice of 
$\e^\alpha \in C^\infty(\Mo)$ 
and given a particular coordinate system $q^\alpha$ on $\Mt$, we may
define $\ue := \e^\alpha \partial_\alpha$, which now is a vertical vector
field. In that coordinate system, we then have
$\delta^{(0)}_\e A^\alpha = \delta_\ue A^\alpha$ and likewise so for the
barred variations.\footnote{Here we
wrote  $\delta_\ue A^\alpha$ instead of  $\delta_\e A^\alpha$
as in \eqref{gaugebox} or \eqref{g1b}--\eqref{g3b}, so as to
make clear that the gauge transformation \eqref{gaugebox} is
parameterized not only by $\e^\alpha(\s)=\ue^\alpha(\s,A^\b(\s))$, but also by the
first derivative of $\ue^\alpha$
w.r.t.~$q^\b$---cf.~Eq.~\eqref{lambdas1}. This is in contrast to the
situation when choosing \eqref{lambdas3} in
\eqref{g1b}--\eqref{g3b}.} 
Thus also for
the commutators we have $[\delta^{(0)}_\e,\delta^{(0)}_{\bar \e}] A^\alpha
= [\delta_\ue,\delta_{\bar \ue}] A^\alpha$. Now we can apply
\eqref{comm} and Theorem \ref{theo:Lie}. This contains the ``heavy
calculation'', so to say (which, in the present formulation, is almost a
triviality). We are left with splitting  $\delta_{[\ue,\bar \ue]_Q}
A^\alpha$ into the part $\delta^{(0)}_{[\ue,\bar \ue]_Q} A^\alpha$ and, using
\eqref{lambdas1}, the remaining 
terms containing $F^\alpha$s according to \eqref{g2b} and \eqref{g3b}.
The main point is that even if $\ue$ and $\bar \ue$ are constant along
$\Mt$, their induced bracket is not! Let us be explicit about this:
\beq [ [\e^\alpha \partial_\alpha , Q_1 + Q_2 ], \bar \e^\b \partial_\b ] =
\bar \e^\b \e^\alpha  [[\partial_\alpha ,Q_2 ],\partial_\b ]
\eeq
since $\e^\alpha$ and $\bar \e^\b$ are constant along $\Mt$ and
$[Q_1(\e^\alpha) \partial_\alpha, \bar \e^\b \partial_\b ] =0$. Thus we need to
compute $[\partial_\alpha,\partial_\b ]_{\Qt}$, which is easily found to 
equal 
$(-1)^{|\beta|+1}(\partial_\alpha \partial_\b \Qt^\gamma) \partial_\gamma$. So,
\ba 
\widetilde \ue := [\e^\alpha \partial_\alpha,  \bar \e^\b \partial_\b ]_Q 
  &=& -(-1)^{|\beta|}\bar \e^\b \e^\alpha
(\partial_\alpha \partial_\b \Qt^\gamma) \, \partial_\gamma
\label{derivedsymm} \\
&\equiv&   \e^a \bar \e^b C^c_{ab} \partial_{c} 
\,+ (\e^a \bar \mu^D- \bar \e^a \mu^D ) \, \Eco^C_{aD}
 \partial_{C}  
 \, - \e^a\bar \e^{ b}H^D_{abc}
  \xi^c \partial_{D}
 \, , \nonumber 
 \ea
where we used the general form \eqref{Q} for $\Qt$ and $\e^D \equiv
\mu^D$. Despite the fact that $\ue$ and  $\bar \ue$
were vertical vector fields ``constant along $\Mt$'' (in the chosen
frame $\partial_\alpha$), the vector field $\widetilde \ue = \widetilde
\ue^\alpha \partial_\alpha$ has coefficients which \emph{do} depend
non-trivially on $\Mt$. By comparison of coefficients it is now easy to
conclude the above statement from
\beq [\delta^{(0)}_\e,\delta^{(0)}_{\bar \e}] A^\alpha
= [\delta_\ue,\delta_{\bar \ue}] A^\alpha = \delta_{\widetilde \ue} A^\alpha =
\delta^{(0)}_{\widetilde \e} A^\alpha + F^\b \widetilde \lambda_\b^\alpha
\, , 
\eeq
with $ \widetilde \e^\alpha = a^* \widetilde \ue^\alpha$
and $\widetilde \lambda_\b^\alpha = - a^*(\partial_\b \widetilde
\ue^\alpha)$.
It is seen also that the nontrivial
$\xi^a$--dependence of $\widetilde \ue^D$ is responsible for both, the
$A^a$-dependence of $\widetilde \e^D$ and the $F^a$-contribution to 
\eqref{gcom3}. 

We conclude this section by illustrating how the previous considerations can be  used to also compute the commutator of symmetries of the form  
\begin{align}  \delta^{}_\e X^i &= \rho^i_a  \e^a \, ,  \label{g1an}\\
  \delta_\e^{} A^a &= \ud\e^a + C^a_{bc}A^b \e^c +t^a_D \e^D +\Gamma^a_{ib}\e^bF^i  \label{g2bn} \\
  \delta_\e B^D  &= \ud\e^D -\Eco^D_{aC} \e^a B^C +\Eco^D_{aC}
    A^a \wedge \e^C +\frac12 H^D_{abc} A^a \wedge A^b \e^c  -  \Gamma^D_{iC}  \epsilon^C \wedge F^i \,, \label{g3bn}
\end{align}
induced by (ordinary, ``non-dynamical'') connections on the (target) bundles $E$ and $V$ 
and which result from implementing \eqref{lambdas3} (together with $\lambda_a^D=0$) into the general symmetries \eqref{g1b}--\eqref{g3b}. The symmetries on the 0-forms $X^i$ coincide with the ones in \eqref{g1a} and we thus can take recourse to \eqref{gcom1} for the corresponding commutator. Changes occur when calculating the commutator on the 1-form fields $A^a$ and the 2-form fields $B^D$, where it is useful to split the infinitesimal symmetries according to $\delta_\e= \delta^{(0)}_\e+ \delta^{(1)}_\e$, where $\delta^{(0)}_\e$ is given by \eqref{g2a} and \eqref{g3a}. In the following we restrict ourselves to the 1-form fields, leaving the rest as an exercise to the reader. Evidently \begin{equation}
\delta^{(1)}_{\e} A^a = \Gamma^a_{ib}\e^b F^i \, .
\end{equation}
Moreover, 
\begin{equation}
[\delta^{}_\e,\delta^{}_{\bar \e}] A^a 
= [\delta^{(0)}_\e,\delta^{(0)}_{\bar \e}]A^a 
 + [\delta^{(1)}_\e,\delta^{(1)}_{\bar \e}]A^a 
 + \left(\delta^{(0)}_\e \delta^{(1)}_{\bar \e}A^a 
    +\delta^{(1)}_\e \delta^{(0)}_{\bar \e}A^a  
	- (\e \leftrightarrow \bar{\e})\right)\, .
\end{equation}
The first commutator is given by \eqref{gcom2}. For the second commutator, one observes    $\delta^{(1)}X^i=0$ such that  $\delta_{\e}^{(1)}\left(\Gamma^a_{ib}\bar{\e}^b F^i\right)=\Gamma^a_{ib}\bar{\e}^b\delta_{\e}^{(1)}\!F^i= 
-\Gamma^a_{ib}\bar{\e}^b\rho^i_c \Gamma^c_{jd}\e^d F^j$. Since we also know how $F^i$ transforms with respect to $\delta^{(0)}$, cf.~Eq.~\eqref{deltaFi}, we are now left with merely collecting all terms so as to arrive at\footnote{There were some later developments about such commutators in the mean time. First, in \cite{Str09} the terms violating closeness were given a geometric interpretation in the case that $E$ is a Lie algebroid: An ordinary connection on $E$ induces an $E$-connection generalizing the adjoint representation of a Lie algebra \cite{Str04b}; its $E$-curvature factorizes over the anchor map  $\rho$,  $\ER^a{}_{bcd}=\rho_b^i \, S^a{}_{icd}$. With this one obtains $ [\delta_\epsilon,\delta_{\bar{\epsilon}}]A^a = \delta_{\widetilde{\epsilon}}^{}A^a+S^a_{ibc}\epsilon^b\bar{\epsilon}^{c}F^i$.
Second, the methods used here were extended so as to perform calculations about such commutators more directly, replacing $a^*$ by $f^*$---cf., e.g., \cite{SaS13,SLS14}.}
\begin{align}
 [\delta_\epsilon,\delta_{\bar{\epsilon}}]A^a =& \delta_{\widetilde{\epsilon}}^{}A^a
 -(C^a_{bc,i}+\Gamma_{id}^aC^d_{bc} ) \epsilon^b\bar{\epsilon}^{c}F^i
  \\
 & +\left(C^a_{db}\Gamma^d_{ic} -\Gamma^a_{jb}\rho^j_{c,i}-\Gamma^a_{ib,j}\rho^j_c
 +\Gamma^a_{jb}\rho^j_d\Gamma^d_{ic} \right)( \epsilon^b\bar{\epsilon}^{c}-\bar{\epsilon}^{b}\epsilon^c)F^i , \nonumber
\end{align}
where the new parameter $\widetilde{\epsilon}$ is given by 
Proposition \ref{prop:onshellclosed}. Note that the part of the symmetries on the $A^a$ fields parameterized by the 1-forms $\e^D\equiv \mu^D$, cf.~Eq.~\eqref{g2bn}, do commute
evidently. Still, these symmetries do not commute with those parameterized by the 0-forms $\epsilon^a$. This explains why on the r.h.s.~the parameter $\widetilde{\epsilon}^D$ is non-vanishing in general---in fact, as we learn from Proposition \ref{prop:onshellclosed}, I 1-form part of the symmetries is even induced if the initial two symmetries $\epsilon$ and $\bar{\epsilon}$ have vanishing 1-form parts (like in a central extension).  


%

\newpage

\section{Derived bracket and extended Courant algebroids} \label{sec:derived}
In Section~\ref{s:examQ}
we studied Q-manifolds $\CM$ in terms of ordinary differential
geometric notions by analyzing the components of the vector field $Q$
in some chosen (graded) coordinate system, taking into account their
transformation properties. There is, however, also an alternative
procedure using the space of vector fields on $\CM$ with its graded
commutator bracket as well as its derived one, as is inspired by the
previous section on gauge symmetries. While in the case of
Q1-manifolds or Lie algebroids this provides an alternative route
leading to an \emph{evidently} equivalent picture, for Q2-manifolds
one obtains an alternative description of Lie 2-algebroids, closer to
the description of Courant algebroids (they will be a particular case
of what we will call vector bundle twisted Courant algebroids), with a
nontrivial translation of one picture into the other one,
cf.~definition \ref{def:VCourant} and, e.g., Theorem \ref{thm:der}
below (or, equivalently, Theorem \ref{thm1.6} in the summary).

As mentioned, the bracket between gauge symmetries \eqref{induced} is
a so called derived bracket \cite{YKS03en}. In Subsection~\ref{sec:Der}
we first recall some generic features of derived brackets, starting
from a general differential graded Lie algebra (DGLA). We extend the
considerations of the literature to the context of algebroids by
introducing an anchor map; vector fields on a Q$p$-manifold equipped
with the differential $\ud_Q$ are an example of this construction for
arbitrary $p \in \N$. We show that applying this perspective to
$p=1$ reproduces the respective considerations of section
\ref{s:examQ} (i.e.~it provides the usual definition of a Lie 
algebroid) in a rather straightforward way. We also provide the
definition of Courant algebroids in this subsection as another example
of the derived bracket construction. 

Inspired by the definition of Courant algebroids and the derived
bracket construction applied to vector fields on Q$p$-manifolds for
arbitrary $p$ we are led to define Vinogradov algebroids and vector
bundle twisted Courant algebroids in Subsection~\ref{subsec:Vino}. The
former ones are also inspired by the Vinogradov bracket (which
generalizes the Courant or Dorfman bracket to higher form degrees). 
In vector bundle twisted
Courant algebroids, which 
are special Vinogradov algebroids, the inner product in the definition of an
ordinary Courant algebroid is replaced by a bilinear form taking
values in a second vector bundle $V$ over the same base. Although in part
inspired by Q-manifolds, it is not clear that all Vinogradov and V-twisted
Courant algebroids have a description in terms of Q-manifolds. 

In Subsection~\ref{subsec:Lie2} we show that particular V-twisted
Courant algebroids are in one-to-one correspondence to (split)
Q2-manifolds, and, consequently (cf.~Theorem \ref{theo1}) to Lie
2-algebroids. This equivalence, cf.~Theorem \ref{thm:der} below, was
suggested by the changed perspective on Q-manifolds as stemming from
the gauge transformations in our context and would not be that obvious
to invent without such a motivation. The subsection also contains a
classification of these type of V-twisted Courant algebroids, which is
somewhat analogous to the one of exact Courant algebroids.

\subsection{Derived brackets and anchored DGLAs}
\label{sec:Der}
The prototype of a derived bracket goes back to E.~Cartan. Consider the space of differential 
forms $\Omega^\cdot(M)$. Besides the de Rham operator $\ud$ we can also consider contractions 
$\imath_v$ with a vector field $v$ as an operator. The Lie derivative  $\L_v$ 
then results as the graded commutator $\L_v = [\imath_v, \ud]$. On the other hand, certainly, 
$[\L_v, \imath_w] = \imath_{\L_v(w)}$, by the Leibniz rule of the Lie derivative. Using 
$\L_v(w)=[v,w]$ and the above formula for the Lie derivative, we see that the bracket $[v,w]$ 
of two vector fields can be viewed as arising from the left hand side of the following 
equation
\beq[[\imath_v,\ud], \imath_w] = \imath_{[v,w]}\,.\label{derbasic}\eeq
Note the similarity of the left hand side with Eq.~\eqref{induced}. In fact, the latter 
equation can be almost literally  translated into that setting: Consider the
supermanifold $T[1]M$ with the canonical Q-structure $\ud$ on it. Every vertical vector
field on $T[1]M$ corresponds to some $\imath_v$ for $v \in \Gamma(TM)$. It is a nontrivial
but easy-to-check fact that the result of the l.h.s.~of \eqref{derbasic} is a vertical
vector field again; therefore it comes from a vector field on $M$, defining a product
$[v,w]$. 

The above construction can be easily extended to include also exterior multiplication 
with differential forms. Let $\alpha,\beta \in \Omega^1(M)$ and compute\footnote{A.~Alekseev,
private communication 1999.} 
\beq [[\imath_v+\alpha\wedge,\,\ud],\,\imath_w+\beta \wedge]
  =\imath_{[v+\alpha,w+\beta ]_W} \, ; \label{Anton} \eeq
In this way one obtains the  (non-skewsymmetric) Courant, i.e.~the Dorfman bracket \cite{Cour90,Dor87,Dor93},
\beq [v\oplus \alpha,w\oplus \b]_W = [v,w]\oplus \L_v\b -\imath_w\ud\alpha \, . \label{Cour}
\eeq
Also note that the mere bracket,
$[\imath_v+\alpha\wedge,\imath_w+\beta \wedge]$ induces a natural
bilinear, symmetric and non-degenerate pairing: when acting on
$\Omega^\cdot(M)$ this operator reduces to multiplication with $\alpha(w)
+ \b(v)$. Viewing $v \oplus \alpha$ as a section of $W=TM\oplus T^*M$, one
can axiomatize \cite{Xu97,SevLett} rules that can be derived for these
operations  so as to end up with the definition
 
 \newpage
\begin{vdef} \label{def:Cour2} A {\rm Courant algebroid} is a vector bundle $W \to M$ together with a bundle map
$\rho \colon W \to TM$, a pairing of its sections $[ \cdot , \cdot ]_W \colon \Gamma(W)
\otimes \Gamma(W) \to \Gamma(W)$ as well as a non-degenerate fiber metric $\< \cdot, \cdot\>$
such that the following three identities hold true 
\begin{align}
  [\psi,[\phi_1,\phi_2]_W]_W &= [[\psi,\phi_1]_W,\phi_2]_W +[\phi_1,[\psi,\phi_2]_W]_W
    \label{cJacobi}  \\
  [\psi,\psi]_W &= \frac12\uD\<\psi,\psi\>  \label{cnSkew}\\
  \rho(\psi)\<\phi_1,\phi_2\> &=
    \<[\psi,\phi_1]_W,\phi_2\> + \<\phi_1,[\psi,\phi_2]_W\> 
\, ,  \label{cInvar}
\end{align}
where $\uD$ is induced by means of $\rho$ and the inner product via $\<\uD f , \phi\> =
\rho(\phi) f$.  
 \end{vdef}
The first axiom is the (left-)Leibniz rule of the bracket with respect
to itself. From \eqref{cInvar} one can also conclude \cite{LiB11} a
Leibniz rule w.r.t.~multiplication of sections by functions
\be [\psi,f\phi]_W = \rho(\psi)f\phi +f[\psi,\phi]_W
\label{cLeibniz} \, . \ee
The properties \eqref{cJacobi} and \eqref{cLeibniz} render  
$(W,\rho,[ \cdot , \cdot ]_W)$
what we want to call a Leibniz--Loday algebroid (cf.~also \cite{KS08})\footnote{In fact, in \cite{KS08} we called it Loday algebroid. We finally decided for the present paper to call an algebra with the property \eqref{cJacobi} a Leibniz-Loday algebra and, if it is an algebra of sections and the condition \eqref{cLeibniz} is satisfied, a Leibniz-Loday algebroid.}. 
The second line brings in the fiber metric so as
to control the violation of the antisymmetry of the bracket, while the last line is the
ad-invariance of the inner product. 
Eqs.~\eqref{cJacobi} and
\eqref{cLeibniz} permit to conclude that $\rho$ is a morphism of
brackets, cf.~e.g.~the corresponding generalization in
Prop.~\ref{p:Vinogrelem} below; together with \eqref{cLeibniz} this
gives the five axioms often demanded in the literature for the
definition of a Courant algebroid. 

For a general Courant algebroid one always has the complex
\beq 0 \to T^*M \to W \to TM \to 0 \, , \label{seq}\eeq
induced by $\rho$ and its adjoint. Exact Courant algebroids, which, by
definition, means that the above sequence is exact, are classified by
an element $[H] \in H^3_{{\mathrm{dR}}}(M)$ \cite{SevLett}; up to a
contribution $H(v,w,\cdot)$  added to the r.h.s.~of
\eqref{Cour}, this reproduces the explicit formulas on $W=TM\oplus
T^*M$ introduced above (with $\rho$ being projection to the first
factor and a change of splitting in \eqref{seq} corresponding to a
change $H \mapsto H + \ud B$).

The construction in \eqref{Anton} can be translated easily into
super-language. As before we may use $\CM=T[1]M$ and consider
$\imath_v$ as a vertical vector field and $\alpha \wedge$ as
multiplication with a function of degree one. If we want to have
$\alpha$'s enter as vector fields as well, getting a model for what
happens in the previous section, we can simply extend $T[1]M$ by one
(graded) copy of $\Real$. Denoting this coordinate by $b$, and
choosing local $x^i$ on $M$ with its induced odd coordinates $\xi^i =
\ud x^i$ on $T[1]M$, we then associate to every $\psi = v \oplus \alpha$
the vector field
\be \psi = v^i \frac{\partial}{\partial \xi^i} + \alpha_i \xi^i  \frac{\partial}{\partial b} \, ; \ee
declaring $b$ to have degree two, this becomes a homogeneous vector
field on the graded manifold $\CM = T[1]M \times \Real [2]$ of degree
minus one (and, moreover, it has the form of the most general vector
field of this degree). The bracket \eqref{Cour} now follows as a
derived bracket of (graded) vector fields on $\CM$, $[ v \oplus \alpha , w
\oplus \b]_W = [\psi,\psi']_Q$ where the latter is defined by 
\be [ \psi, \psi']_Q := [[ \psi , Q ], \psi'] , \label{Cour2} \ee
with $Q$ being the
canonical vector field on $T[1]M$ corresponding to the de Rham
differential on $M$ extended trivially to the product and the bracket
denoting the graded Lie bracket of vector fields.
Obviously, this
can be ``twisted'' by a closed 3-form $H$ by replacing the above $Q$
by \be Q = \xi^i  \frac{\partial}{\partial x^i} + \frac{1}{3!} H_{ijk}
\xi^i \xi^j \xi^k  \frac{\partial}{\partial b} \label{Qtwist1} \ee
which again yields a Q-structure on $\CM$. This reproduces the general
situation of an exact Courant algebroid mentioned above. Note also
that in this language the inner product on $W=TM \oplus T^*M$ results
from the (graded) bracket of the corresponding vector fields on
$\CM$: 
\be [\psi,\psi'] = \<\psi,\psi'\>   \frac{\partial}{\partial b} \, .
\label{inner2}\ee

To have vector fields $v$ and 1-forms $\alpha$ enter more symmetrically,
one may alternatively lift the graded construction on $\CM = T[1]M$
canonically to the (super)symplectic manifold $T^*[2]\CM$: The local
coordinates $x^i$ and $\xi^i$ of degree 0 and 1 on $\CM$ are then
accompanied by canonically conjugate momenta $p_i$ and $\xi_i$ of
degree 2 and 1, respectively. $\psi= v + \alpha = v^i \xi_i + \alpha_i \xi^i$
then describes the most general function of (total) degree
one\footnote{A vector field on $\CM$ is always a fiber linear function
on its cotangent bundle and a function can be pulled back canonically
by the projection. The shift in grading by two has been chosen so as
to have them both of the same degree.}, and the canonical Hamiltonian $\tilde Q$ for
lifting the vector field $Q=\xi^i \partial/\partial x^i$ to the
cotangent bundle, $\tilde Q = \xi^i p_i$, then reproduces
\eqref{Anton} and the inner product by means of the (graded)
Poisson brackets
\begin{align}
\{ \, \{ \psi , \tilde Q \} , \psi' \} &= [ \psi ,  \psi' ]_W 
\label{dBrack1}\\
\{  \psi , \psi' \} &= \< \psi , \psi' \> \, .
\label{dInner}\end{align}
In fact, such a construction can be extended to the general setting of
a Courant algebroid as defined above. One then finds that Courant
algebroids are in {\em one-to-one} correspondence to {\it symplectic}
Q2-manifolds \cite{Royt02} (where, by definition, the symplectic form
is required to be of degree two and to also be compatible with the
Q-structure, i.e.\ to be preserved by the vector field $Q$).

One observes the similarity of the formulas \eqref{Cour2},
\eqref{inner2} with \eqref{dBrack1}, \eqref{dInner}. In both
cases the original bracket is a (graded) Lie bracket, in the second
case of degree minus two, in the first case of degree
zero.\footnote{\label{f:24} In fact, the second bracket is even a
(graded) Poisson bracket, i.e.~there exists a second compatible graded
commutative multiplication $\cdot$, usually taken to be of degree zero. (Here
compatibility means a graded Leibniz condition: $\{F,G \cdot H \} = \{F,G
\} \cdot H + (-1)^{(|F|+d)|G|} G \cdot \{F, H \}$, where $d$ is the degree of the bracket, cf.~also the subsequent footnote).  If one
prefers, also the bracket of vector fields can be embedded into a
Poisson algebra, by viewing vector fields as fiber linear functions on
the cotangent bundle and thus replacing $\CM$ by the graded manifold
$T^*\CM$, if one prefers, shifted also by some degree (by degree one
if the canonical Poisson bracket is to reproduce the
Schouten--Nijenhuis  bracket of (graded)
vector fields on $\CM$). We will come back to this perspective below.}
In the previous section, moreover, we viewed the left-adjoint action
of $Q$ as a (compatible) differential $\ud_Q$. Thus, to cover all
relevant cases simultaneously, one is led to regard the derived
bracket construction of a differential graded Lie algebra (DGLA),
i.e.\ a graded Lie algebra with a differential (nilpotent odd
operator) compatible with the bracket:\footnote{We call
$(A^\bullet,[.,.])$ a (graded) \emph{Leibniz--Loday algebra} of degree $d$, if
$[A^k,A^l]\subset A^{k+l+d}$ and $[a,[b,c]]=[[a,b],c]
+(-1)^{(|a|+d)(|b|+d)}[b,[a,c]]$. It is a \emph{differential graded
Leibniz--Loday algebra} (DGLoA) if in addition there is an operation $D \colon
A^\bullet \to A^{\bullet+1}$ squaring to zero and satisfying
$D([a,b])=[Da,b] + (-1)^{(|a|+d)} [a,Db]$. If the bracket $[.,.]$ is
also graded antisymmetric, i.e.~$[b,a]=-(-1)^{(|a|+d)(|b|+d)}[a,b]$, we
can replace ``Leibniz--Loday'' by ``Lie'' in these definitions.}

\begin{prop}[Kosmann-Schwarzbach \cite{YKS96}]\label{p:YKSder} Let
$(A^\bullet,[.,.],\uD)$ be a DGLA (with bracket
of degree $d$) and define its derived bracket as:
\be\label{Dder} [\phi,\psi]_D = (-1)^{|\phi|+d+1}[\uD\phi,\psi] \;,
\ee
where $|\phi|$ denotes the degree of $\phi$ (assumed to be
homogeneous).  

\noindent Then the following facts are true: 
\begin{enumerate}
\item This new bracket has degree $d+1$ and is a graded
Leibniz-Loday bracket,
\be\label{Loday} [\phi,[\psi_1,\psi_2]_D]_D = [[\phi,\psi_1]_D,\psi_2]_D
  +(-1)^{(|\phi|+d+1)(|\psi_1|+d+1)}[\psi_1,[\phi,\psi_2]_D]_D \;.
\ee
\item If $|\phi|$, $|\psi_!|$, and $|\psi_2|$ all have the opposite parity of $d$, one has 
\begin{eqnarray}\label{dSkew} [\phi,\phi]_D &=&\frac12 \uD[\phi,\phi] \;, \\
             {}[\phi,[\psi_1,\psi_2]]_D &=& [[\phi,\psi_1]_D,\psi_2]+ 
[[\psi_1,[\phi,\psi_2]_D]\label{Dcomp} \; .  \label{adInvar}
\end{eqnarray}
\item The quotient of  $(A^\bullet,[.,.]_D)$ by $\ker D$ is a graded Lie
algebra. 
\end{enumerate} 
\end{prop}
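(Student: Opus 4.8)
The plan is to verify the three assertions of Proposition~\ref{p:YKSder} by direct computation using only the axioms of a DGLA: the graded antisymmetry and graded Jacobi identity for $[.,.]$ of degree $d$, the nilpotency $\uD^2=0$, and the compatibility $\uD[a,b]=[\uD a,b]+(-1)^{|a|+d}[a,\uD b]$. The sign bookkeeping is the only real difficulty, so I would fix conventions once and for all at the start (the degree shifts $|\phi|+d$ appearing everywhere are what make $[.,.]$ behave like a plain Lie bracket on the shifted grading, and this is the intuition that guides every sign).

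First, for the degree statement: since $\uD$ raises degree by $1$ and $[.,.]$ has degree $d$, the expression $[\uD\phi,\psi]$ lies in $A^{|\phi|+1+|\psi|+d}$, so $[.,.]_D$ has degree $d+1$, as claimed. Next, for the Leibniz--Loday identity \eqref{Loday}, I would expand both sides. The key move is to push $\uD$ through brackets using the compatibility rule and then reassemble using the graded Jacobi identity for the original bracket $[.,.]$. Concretely, write $[\phi,[\psi_1,\psi_2]_D]_D$ in terms of $[\uD\phi,[\uD\psi_1,\psi_2]]$, use $\uD^2=0$ and the Leibniz rule for $\uD$ to move the inner $\uD$ outward where needed, and then apply Jacobi to $[\uD\phi,[\uD\psi_1,\psi_2]]$; the three resulting terms match the right-hand side of \eqref{Loday} term by term after carefully tracking the signs $(-1)^{(|\phi|+d+1)(|\psi_1|+d+1)}$. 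This is the computational heart of the proof; it is routine but sign-heavy, and I would present it as a short displayed chain of equalities rather than grinding every intermediate step.

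For part (2), under the parity hypothesis all the shifted degrees $|\phi|+d$, $|\psi_i|+d$ are odd, so the relevant signs simplify. The formula $[\phi,\phi]_D=\tfrac12\uD[\phi,\phi]$ follows by writing $[\phi,\phi]_D=\pm[\uD\phi,\phi]$ and observing that, because $|\phi|$ has opposite parity to $d$, graded antisymmetry of $[.,.]$ gives $[\uD\phi,\phi]=[\phi,\uD\phi]$ up to a sign that works out so that $[\uD\phi,\phi]=\tfrac12\uD[\phi,\phi]$ via the Leibniz rule $\uD[\phi,\phi]=[\uD\phi,\phi]+(-1)^{|\phi|+d}[\phi,\uD\phi]$. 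The second identity \eqref{Dcomp} is a similar expansion: rewrite the left-hand side with $\uD$ pulled onto $\phi$, then use $\uD$ compatibility to distribute $\uD$ over the inner bracket and Jacobi once.

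Finally, for part (3), the quotient $A^\bullet/\ker\uD$ is well-defined as a vector space and the derived bracket descends to it: if $\phi\in\ker\uD$ then $[\phi,\psi]_D=\pm[\uD\phi,\psi]=0$, so the bracket is independent of the representative in the left slot, and the Leibniz--Loday identity \eqref{Loday} shows that $[\phi,\psi_1]_D$ lies in $\ker\uD$ whenever $\psi_1$ does (equivalently, one checks $\uD[\phi,\psi]_D=0$ is automatic since $\uD[\uD\phi,\psi]=-[\uD\phi,\uD\psi]$ and $\uD\phi$ represents a class that is annihilated, or more directly the image of $\uD$ is an ideal for $[.,.]_D$), so the bracket is well-defined on the quotient in both slots. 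It remains to see it becomes antisymmetric: on the quotient, $[\phi,\psi]_D+(-1)^{(|\phi|+d+1)(|\psi|+d+1)}[\psi,\phi]_D$ can be shown to equal $\uD$ of something (essentially $\uD\langle\text{a bracket of }\phi,\psi\rangle$, the analogue of \eqref{dSkew} without the parity restriction), hence vanishes modulo $\ker\uD$; combined with \eqref{Loday}, which descends to the Jacobi identity once the bracket is antisymmetric, this makes the quotient a graded Lie algebra. The main obstacle throughout is simply keeping the signs straight; I would minimize pain by consistently working with the shifted degree $|{\cdot}|+d$ and invoking the antisymmetry/Jacobi of $[.,.]$ only in that shifted form.
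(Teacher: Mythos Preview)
Your proposal is correct and follows exactly the direct-verification route the paper intends: the paper itself does not give a proof, stating only that ``these facts are easy to verify and recommended to the reader as an exercise,'' together with one hint for part~(3). So your approach matches the paper's.

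The one place worth tightening is part~(3). Your argument that $\ker\uD$ is a right ideal via the Leibniz--Loday identity is not quite right as phrased; the correct direct check (which you also mention parenthetically) is $\uD[\phi,\psi]_D = \pm[\uD\phi,\uD\psi] = 0$ when $\uD\psi=0$. The paper's hint is cleaner: from the compatibility of $\uD$ with $[.,.]$ one has
\[
  [\phi,\psi]_D = [\phi,\uD\psi] \; + \; (\text{$\uD$-exact term}),
\]
which simultaneously gives well-definedness in the right slot (if $\uD\psi=0$ the class vanishes) and graded antisymmetry on the quotient (since $[\phi,\uD\psi]$ is, up to sign, $[\psi,\phi]_D$). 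This single observation replaces both your right-ideal check and your separate antisymmetry argument.
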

These facts are easy to verify and recommended to the reader as an
exercise. (To see that $A^\bullet/\ker D$ is well-defined one observes
that $[\phi,\psi]_D$ equals $[\phi,D\psi]$ modulo a $D$-exact term,
which can also be used to establish the graded antisymmetry of the
induced bracket). 

In fact, the Leibniz-Loday Property~\eqref{Loday} also follows in the more
general context of $(A^\bullet,[.,.],\uD)$ being a DGLoA only, cf.~the
preceding footnote and Lemma \ref{p:der2} below. The choice of the
sign in \eqref{Dder} is essential and can be motivated also from
examples where $\uD$ is an inner derivation, $\uD=[Q,.]$ with
$[Q,Q]=0$, and the brackets arise as in \eqref{Cour2}. This
definition also agrees with (the first part of) Eq.~\eqref{induced}
with the identification $\uD=\ud_Q$, since there the bracket has
degree zero and the elements are odd.

\bigskip

We now turn to the alternative description of Q-manifolds when
studying the vector fields on them and their derived bracket. To
explain this procedure, we start with a Q1-manifold, $\CM = E[1]$:
Sections $\psi$ in $E$ can be identified with degree minus one vector
fields on $\CM$, $\psi = \psi^a
\partial/\partial \xi^a$ in our notation.  Replacing $\ud$ in Cartan's
formula \eqref{derbasic} with the differential of the Lie algebroid (acting
 on $\EO^\bullet(M)$) every Lie algebroid bracket occurs naturally as a
derived bracket. In super-language this means explicitly that the
Lie algebroid bracket on $E$, which for clarity we now denote by $[
\cdot , \cdot ]_E$ is a derived bracket w.r.t.~$Q$,
\be [\phi,\psi]_E = [[\phi,Q],\psi] \, , \label{Eder}
\ee
where, for notational simplicity, we did not introduce a symbol for
the map identifying sections in $E$ (as they appear on the l.h.s.~of
this equation) with the respective degree minus one vector field (as
understood on the r.h.s.)---in \eqref{derbasic} the symbol
$\imath_\cdot$ could be interpreted as such an isomorphism. We will
keep such a notational convention also further on. From the previous
general considerations about DGLAs and their derived brackets, we can
now at once conclude about two of the three defining properties for
$[\cdot , \cdot ]_E$ to be a Lie algebroid bracket: Since there are
\emph{no} degree minus two vector fields on $\CM = E[1]$, the right
hand side of \eqref{dSkew} vanishes identically. With the fact that
$\phi$, $\psi$ have odd degree and the natural bracket of super vector
fields degree $d=0$, Eqs.~\eqref{dSkew} and \eqref{Loday} yield
$[\cdot , \cdot ]_E$ to be (ungraded) antisymmetric and to satisfy an
(ungraded) Jacobi identity, respectively. To retrieve the Leibniz 
property w.r.t.~an or the anchor map of $E$,
we also need to make use of the module structure of vector fields
w.r.t.~functions on $\CM$, where we may restrict to degree zero
functions $C^\infty_0(\CM) \cong C^\infty(M)$ so as to preserve the
degree of our vector fields. This will be provided by
Eq.~\eqref{eq:Leibniz} below, then establishing that we obtain a Lie
algebroid structure on $E$ by this procedure. 

Since we will need such type of anchor maps also in the general
context of higher degree $Q$-manifolds and their compatibility with
the derived bracket construction, we will extend the
considerations on DGL(o)As to this setting. A rather minimalistic way of
doing so is 
\begin{vdef}
We call $(A^\bullet,[.,.],\uD, R, \rho_0)$ a  \emph{weakly anchored} DGL(o)A
over the ring $R$ (commutative, with unity) if $(A^\bullet,[.,.],\uD)$
is a DGL(o)A of degree $d$ and $\cdot \colon R \times A^k
\to A^k$ and $\rho_0 \colon A^{-d} \to \End_{\Real} (R)$
are maps such that 
\be [\phi, f \cdot \psi] = \left(\rho_0(\phi)\right)(f) \cdot \psi + f
\cdot [\phi, \psi] \label{ring}
\ee
holds true for every $\phi \in  A^{-d}$, $\psi \in A$, and
$f\in R$.

We call it merely \emph{anchored} iff in addition $\rho_0$ is $R$-linear and
so is its composition with $\uD$, i.e.~for all $\phi \in  A^{-d-1}$
one has
\[ \label{rhoBund} \rho_0(\uD(f\phi))=f\rho_0(\uD\phi) \,. \]
We call it \emph{strongly anchored} if also ($\forall f,g \in R, \psi \in
A$) 
\[  \uD(fg\psi) +fg\uD \psi= f\uD(g\psi) +g\uD(f\psi) \,. \label{1stder} \]
\end{vdef} 

In fact, \eqref{1stder} is \emph{equivalent} to the
existence of an operator  $\CD \colon R\to \Hom_R(A^\bullet,A^{\bullet+1})$ such that
\begin{align}
  \uD(f\phi) &= (\CD f)(\phi) +f\uD \phi  \label{DLeib} \,, \\
  \CD(fg) &= f\CD g +g\CD f \label{D0Leib}
\end{align}  holds true. \eqref{D0Leib} implies that  $\CD$ is a first 
order differential operator and \eqref{DLeib} that likewise is $\uD$
w.r.t.~$R$. In fact Eq.~\eqref{DLeib} also shows constructively how to
define $\CD$ once \eqref{1stder} is fulfilled. We have chosen the
formulation as in \eqref{1stder} since it is intrinsic and does not
need the introduction of new operations such as the above map $\CD$. 

A graded Lie algebroid over an ungraded manifold $M$ gives an example
of a strongly anchored DGLA with $\uD=0$; here $A^\bullet$ are the
sections in the graded vector bundle and $R=C^\infty(M)$, as an
$R$-module $A^\bullet$ is projective in this example. Obviously the
above definition can be easily extended to graded commutative rings
$R$ in which case general graded Lie algebroids can be covered as
well; this is however not needed for the applications we are having in
mind here (cf.~Proposition \ref{p:tangentLie} below). An example with
nontrivial $\uD$ can be constructed from Lie bialgebroids satisfying
Eq.~\eqref{rhoBund}; here $A^\bullet$ consists of the sections of the
exterior powers of the (ungraded) vector bundle. But more important
for our context is the following observation:\footnote{Cf.~also
footnote
\ref{f:8} concerning our simplified nomenclature and footnote \ref{f:24} for
a possible relation of the two parts of the proposition. A differential
graded Poisson algebra $(A^\bullet, \{ . , . \}, \cdot , \uD)$ of degree $d$ is
a DGLA $(A^\bullet, \{ . , . \}, \uD)$ and a graded Poisson algebra
$(A^\bullet, \{ . , . \}, \cdot)$, both of degree $d$, such that $\uD(F \cdot
G) = \uD F \cdot G + (-1)^{|F|} F \cdot \uD G$ for all $F
\in A^\bullet$, $G \in A$.}
\begin{prop}\label{p:tangentLie}
$\,$
\begin{enumerate}
\item $(A^\bullet,[.,.])=\X(\CM)$ with $\uD=\ud_Q$
of a Q-manifold is a strongly anchored DGLA with bracket of degree zero 
w.r.t.\ $R=C^\infty_0(\CM)\cong\smooth(M)$, where $\rho_0(\phi)[f]=\phi(f)$.
\item 
A differential graded Poisson algebra (DGPA) on $C^\infty(\CM)$ is a weakly
anchored DGLA w.r.t.\ $R=C^\infty_0(\CM)$ and $\rho_0(\phi)f=\{\phi,f\}$. It
is strongly anchored iff $\{f,g\}=0=\{\phi \cdot \uD f,g\}$ for all $f,g\in
R$ and $\phi \in C^\infty_{-d-1}(\CM)$.
\end{enumerate}
\end{prop}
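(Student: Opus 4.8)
The plan is to verify each clause of Proposition~\ref{p:tangentLie} directly from the definitions, treating the two parts independently.

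\medskip

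\textbf{Part 1.} First I would recall why $(\X(\CM),[.,.],\ud_Q)$ is a DGLA: the graded commutator of vector fields on any graded manifold is a graded Lie bracket of degree $d=0$, the operator $\ud_Q=[Q,\cdot]$ raises degree by one, squares to zero because $Q^2=\tfrac12[Q,Q]=0$ together with the graded Jacobi identity, and it is a graded derivation of $[.,.]$ again by the graded Jacobi identity. This was already observed in Section~\ref{s:gauge} (the discussion around $\X_\bullet(\M)$), so I would simply cite it. Next, the module structure: vector fields on $\CM$ form a module over $C^\infty(\CM)$, and multiplication by a degree-zero function $f\in C^\infty_0(\CM)\cong\smooth(M)$ preserves the degree of a vector field, so $R=C^\infty_0(\CM)$ acts on each $A^k=\X_k(\CM)$. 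The anchor is $\rho_0(\phi)[f]:=\phi(f)$ for $\phi\in A^0=\X_0(\CM)$ (note $-d=0$ here); this lands in $\End_\Real(R)$ because a degree-zero vector field maps degree-zero functions to degree-zero functions. The Leibniz identity \eqref{ring} is then just the ordinary Leibniz rule for a vector field acting on a product $f\cdot\psi$: $[\phi,f\psi]=\phi(f)\psi+f[\phi,\psi]$, which is immediate. For the \emph{anchored} upgrade one checks $\rho_0$ is $R$-linear (clear: $(f\phi)(g)=f\cdot\phi(g)$ pointwise) and that $\rho_0\circ\ud_Q$ is $R$-linear, i.e.\ $\rho_0([Q,f\phi])[g]=f\,\rho_0([Q,\phi])[g]$; expanding $[Q,f\phi]=Q(f)\phi+f[Q,\phi]$ and applying to $g\in R$ gives $Q(f)\phi(g)+f\,[Q,\phi](g)$, but $\phi(g)=0$ since $\phi$ has degree $-1$ and $g$ has degree $0$, so only the second term survives---this is the point where one uses that there are no functions of negative degree. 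Finally \emph{strongly} anchored, \eqref{1stder}: one must show $\ud_Q(fg\psi)+fg\,\ud_Q\psi=f\,\ud_Q(g\psi)+g\,\ud_Q(f\psi)$. Expanding both sides with $\ud_Q(f\psi)=[Q,f\psi]=Q(f)\psi+f[Q,\psi]$ and using that $Q$ (being a first-order differential operator on functions) satisfies $Q(fg)=Q(f)g+fQ(g)$, the $\psi$-terms on the two sides both reduce to $\bigl(Q(f)g+fQ(g)\bigr)\psi$ and the remaining terms cancel identically. So Part 1 is a short bookkeeping check.

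\medskip

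\textbf{Part 2.} Here $(C^\infty(\CM),\{.,.\},\cdot,\uD)$ is a DGPA of some degree $d$; the underlying DGLA is $(C^\infty(\CM),\{.,.\},\uD)$. I take $A^k=C^\infty_k(\CM)$, so $A^{-d}=C^\infty_{-d}(\CM)$, $R=C^\infty_0(\CM)$, and the $R$-action is the commutative product $\cdot$ restricted to multiplication by degree-zero functions (which preserves degree). Set $\rho_0(\phi)[f]=\{\phi,f\}$ for $\phi\in A^{-d}$; since $\{.,.\}$ has degree $d$, this has degree $0$, so $\rho_0(\phi)\in\End_\Real(R)$. The weak-anchor identity \eqref{ring} is exactly the graded Leibniz compatibility of $\{.,.\}$ with $\cdot$: $\{\phi,f\cdot\psi\}=\{\phi,f\}\cdot\psi+(-1)^{(|\phi|+d)|f|}f\cdot\{\phi,\psi\}$, and with $|f|=0$ and $|\phi|=-d$ the sign is $+1$, giving precisely \eqref{ring}. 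For the strongly-anchored criterion I would substitute into \eqref{1stder} using the DGPA structure: $\uD$ is a derivation of $\cdot$, so $\uD(fg\psi)=\uD(fg)\cdot\psi+(-1)^{|fg\psi|-\text{stuff}}\dots$; collecting the terms in \eqref{1stder}, the second-order obstruction is governed by the failure of $f\mapsto\uD(f\,\cdot)$ and $f\mapsto fg\uD(\cdot)$ to be first-order in $f,g$ simultaneously, and a direct expansion shows \eqref{1stder} holds iff the symmetric second-derivative-type term vanishes, which unwinds to the two stated conditions $\{f,g\}=0$ and $\{\phi\cdot\uD f,g\}=0$ for all $f,g\in R$, $\phi\in C^\infty_{-d-1}(\CM)$. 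Concretely: $\uD$ on $C^\infty(\CM)$ for a DGPA is typically $\uD=\{\Theta,\cdot\}$ for some degree-$(d+1)$ element, but even without that one computes $\uD(f\psi)=\uD f\cdot\psi\pm f\uD\psi$, then $\uD(g\psi)$, then the combination; the cross-terms involving $\uD f\cdot\uD g$ type contributions organize into $\{f,g\}$-brackets after using the Poisson-Leibniz rule, and the requirement that these cancel is the displayed condition.

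\medskip

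\textbf{Main obstacle.} I expect the genuinely fiddly point to be the ``if and only if'' in Part 2: tracking the signs in the graded Leibniz rules for $\{.,.\}$ and $\uD$ carefully enough to see that the obstruction to \eqref{1stder} is \emph{exactly} $\{f,g\}=0=\{\phi\cdot\uD f,g\}$ and not something weaker or stronger. Everything else---Part 1 in its entirety, and the weak-anchor and existence-of-$\CD$ consequences---is routine application of the Leibniz rule and the fact that on an $\N_0$-graded manifold there are no functions of negative degree, so $\phi(g)=0$ whenever $\deg\phi<0$ and $\deg g=0$. I would present Part 1 in three short lines, set up Part 2 with the sign conventions explicitly fixed once, and then do the Part-2 equivalence computation carefully, flagging where the hypothesis that $\uD$ is a degree-one derivation of the commutative product enters.
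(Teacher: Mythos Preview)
Your Part~1 is fine and matches the paper's argument exactly: use $\ud_Q(f\phi)=Q(f)\phi+f[Q,\phi]$ so that $(\CD f)(\phi)=Q(f)\cdot\phi$, whence \eqref{DLeib} and \eqref{D0Leib} hold because $Q$ is a vector field; and \eqref{rhoBund} follows from $\phi(g)=0$ for $\phi\in\X_{-1}(\CM)$, $g\in C^\infty_0(\CM)$.

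Part~2, however, has a genuine error in where you locate the obstruction. You claim the two conditions $\{f,g\}=0$ and $\{\phi\cdot\uD f,g\}=0$ arise as the obstruction to \eqref{1stder}. But in a DGPA the differential $\uD$ is by definition a graded derivation of the commutative product, so a one-line expansion gives
\[
\uD(fg\psi)+fg\uD\psi=(g\uD f+f\uD g)\psi+2fg\uD\psi=f\uD(g\psi)+g\uD(f\psi)
\]
identically for $f,g\in R$. Thus \eqref{1stder} is \emph{automatic} and carries no information. The two conditions instead encode exactly the \emph{anchored} requirements: $R$-linearity of $\rho_0$ forces $\{f,g\}=0$ (expand $\rho_0(g\phi)[f]=\{g\phi,f\}$ via the Poisson--Leibniz rule and compare with $g\{\phi,f\}$), and, given that, \eqref{rhoBund} forces $\{\phi\cdot\uD f,g\}=0$ (expand $\uD(f\phi)=(\uD f)\phi+f\uD\phi$ and compute $\rho_0(\uD(f\phi))[g]-f\rho_0(\uD\phi)[g]$). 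This is precisely what the paper does: it says the two conditions ``are found by directly checking the desired conditions using the definition of $\rho_0$'', and gives $R$-linearity of $\rho_0\Leftrightarrow\{f,g\}=0$ as the sample computation. So your instinct that the fiddly point is in Part~2 was right, but the content of that point is different from what you anticipated.
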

\begin{proof} For the case of vector fields use $\D \phi = [Q,\phi]$ and find
\eqref{DLeib} with 
$(Df)(\phi)$ being the multiplication of $\phi$ by $Q(f)$. $Q$ being a vector
field, this proves also \eqref{D0Leib} and thus \eqref{1stder}. From its
definition it is immediate that $\rho_0$ is $R$-linear in $\phi
\in \X_0(\CM)$. Moreover using \eqref{DLeib} and observing that $\rho_0
\left(Q(f) \phi\right)=0$, which follows since here $\phi$ is degree minus
one and vanishes when applied to degree zero functions, also \eqref{rhoBund}
follows.

In the Poisson case the two additional conditions are found by directly
checking the desired conditions using the definition of
$\rho_0$. E.g.~$\{f,g\}=0$ follows immediately from $R$-linearity of
$\rho_0$. 
\end{proof}

\pagebreak[2]These additional structures are compatible with the
derived bracket construction in the following sense:\nopagebreak[3]
\begin{lemma}\label{p:der2}
  Let $(A^\bullet,[.,.],\uD,R,\rho_0)$ be a (weakly/strongly) anchored DGLoA, 
  then $(A^\bullet,[.,.]_\uD,\uD,R,\rho)$ is also a (weakly/strongly)
  anchored DGLoA with 
\be \rho(\phi) = \rho_0(\uD\phi)  \;. \label{eq:rho1}
\ee
\end{lemma}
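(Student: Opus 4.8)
The plan is to verify the four properties required of an anchored DGLoA --- the Leibniz-Loday identity for the new bracket, nilpotency of $\uD$ (unchanged), and then the weak/anchored/strong anchor conditions --- directly from the corresponding properties of the original structure $(A^\bullet,[.,.],\uD,R,\rho_0)$ via the sign-corrected derived bracket formula \eqref{Dder} and the definition $\rho(\phi)=\rho_0(\uD\phi)$. The DGLoA part (that $(A^\bullet,[.,.]_\uD,\uD)$ is again a differential graded Leibniz-Loday algebra of degree $d+1$) is essentially Proposition~\ref{p:YKSder}(1) extended from the DGLA to the DGLoA setting; as the text already remarks, the Leibniz-Loday property \eqref{Loday} only used the left-derivation property of $\uD$ and the Jacobi/Leibniz-Loday identity of $[.,.]$, never graded antisymmetry, so the same computation applies verbatim. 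That $\uD$ still squares to zero and still satisfies the graded derivation property $\uD([\phi,\psi]_\uD)=[\uD\phi,\psi]_\uD+(-1)^{|\phi|+d+1}[\phi,\uD\psi]_\uD$ follows from $\uD^2=0$ together with the original $\uD$-derivation rule by a short sign bookkeeping: expand $\uD[\phi,\psi]_\uD=(-1)^{|\phi|+d+1}\uD[\uD\phi,\psi]$, push $\uD$ through, use $\uD\uD\phi=0$, and re-express in terms of $[.,.]_\uD$.

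The core new content is the anchor clause. First I would prove the weak anchor identity \eqref{ring} for $\rho$, i.e.~$[\phi,f\cdot\psi]_\uD=(\rho(\phi))(f)\cdot\psi+f\cdot[\phi,\psi]_\uD$ for $\phi\in A^{-d-1}$ (note the degree shift: for the \emph{derived} bracket the ``anchored'' slot has degree $-(d+1)=-d-1$) and $f\in R$. Starting from the definition, $[\phi,f\psi]_\uD=(-1)^{|\phi|+d+1}[\uD\phi,f\psi]=\pm[\uD\phi,f\psi]$, and now $\uD\phi\in A^{-d}$ is exactly in the slot where the \emph{original} weak-anchor identity \eqref{ring} applies: $[\uD\phi,f\psi]=(\rho_0(\uD\phi))(f)\cdot\psi+f\cdot[\uD\phi,\psi]$. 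Substituting $\rho_0(\uD\phi)=\rho(\phi)$ and folding the sign back into the second term reproduces \eqref{ring} for the derived bracket. For the ``anchored'' strengthening one needs $\rho$ to be $R$-linear in $\phi\in A^{-d-1}$ and its composition with $\uD$ to be $R$-linear on $A^{-d-2}$; here $R$-linearity of $\rho(\phi)=\rho_0(\uD\phi)$ is precisely the original condition \eqref{rhoBund} for $(A^\bullet,[.,.],\uD,R,\rho_0)$, and the composition $\rho\circ\uD = \rho_0\circ\uD\circ\uD = 0$ is trivially $R$-linear --- so the ``anchored'' case is actually immediate.

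For the \emph{strongly} anchored case one must check \eqref{1stder} for the new structure, but \eqref{1stder}, $\uD(fg\psi)+fg\uD\psi=f\uD(g\psi)+g\uD(f\psi)$, refers only to $\uD$, $R$, and the module structure, none of which change when passing to the derived bracket. Hence if the original DGLoA is strongly anchored, \eqref{1stder} holds and so does the conclusion for $(A^\bullet,[.,.]_\uD,\uD,R,\rho)$ without further work; equivalently, the operator $\CD$ of \eqref{DLeib}--\eqref{D0Leib} is literally the same. Thus the strong-anchor clause is inherited for free. Assembling the three implications (weak $\Rightarrow$ weak, anchored $\Rightarrow$ anchored, strong $\Rightarrow$ strong) together with the DGLoA part completes the proof.

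The main obstacle I anticipate is purely one of sign and degree bookkeeping: one has to be careful that the ``anchored'' slot for $[.,.]_\uD$ sits in degree $-d-1$ rather than $-d$ (because the derived bracket has degree $d+1$, so by the pattern of the definition of ``anchored DGLoA'' the relevant slot is $A^{-(d+1)}$), and consequently that $\uD\phi$ lands in degree $-d$, which is exactly the slot where the \emph{input} structure's anchor hypotheses apply --- this degree-matching is what makes the whole argument go through, and getting it slightly wrong would break the proof. The parity hypothesis in Proposition~\ref{p:YKSder}(2) is not needed here since Lemma~\ref{p:der2} only claims the Leibniz-Loday/anchor structure, not graded antisymmetry or \eqref{dSkew}--\eqref{adInvar}. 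Beyond that, every step is a one-line substitution, so no genuinely hard point remains.
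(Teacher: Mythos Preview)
Your proposal is correct and takes essentially the same approach as the paper, which in fact does not spell out a proof at all beyond the remark that it ``can be proven by simple straightforward calculations'' and that ``the Lemma contains three different statements.'' Your write-up supplies exactly those calculations: the DGLoA part via Proposition~\ref{p:YKSder}(1), the degree-shift observation that $\phi\in A^{-d-1}$ sends $\uD\phi$ into $A^{-d}$ where \eqref{ring} applies, the anchored case via \eqref{rhoBund} and $\uD^2=0$, and the strongly anchored case by noting \eqref{1stder} involves only data that do not change. The sign and degree bookkeeping you flag as the main hazard is handled correctly.
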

This can be proven by simple straightforward calculations. The Lemma contains
three different statements. 

We now apply these considerations first by returning to Q1-manifolds. As a
consequence of Proposition \ref{p:tangentLie} and the preceding lemma,
we notice that the derived bracket \eqref{Eder} satisfies a Leibniz
rule \be [\phi, f \cdot \psi]_E =\rho(\phi)(f) \cdot \psi + f
\cdot [\phi, \psi]_E \label{eq:Leibniz} 
\ee with an $R$- or $C^\infty(M)$-linear anchor map 
$\rho$ given by $\rho(\phi)f = [Q,\phi] f$. $C^\infty(M)$-linearity
implies  that $\rho$ indeed comes from a vector bundle map $E\to TM$.


Thus, indeed, the bracket on $\Gamma(E)$ defined via the derived
bracket construction
\eqref{Eder} induces a Lie algebroid structure on $E$. Moreover, this is the 
\emph{same} Lie algebroid structure as found before in Section~\ref{s:examQ}, 
which one easily verifies on a basis: Using the above formulas as definitions 
for the bracket and the anchor, one verifies easily that $[\partial_a,
\partial_b]_E = C^c_{ab} \partial_c$ and $\rho(\partial_a)=\rho^i_a 
\partial_i$ (with $\partial_a \equiv \frac{\partial}{\partial \xi^a}$ and 
$\partial_i \equiv \frac{\partial}{\partial x^i}$ for the parameterization used 
in \eqref{Q}).

We now want to imitate this second procedure, illustrated at the
example of Q1-manifolds and in spirit close to the considerations of
the previous section on gauge symmetries, so as to find an alternative
characterization of degree two Q-manifolds. In fact, we want to
perform some of the considerations common to all choices of degree $p$
first and later specialize the discussion to $p=2$. Courant algebroids
should be particular examples of the latter case, moreover, and we
will find a description of Q2-manifolds or Lie-2-algebroids
complementary to the one given in Section~\ref{s:examQ} and much closer to
the usual formulation of Courant algebroids.

\subsection{Vinogradov and $V$-twisted Courant algebroids}
\label{subsec:Vino}
To obtain a prototype of a bracket for higher $p$ than two, we
may look at a generalization of the Courant-Dorfman bracket, the
so-called Vinogradov
bracket. It is in fact the bracket \eqref{Cour} with $\alpha,\b
\in \Omega^k(M)$ for some arbitrary $k \in \N$, and it can be 
obtained precisely as for the case $k=1$. Accordingly, we again have
at least two options translating this scenario into super-language. In
the $\CM=T[1]M \times \Real[p]$ picture\footnote{We have found this description in 2005. In the mean time it appeared also in \cite{Uri13}}, sections $v\oplus\alpha$ correspond
to \be  \label{psi}
\psi = \imath_v + \alpha \frac{\partial}{\partial b} \, , \ee 
where $b$ obviously needs to have degree $p=k+1$ for $\psi$ to be
homogeneous of degree minus one. The Vinogradov bracket then follows
again as a derived bracket \eqref{Cour2} with $Q$ being the de Rham
differential. In generalization of Eq.~\eqref{Qtwist1} this may
certainly again be twisted, \be Q = \ud + H \frac{\partial}{\partial
b} \, , \label{Qtwist2} \ee where now $H$ needs to be a closed
($k+2$)-form obviously. Like in the Courant algebroid, also here the
original (graded) Lie bracket plays an important role,
cf.~Eq.~\eqref{inner2} (or \eqref{dInner}), generating a
$C^\infty(M)$ bilinear pairing; it comes from the contraction of the
vector fields with the $k$-forms, as one easily verifies using
\eqref{psi}. Thus, this time the pairing maps two elements in
$W=TM \oplus \Lambda^k T^*M$ to an element of another bundle, namely
of $V=\Lambda^{k-1} T^*M$. This, the appearance of a second bundle $V$
of relevance, will be one feature we want to keep also for the more
general setting in what follows.

Before axiomatizing these data, in analogy to the step from
\eqref{Cour} to Definition \ref{def:Cour}, into what one may want
to call a Vinogradov algebroid, we briefly translate the above picture
also into a graded Poisson language. We therefore regard 
$\CM=T^*[p]T[1]M$ with its canonical Poisson bracket $\{ \cdot , \cdot
\}$ of degree $-p$. Local coordinates $x^i$ and $\xi^i$ of degree 0
and 1, respectively, are then accompanied by momenta $p_i$ and $\xi_i$
of degree $p$ and $p-1$, respectively. Sections of $W$ correspond to
functions on this $\CM$ of degree $k=p-1$, $\psi= v^i \xi_i +
\frac{1}{k!}\alpha_{i_1
\ldots i_k} \xi^{i_1} \ldots \xi^{i_k}$, and the bracket and pairing 
on $W$ results from formulas \eqref{dBrack1} and \eqref{dInner},
respectively, where $\tilde{Q}=\xi^i p_i +H$.

We now return to the task of axiomatization. In particular, one may
want to check properties of the bracket and pairing, finding
appropriate generalizations of those written in definition
\ref{def:Cour}. There are natural candidates for the analogues of the
objects appearing in \eqref{cLeibniz} and \eqref{cnSkew}, which will
also turn out to be realized for what one may call the Standard
Vinogradov algebroid ($W=TM \oplus \Lambda^k T^*M$, $V=\Lambda^{k-1}
T^*M$), possibly twisted by some $H\in \Omega_{cl}^{k+2}(M)$. The
anchor map $\rho$ will be projection to $TM$, for $\uD$ one can choose
the de Rham differential (succeeded by the natural embedding of
$k$-forms into sections of $W$). For the ad-invariance of the pairing
(cf.~Eq.~\eqref{cInvar}), however, we can no more use solely the
anchor map, since the pairing lands in sections of $V$. The operator
on the l.h.s.~of
\eqref{cInvar} will then be found to be replaced by the Lie derivative
w.r.t.~the anchor of the section $\psi$. Note that this is for $k>1$
(or, likewise, $p>2$) no more $C^\infty(M)$-linear in $\psi$. So, in
the axioms there will be besides $\rho$ and $\uD$ one more new object,
which we will call $\rho_D$, not necessarily coming from a bundle map,
but being restricted appropriately otherwise. Also, the relation
between $\rho$ and $\uD$ being adjoint to one another does not make
any sense anymore.\footnote{Only the generalization that $\rho_D$
would be adjoint of $\D$ w.r.t.~the pairing can be meaningful. In
fact, although it is not satisfied in the Standard Vinogradov
algebroid for $k>1$, posing this additional condition will be studied
in detail the context of $p=2$ below, cf., e.g., definition
\ref{def:VCourant} and Proposition \ref{p:FW}.}  $\uD$ needs to be
restricted then by something like Equation~\eqref{1stder} so as to
ensure that it is a first order differential operator. Likewise for
$\rho_D$ we require it to take values in the module $\Gamma(\CDO)(V)\equiv \Gamma(\CDO(V))$ of covariant differential operators  on $V$: Keeping in
mind that the $\Real$-linear first order differential operators on a
vector bundle $V\to M$ are themselves a projective module over $M$, we
call the generating bundle $\CDO(V)$, the bundle of covariant
differential operators on $V$; elements in  $\Gamma(\CDO)(V)$ are then sections of this bundle. There is a canonical map from $\CDO(V)$
to $TM$ which we may use to require $\rho_D$ to cover the
map from $\Gamma(W)$ to $\Gamma(TM)$ induced by $\rho$ (and
conventionally denoted by the same letter), i.e.~in formulas we require 
\be 
  \rho_D(\phi)[fv] = \rho(\phi)[f]v +f\rho_D(\phi)[v] \, .
\label{vLeibn3} 
\ee where $\phi \in \Gamma(W)$, $f \in C^\infty(M)$, and $v \in \Gamma(V)$.
We thus are led to the following

\newpage
\begin{vdef} \label{def:Vino} A \emph{Vinogradov algebroid} are two vector bundles $W$ and $V$
together with a bracket $[.,.]_W:\Gamma(W)\times\Gamma(W)\to \Gamma(W)$,
a map $\rho\:W\to TM$, a non-degenerate\footnote{A (skew)-symmetric bilinear form $\<.,.\>$ is non-degenerate if for every $\phi\in E_x$, $\phi\ne0$ there is a $\psi\in E_x$ such that $\<\phi,\psi\>\ne0$.} surjective inner product
$\<.,.\>\: W\otimes_M W\to V$, an $\Real$-linear map
$\uD:\Gamma(V)\to \Gamma(W)$, 
and a map $\rho_D: \Gamma(W)\to \Gamma(\CDO)(V)$ covering the map
$\rho \colon  \Gamma(W)\to \Gamma(TM)$ subject to the
following axioms.
\begin{align}
  [\psi,[\phi_1,\phi_2]_W]_W &= [[\psi,\phi_1]_W,\phi_2]_W +
[\phi_1,[\psi,\phi_2]_W]_W
    \label{vJacobi}  \\
  [\psi,\psi]_W &= \frac12\uD\<\psi,\psi\>  \label{vSkew}\\
  \rho_D(\psi)\<\phi,\phi\> &=
  2\<[\psi,\phi]_W,\phi\>  \label{vInvar}\\
  \uD(fgv) +fg\uD(v) &= f\uD(gv) +g\uD(fv)  \label{vLeibn2}
\end{align} where $\phi_i,\psi \in \Gamma(W)$, $f,g \in \smooth(M)$,
and $v \in \Gamma(V)$.
\end{vdef}
It is easy to see that \eqref{vInvar} and \eqref{vLeibn3} imply 
\be
[\psi,f\phi]_W = \rho(\psi)f\phi +f[\psi,\phi]_W \label{vLeibniz} \;
\ee and, conversely, 
one may also conclude \eqref{vLeibn3} from \eqref{vInvar}
and \eqref{vLeibniz}.  One may consider adding $[\uD v,\phi]_W =
0\label{brwExact}$ to the axioms as this rule is suggested from
several perspectives (e.g.~it is satisfied for the Vinogradov
bracket). For the case of 
 $\rho_D(\phi)[v]=\<\phi,\D v\>$, on the other hand,
this follows automatically, cf.~Proposition \ref{p:Royt} below.

To require surjectivity of the pairing goes almost without loss of
generality since $V$ enters the axioms only via the image of  $\<
\cdot , \cdot \>$; only if this image changes its rank, one cannot
consistently restrict to it as a vector bundle. On the other side, it
is suggested from some perspectives to drop the condition of
non-degeneracy, 
which, however, then would go with a partially
different formulation of the axioms and we
will not pursue this idea in the present article. Let us mention that
in the application to Q-manifolds the inner product is automatically
surjective and non-degenerate, cf.~Proposition \ref{p:genStdVinogr}
below.

Note that the Equation~\eqref{vInvar} is nothing but the invariance of the
inner product under sections of $W$, being equivalent to
\begin{align*}
  \rho_D(\psi)\<\phi_1,\phi_2\> &=
    \<[\psi,\phi_1]_W,\phi_2\> + \<\phi_1,[\psi,\phi_2]_W\> \,
\end{align*}
by a standard polarization argument. It is an instructive exercise to
compute the analog of the Leibniz rule (Eq.~\eqref{vLeibniz}) for
the left hand side of the bracket, which shows that the bracket
is also a first order linear partial differential operator in the l.h.s.
Also one has the following 
\begin{prop}\label{p:Vinogrelem}  The maps $\rho$ and $\rho_D$ are morphisms of
brackets.\footnote{In the case of Courant algebroids it was apparently
first K. Uchino \cite{Uchi02} who observed that the morphism property
of $\rho$ can be deduced from the other axioms.} Furthermore, their
composition with $\D$ vanishes ($\rho\circ\D =0=\rho_D\circ\D$).
\end{prop}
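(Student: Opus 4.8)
The plan is to prove the two assertions of Proposition~\ref{p:Vinogrelem} in the order $\rho$, then $\rho_D$, then the vanishing statements, deriving each from the Leibniz-Loday axiom \eqref{vJacobi} together with the already-established Leibniz rules. For the morphism property of $\rho$, I would first observe that the bracket is $\smooth(M)$-linear in its left slot up to anchor terms (the left-Leibniz rule noted just before Proposition~\ref{p:Vinogrelem}, obtained by the instructive exercise): $[f\psi,\phi]_W = f[\psi,\phi]_W + (\text{term involving }\rho(\phi)f)$, and correspondingly $[\psi,f\phi]_W = f[\psi,\phi]_W + \rho(\psi)f\,\phi$ from \eqref{vLeibniz}. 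Then I would apply \eqref{vJacobi} to the triple $\psi,\phi,f\chi$ for $f\in\smooth(M)$ and expand both sides using \eqref{vLeibniz}. Collecting the terms proportional to $\chi$ that carry no derivative of $f$ reproduces \eqref{vJacobi} itself; the terms carrying one derivative of $f$ must then cancel separately, and this cancellation is exactly the statement $\rho([\psi,\phi]_W)f = \rho(\psi)\rho(\phi)f - \rho(\phi)\rho(\psi)f = [\rho(\psi),\rho(\phi)]_{TM}f$. Since $f$ is arbitrary and $\rho([\psi,\phi]_W)$ is a priori a derivation of $\smooth(M)$, this gives $\rho([\psi,\phi]_W) = [\rho(\psi),\rho(\phi)]$, i.e.\ $\rho$ is a morphism of brackets (the standard Uchino-type argument referenced in the footnote).

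For $\rho_D$, the strategy is the same but one slot higher: apply \eqref{vJacobi} to $\psi,\phi,f v$ where now I let the third entry act through $\uD$, or more directly use the defining relation of $\rho_D$ as acting on $\Gamma(V)$ via the invariance \eqref{vInvar}. Concretely, from the polarized form of \eqref{vInvar}, $\rho_D(\psi)\<\phi_1,\phi_2\> = \<[\psi,\phi_1]_W,\phi_2\> + \<\phi_1,[\psi,\phi_2]_W\>$, one computes $\rho_D([\psi,\phi]_W)\<\chi_1,\chi_2\>$ by writing $[\psi,\phi]_W$ and expanding via \eqref{vInvar} twice, then subtracting the result of computing $[\rho_D(\psi),\rho_D(\phi)]\<\chi_1,\chi_2\>$ the same way; the difference telescopes using \eqref{vJacobi} applied to $\psi,\phi,\chi_1$ and to $\psi,\phi,\chi_2$ inside the pairing. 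Because the pairing is surjective onto $\Gamma(V)$, an identity of covariant differential operators that holds on all elements of the form $\<\chi_1,\chi_2\>$ holds on all of $\Gamma(V)$; hence $\rho_D([\psi,\phi]_W) = [\rho_D(\psi),\rho_D(\phi)]$ as elements of $\Gamma(\CDO)(V)$, which is the asserted morphism property. Here one uses that $\rho_D$ covers $\rho$ so that the bracket on $\Gamma(\CDO)(V)$ is the commutator of covariant differential operators and that it is already known $\rho$ is a bracket morphism (needed for consistency of the two telescoping computations).

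For the vanishing statements, I would use \eqref{vSkew} and \eqref{vLeibn2}/\eqref{vLeibn3}. To show $\rho\circ\uD = 0$: from the non-degeneracy and surjectivity of the pairing it suffices to test against elements; apply \eqref{vSkew} with $\psi$ replaced by $\psi + \uD v$ and subtract the $\psi$-only and the (degenerate) $\uD v$-only pieces, which gives $[\psi,\uD v]_W + [\uD v,\psi]_W = \uD\<\psi,\uD v\> $; comparing the $\smooth(M)$-linearity in $\psi$ of the left side (via the left- and right-Leibniz rules, whose anomaly is proportional to $\rho(\uD v)$ on one side and $\rho(\psi)$ on the other) with that of the right side (which is $\smooth(M)$-linear since $\uD$ is only $\Real$-linear but $\<\psi,\uD v\>$ depends on $\psi$ $\smooth(M)$-linearly) forces $\rho(\uD v) = 0$. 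Then $\rho_D\circ\uD = 0$ follows from $\rho_D$ covering $\rho$ together with $\rho\circ\uD=0$: the image of $\uD$ lies in the kernel of the projection $\CDO(V)\to TM$, and one pins down the remaining (zeroth-order, i.e.\ $\End(V)$-valued) part by feeding $\uD(fv)$ through \eqref{vLeibn2} and \eqref{vLeibn3} and using $\rho(\uD v)=0$ to see the endomorphism part also vanishes.

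The main obstacle I anticipate is bookkeeping rather than conceptual: getting the $\smooth(M)$-linearity anomalies of $[\cdot,\cdot]_W$ in \emph{both} slots correct and consistent, and making sure the polarization of \eqref{vInvar} and the telescoping in the $\rho_D$ computation are carried out with the right signs and the right placement of $\rho$ versus $\rho_D$ (since the former acts on functions and the latter on sections of $V$ covering it). In particular one must be careful that \eqref{vJacobi} is only a Leibniz-Loday (not Jacobi) identity, so the ``cyclic'' cancellations one is used to from Lie algebras are replaced by the asymmetric three-term identity, and the morphism argument must be arranged so that only the structurally guaranteed terms survive. Once the Leibniz anomalies are written down cleanly, each of the four claims reduces to reading off the coefficient of a single derivative of a test function or test section, so the proof should be short.
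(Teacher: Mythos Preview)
Your approach to the morphism properties of $\rho$ and $\rho_D$ is correct and matches the paper: for $\rho$ one expands \eqref{vJacobi} on $(\psi,\phi,f\chi)$ via \eqref{vLeibniz} and reads off the derivative term; for $\rho_D$ one uses surjectivity of the pairing to write $v=\<\chi_1,\chi_2\>$, then \eqref{vInvar} and \eqref{vJacobi} telescope exactly as you describe.

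However, your argument for $\rho\circ\D=0$ has a genuine gap. The identity you extract, $[\psi,\D v]_W+[\D v,\psi]_W=\D\<\psi,\D v\>$, is nothing but polarized \eqref{vSkew}, and this identity is \emph{automatically} compatible with $\psi\mapsto f\psi$: both sides acquire the same anomaly $\D(f\<\psi,\D v\>)-f\D\<\psi,\D v\>$. In particular your claim that the right side is $\smooth(M)$-linear in $\psi$ is wrong---$\D$ is a first-order operator, not $\smooth(M)$-linear---and on the left the two $\rho(\D v)f\cdot\psi$ contributions (from the right-Leibniz rule in $[\D v,f\psi]_W$ and from the left-Leibniz rule in $[f\psi,\D v]_W$) cancel against each other. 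So no constraint on $\rho(\D v)$ emerges. Your subsequent argument for $\rho_D\circ\D=0$ inherits this problem and additionally does not explain why the residual $\End(V)$-part should vanish.

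The paper's route is much shorter and you already have all the ingredients: apply $\rho$ directly to \eqref{vSkew}. The left side becomes $\rho([\psi,\psi]_W)=[\rho(\psi),\rho(\psi)]=0$ by the morphism property you just proved, hence $\rho(\D\<\psi,\psi\>)=0$; surjectivity of the pairing then gives $\rho\circ\D=0$. Exactly the same argument with $\rho_D$ in place of $\rho$ (using that the commutator in $\Gamma(\CDO)(V)$ is antisymmetric) yields $\rho_D\circ\D=0$.
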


\begin{proof}
The morphism property of $\rho$ can be concluded easily from
\eqref{vJacobi} and \eqref{vLeibniz}, cf., e.g., \cite{HS08}. 
For the morphism property of $\rho_D$ write a section of
$V$ locally as scalar product of two sections of $W$.  Then use \eqref{vInvar}
and \eqref{vJacobi} to proceed in a similar manner.
The last two properties follow from \eqref{vSkew} and the first or second property, respectively.
\end{proof}

Let us now check that the axioms are indeed all fulfilled for the
(twisted standard) Vinogradov bracket described above.  Since we
already introduced some super language for the bracket, let us start
from there: 
\begin{prop}\label{p:derVino}  Given a strongly anchored DGLA structure
  $(A^\bullet,[.,.],\uD,\rho_0)$ with bracket of degree $d$ over
  $R=\smooth(M)$ of a smooth manifold $M$, where $A^{-d-1}$ and
  $A^{-d-2}$ are projective modules over $M$, then there exist vector
  bundles $W\to M$ and $V\to M$ such that $A^{-d-1}$ is (canonically
  isomorphic to) the space of sections of $W$ denoted
  $\phi,\psi,\ldots$ in what follows and $A^{-d-2}$ is the space of
  sections of $V$ a typical element of which is denoted by $v$ below.
  Thus $\D \colon \Gamma(V)\to \Gamma(W)$. Now the following operations
  can be defined by their right hand sides:
\begin{align}
  [.,.]_W &: \Gamma(W)\otimes\Gamma(W)\to \Gamma(W),&  [\phi,\psi]_W&:=[\uD\phi,\psi] \label{dBrack2}\\
  \rho &: W\to TM,&
 \rho(\phi)[f]&:= \rho_0(\uD\phi)[f]  \label{dAnchor}\\
  \<.,.\> &: \Gamma(W)\otimes\Gamma(W)\to \Gamma(V),&
    \<\phi,\psi\>&:=[\phi,\psi] \label{dInner2}\\
  \rho_D &: \Gamma(W)\to \Gamma(\CDO)(V), \;& \rho_D(\phi)[v]&:= [\uD\phi,v] \,. \label{rhoQ}
\end{align}
  $\<.,.\>$ is $\smooth(M)$-linear iff $[\psi,f\phi]=f[\psi,\phi]$ for
  all functions $f$ on $M$. Requiring this and that the induced inner product
  $\<.,.\>\:W\otimes_M W\to V$ is non-degenerate and surjective, these
  operations form a Vinogradov algebroid.
\end{prop}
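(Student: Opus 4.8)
The statement is essentially that the derived-bracket constructions applied to a strongly anchored DGLA, once $A^{-d-1}$ and $A^{-d-2}$ are realized as sections of vector bundles $W$ and $V$, reproduce all four axioms of a Vinogradov algebroid. The strategy is to invoke the machinery already built up: Proposition~\ref{p:YKSder} for the bracket identities, Lemma~\ref{p:der2} for the behaviour of the anchor under the derived construction, and Proposition~\ref{p:tangentLie}/the definition of strongly anchored DGL(o)A for the first-order differential operator properties. The only genuinely new input is to check that each piece of data lands in the right bundle and that the two ``structural'' identities \eqref{vSkew} and \eqref{vInvar} come out correctly from the graded Lie algebra relations of $A^\bullet$.

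\textbf{Step 1: identifying the bundles.} Since $A^{-d-1}$ and $A^{-d-2}$ are projective modules over $R=\smooth(M)$, the Serre--Swan theorem gives vector bundles $W,V\to M$ with $\Gamma(W)\cong A^{-d-1}$, $\Gamma(V)\cong A^{-d-2}$; the restriction of $\uD$ to $A^{-d-2}$ is $R$-linear by \eqref{rhoBund} applied in the appropriate degree (or one checks it directly from the strong-anchoring data), hence $\uD\colon\Gamma(V)\to\Gamma(W)$ is a bundle map, i.e.\ Equations \eqref{dBrack2}--\eqref{rhoQ} are well-typed. That $[\phi,\psi]=[\phi,\psi]_W\in A^{-d-2}$ for $\phi,\psi\in A^{-d-1}$ is immediate from $[A^{k},A^{l}]\subset A^{k+l+d}$, so $\<\cdot,\cdot\>$ indeed takes values in $\Gamma(V)$; similarly $[\uD\phi,v]\in A^{-d-2}$ so $\rho_D(\phi)$ is an operator on $\Gamma(V)$.

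\textbf{Step 2: the axioms.} The Leibniz--Loday identity \eqref{vJacobi} is exactly \eqref{Loday} of Proposition~\ref{p:YKSder}, specialized to the degree-$0$ derived bracket on odd-parity elements (this is why the signs disappear); the hypothesis ``$|\phi|,|\psi_1|,|\psi_2|$ of opposite parity to $d$'' holds since $A^{-d-1}$ sits in parity opposite to $d$. Equation \eqref{vSkew} is \eqref{dSkew}: $[\psi,\psi]_W=\tfrac12\uD[\psi,\psi]$, and $[\psi,\psi]=\<\psi,\psi\>$ by definition \eqref{dInner2}. Equation \eqref{vInvar} is \eqref{Dcomp}/\eqref{adInvar} of the same proposition: $[\psi,[\phi_1,\phi_2]]_D=[[\psi,\phi_1]_D,\phi_2]+[\phi_1,[\psi,\phi_2]_D]$ reads, after translating $[\cdot,\cdot]_D\leftrightarrow[\cdot,\cdot]_W$ and the plain bracket of two elements of $A^{-d-1}$ as $\<\cdot,\cdot\>$, and using $\rho_D(\psi)[v]=[\uD\psi,v]$, precisely as the polarized form $\rho_D(\psi)\<\phi_1,\phi_2\>=\<[\psi,\phi_1]_W,\phi_2\>+\<\phi_1,[\psi,\phi_2]_W\>$, whence \eqref{vInvar} by setting $\phi_1=\phi_2=\phi$. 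Finally \eqref{vLeibn2} is just the strong-anchoring condition \eqref{1stder} of the DGLA restricted to $A^{-d-2}$, i.e.\ the statement that $\uD$ is a first-order differential operator there. One must also check the typing condition that $\rho_D(\phi)$ lies in $\Gamma(\CDO)(V)$ and covers $\rho(\phi)$: $\rho_D(\phi)[fv]=[\uD\phi,fv]=(\rho_0(\uD\phi)f)\,v+f[\uD\phi,v]=\rho(\phi)[f]\,v+f\rho_D(\phi)[v]$ by \eqref{ring}, which is \eqref{vLeibn3}; and non-degeneracy plus surjectivity of $\<\cdot,\cdot\>$ are assumed. The $\smooth(M)$-linearity clause for $\<\cdot,\cdot\>$ is a direct equivalence: $\<\psi,f\phi\>=[\psi,f\phi]=(\rho_0(\uD\psi)f)\,?+f[\psi,\phi]$—but here one must note that the would-be ``$\rho_0$''-term has the wrong degree to contribute unless one invokes the stated hypothesis $[\psi,f\phi]=f[\psi,\phi]$, so the equivalence is essentially by inspection of the definition.

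\textbf{Expected obstacle.} The routine parts are the bundle-existence and the direct substitution into Proposition~\ref{p:YKSder}. The one place demanding care is matching conventions: the sign $(-1)^{|\phi|+d+1}$ in \eqref{Dder} versus the bare $[\uD\phi,\psi]$ used in \eqref{dBrack2}, and making sure that when $\phi,\psi\in A^{-d-1}$ (parity opposite to $d$) that prefactor is indeed $+1$, so that the derived bracket of Proposition~\ref{p:YKSder} coincides with the operations defined here and the three itemized conclusions of that proposition transfer verbatim. Checking that $\rho$ and $\rho_D$ are well-defined bundle-covering maps (rather than merely $\Real$-linear operations) also relies on the \emph{strongly} anchored hypothesis via Lemma~\ref{p:der2} and \eqref{rhoBund}; I would state this explicitly rather than leaving it implicit. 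Once the conventions are pinned down, the proof is a short dictionary translation, so the write-up can remain at the level of ``this axiom $=$ that earlier identity''.
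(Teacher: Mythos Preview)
Your approach is essentially the same as the paper's: invoke Proposition~\ref{p:YKSder} for the derived-bracket identities \eqref{vJacobi}, \eqref{vSkew}, \eqref{vInvar}, and Lemma~\ref{p:der2} for the anchor, with \eqref{vLeibn3} read as the Leibniz rule of the derived bracket restricted to $A^{-d-2}$ in the right argument. The paper's proof is a one-line citation of exactly these two results; you have simply unpacked the dictionary explicitly, which is fine.

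One correction to a side remark in Step~1: the restriction of $\uD$ to $A^{-d-2}$ is \emph{not} $R$-linear in general, and \eqref{rhoBund} does not say this (it concerns $\rho_0\circ\uD$ on $A^{-d-1}$, the wrong degree and the wrong map). Indeed the whole point of axiom~\eqref{vLeibn2} is that $\uD\colon\Gamma(V)\to\Gamma(W)$ is merely a first-order differential operator, not a bundle map; in the standard Vinogradov example it is the de Rham differential. Your later verification of \eqref{vLeibn2} via \eqref{1stder} is correct and is what actually matters; just drop the claim that $\uD$ is a bundle map and say only that it is an $\Real$-linear map $\Gamma(V)\to\Gamma(W)$ satisfying \eqref{vLeibn2}.
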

\begin{proof}  Follows from Proposition \ref{p:YKSder} 
  and Lemma \ref{p:der2} with the interpretation that \eqref{vLeibn3} is just
  the Leibniz rule for the derived bracket restricted to $A^{-d-2}$ in the
  right-hand argument.
\end{proof}

Due to Proposition \ref{p:tangentLie} the vector fields on our super manifold
$\CM=T[1]M\times\Real[k+1]$ together with the above $Q$-structure fulfill the
first condition.  $\smooth(M)$-linearity of the inner product is
automatically fulfilled for Q-manifolds (since $d=0$ and thus $\Gamma(W)$
is identified with degree minus one vector fields in this case). 
Non-degeneracy follows by direct inspection, noting that  
$W \cong TM\oplus \Lambda^kT^*M$ for
$1\le k\le\dim M$. This shows that the (twisted) Vinogradov
bracket and the canonical pairing indeed provide an example of a
Vinogradov algebroid; we will call this the ($H$-twisted) standard 
Vinogradov algebroid.

An immediate generalization of this is to replace the tangent bundle
in the above construction by a Lie algebroid $A$.  Starting with a Lie
algebroid $(A,[.,.]_A,\rho_A)$ and a positive integer $k\le\rk A$ we
consider the bundle $W:=A\oplus\Lambda^kA^*$ with the canonical
symmetric (non-degenerate) pairing $\<X\oplus\alpha,Y\oplus\b\>
:=\imath_{X}\alpha+\imath_{Y}\b$.  The choice for $V$ is therefore
$\Lambda^{k-1}A^*$.  The choice of the super manifold $\CM$ is analog
to the tangent case $\CM=T^*[k+1]A[1]$.  $A[1]$ also has a canonical
$Q$-structure, the differential $\ud_A$ of the algebroid $A$.  Its
Hamiltonian lift to $\CM$ is the $Q$-structure we use.  Since the
Hamiltonian is itself of degree $k+2$, we can add an $A$-$(k+2)$-form
$H$, satisfying $\ud_AH=0$ so as to render the total charge further on
nilpotent.  This leads us to
\begin{prop}\label{p:genStdVinogr}  Let $(A,[.,.]_A,\rho_A)$ be a Lie algebroid over
$M$. Let $W=A\oplus\Lambda^kA^*$ and $V= \Lambda^{k-1}A^*$. Then  
\begin{align}
  (X\oplus\alpha,Y\oplus\beta) &:= \imath_X\beta+\imath_Y\alpha \\
  [X\oplus\alpha,Y\oplus\beta]_W &:= [X,Y]_A\oplus \L_X\beta
    -\imath_Y\ud_A\alpha +\imath_X\imath_Y H  \label{stdVbrack}\\
  \rho(X\oplus\alpha) &:= \rho_A(X)  \\
  \D(v) &:= 0\oplus\ud_Av  \label{stdVder}\\
  \rho_D(X\oplus\alpha) &:= \L_X  \, ,  \label{stdVrhoQ}
\end{align}  where $X,Y \in \Gamma(A)$, $\alpha,\beta \in
\Gamma(\Lambda^kA^*)$,  and $v \in \Gamma(V)$, equips $(W,V)$ with the
structure of a Vinogradov algebroid
.
\end{prop}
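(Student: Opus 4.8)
The plan is to realize the data of the proposition as an instance of the derived-bracket construction of Proposition~\ref{p:derVino}, so that the only genuinely nontrivial axiom --- the left Leibniz--Loday identity \eqref{vJacobi} --- comes for free, and then to match the abstract operations with the explicit formulas \eqref{stdVbrack}--\eqref{stdVrhoQ} by a routine computation in the Cartan calculus of the Lie algebroid $A$.

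First I would introduce the graded symplectic manifold $\CM = T^*[k+1]A[1]$. If $x^i$ and $\xi^a$ denote base and fibre coordinates of degree $0$ and $1$ on $A[1]$, the cotangent directions contribute conjugate momenta $p_i$ and $\xi_a$ of degrees $k+1$ and $k$, and the canonical Poisson bracket has degree $d=-(k+1)$. Degree-$0$ functions on $\CM$ are $\smooth(M)$, degree-$(k-1)$ functions are sections of $V=\Lambda^{k-1}A^*$, and degree-$k$ functions are sections of $W=A\oplus\Lambda^kA^*$ --- the $A$-part coming from the monomials linear in the $\xi_a$, the $\Lambda^kA^*$-part from the monomials of degree $k$ in the $\xi^a$. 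The Lie algebroid differential $\ud_A$, viewed as a homological vector field on $A[1]$ as in \eqref{Q0}, has a canonical Hamiltonian lift $\tilde Q_0$ to $\CM$, a function of degree $k+2$ with $\{\tilde Q_0,\tilde Q_0\}=0$; adding the degree-$(k+2)$ function corresponding to $H\in\Omega^{k+2}(A)$ still leaves $\tilde Q:=\tilde Q_0+H$ with $\{\tilde Q,\tilde Q\}=0$, because $\{\tilde Q_0,H\}$ equals the pullback of $\ud_A H$, which vanishes by hypothesis, and $\{H,H\}=0$ for degree reasons. Thus $(C^\infty(\CM),\{.,.\},\cdot,\uD:=\{\tilde Q,.\})$ is a differential graded Poisson algebra, which by Proposition~\ref{p:tangentLie}(2) is an anchored DGLA over $R=\smooth(M)$ with $\rho_0(\phi)f=\{\phi,f\}$; it is even strongly anchored, since in the criterion of that proposition the bracket $\{f,g\}$ with $f,g\in R$ has negative degree and $\{\phi\cdot\uD f,g\}$ expands by the Leibniz rule into terms of negative degree (here $k\ge1$ is used), so both vanish. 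As the two modules $C^\infty_k(\CM)=\Gamma(W)$ and $C^\infty_{k-1}(\CM)=\Gamma(V)$ are projective, Proposition~\ref{p:derVino} applies to this DGLA.

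Next I would compute the abstract operations \eqref{dBrack2}--\eqref{rhoQ} in this model and match them with the claimed ones. The pairing $\langle\psi,\psi'\rangle=\{\psi,\psi'\}$ of two degree-$k$ functions is of degree $k-1$, and a one-line computation gives $\langle X\oplus\alpha,Y\oplus\beta\rangle=\imath_X\beta+\imath_Y\alpha$ up to sign; this is non-degenerate (for $X\ne0$ one finds $\beta$ with $\imath_X\beta\ne0$, using $1\le k\le\rk A$, and symmetrically in $\alpha$) and surjective (the fibrewise contraction $A\otimes\Lambda^kA^*\to\Lambda^{k-1}A^*$ is onto for $k\ge1$), so the extra hypotheses of Proposition~\ref{p:derVino} hold. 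The derived bracket $[\psi,\psi']_W=[\uD\psi,\psi']$ of two degree-$k$ functions is again of degree $k$, hence a section of $W$; using that the Hamiltonian lift of $\ud_A$ acts on the pulled-back $\smooth(A[1])$ as $\ud_A$, that $\{\cdot,\ud_A\}$ turns a contraction into a Lie derivative, and that the $H$-term contributes $\imath_X\imath_Y H$, it equals the right-hand side of \eqref{stdVbrack}. Likewise $\uD$ restricted to $\Gamma(V)$ is $v\mapsto0\oplus\ud_A v$ as in \eqref{stdVder}, $\rho_D(X\oplus\alpha)$ acts on $\Gamma(V)$ as $\L_X$ as in \eqref{stdVrhoQ}, and $\rho(X\oplus\alpha)=\rho_0(\uD(X\oplus\alpha))$ is $\rho_A(X)$. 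With these identifications the axioms \eqref{vJacobi}, \eqref{vSkew}, \eqref{vInvar}, \eqref{vLeibn2} and the covering property \eqref{vLeibn3} are all delivered by Propositions~\ref{p:YKSder}, \ref{p:der2} and \ref{p:derVino}, and the bracket-morphism properties of $\rho$ and $\rho_D$ come along by Proposition~\ref{p:Vinogrelem}.

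I expect the main obstacle to be purely organizational: fixing sign and normalization conventions so that the degree-by-degree identification of functions on $T^*[k+1]A[1]$ with $\Gamma(A\oplus\Lambda^kA^*)$ and $\Gamma(\Lambda^{k-1}A^*)$ is consistent, and verifying that the Hamiltonian lift of $\ud_A$ really produces the asymmetric combination $\L_X\beta-\imath_Y\ud_A\alpha$ (and not, say, the skew Vinogradov combination) in the derived bracket --- this is the Lie-algebroid version of Roytenberg's computation for the standard Courant algebroid and is entirely routine once the conventions are in place. As a cross-check, and as an alternative for readers who prefer to avoid graded geometry, one can verify \eqref{vSkew}, \eqref{vInvar}, \eqref{vLeibn2} and \eqref{vLeibn3} directly in two lines each from the Cartan calculus on $A$ (for instance \eqref{vSkew} reduces to the magic formula $(\L_X-\imath_X\ud_A)\alpha=\ud_A\imath_X\alpha$ together with $[X,X]_A=0$ and $\imath_X\imath_X H=0$), while \eqref{vJacobi} is exactly the statement that the $H$-twisted Dorfman-type bracket on sections of $A\oplus\Lambda^kA^*$ satisfies the Leibniz--Loday identity, which in turn uses only $\ud_A^2=0$, the Jacobi identity in $A$, and $\ud_A H=0$.
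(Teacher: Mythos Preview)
Your proposal is correct and follows essentially the same route as the paper: the construction of $\CM=T^*[k+1]A[1]$ with its Hamiltonian $\tilde Q=\tilde Q_0+H$ is laid out in the text immediately preceding the proposition, and the paper's proof is the one-liner ``Follows from Proposition~\ref{p:derVino} with the realization as a Q-structure.'' You supply the details that the paper leaves implicit --- in particular the verification via Proposition~\ref{p:tangentLie}(2) that the DGPA on $\CM$ is strongly anchored, and the non-degeneracy/surjectivity of the pairing --- and your alternative direct verification in the Cartan calculus is a useful cross-check not spelled out in the paper.
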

\begin{proof}  Follows from Proposition \ref{p:derVino} with the
realization as a Q-structure. 
\end{proof}
We will call this example an ($H$-twisted) generalized standard
Vinogradov algebroid, reducing to an ($H$-twisted) standard Vinogradov
algebroid for the choice of $A$ being a standard Lie algebroid, $A
=TM$. Note that for every generalized standard Vinogradov algebroid we
have two quite distinct super-geometric descriptions, one as described
above where $\CM = T^*[k+1]A[1]$ and another one on the supermanifold
$\CM' = A[1] \times
\Real[k+1]$  with $Q'=\ud_A + H \partial/\partial b$, $b$ being the
canonical coordinate on $\Real[k+1]$.\footnote{Also these two pictures
are not just related by the Hamiltonian reformulation of $(\CM',Q')$
according to footnote \ref{f:24}, $T^*[\bullet]\CM'$ does not agree
with $\CM$.} 

An important class of Vinogradov algebroids is provided by those where
the map $\rho_D$ is $C^\infty(M)$-linear (note that for a generalized
standard Vinogradov algebroid this is the case only for $\rk
V=1$). A special case of this is provided by the following definition
(cf.~Prop.~\ref{p:FW} below):
\begin{vdef}[{$V$}-twisted Courant algebroid]\label{def:VCourant}
Let $V,W$ be vector bundles over $M$, $W$ equipped with an anchor map $\rho
\colon W\to TM$, a bracket $[ . , . ]_W$ on its sections, and a non-degenerate
surjective symmetric product $\<.,.\>$ taking values in $V$. Let
 $\Wna$ be a $W$-connection on $V$. We call this a vector bundle twisted (or
V-twisted) Courant algebroid if these data are subject to the
following axioms
\begin{align}
  [\psi,[\phi_1,\phi_2]_W]_W &= [[\psi,\phi_1]_W,\phi_2]_W +[\phi_1,[\psi,\phi_2]_W]_W
    \label{fwJacobi}  \\
  \<[\psi,\phi]_W,\phi\> &=\frac12\Wconn_\psi \<\phi,\phi\> =  \<\psi,[\phi,\phi]_W\>
  \label{fwInvar} \;.
\end{align}
\end{vdef}

We remark that the usual Leibniz rule \eqref{vLeibniz} follows from \eqref{fwInvar} after polarization, $\<[\psi,\phi_1],\phi_2\>+\<\phi_1,[\psi,\phi_2]\> = \Wconn_\psi\<\phi_1,\phi_2\> $, from the Leibniz rule for $\Wna$.

\begin{prop} \label{p:FW} A V-twisted Courant algebroid
  is a Vinogradov algebroid with   $\rho_D(\phi):=\Wconn_\phi$ and $\D$
its adjoint, i.e.~$\D$ being defined via
\begin{align} 
  \label{FW}  \<\phi,\D v\> =\Wconn_\phi v \;.
\end{align}
Conversely, a Vinogradov algebroid is a V-twisted Courant algebroid if
\begin{align} 
  \label{FW'}  \<\phi,\D v\> =\rho_D(\phi) v \;.
\end{align}
\end{prop}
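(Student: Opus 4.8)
The statement comprises two implications, and in both the only point that is not pure bookkeeping is the construction of $\D$ in the first direction, so the plan is to dispatch that first. Starting from a $V$-twisted Courant algebroid, I would first polarize \eqref{fwInvar} --- as in the remark following Definition~\ref{def:VCourant}, but using \emph{both} of its equalities --- to obtain, for all $\psi,\phi_1,\phi_2\in\Gamma(W)$,
\beq
  \Wconn_\psi\<\phi_1,\phi_2\> = \<[\psi,\phi_1]_W,\phi_2\>+\<\phi_1,[\psi,\phi_2]_W\> = \<\psi,\,[\phi_1,\phi_2]_W+[\phi_2,\phi_1]_W\>\,.
\eeq
Fix $v\in\Gamma(V)$. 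Since $\Wna$ is a $W$-connection, $\psi\mapsto\Wconn_\psi v$ is $\smooth(M)$-linear in $\psi$, hence a section of $\Hom(W,V)$; the display shows that when $v=\<\phi_1,\phi_2\>$ it is the image of $[\phi_1,\phi_2]_W+[\phi_2,\phi_1]_W$ under the bundle map $W\to\Hom(W,V)$, $w\mapsto\<\,\cdot\,,w\>$, which is injective because $\<.,.\>$ is non-degenerate. Surjectivity of $\<.,.\>$ lets me write, over a trivializing chart with local frame $e_a$ of $W$, $v=\sum_{a,b}\<g_{ab}e_a,e_b\>$; since $\Wconn_\psi(\,\cdot\,)$ is additive in its $V$-argument, $\psi\mapsto\Wconn_\psi v$ then lies in the image of that bundle map over the chart, and its injectivity makes the preimage independent of the decomposition, so the local sections glue to a global $\D v\in\Gamma(W)$ with $\<\phi,\D v\>=\Wconn_\phi v$. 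The resulting $\D\colon\Gamma(V)\to\Gamma(W)$ is $\Real$-linear (again by non-degeneracy).

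It then remains to verify the axioms of Definition~\ref{def:Vino} with $\rho_D:=\Wna$, which is purely formal. Axiom \eqref{vLeibn3} --- hence that $\rho_D$ covers $\rho$ --- is the defining $\rho$-Leibniz rule of the $W$-connection $\Wna$; \eqref{vJacobi} is \eqref{fwJacobi}; \eqref{vSkew} holds because $\<\psi,[\phi,\phi]_W\>=\tfrac12\Wconn_\psi\<\phi,\phi\>=\tfrac12\<\psi,\D\<\phi,\phi\>\>$ for every $\psi$, so non-degeneracy forces $[\phi,\phi]_W=\tfrac12\D\<\phi,\phi\>$; \eqref{vInvar} is the first equality of \eqref{fwInvar}; and \eqref{vLeibn2} is obtained by pairing both of its sides with an arbitrary $\psi$ and using $\<\psi,\D(hv)\>=\rho(\psi)[h]\,v+h\,\Wconn_\psi v$, which reduces the identity to $\rho(\psi)[fg]=f\,\rho(\psi)[g]+g\,\rho(\psi)[f]$ followed by a final appeal to non-degeneracy. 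By construction $\D$ is the pairing-adjoint of $\rho_D=\Wna$, i.e.\ \eqref{FW} holds.

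For the converse I would start from a Vinogradov algebroid satisfying $\<\phi,\D v\>=\rho_D(\phi)\,v$, set $\Wconn_\phi v:=\rho_D(\phi)\,v$, and check that this is a $W$-connection fulfilling \eqref{fwJacobi}--\eqref{fwInvar}. The $\rho$-Leibniz rule in the $V$-argument is exactly \eqref{vLeibn3}; the $\smooth(M)$-linearity in $\phi$, which is \emph{not} built into a general Vinogradov algebroid, is now forced by the hypothesis, since $\rho_D(f\phi)\,v=\<f\phi,\D v\>=f\<\phi,\D v\>=f\,\rho_D(\phi)\,v$ by $\smooth(M)$-bilinearity of $\<.,.\>$. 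Then \eqref{fwJacobi} is \eqref{vJacobi}; the first equality of \eqref{fwInvar} is \eqref{vInvar} halved; and its second equality follows from \eqref{vSkew} together with the hypothesis, $\<\psi,[\phi,\phi]_W\>=\tfrac12\<\psi,\D\<\phi,\phi\>\>=\tfrac12\rho_D(\psi)\<\phi,\phi\>=\tfrac12\Wconn_\psi\<\phi,\phi\>$. This produces the $V$-twisted Courant algebroid structure.

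The main obstacle is the single non-mechanical step above: showing that $\psi\mapsto\Wconn_\psi v$ really lies in the image of the pairing-induced map $W\to\Hom(W,V)$. That is where surjectivity of $\<.,.\>$ (to decompose $v$ locally as a finite sum of terms $\<\chi,\chi'\>$) and non-degeneracy (to make the preimage unique, hence the construction well defined and the local pieces gluable) are both used essentially; it is also the structural reason why $\rho_D$ of a $V$-twisted Courant algebroid must be $\smooth(M)$-linear, in contrast to, e.g., the generalized standard Vinogradov algebroid with $\rk V>1$, for which $\rho_D$ is not $\smooth(M)$-linear and no such adjoint $\D$ exists.
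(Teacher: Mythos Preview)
Your proof is correct and follows the same logical skeleton as the paper's: under the adjointness relation \eqref{FW}, axiom \eqref{fwInvar} becomes equivalent to \eqref{vSkew} together with \eqref{vInvar}, while \eqref{vLeibn2} reduces to the derivation property of $\rho(\psi)$ via the $W$-connection Leibniz rule; the converse is the same translation read backwards, with $\smooth(M)$-linearity of $\rho_D$ forced by \eqref{FW'}. The paper's proof is much terser and does not address the existence of $\D$ at all; your construction---using surjectivity of $\<.,.\>$ to write $v$ locally as a finite sum of pairings, the polarized second equality of \eqref{fwInvar} to exhibit a preimage, and non-degeneracy for uniqueness and hence gluing---fills a genuine gap in the paper's argument and explains why both hypotheses on the inner product are actually used.
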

\begin{proof} Given \eqref{FW} then \eqref{fwInvar} is equivalent to 
  \eqref{vSkew}--\eqref{vInvar}.
  It remains to check that $\D$ fulfills the Leibniz rule \eqref{vLeibn2},
  but this follows since \eqref{vLeibn3} shows that the adjoint
  $\Wna$ of $\D$ is a first order linear PDO. For the other direction note that \eqref{FW'} replaces \eqref{FW} thus \eqref{vSkew}--\eqref{vInvar} are fulfilled.  Further $\Wna=\rho_D$ the adjoint of $\D$ implies that it is $\smooth$-linear in $\phi$.
\end{proof}

Note that the l.h.s.~ of \eqref{FW} also permits the interpretation that
$\D\:\Gamma(V)\to{}^W{}\Omega^1(M,V)$ and this 1-form is then applied to a vector
$\phi\in W$, i.e.\ $\D$ is an exterior $W$-covariant derivative. From
Proposition \ref{p:Vinogrelem} we learn that $\Wna$ is a
morphism of brackets,
i.e.~$$\Wconn_{[\phi,\psi]_W}=[\Wconn_\phi,\Wconn_\psi] \, .$$  In other words,
the $W$-connection of a V-twisted Courant algebroid is always
flat and thus also $\D^2=0$. Furthermore, one has 
\begin{prop}\label{p:Royt}  In a twisted generalized standard Vinogradov 
  algebroid as well as in a V-twisted Courant algebroid 
   brackets for $\D$-exact sections satisfy \\(for all $v,v' \in \Gamma(V)$ and
$\phi \in \Gamma(W)$):
\begin{align*} \<\D v,\D v'\>&=0 \\ [\D v,\phi]_W &=0\\
  [\phi,\D v]_W &= \D\<\phi,\D v\>  
\end{align*}
%
\end{prop}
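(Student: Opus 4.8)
The plan is to prove the three identities in turn, exploiting the flatness of $\Wna$ (equivalently $\D^2=0$), the morphism property of $\rho_D=\Wna$ from Proposition \ref{p:Vinogrelem}, and the defining relation \eqref{FW}, $\<\phi,\D v\> = \Wconn_\phi v$. For a twisted generalized standard Vinogradov algebroid one can either check the formulas directly in the model $W = A\oplus\Lambda^k A^*$, $V=\Lambda^{k-1}A^*$ using $\D v = 0\oplus\ud_A v$ and the bracket \eqref{stdVbrack}, or — more economically — observe that by Proposition \ref{p:FW} it suffices to treat the V-twisted Courant algebroid case, since the relevant standard example falls under it once we identify $\rho_D$ with a $W$-connection (this is the content of Proposition \ref{p:Royt}'s hypothesis that $\rho_D$ be $C^\infty(M)$-linear, which holds for $\rk V=1$; for $\rk V>1$ one anyway works with the $V$-twisted Courant case directly). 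So I would state the V-twisted Courant argument once and remark that the standard-Vinogradov case is either a special instance or a short direct computation.

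First I would establish $\<\D v,\D v'\>=0$. By the polarized form of \eqref{fwInvar} we have $\<[\psi,\phi_1],\phi_2\>+\<\phi_1,[\psi,\phi_2]\> = \Wconn_\psi\<\phi_1,\phi_2\>$; specializing and using \eqref{FW} gives $\<[\psi,\D v],\D v'\> + \<\D v,[\psi,\D v']\> = \Wconn_\psi\<\D v,\D v'\>$. That alone is not enough, so instead I would argue as follows: from \eqref{vSkew} (or \eqref{fwInvar} with $\phi\mapsto\D v$), $[\D v,\D v]_W = \tfrac12\D\<\D v,\D v\>$, and I want the inner term to vanish. The cleaner route is to first prove $[\D v,\phi]_W=0$ and deduce the pairing statement from it. For $[\D v,\phi]_W=0$: using the right-Leibniz rule for the bracket in its \emph{left} slot — which, as remarked after Definition \ref{def:Vino}, makes $[\cdot,\phi]_W$ a first-order differential operator — together with the fact that $\D$ is itself first order and the flatness $\D^2=0$, one computes $\<[\D v,\phi],\chi\> = \rho_D(\D v)\<\phi,\chi\> - \<\phi,[\D v,\chi]\>$; but $\rho_D\circ\D=0$ by Proposition \ref{p:Vinogrelem}, so $[\D v,\cdot]$ is antisymmetric under the pairing, while also by \eqref{fwInvar} $\<[\D v,\phi],\phi\> = \<\D v,[\phi,\phi]\> = \Wconn_{[\phi,\phi]}$-type terms; combining and using non-degeneracy of $\<.,.\>$ forces $[\D v,\phi]_W=0$. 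Then $\<\D v,\D v'\> = \<\D v',\D v\>$ and one shows it is $W$-constant and in the image only through $\D$-exact stuff, hence zero — more directly: $\Wconn_\phi\<\D v,\D v'\> = \<[\phi,\D v],\D v'\> + \<\D v,[\phi,\D v']\> = \<\D\<\phi,\D v\>,\D v'\> + \<\D v,\D\<\phi,\D v'\>\>$ by the third identity (once established), which is again a sum of $\<\D\text{-exact},\D\text{-exact}\>$ terms, giving a closed recursion that, together with surjectivity and non-degeneracy, yields $\<\D v,\D v'\>=0$; alternatively prove $\<\D v,\D v'\>=0$ first in the standard model and bootstrap.

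The third identity $[\phi,\D v]_W = \D\<\phi,\D v\> = \D(\Wconn_\phi v)$ is the technical heart and the step I expect to be the main obstacle. I would prove it by pairing both sides against an arbitrary $\chi\in\Gamma(W)$ and showing equality, then invoking non-degeneracy. On the left, $\<[\phi,\D v],\chi\> = \rho_D(\phi)\<\D v,\chi\> - \<\D v,[\phi,\chi]\> = \Wconn_\phi\<\D v,\chi\> - \<\D v,[\phi,\chi]\>$ using \eqref{fwInvar}/\eqref{vInvar}, and by \eqref{FW} $\<\D v,\chi\> = \<\chi,\D v\> = \Wconn_\chi v$ and $\<\D v,[\phi,\chi]\> = \Wconn_{[\phi,\chi]}v$. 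On the right, $\<\D(\Wconn_\phi v),\chi\> = \Wconn_\chi(\Wconn_\phi v)$ by \eqref{FW}. So the identity reduces to $\Wconn_\phi(\Wconn_\chi v) - \Wconn_{[\phi,\chi]}v = \Wconn_\chi(\Wconn_\phi v)$, i.e. precisely $[\Wconn_\phi,\Wconn_\chi]v = \Wconn_{[\phi,\chi]}v$, which is the flatness of $\Wna$ — exactly the morphism property $\Wconn_{[\phi,\psi]_W}=[\Wconn_\phi,\Wconn_\psi]$ already recorded before this proposition. Thus the third identity is equivalent to flatness and is done. Finally I would note that with the third identity in hand, the recursion sketched above closes and delivers $\<\D v,\D v'\>=0$, and then $[\D v,\phi]_W=0$ follows either from the antisymmetry argument above or by antisymmetrizing: $[\D v,\phi]_W = -[\phi,\D v]_W + \D\<\D v,\phi\> = -\D\<\phi,\D v\> + \D\<\phi,\D v\> = 0$, using \eqref{vSkew} in polarized form $[\psi,\phi]_W+[\phi,\psi]_W = \D\<\psi,\phi\>$ together with the third identity and $\<\D v,\D v\>=0$. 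The only genuine care needed is the order of logical dependence — I would present: (i) third identity via flatness; (ii) $\<\D v,\D v'\>=0$ via the recursion + non-degeneracy; (iii) $[\D v,\phi]_W=0$ via polarized \eqref{vSkew} — and add the parallel one-line check in the standard Vinogradov model as a sanity remark.
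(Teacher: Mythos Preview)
Your argument for the third identity is clean and correct: pairing both sides with an arbitrary $\chi$ and using \eqref{FW} reduces $[\phi,\D v]_W=\D\<\phi,\D v\>$ to flatness $\Wconn_{[\phi,\chi]_W}=[\Wconn_\phi,\Wconn_\chi]$, which was established from Proposition~\ref{p:Vinogrelem} just before the statement. Your derivation of $[\D v,\phi]_W=0$ from the third identity via the polarized form of \eqref{vSkew} is also correct (and does not in fact need $\<\D v,\D v\>=0$, since symmetry of the pairing already gives $\D\<\D v,\phi\>=\D\<\phi,\D v\>$).

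There is, however, a genuine gap in your treatment of $\<\D v,\D v'\>=0$. The ``recursion'' only shows that the $C^\infty(M)$-span of all $\<\D v,\D v'\>$ is stable under every $\Wconn_\phi$; that does not force it to vanish, and neither non-degeneracy nor surjectivity of the pairing helps here. The fix is a one-liner you overlooked, and it is exactly what the paper does: by \eqref{FW}, $\<\D v,\D v'\>=\Wconn_{\D v}v'$, and $\Wconn_{\D v}=\rho_D(\D v)=0$ by Proposition~\ref{p:Vinogrelem}. So in the $V$-twisted case the first identity is immediate and should come \emph{first}, not be bootstrapped from the other two.

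A second correction: the twisted generalized standard Vinogradov algebroid with $k>1$ is \emph{not} a $V$-twisted Courant algebroid. There $\rho_D(X\oplus\alpha)=\L_X$, which is not $C^\infty(M)$-linear in $X$, and \eqref{FW'} fails because $\<X\oplus\alpha,\D v\>=\imath_X\ud_A v\neq\L_X v$. So your reduction of that case to the $V$-twisted argument is invalid. The paper treats it by direct computation from \eqref{stdVbrack}--\eqref{stdVrhoQ}: with $\D v=0\oplus\ud_A v$ one has $\<\D v,\D v'\>=\imath_0(\ud_A v')+\imath_0(\ud_A v)=0$, $[\D v,\phi]_W=[0,Y]_A\oplus\L_0\beta-\imath_Y\ud_A(\ud_A v)+\imath_0\imath_Y H=0$, and $[\phi,\D v]_W=0\oplus\L_X\ud_A v=\D(\imath_X\ud_A v)=\D\<\phi,\D v\>$ since $\ud_A\L_X=\L_X\ud_A$. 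Keep this as a separate short paragraph rather than trying to absorb it into the $V$-twisted case.
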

\begin{proof} The first equality follows from $ \<\D v,\D
v'\>=\Wconn_{\D v}v' $ and Proposition \ref{p:Vinogrelem}. For the
remaining two equations one can use the explicit formula
\eqref{stdVbrack} for the derived bracket 
in the case of twisted generalized standard Vinogradov algebroids and
generalize the proof in \cite[p.20, lemma 2.6.2]{Royt99} to
$V$-twisted Courant algebroids in a straightforward way, respectively.
\end{proof}

As the terminology suggests, ordinary Courant algebroids
$(W,[.,.],\rho,\<.,.\>)$ provide examples of $V$-twisted ones: one just
takes $V$ to be  the trivial $\Real$ bundle over the base $M$ so that its
sections can be identified with functions on $M$ and identify $\Wna$
with $\rho$ in this case. For the reverse direction we need an
additional condition: Consider a V-twisted Courant algebroid with $V$
of rank one. Suppose we can find a section $v$ of $V$ which vanishes
nowhere (consequently $V$ has to be trivial) and which is annihilated
by $\D$, $\D v=0$. Then we can define a non-degenerate symmetric
bilinear form $\<.,.\>_C$ on $W$ by means of
$\<\phi,\psi\>=\<\phi,\psi\>_Cv$.  Moreover, $\<\D (fv),\phi\>=
\left(\rho(\phi) f\right) v$. It now is easy to see that $([.,.]_W,
\rho, \<.,.\>_C)$ defines a Courant algebroid. 

Let us remark, however, that a line-bundle twisted Courant algebroid,i.e.~a $V$-twisted Courant algebroid with $\rk V = 1$, is a strictly more general notion than the one of an ordinary Courant algebroid.  We intend to come back to this elsewhere.

Ordinary Courant algebroids over a point reduce to quadratic Lie
algebras. This is no more the case for the above $V$-twisted
generalizations, where even the bracket need not be antisymmetric. In
this case $W$ is what one calls a Leibniz--Loday algebra and $V$ becomes a
$W$-module by means of $\Wconn_\cdot \colon W \to \End(V)$. Whereas the
r.h.s.~of \eqref{cnSkew} vanishes identically for $M$ being a point,
the analogous second equality in \eqref{fwInvar} does \emph{not} lead
to $[\phi,\phi]_W=0$ for any representation $\Wconn_\cdot$. An explicit
example with a bracket having a symmetric part is given by the
following construction which was motivated by the example in \cite[chap.\ 1]{San07}

\begin{prop}  Let $W:=\End(\Real^2)$, $V:=\Real^3$
\begin{align*}
 P\: W&\to W: (x_{ij})\mapsto\pmatrix{cc}{x_{11}&0\\0&0} \\
 [X,Y]_W &:= P(X)Y -YP(X) \\
 \<X,Y\> &:= (x_{11}y_{12}+x_{12}y_{11}, -x_{11}y_{21}-x_{21}y_{12},x_{22}y_{22}) \\
 \Wconn_Y(p,q,r) &:= (2y_{11}p,2y_{11}q,0)
\end{align*}  Then these 5 structures form a V-twisted Courant algebroid over a point.
\end{prop}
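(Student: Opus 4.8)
Since the base $M$ is a point, the plan is first to specialize Definition~\ref{def:VCourant} to this case. The anchor $\rho\colon W\to TM=\{0\}$ is forced to vanish, $\CDO(V)$ reduces to $\End(V)$, and the $W$-connection axioms on $\Wna$ collapse to the requirement that $\Wna\colon W\times V\to V$ be $\Real$-bilinear (which the displayed formula manifestly is, also when viewed as a map $W\to\End(V)$, hence automatically $C^\infty(M)$-linear in the $W$-slot). So only three things remain: (a) $\<.,.\>$ is symmetric, non-degenerate and surjective onto $V=\Real^3$; (b) $(W,[.,.]_W)$ satisfies the Leibniz--Loday identity \eqref{fwJacobi}; (c) the pair of invariance identities \eqref{fwInvar}, which here read $\<[\psi,\phi]_W,\phi\>=\tfrac12\Wconn_\psi\<\phi,\phi\>=\<\psi,[\phi,\phi]_W\>$, with $\Wconn_\psi$ acting on $V$ by multiplication by $2\psi_{11}$ on the first two coordinates and by $0$ on the third.

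The structural observation driving everything is that $P(X)=\ell(X)\,e$, where $e:=\pmatrix{cc}{1&0\\0&0}$ and $\ell(X):=x_{11}$, so that $[X,Y]_W=\ell(X)(eY-Ye)=\ell(X)[e,Y]$ with the ordinary matrix commutator, and that $\ell$ kills every such commutator: $\ell([e,Y])=(eY-Ye)_{11}=y_{11}-y_{11}=0$. Granting this, \eqref{fwJacobi} is a two-line expansion: the left-hand side equals $\ell(\psi)\,\ad_e\!\big(\ell(\phi_1)\,\ad_e(\phi_2)\big)=\ell(\psi)\ell(\phi_1)\,\ad_e^2(\phi_2)$, while on the right-hand side the term $[[\psi,\phi_1]_W,\phi_2]_W$ carries the prefactor $\ell([\psi,\phi_1]_W)=\ell(\psi)\,\ell([e,\phi_1])=0$ and hence vanishes, and $[\phi_1,[\psi,\phi_2]_W]_W=\ell(\phi_1)\ell(\psi)\,\ad_e^2(\phi_2)$ reproduces the left-hand side. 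For (a), symmetry is read off the defining formula; $\<X,\cdot\>\equiv 0$ forces in turn $x_{11}=x_{12}=x_{21}=x_{22}=0$, giving non-degeneracy; and surjectivity onto $\Real^3$ follows from the preimages $\<E_{12},E_{11}\>$, $\<E_{21},E_{12}\>$, $\<E_{22},E_{22}\>$ of the three standard basis vectors (up to scaling).

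For (c) I would substitute $[X,Y]_W=x_{11}(eY-Ye)=x_{11}\,\pmatrix{cc}{0&y_{12}\\-y_{21}&0}$, so that both $[\psi,\phi]_W$ and $[\phi,\phi]_W$ are purely off-diagonal matrices; then $\<[\psi,\phi]_W,\phi\>$ and $\<\psi,[\phi,\phi]_W\>$ become short quadratic expressions in the entries $\psi_{ij},\phi_{ij}$, and one writes out $\<\phi,\phi\>$, applies $\Wconn_\psi$ to it, and compares the three $\Real^3$-components componentwise. This is the only genuinely computational step, and it is where the three chosen normalizations (the cut-off $P$, the precise entries and signs in $\<.,.\>$, and the factor $2$ in $\Wna$) are actually used. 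I expect the main obstacle to be purely bookkeeping---keeping the signs straight in the second $\Real^3$-component and checking that the diagonal entries $\phi_{11},\phi_{22}$ never feed into the third component of $\<.,.\>$ (consistently with $\Wna$ annihilating that component). There is no conceptual difficulty, the statement being an existence-by-example check, so once (a), (b) and (c) are in place the proof is complete; alternatively, having set $\rho_D(\phi):=\Wconn_\phi$, one could invoke Proposition~\ref{p:FW} to rephrase it as ``a Vinogradov algebroid with $\D$ the adjoint of $\Wna$'', but the direct verification is shorter here.
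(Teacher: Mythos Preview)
Your proposal is correct and essentially parallels the paper's proof, which likewise treats (a) and (c) by direct inspection and computation. The one genuine methodological difference is in (b): the paper first establishes the projection identities $P(P(X)Y)=P(X)P(Y)=P(XP(Y))$ and then invokes a general result from \cite{San07} to conclude that $[X,Y]_W=P(X)Y-YP(X)$ is Leibniz--Loday, whereas you exploit the rank-one structure $P(X)=\ell(X)\,e$ together with $\ell([e,\cdot])=0$ to verify \eqref{fwJacobi} by a two-line expansion. Your argument is self-contained (no external reference needed) and more transparent for this specific example; the paper's argument has the advantage of situating the example within a general construction of Leibniz algebras from such projections.

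One small slip in your discussion of (c): the third component of $\<X,Y\>$ is $x_{22}y_{22}$, so $\phi_{22}$ \emph{does} feed into it. The correct observation (which is what makes that slot of \eqref{fwInvar} work) is that the bracket $[X,Y]_W=x_{11}\left(\begin{smallmatrix}0&y_{12}\\-y_{21}&0\end{smallmatrix}\right)$ always has vanishing $(2,2)$-entry; hence both outer terms in \eqref{fwInvar} have zero third component, matching the fact that $\Wna$ annihilates the third coordinate.
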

\begin{proof}  Straightforward calculations show that
 $$P(P(X)Y)=P(X)P(Y)=P(XP(Y)) \;. $$  Due to \cite{San07} $(W,[.,.]_W)$ forms a Leibniz--Loday algebra.  By inspection the given inner product is non degenerate.  It is now a straightforward calculation that the given connection fulfills \eqref{fwInvar}.  Thus the given structure is a V-twisted Courant algebroid.
\end{proof}

In fact, even the following statement is true: 
\begin{prop} \label{canonical}
\emph{Any} Leibniz--Loday algebra $(W,[ \cdot , \cdot ]_W)$ becomes a $V$-twisted Courant algebroid over a point in a \emph{canonical} way. 
\end{prop}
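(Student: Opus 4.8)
The plan is to construct, for an arbitrary Leibniz--Loday algebra $(W,[\cdot,\cdot]_W)$, a vector space $V$ together with a $W$-action and a $V$-valued symmetric pairing satisfying Definition~\ref{def:VCourant} (with $M$ a point, so $\rho$, $\Wna$ become an action and the Leibniz-type axioms are vacuous or reduce to linear-algebra conditions). The natural candidate is to let $V$ be the symmetric square, $V := S^2 W$, with the pairing being the tautological one $\langle \phi,\psi\rangle := \phi\odot\psi \in S^2W$. This is manifestly symmetric, and it is non-degenerate and surjective onto its image; by the surjectivity remark after Definition~\ref{def:Vino} we may replace $V$ by the image of $\langle\cdot,\cdot\rangle$, i.e.\ by the span of $\{\phi\odot\psi\}$, which for a nonzero $W$ is all of $S^2W$ anyway. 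The only remaining datum is the $W$-action $\Wna \colon W \to \End(V)$, and the only remaining axiom is \eqref{fwInvar}, namely
\[
\langle [\psi,\phi]_W,\phi\rangle = \tfrac12 \Wconn_\psi\langle\phi,\phi\rangle = \langle\psi,[\phi,\phi]_W\rangle \, ,
\]
for all $\phi,\psi\in W$.

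The key step is to \emph{read off} the required action from the middle-equals-right half of \eqref{fwInvar}. That equation, polarized, forces
\[
\Wconn_\psi(\phi_1\odot\phi_2) = [\psi,\phi_1]_W\odot\phi_2 + \phi_1\odot[\psi,\phi_2]_W \, ;
\]
in other words $\Wconn_\psi$ must be the derivation of $S^2W$ induced by the linear operator $L_\psi := [\psi,\cdot]_W$ on $W$. So I would simply \emph{define} $\Wconn_\psi := $ the canonical extension of $L_\psi$ to a derivation of $S^\bullet W$, restricted to degree two. By construction the second equality of \eqref{fwInvar} holds (both sides equal $[\psi,\phi]_W\odot\phi$ after the polarization identity $\langle\psi,[\phi,\phi]_W\rangle = \psi\odot[\phi,\phi]_W$ is compared termwise — here one uses that $\Wconn_\psi(\phi\odot\phi) = 2[\psi,\phi]_W\odot\phi$). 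What must then be checked is that $\psi\mapsto\Wconn_\psi$ is genuinely an action in the sense needed, i.e.\ that $\Wconn$ is a morphism of brackets, $\Wconn_{[\psi_1,\psi_2]_W} = [\Wconn_{\psi_1},\Wconn_{\psi_2}]$. But Proposition~\ref{p:Vinogrelem} tells us this follows automatically once the axioms of a Vinogradov/$V$-twisted Courant algebroid hold; more directly, $L_{[\psi_1,\psi_2]_W} = [L_{\psi_1},L_{\psi_2}]$ on $W$ is exactly the Leibniz--Loday (Jacobi-type) identity \eqref{JaCou} for $[\cdot,\cdot]_W$, and the derivation extension to $S^2W$ of a commutator of operators is the commutator of the derivation extensions, so the morphism property propagates to $S^2W$ for free.

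It then remains to verify the first equality in \eqref{fwInvar}, $\langle[\psi,\phi]_W,\phi\rangle = \tfrac12\Wconn_\psi\langle\phi,\phi\rangle$. With the definitions above this reads $[\psi,\phi]_W\odot\phi = \tfrac12 \cdot 2[\psi,\phi]_W\odot\phi$, which is a tautology. The Jacobi axiom \eqref{fwJacobi} for the bracket on $W$ is part of the hypothesis that $W$ is a Leibniz--Loday algebra, hence holds. Since $M = \pt$, the Leibniz rule \eqref{vLeibniz} and the connection Leibniz rule are vacuous (there are no nonconstant functions, so $\rho$ and the function-module structure carry no content). Therefore all axioms of Definition~\ref{def:VCourant} are satisfied, and the construction is canonical: it uses only $[\cdot,\cdot]_W$ and the functorial operations $S^2(-)$ and ``extend a linear operator to a derivation,'' with no arbitrary choices.

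I expect the only real subtlety — the ``main obstacle'' — to be the bookkeeping around non-degeneracy and surjectivity of the pairing in degenerate edge cases: if $W$ is one-dimensional with a vanishing bracket, then $\Wna = 0$ and $V = S^2W \cong \Real$, and one should double-check the claimed non-degeneracy convention (the footnote to Definition~\ref{def:Vino}: for every nonzero $\phi$ there is $\psi$ with $\langle\phi,\psi\rangle\neq 0$) is met — it is, since $\langle\phi,\phi\rangle = \phi\odot\phi \neq 0$. One might also prefer a smaller $V$, e.g.\ the image of the symmetrized bracket or a quotient of $S^2W$ on which the action descends; but taking $V = S^2W$ outright and invoking the surjectivity reduction is the cleanest route and avoids having to check that any proposed subspace or quotient is $\Wna$-invariant. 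I would present the $S^2W$ construction as the canonical one and relegate the possibility of shrinking $V$ to a remark.
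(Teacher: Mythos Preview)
There is a genuine gap: the second equality in \eqref{fwInvar}, $\tfrac12\Wconn_\psi\langle\phi,\phi\rangle=\langle\psi,[\phi,\phi]_W\rangle$, fails for your construction. With $V=S^2W$ and the tautological pairing, the right-hand side is simply $\psi\odot[\phi,\phi]_W$; it is \emph{not} $[\psi,\phi]_W\odot\phi$ as you assert. You have in fact polarized the \emph{left} equality of \eqref{fwInvar} (ad-invariance) to obtain the derivation connection, not the right one, and then checked only the left equality again. The two halves of \eqref{fwInvar} impose genuinely different constraints, and for the tautological pairing on $S^2W$ no choice of $\Wna$ can satisfy both: you would need $[\psi,\phi]_W\odot\phi=\psi\odot[\phi,\phi]_W$ in $S^2W$, which is false in general. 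A concrete counterexample is the two-dimensional Leibniz algebra with basis $e_1,e_2$ and sole nonzero bracket $[e_1,e_1]_W=e_2$; taking $\psi=e_2$, $\phi=e_1$ gives $[\psi,\phi]_W\odot\phi=0$ but $\psi\odot[\phi,\phi]_W=e_2\odot e_2\neq0$.

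The point you are missing is that the pairing must encode the symmetric part of the bracket for \eqref{fwInvar} to have a chance: the second equality is the analogue of \eqref{vSkew}, $[\phi,\phi]_W=\tfrac12\uD\langle\phi,\phi\rangle$, which ties $\langle\cdot,\cdot\rangle$ directly to $[\cdot,\cdot]_W$. The paper's construction takes $V\subset W$ to be the ideal generated by the squares $[w,w]_W$, sets $\langle\phi,\psi\rangle:=\tfrac12([\phi,\psi]_W+[\psi,\phi]_W)$, and $\Wconn_\psi v:=[\psi,v]_W$; then \emph{both} equalities in \eqref{fwInvar} are immediate consequences of the Leibniz--Loday identity \eqref{fwJacobi} (using also the standard fact $[[\phi,\phi]_W,\psi]_W=0$ in any left Leibniz algebra). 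Your $S^2W$ idea can be salvaged only by passing to the quotient by the relations $[\psi,\phi]_W\odot\phi-\psi\odot[\phi,\phi]_W$, which effectively reproduces the paper's map $\phi\odot\psi\mapsto\tfrac12([\phi,\psi]_W+[\psi,\phi]_W)$ and is no longer the clean tautological construction you intended.
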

\begin{proof}  For this purpose one takes $V$ to be the two-sided ideal generated by quadratic elements, $V=\langle [w,w]_W| w \in W \rangle$, and the inner product to be nothing but the symmetrization of the original bracket, $(w_1, w_2) := \frac{1}{2} [w_1 , w_2 ]_W + \frac{1}{2}[w_2 , w_1 ]_W$, mapping elements from $S^2W$ into $V\subset W$. With the evident choice $\Wconn_{w_1} (w_2) := [w_1,w_2]_W$, the conditions \eqref{fwInvar} follow directly from \eqref{fwJacobi}. 
\end{proof}

In the present context the most important examples of $V$-twisted Courant algebroids are, however, provided by Q2-manifolds. This is the subject of the
following subsection.

\subsection{Lie-2-algebroids and  V-twisted Courant algebroids}
\label{subsec:Lie2}
Every Q2-manifold gives rise to a V-twisted Courant algebroid as follows:
\begin{prop}\label{p:derVcour} For the strongly anchored DGLA $\X_\bullet (\CM)$ of
a Q2-manifold (cf.~Prop.~\ref{p:tangentLie}) the conditions in 
Prop.~\ref{p:derVino} are satisfied and the resulting Vinogradov
algebroid is even a V-twisted Courant algebroid.
\end{prop}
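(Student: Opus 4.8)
The plan is to invoke Proposition \ref{p:derVino} as the backbone and then upgrade the resulting Vinogradov algebroid to a $V$-twisted Courant algebroid by checking the one extra condition that distinguishes the two notions, namely $C^\infty(M)$-linearity of $\rho_D$ in its argument (equivalently, that $\rho_D$ comes from a $W$-connection on $V$). First I would recall from Proposition \ref{p:tangentLie} that $\X_\bullet(\CM)$ with $\uD = \ud_Q$ is a strongly anchored DGLA of degree $d=0$ over $R = C^\infty_0(\CM) \cong \smooth(M)$, with $\rho_0(\phi)[f] = \phi(f)$. For a Q2-manifold the negatively graded vector fields are bounded below by $-2$, so $A^{-d-1} = \X_{-1}(\CM)$ and $A^{-d-2} = \X_{-2}(\CM)$ are both (projective) modules of sections of vector bundles $W \to M$ and $V \to M$ respectively; this is exactly the hypothesis of Proposition \ref{p:derVino}. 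Since $d=0$, the inner product $\<\phi,\psi\> := [\phi,\psi]$ lands in $\X_{-2}(\CM) = \Gamma(V)$ and is automatically $\smooth(M)$-bilinear (degree $-1$ vector fields multiplied by degree $0$ functions stay degree $-1$, and the commutator of two such vector fields is $\smooth(M)$-bilinear by a direct check, the would-be derivative terms having the wrong degree). Non-degeneracy and surjectivity of $\<.,.\>$ I would note to be automatic here as well — this is asserted in the text just before Proposition \ref{p:genStdVinogr} (and is the content of the forward reference to ``the inner product is automatically surjective and non-degenerate'') — so all hypotheses of Proposition \ref{p:derVino} are met and $(W,V)$ with the derived structures \eqref{dBrack2}--\eqref{rhoQ} is a Vinogradov algebroid.

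It then remains to show this Vinogradov algebroid is a $V$-twisted Courant algebroid, i.e.\ that the two axioms \eqref{fwJacobi}--\eqref{fwInvar} of Definition \ref{def:VCourant} hold with $\Wna$ a genuine $W$-connection on $V$. Axiom \eqref{fwJacobi} is literally \eqref{vJacobi}, already part of the Vinogradov structure. For \eqref{fwInvar} I would set $\Wconn_\phi := \rho_D(\phi) = [\ud_Q\phi, \cdot]$ acting on $\X_{-2}(\CM) = \Gamma(V)$; the two equalities in \eqref{fwInvar} are then exactly \eqref{vSkew} and \eqref{vInvar} (after polarization), which already hold. The only genuinely new point — and the place where the bound $p=2$ is used — is that $\rho_D(\phi)$ is $\smooth(M)$-linear in $\phi$, so that it defines a $W$-connection rather than a mere covariant differential operator whose symbol depends on more than $\phi$'s value. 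By Lemma \ref{p:der2} the derived-bracket structure is again strongly anchored, so $\rho_D(\phi)[v] = [\ud_Q\phi, v]$ satisfies the Leibniz rule \eqref{vLeibn3} over $C^\infty(M)$; what I must check is $C^\infty(M)$-linearity in $\phi$. Writing $\phi \in \X_{-1}(\CM)$, $v \in \X_{-2}(\CM)$, $f \in C^\infty_0(\CM)$, one computes $\rho_D(f\phi)[v] = [\ud_Q(f\phi), v] = [Q(f)\,\phi + f\,\ud_Q\phi,\; v]$; the first term is $[Q(f)\phi, v] = Q(f)[\phi,v] \pm (v(Q(f)))\phi$, and here is the key degree count: $v$ has degree $-2$ and $Q(f)$ has degree $1$, so $v$ applied to $Q(f)$ lowers degree to $-1$ — but on a Q2-manifold there are \emph{no} functions of negative degree, so $v(Q(f)) = 0$. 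Hence $[Q(f)\phi, v] = Q(f)[\phi,v]$, while $[f\,\ud_Q\phi, v] = f[\ud_Q\phi,v] - ((\ud_Q\phi)(f))v$; again $(\ud_Q\phi)(f)$ would require a degree $-1$-plus... — more carefully, $\ud_Q\phi = [Q,\phi]$ has degree $0$, and applied to the degree $0$ function $f$ gives a degree $0$ function, so this term is genuinely present. Collecting, $\rho_D(f\phi)[v] = f\,\rho_D(\phi)[v] + (\text{terms in } Q(f) \text{ and } (\ud_Q\phi)(f))v$. The cancellation of these non-$\smooth$-linear pieces is precisely what Proposition \ref{p:tangentLie} secures via \eqref{rhoBund} ($\rho_0$ composed with $\uD$ is $R$-linear), applied to the second tier of the derived construction through Lemma \ref{p:der2}; this is exactly the statement $\rho(\uD(f\phi)) = f\rho(\uD\phi)$ transported to the $V$-valued setting. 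So $\rho_D$ is $\smooth(M)$-linear in $\phi$, hence a $W$-connection $\Wna$, and \eqref{fwInvar} follows.

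The main obstacle, as the discussion above signals, is bookkeeping the degree-grading argument cleanly: one has to be careful that the reason $\rho_D$ is $C^\infty(M)$-linear is genuinely special to $p=2$ (for $p \geq 3$ there would be degree $-3$ and lower vector fields, $v(Q(f))$ need not vanish, and indeed for the generalized standard Vinogradov algebroid $\rho_D$ is only $\smooth$-linear when $\rk V = 1$, as the text remarks). So I would be explicit that the vanishing $v(Q(f)) = 0$ is forced by $\X_{<0}(\CM)$ being concentrated in degrees $-1$ and $-2$, and that it is exactly this that makes $\rho_D$ land in the smaller module $\Gamma(\mathrm{CDO}(V))$-image coming from connections. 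The cleanest writeup is probably to phrase the whole thing as: ``apply Proposition \ref{p:derVino} to get a Vinogradov algebroid; for a Q2-manifold $d=0$ and $A^k = 0$ for $k < -2$, so by the degree argument $\rho_D(\phi)[v] = [\ud_Q\phi,v]$ is $\smooth(M)$-linear in $\phi$, whence by Proposition \ref{p:FW} (direction from a Vinogradov algebroid with $\smooth$-linear $\rho_D$ satisfying \eqref{FW}) we obtain a $V$-twisted Courant algebroid.'' The only remaining box to tick is \eqref{FW}, i.e.\ that $\D$ (given by \eqref{dInner2}'s adjoint, $\<\phi,\D v\> = \rho_D(\phi)v$) coincides with $\ud_Q$ restricted to $\Gamma(V) = \X_{-2}(\CM)$; this is immediate from $\rho_D(\phi)[v] = [\ud_Q\phi, v] = \pm[\ud_Q v,\phi]\pm(\ud_Q\cdots)$ unwound via the graded Jacobi identity and $Q^2=0$, together with the definition $\<\phi,\psi\> = [\phi,\psi]$ — a short manipulation I would leave to the reader as the text does.
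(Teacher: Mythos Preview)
Your overall strategy matches the paper's: invoke Proposition~\ref{p:derVino} to obtain a Vinogradov algebroid, then upgrade via Proposition~\ref{p:FW}. However, the execution of the upgrade step contains a genuine error and misses the key observation.

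First, your bracket formula $[f\,\ud_Q\phi, v] = f[\ud_Q\phi,v] - ((\ud_Q\phi)(f))\,v$ is wrong. For vector fields, the correction term in $[gX,Y]$ involves $Y(g)\,X$, not $X(g)\,Y$; here that is $v(f)\,\ud_Q\phi$, which vanishes because $v$ has degree $-2$ and $f$ degree $0$ so $v(f)$ would be a function of negative degree. With the correct formula the troublesome term you then try to cancel via \eqref{rhoBund} simply never appears. Second, and more importantly, you overlook that $[\phi,v]$ itself vanishes: it would be a vector field of degree $-3$, and on a Q2-manifold there are none. This single observation is what the paper uses, and it makes the whole $C^\infty$-linearity detour unnecessary.

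The paper's route is shorter and cleaner: to apply Proposition~\ref{p:FW} one verifies \eqref{FW'}, i.e.\ $\<\phi,\D v\> = \rho_D(\phi)[v]$. In DGLA terms this is $[\phi,\D v] = [\D\phi,v]$. By the Leibniz rule for $\D = \ud_Q$ on the bracket, $\D[\phi,v] = [\D\phi,v] - [\phi,\D v]$ (using $|\phi|=-1$, $d=0$); since $[\phi,v]\in\X_{-3}(\CM)=0$, the two sides agree. This is exactly what you gesture at in your final paragraph, and it is the argument you should lead with rather than the $C^\infty(M)$-linearity computation. Note also that checking $C^\infty$-linearity of $\rho_D$ alone would not suffice: Proposition~\ref{p:FW} requires the specific compatibility \eqref{FW'} between $\rho_D$ and the already-given $\D$, not merely that $\rho_D$ be some $W$-connection.
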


\begin{proof}  First we recall that $d=0$ and thus vector fields of
degree -2 are sections of $V\to M$ while those of degree -1 are
sections of the vector bundle $W\to M$
(cf.~Prop..~\ref{p:tangentLie}). We have 
 $[\psi,f\phi]=f[\psi,\phi]$ for all $\psi, \phi \in \Gamma(W) \cong \X_{-1}(\CM)$ and
 $f \in \smooth(M)\cong C^\infty_0(\CM)$ since there are no negative
degree functions on $\CM$. In every  (local) splitting of
\eqref{sequence} we find $W=E\oplus E^*\otimes V$ and  the bracket 
between tangent vector fields of degree -1 is just the canonical pairing
on this direct sum \be \<X\oplus a,Y\oplus b\> = a(Y)+b(X)
\label{pair} \, ,  \ee
where $X,Y \in \Gamma(E)$ and $a,b \in \Gamma(E^* \otimes V)\cong {}^E\Omega^1(M,V)$.
This is
obviously non-degenerate and surjective. Thus by Prop.~\ref{p:derVino}
we conclude to have a Vinogradov algebroid. According to
Prop.~\ref{p:FW} it now suffices to verify Eq.~\eqref{FW'}: Indeed,
this follows from the compatibility of the Lie bracket with $\D$ and
the fact that $[\psi,v]\equiv 0$ $\forall \psi \in \Gamma(W) \cong \X_{-1}(\CM)$ and $v \in 
\Gamma(V) \cong \X_{-2}(\CM)$, since there are no vector fields of
degree -3.\end{proof} In Section~\ref{s:examQ} we found an
interpretation of the components of the vector field $Q$,
Eq.~\eqref{Q}, on a (split) Q2-manifold in terms of a Lie
2-algebroid. Using the same $Q$ to calculate explicitly the derived
bracket given in the previous proposition, one arrives at the
following
\begin{prop}\label{p:FWder} Let $(E,V, [.,.]_E,\rho_E,t,\Ena,H)$ be a Lie
2-algebroid as defined in Def.~\ref{Lie2def}
. Then $W:=E\oplus E^*\otimes V$ with 
\begin{align} \rho_W(X\oplus a):=&\; \rho_E(X) \, , \label{Rho}\\
  \Wconn_{X\oplus a}v := &\Econn_Xv -\imath_{t(v)}a \label{rhoQCont} \\
  \<X\oplus a,Y\oplus b\> :=&\; \imath_Y a + \imath_X b \, ,  \label{Paar}\\
  [X\oplus a, Y\oplus b]_W :=&\; \left([X,Y]_E  \oplus
    [\imath_X,\ED]b -\imath_Y\ED a  + \imath_X \imath_Y H  \right) \nonumber\\
  &-\left(t(a(Y))\oplus T(a,b)\right)\, , \label{brCont}
\end{align} where $X,Y \in \Gamma(E)$, $a,b \in 
\EO^1(M,V)$, and $v \in \Gamma(V)$, is a $V$-twisted Courant
algebroid. Here $\ED$ has been defined in Eq.~\eqref{genCar} and the
operator $T\colon (E^*\otimes V)^{\otimes2} \to E^*\otimes V$ is induced by $t$ according to the composition of maps 
\begin{equation} 
T(a,b)\equiv a\circ t\circ b - b\circ t\circ a \, . \label{Tdefn}
\end{equation}
\end{prop}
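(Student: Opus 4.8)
The plan is to verify Proposition~\ref{p:FWder} by the most direct route: compute the derived bracket of Proposition~\ref{p:derVcour} explicitly in a local splitting $\CM\cong E[1]\oplus V[2]$ and match the result with the claimed formulas \eqref{Rho}--\eqref{brCont}. By Proposition~\ref{p:derVcour} we already know that $\X_\bullet(\CM)$, equipped with $\ud_Q$, is a strongly anchored DGLA and that the derived bracket construction yields a $V$-twisted Courant algebroid on $W$, the bundle whose sections are the degree $-1$ vector fields on $\CM$; the point of the present proposition is only to identify the five structure maps concretely. So the heart of the argument is bookkeeping, not a new structural insight.

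First I would fix coordinates $x^i,\xi^a,b^D$ as in \eqref{Q} and recall that a degree $-1$ vector field is $\psi=X^a\partial_{\xi^a}+a_D^{\phantom D}{}^a\xi^a\partial_{b^D}$, which we read as $X\oplus a$ with $X=X^a\xi_a\in\Gamma(E)$ and $a\in\EO^1(M,V)$ having components $a^D_a$; a degree $-2$ vector field is $v=v^D\partial_{b^D}$, i.e.\ $v\in\Gamma(V)$. The pairing formula \eqref{Paar} is immediate: $[\psi,\psi']=\,$(the coefficient of $\partial_{b^D}$ in the graded commutator)$\,=(\imath_Y a+\imath_X b)^D\partial_{b^D}$, matching \eqref{dInner2}. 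For the anchor, $\rho_W(\psi)f=[Q,\psi]f=[\,[Q,\psi]\,]f$, and only the $\partial_{x^i}$-part of $[Q,\psi]$ survives on $f\in\smooth(M)$; since that part comes solely from $\rho^i_a\xi^a\partial_{x^i}$ commuted with $X^a\partial_{\xi^a}$, one gets $\rho_W(X\oplus a)=\rho_E(X)$, which is \eqref{Rho}. For $\rho_D$, by Proposition~\ref{p:FW} (more precisely by the identification $\rho_D=\Wna$ used in Proposition~\ref{p:derVcour}) one computes $\Wconn_\psi v=[\,[Q,\psi],v\,]$ acting on the degree $-2$ field $v$; the term from the $C$- and $\rho$-pieces of $Q$ reproduces $\Econn_X v$, while the $t$-piece $t^a_D b^D\partial_{\xi^a}$ contributes, after commuting with $\psi$ and then with $v$, exactly the contraction $-\imath_{t(v)}a$, giving \eqref{rhoQCont}.

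The main work is the bracket \eqref{brCont}. I would expand $[X\oplus a,Y\oplus b]_W=[\,[Q,\psi],\psi'\,]$ term by term according to the five summands of $Q$ in \eqref{Q}. The $\rho$- and $C$-terms produce $[X,Y]_E$ in the $E$-component together with Lie-derivative-type contributions $\rho_E(X)(b)-\ldots$ that assemble, using the Cartan formula \eqref{genCar} for $\ED$, into $[\imath_X,\ED]b-\imath_Y\ED a$; the $H$-term contributes $\imath_X\imath_Y H$ in the $V$-valued slot; the $t$-term $t^a_D b^D\partial_{\xi^a}$ contributes the $-t(a(Y))$ correction in the $E$-component (from hitting $a$) and, together with the $\Eco$-term, the operator $T(a,b)=a\circ t\circ b-b\circ t\circ a$ in the $E^*\otimes V$-slot. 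Here the cross-terms between the $t$-piece and the $\Eco$-piece of $Q$ are where signs and index placements are easy to get wrong, so I expect this to be the main obstacle: carefully tracking the Koszul signs coming from the odd/even parities of $\xi^a,b^D$ and the degrees $-1,+1$ of $\psi,Q$, and checking that the antisymmetrizations and the $\cycl$/$\mathrm{Alt}$ conventions implicit in \eqref{genCar} are respected. Once the raw expansion is in hand, I would cross-check internal consistency against Proposition~\ref{p:FWder}'s own claim that this is a $V$-twisted Courant algebroid by verifying \eqref{fwInvar} directly on \eqref{brCont} and \eqref{Paar}, which also serves as an independent sanity check on the signs. Finally, since \eqref{Q} is the general degree $+1$ vector field on a split N2-manifold and the identification of its components with $(E,V,[.,.]_E,\rho_E,t,\Ena,H)$ is precisely Definition~\ref{Lie2def}/Proposition~\ref{prop:double}, the computation applies to an arbitrary Lie 2-algebroid, completing the proof.
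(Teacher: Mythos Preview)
Your proposal is correct and follows essentially the same route as the paper: invoke the Q2-manifold associated to the Lie 2-algebroid (Theorem~\ref{theo1}/Proposition~\ref{prop:double}), appeal to Proposition~\ref{p:derVcour} to know in advance that the derived bracket on degree $-1$ vector fields yields a $V$-twisted Courant algebroid, and then compute that derived bracket in local coordinates on $\CM\cong E[1]\oplus V[2]$ to read off the explicit formulas \eqref{Rho}--\eqref{brCont}. Your sketch is considerably more detailed than the paper's terse proof, and the additional sanity check of \eqref{fwInvar} is a sensible safeguard for the sign bookkeeping you rightly flag as the main hazard, but the underlying strategy is identical.
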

\begin{proof}  Instead of checking all axioms separately one shows that the given algebroid is the V-twisted Courant algebroid from Proposition \ref{p:derVcour}.

The Lie 2-algebroid induces a Q-structure via Theorem \ref{theo1}.  To prove the above formulas one can e.g.\ use local coordinates on the Q-manifold.
\end{proof}

It turns out that under certain circumstances Proposition \ref{p:FWder} can be reversed. For that purpose we now consider a generalization of exact Courant algebroids.
Let $W$ be an ordinary Courant algebroid. Then the anchor map $\rho\colon W\to TM$ together with the inner product induce a map $\rho^* \colon T^*M\to W$, where we used the inner product to identify $W^*$ with $W$.  Due to Proposition \ref{p:Vinogrelem} the composition $\rho\circ\rho^*$ vanishes. Thus for \emph{any} Courant algebroid $W$ one has the complex
 \begin{equation}
 0\to T^*M\xrightarrow{\rho^*} W\xrightarrow{\rho} TM\to 0 \, ; \label{Wexact}
\end{equation}   
$W$ is said to be exact iff the above short sequence is exact.
Examples of exact Courant algebroids occur on $W=TM\oplus T^*M$ via the choice of a closed 3-form $H$ with the bracket \eqref{Cour}.  Conversely given an exact Courant algebroid we can choose an isotropic splitting $j\:TM\to E$\footnote{We will be more explicit on this in the more general context below.} and define a 3-form $H$ as
 $$ H(X,Y,Z):= \<[j(X),j(Y)],j(Z)\> \;. $$  \eqref{cnSkew} shows that $H$ is skew-symmetric in the arguments $X$ and $Y$, while \eqref{cInvar} implies that it is invariant under cyclic permutations of $X$, $Y$ and $Z$. Thus $H$ is a 3-form. The remaining  identity on the bracket implies that this 3-form has to be closed under the de Rham differential.  Finally, a change of splitting, which is necessarily induced by a skew-symmetric 2-form $B$, changes $H$ by an exact term, $H \mapsto H+\ud B$.  Thus, exact Courant algebroids are in 1:1 correspondence to classes of the third real cohomology on $M$, an observation going back to \v{S}evera \cite{SevLett}.

We wish to generalize this result to certain V-twisted Courant algebroids $W$.  Since the inner product on $W$ has values in $V$, the map $\rho \equiv \rho_W \colon W \to TM$ induces a map $\rho^* \colon T^*M \otimes V\to W$ by means of $( \rho^*(\alpha \otimes v), \psi ) = \alpha(\rho(\psi))$ for all $\psi \in W$. Thus, \emph{any} $V$-twisted Courant algebroid $W$ gives rise to the complex
\begin{equation}
 0\to T^*M\otimes V \xrightarrow{\rho^*} W\xrightarrow{\rho} TM\to 0 \, . \label{VWexact}
\end{equation}
Let us call a $V$-twisted Courant algebroid exact if this complex is acyclic, in generalization of ordinary exact Courant algebroids (which fit into this definition since for $V=M\times \Real$ \eqref{VWexact} reduces to \eqref{Wexact}). More generally, consider a map $\pi\colon W\to E$. Together with the V-valued inner product it induces a map $\pi^*\colon E^*\otimes V\to W$.  Here $E^*\otimes V$ is the $V$-dual of E and $W$ the $V$-dual of itself. We are thus led to 
\begin{vdef}
Suppose that in a V-twisted 
Courant algebroid $(V,W)$ the anchor map $\rho_W \colon W \to TM$ factors through a bundle map $\pi\colon W\to E$ for some vector bundle $E$. If the then induced complex 
\[  0 \rightarrow E^*\otimes V \xrightarrow{\pi^*} W \xrightarrow{\pi} E \to 0
   \label{Vexact}\]
is acyclic, we call $(V,W)$ an \emph{exact $V$-twisted (or $V$-exact) Courant algebroid over $E$} and for $E=TM$, $\pi=\rho_W$ simply a \emph{$V$-exact Courant algebroid}.
\end{vdef}
There then is the following remarkable fact:
\begin{thm}\label{thm:der} Let $(V,W)$ be a $V$-exact Courant algebroid over some $E$ with $\rk V>1$ and let $j\: E\to W$ be an isotropic 
splitting  of \eqref{Vexact}.

Then $E$ and $V$ carry canonically the structure of a semistrict Lie 2-algebroid, $(E,V,[\cdot, \cdot]_E,\rho,t,\Ena,H)$, and the structural data of the $V$-twisted Courant algebroid $W$ are necessarily of the form  of Proposition \ref{p:FWder} above. 

A change of splitting $j(X)\mapsto j(X)-B^\#(X)$ results into a change of the underlying data $[.,.]_E$, $H$, and $\Ena$ according to \eqref{split1}--\eqref{split3} furthermore.
\end{thm}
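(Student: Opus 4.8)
The plan is to go backwards and forwards between the two worlds as in the analogous classification of exact Courant algebroids, but now keeping careful track of the auxiliary bundle $V$. First I would fix an isotropic splitting $j\colon E\to W$ of \eqref{Vexact}; isotropy means $\<j(X),j(Y)\>=0$, and together with $\pi\circ j=\id_E$ this identifies $W\cong E\oplus (E^*\otimes V)$ as a vector bundle, with the $V$-valued pairing taking exactly the canonical form \eqref{Paar}. The key point is that \textbf{every} structural datum of the Lie 2-algebroid can now be \emph{read off} from the $V$-twisted Courant structure on $W$: set $\rho_E:=\rho_W\circ j$, define $t\colon V\to E$ by $t:=\pi\circ\D$ (using $\D v\in\Gamma(W)$ and projecting), define the bracket $[X,Y]_E:=\pi\bigl([j(X),j(Y)]_W\bigr)$, the $E$-connection $\Econn_X v:=\Wconn_{j(X)}v$, and finally $H(X,Y,Z):=\<[j(X),j(Y)]_W,j(Z)\>\in\Gamma(V)$ — the $V$-analogue of the \v{S}evera 3-form.

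Then I would verify that these data satisfy Definition \ref{Lie2def}. The Leibniz rule \eqref{Leibn} for $[\cdot,\cdot]_E$ follows from \eqref{vLeibniz} (which holds in any $V$-twisted Courant algebroid, as remarked after Definition \ref{def:VCourant}) after projecting with $\pi$ and using that $\pi$ is a morphism of brackets à la Proposition \ref{p:Vinogrelem}. Antisymmetry of $[\cdot,\cdot]_E$ follows because the symmetric part of $[\cdot,\cdot]_W$ is controlled by $\D$ of the pairing via \eqref{fwInvar}, and $\pi\circ\D$ landing only in $t(V)$; one has to check that the symmetric part of $[j(X),j(Y)]_W$ after projection is compensated exactly — this is where isotropy of $j$ and the identity $\rho_W\circ\D=0$ (Proposition \ref{p:Vinogrelem}) enter. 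That $H$ is totally antisymmetric in $X,Y$ comes from \eqref{fwInvar}, and cyclicity from the ad-invariance (second equality in \eqref{fwInvar}); that $H\in\Omega^3(E,V)$, i.e.\ $C^\infty(M)$-linearity in all three slots, follows from the Leibniz rules again. The Jacobiator identity \eqref{Jac} is the $\pi$-projection of the left-Leibniz identity \eqref{fwJacobi}; the curvature identity \eqref{repr1} comes from the same identity applied to the $E^*\otimes V$-component, and \eqref{strange}, equivalently $\ED H=0$, is the component of \eqref{fwJacobi} landing in $E^*\otimes V$ after using \eqref{repr1}; the two text-identities $t(\Econn_\psi v)=[\psi,t(v)]$ and $\Econn_{t(v)}w=-\Econn_{t(w)}v$ come from compatibility of $\D$ with the bracket (Proposition \ref{p:Royt}) and from $\<\D v,\D w\>=0$ respectively. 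Once the Lie 2-algebroid axioms hold, one checks directly that the induced bracket, connection, pairing and anchor reproduce exactly formulas \eqref{Rho}--\eqref{brCont} of Proposition \ref{p:FWder}: expand $[X\oplus a,Y\oplus b]_W$ by bilinearity into the four pieces $[j(X),j(Y)]$, $[j(X),\pi^*b]$, $[\pi^*a,j(Y)]$, $[\pi^*a,\pi^*b]$, evaluate each using \eqref{fwJacobi}, \eqref{fwInvar} and Proposition \ref{p:Royt}, and recognize the combination $a\circ t\circ b - b\circ t\circ a=T(a,b)$ coming from the last, ``$\D$-exact against $\D$-exact'' term.

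The hardest step — and the one where the hypothesis $\rk V>1$ is indispensable — is showing that the \emph{Courant bracket itself} is forced to have this form, i.e.\ that there is no residual freedom. After writing the most general $\pi^*$-component of $[\pi^*a,j(Y)]_W$ one gets a $C^\infty(M)$-bilinear map $\Hom(E,V)\times\Gamma(E)\to\Hom(E,V)$, and the Leibniz/Jacobi identities constrain it; the genuinely new constraint is a symmetric operator of the type covered by Lemma \ref{p:Jh=0} (referred to in the introduction): any $\Delta\colon\Hom(V_1,V_2)\to V_1$ with $a(\Delta(a))=0$ must vanish iff $\dim V_2>1$. Applied fiberwise with $V_2=V_x$, this kills the potential extra antisymmetric ``$\Delta\in\Lambda^2$'' term that survives in the line-bundle case, and pins the $\pi^*$-components down to $[\imath_X,\ED]b-\imath_Y\ED a$, $\imath_X\imath_Y H$, and $T(a,b)$. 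Finally, the behaviour under a change of isotropic splitting $j(X)\mapsto j(X)-B^\#(X)$ with $B\in\Omega^2(E,V)$ is a direct recomputation: $t$, $\rho_E$ are unchanged (they only see $\pi$), while $[\cdot,\cdot]_E$, $\Ena$ and $H$ pick up precisely the terms $t(B(\cdot,\cdot))$, $B(\cdot,t(\cdot))$ and $\ED B-C$ of \eqref{split1}--\eqref{split3}, the quadratic correction $C(\psi_1,\psi_2,\psi_3)=B(t(B(\psi_1,\psi_2)),\psi_3)+\mathrm{cycl}$ arising exactly from the $\D$-exact--$\D$-exact term $T$ in the bracket. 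This establishes the stated correspondence, which together with Proposition \ref{p:FWder} is the claimed 1:1 correspondence (the categorical upgrade being routine, as in Theorem \ref{Vain}).
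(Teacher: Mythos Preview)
Your approach is essentially that of the paper: split $W$ isotropically, extract the structural maps $\rho_E$, $t$, $[\cdot,\cdot]_E$, $\Ena$, $H$, apply Lemma~\ref{p:Jh=0} (the sole place $\rk V>1$ enters) to pin down the bracket, and read off the Lie 2-algebroid identities. The paper, however, orders these steps differently and more cleanly: it first writes the most general bracket as the expected one $[\cdot,\cdot]_W^0$ plus six a priori unknown $C^\infty(M)$-bilinear corrections $\delta_1,\ldots,\delta_6$, then uses \emph{only} axiom~\eqref{fwInvar} together with the Lemma to force all $\delta_i=0$, and only afterwards extracts the Lie 2-algebroid identities~\eqref{Jac}, \eqref{repr1}, \eqref{strange} from~\eqref{fwJacobi}. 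Your ordering---verify the Lie 2-algebroid axioms first, then check the bracket form---is circular as stated: already the $\pi$-projection of~\eqref{fwJacobi} on $j(X),j(Y),j(Z)$ involves terms $\pi[j(X),\pi^*\alpha]_W$ and $\pi[\pi^*\alpha,j(Z)]_W$, whose vanishing (respectively reduction to $-t(\alpha(Z))$) is precisely what the Lemma establishes; likewise Proposition~\ref{p:Royt} only treats $\D$-exact sections and cannot by itself compute $[\pi^*a,\pi^*b]_W$ for general $a,b$. None of this is fatal, since the Lemma can be invoked at each such intermediate step, but the paper's separation into ``first constrain the bracket, then verify the axioms'' avoids the circularity entirely. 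One minor point: antisymmetry of $[\cdot,\cdot]_E$ and of $H$ in its first two slots follows immediately from isotropy via $[j(X),j(X)]_W=\tfrac12\D\langle j(X),j(X)\rangle=0$; the detour through $\rho_W\circ\D=0$ is unnecessary there.
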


We first remark that an isotropic splitting of \eqref{Vexact} always exists. This is seen as follows: First, as for any exact sequence of vector bundles, we can choose \emph{some} splitting $j_0\:E\to W$ of the short exact sequence \eqref{Vexact}.  This permits us to identify $E\oplus E^*\otimes V\xto\sim W$ via $(X,a)\mapsto j_0(X)+\pi^*(a)$ for $X,Y\in E_x$, $a,b\in E^*\otimes V_x$ and $x\in M$. Computing the symmetric 2-form $L(X,Y):=\<j_0(X),j_0(Y)\>$, $L\:E\otimes E\to V$, it may or may not vanish.  If it does, $j_0$ provides an isotropic splitting.  If it does not, it induces a map $L^\#\:E\to E^*\otimes V,\, X\mapsto L(X,\bullet)$.  If we change $j_0$ by $-\pi^*\circ L^\#$, we obtain an isotropic splitting $W\cong E\oplus E^*\otimes V$ and the inner product on $W$ is identified with the standard product on $E\oplus E^*\otimes V$, $\<j(X)+\pi^*(a),j(Y)+\pi^*(b)\>=a(Y)+b(X)$.

We henceforth do no more write $j$ and $\pi^*$ explicitly, but use the identification  $W\cong E\oplus E^*\otimes V$ induced by the isotropic splitting to write elements of $W$ in the form $X\oplus a$ (instead of $j(X)+\pi^*(a)$) with $X \in E$ and $a \in \Hom(E,V)$.

We also remark that the condition on the rank of $V$ is necessary. The 1:1 relation between $V$-exact Courant algebroids over $E$ and semi-strict Lie 2-algebroids $(E,V)$ breaks down explicitly in the case of $\rk V=1$. While due to Proposition \ref{p:FWder} also for $V$ being a line bundle, one obtains a $V$-exact Courant algebroid over $E$ from any semi-strict Lie 2-algebroid, not any such a generalized Courant algebroid is of this form, even if the line bundle is trivial, $V=M \times \R$.
This will become clear within the proof below, which will also permit us to provide an explicit example of this fact right after the proof.
\begin{proof}
%
%
Our task now is to retrieve the data
$[.,.]_E$, $\rho_E$, $t$, $\Ena$, and $H$ from the data of the $V$-twisted exact Courant algebroid over $E$ and to show that 
they satisfy all the identities they have to so as to form a Lie 2-algebroid.

We start with Equation \eqref{rhoQCont} and \emph{define} $\Ena$ and $t$ implicitly by means of
\[
  \Wconn_{X\oplus a} v =: \Econn_Xv -a(t(v)) \, , \label{naTdefn}
\]
which has to hold true for all $v \in \Gamma(V)$ and $X \oplus a \in \Gamma(E) \oplus \Gamma(\Hom(E,V)) \cong \Gamma(W)$. We also know by assumption that $\rho_W = \rho_E \circ \pi$ for some $\rho_E \colon E \to TM$. Noting that after the choice of a splitting, $\pi$ just becomes projection to the first factor, $\pi(X \oplus a)=X$, the Leibniz property of $\Wna$ and Definition \eqref{naTdefn} (for simplicity one may set $a=0$) ensure that $\Ena$ is an $E$-covariant derivative on $V$ with the anchor map $\rho_E$, as anticipated by the notation chosen.  Returning with this information to \eqref{naTdefn}, it implies that the map $t$ is $C^\infty(M)$-linear, and thus comes from a bundle map $t\colon V\to E$.

Next, anticipating what we want to achieve in \eqref{brCont}, we \emph{define} $[.,.]_E$ and some  ${H}$ from $[X,Y]_W$ by means of
$$  [X\oplus0,Y\oplus0]_W =: [X,Y]_E\oplus -{H}(X,Y) \, .$$
First we can conclude from the known properties of the left-hand-side that $[X,Y]_E$ satisfies a Leibniz property with respect to the second argument for the anchor map $\rho_E$. Likewise we also follow that $H(\cdot, \cdot)$ is $C^\infty$-linear in the second argument. Moreover, as a consequence of the isotropy of the chosen splitting, $(X\oplus0,Y\oplus0)=0$, and the second equation of \eqref{fwInvar} yields antisymmetry of the $E$-bracket and $H$ in $X$ and $Y$. By definition, $H(X,Y) \in \Gamma(E^* \otimes V)$. The first equation of \eqref{fwInvar} then implies that $H(X,Y,Z) := \imath_Z H(X,Y)$ 
is also antisymmetric in $Y$ and $Z$. Thus it is antisymmetric in all three arguments and the $C^\infty$-linearity extends to all these arguments. All together this implies that  $H\in\EO^3(M,V)$.

We now have all of the ingredients $[.,.]_E$, $\rho_E$, $t$, $\Ena$, and $H$ necessary to define the bracket \eqref{brCont}. Let us denote the r.h.s.~of this equation by $[X\oplus a,Y\oplus b]_W^0$ and parameterize the difference to this form of the bracket by several bilinear maps $\delta_i$ as follows:
\begin{align}\label{ansatzbr}
  [X\oplus a,Y\oplus b] =: &\;[X\oplus a,Y\oplus b]_W^0 +\\ &+\delta_1(a,Y)+\delta_2(b,X)+\delta_3(a,b)  
    \oplus  \delta_4(X,b)+\delta_5 (Y,a) +\delta_6(a,b) \, . \nonumber
\end{align}
These maps are certainly $\R$-bilinear, but due to the property of $[X\oplus a,Y\oplus b]_W^0$ they are even $C^\infty(M)$-bilinear (due to the already established features of the quantities entering this bracket).

We now want to show that for $\rk V\ne1$, the Axiom~\eqref{fwInvar} implies that $\delta_1 = \ldots = \delta_6=0$, while the Leibniz--Loday property \eqref{fwJacobi} implies the three Equations \eqref{HJacobi}, \eqref{HRep}, and \eqref{DH} needed to endow $(E,V)$ with the structure of a semi-strict Lie 2-algebroid. 

The term in the center of  \eqref{fwInvar} does not depend on the bracket at all and these terms are precisely balanced by the terms coming from $[X\oplus a,Y\oplus b]_W^0$ in the right equation. This implies that for equal arguments, $X=Y$ and $a=b$, all the delta-contributions have to vanish in \eqref{ansatzbr}. This is the case iff
\begin{equation}
\delta_1=-\delta_2 \; , \quad \delta_4 = -\delta_5 \; , \label{delta4} 
\end{equation}
and  $\delta_3$ and $\delta_6$ are antisymmetric in their two arguments. We now turn to the l.h.s.~of \eqref{fwInvar}. The left Equation \eqref{fwInvar} then gives the following additional constraints:
\begin{eqnarray}
\delta_5 &=& 0 \, ,\label{delta5}\\
b(\delta_2(b,X)) &=& 0 \, ,\label{delta2}\\
b(\delta_3(a,b)) &=& 0\, , \label{delta3}\\
\iota_Y\delta_6(a,b)+ b(\delta_1(a,Y)) &=& 0 \, ,\label{delta6} 
\end{eqnarray}
to hold true for all choices of arguments. To exploit Equations \eqref{delta2} and \eqref{delta3}, we prove the following
\begin{lemma}\label{p:Jh=0} 
 Let $V_1$ and $V_2$ be two vector spaces and let $\Delta \in \left(\Hom(V_1, V_2)\right)^* \otimes V_1 \cong (V_1)^{\otimes 2} \otimes V_2^*$ satisfying \begin{equation} \label{aa}
  a(\Delta(a))=0
\end{equation} for all $a \in \Hom(V_1, V_2)$. Then, for $\dim V_2 >1$, necessarily $\Delta =0$, while for $V_2 \cong \R$ any $\Delta \in \Lambda^2 V_1$ has this property.
\end{lemma}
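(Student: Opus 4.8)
The plan is to unwind the definitions and reduce everything to an elementary statement about tensors. Write $\Delta \in (V_1)^{\otimes 2} \otimes V_2^*$ in coordinates: pick a basis $(e_\mu)$ of $V_1$ with dual basis $(e^\mu)$, and a basis $(f_A)$ of $V_2$ with dual basis $(f^A)$, so that $\Delta = \Delta^{\mu\nu}{}_A \, e_\mu \otimes e_\nu \otimes f^A$. A general element $a \in \Hom(V_1,V_2)$ is $a = a^B{}_\rho \, f_B \otimes e^\rho$, so that $a(v) = a^B{}_\rho v^\rho f_B$. Then $\Delta(a)$ means contracting $a$ (viewed in $V_1^* \otimes V_2$, hence pairing its $V_2$-slot with the $V_2^*$-slot of $\Delta$ and its $V_1^*$-slot with the \emph{second} $V_1$-slot of $\Delta$ — this is the reading dictated by the occurrences \eqref{delta2}, \eqref{delta3}, where $\Delta$ eats one $a$ (or $b$) and returns an element of $V_1$), giving $\Delta(a) = \Delta^{\mu\nu}{}_A \, a^A{}_\nu \, e_\mu \in V_1$. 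Applying $a$ once more produces $a(\Delta(a)) = a^B{}_\mu \Delta^{\mu\nu}{}_A a^A{}_\nu \, f_B \in V_2$.

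Next I would rewrite the condition $a(\Delta(a)) = 0$ for all $a$ as a system of polynomial identities in the free variables $a^A{}_\mu$. The $f_B$-component is the quadratic form $a^B{}_\mu \, \Delta^{\mu\nu}{}_A \, a^A{}_\nu$ in the variables $a$. Since this must vanish identically as a polynomial, polarization (replace $a$ by $a + a'$ and extract the bilinear part) gives, for every $B$ and every pair $a, a'$, the identity $a^B{}_\mu \Delta^{\mu\nu}{}_A a'^A{}_\nu + a'^B{}_\mu \Delta^{\mu\nu}{}_A a^A{}_\nu = 0$. Choosing $a, a'$ to be elementary matrices isolates individual components of $\Delta$. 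Concretely, fixing indices and taking $a = f_B \otimes e^\rho$ (so only $a^B{}_\rho = 1$) and $a' = f_C \otimes e^\sigma$, the bilinear identity collapses to $\delta^B_{(B}\Delta^{\rho\sigma}{}_{C)} + (\text{symmetrized}) = 0$; keeping track of which Kronecker deltas survive, one finds that for $B \neq C$ the term $\Delta^{\rho\sigma}{}_C \, \delta^B_B = \Delta^{\rho\sigma}{}_C$ must be cancelled, but there is no matching term unless $\dim V_2 = 1$ forces $B = C$. This is the crux: when $\dim V_2 > 1$ one can always pick $B \neq C$, and the cancellation is impossible unless $\Delta^{\rho\sigma}{}_C = 0$ for all $\rho,\sigma,C$, i.e. $\Delta = 0$. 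When $\dim V_2 = 1$ the index $A$ disappears, $\Delta$ becomes an element of $(V_1)^{\otimes 2}$, the condition reduces to $a^\mu \Delta^{\mu\nu} a^\nu = 0$ for all covectors $a$, and this is exactly skew-symmetry of $\Delta$; any $\Delta \in \Lambda^2 V_1$ works, and conversely these are the only solutions.

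I expect the only genuinely delicate point to be bookkeeping: making sure the ``contraction'' $\Delta(a)$ is read with the correct slot of $\Delta$ (second $V_1$-factor, consistent with \eqref{delta2}--\eqref{delta3}), and then pushing the polarization through cleanly so that the case distinction $\dim V_2 > 1$ versus $\dim V_2 = 1$ emerges transparently rather than by ad hoc index juggling. A cleaner, coordinate-free way to organize the same argument — which I would present if space permits — is: the map $a \mapsto a(\Delta(a))$ is a quadratic map $\Hom(V_1,V_2) \to V_2$; its vanishing is equivalent, after polarization, to the statement that the associated symmetric bilinear map $\Hom(V_1,V_2) \times \Hom(V_1,V_2) \to V_2$ is zero; viewing $\Delta$ as a linear map $\widehat\Delta \colon \Hom(V_1,V_2) \to V_1$, the condition says $a \circ \widehat\Delta(a) = 0$, i.e. $\widehat\Delta(a)$ lies in $\ker a$ for every $a$; for $\dim V_2 \ge 2$ and any fixed nonzero $v \in V_1$ one can find $a$ with $a(v) \neq 0$ but $a$ still having large kernel, and intersecting the constraints over enough such $a$ forces $\widehat\Delta \equiv 0$. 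I would then close by noting the $\rk V = 1$ failure is precisely why the rank hypothesis appears in Theorem \ref{thm:der}, consistent with the remark already made in the excerpt.
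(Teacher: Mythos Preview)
Your argument is correct and follows essentially the same route as the paper's proof: both polarize the quadratic condition and test against rank-one maps (your elementary matrices $f_B\otimes e^\rho$ are precisely the paper's factorizing elements $a=\alpha\otimes v$), first extracting the antisymmetry $\Delta\in\Lambda^2 V_1\otimes V_2^*$ and then, for $\dim V_2\ge 2$, using a second independent $V_2$-direction to kill $\Delta$ entirely. The paper's presentation is coordinate-free and slightly crisper, whereas your middle paragraph with ``$\delta^B_{(B}\Delta^{\rho\sigma}{}_{C)}$'' is notationally garbled (you cannot symmetrize an index with itself); what you actually obtain from the substitution is $\delta^B_D\,\Delta^{\rho\sigma}{}_C + \delta^B_C\,\Delta^{\sigma\rho}{}_D = 0$, and choosing $D=B\neq C$ then gives $\Delta^{\rho\sigma}{}_C=0$ directly --- I would clean up that display before submitting.
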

\begin{proof}[{ (of Lemma \ref{p:Jh=0})}] 
Let us first consider the case of a factorizing $a$: $a = \alpha \otimes v$ with $\alpha \in V_1^*$ and $v \in V_2$. Then \eqref{aa} holds true iff $\Delta \in \Lambda^2(V_1) \otimes V_2^*$. It remains to check \eqref{aa} for $a = a_1 + a_2$ with both summands such products. This is satisfied iff the contraction of $\Delta$ in the $V_2$-slot with the first factor of any term of the form $v_1 \wedge v_2 \in \Lambda^2 V_2$ has to vanish. If $\dim V_2 = 1$, this condition is empty and the statement follows from $\Lambda^2(V_1) \otimes \R^* \cong \Lambda^2(V_1)$. In the case that $V_2$ has at least two directions, on the other hand, evidently one is forced to a vanishing $\Delta$.
\end{proof}
By $C^\infty(M)$-linearity, the Equations \eqref{delta2} and \eqref{delta3} are pointwise in $M$ and for $\rk V>1$ we can conclude the vanishing of $\delta_2$ and $\delta_3$ by setting $V_1 := E|_x$, $V_2 :=V|_x$ for any fixed $x \in M$ and $\delta(\cdot)$ equal to 
$\delta_2(\cdot , X)|_x$ and $\delta_3(a,\cdot)|_x$, respectively (again for any fixed $X$ and $a$, respectively). 

Thus, if the rank condition on $V$ is satisfied, Equations \eqref{delta4}---\eqref{delta6} imply the vanishing of $\delta_i$, $i=1,\ldots,6$. It now remains to exploit Equation \eqref{fwJacobi}. 

It is a straightforward calculation to check that  \eqref{fwJacobi}---together with the already established form of the bracket $[ \cdot , \cdot ]_W = [ \cdot , \cdot]_W^0$, i.e.~\eqref{brCont} holding true,---implies the remaining identities \eqref{HJacobi}, \eqref{HRep}, and \eqref{DH}. For example, regarding the $\Gamma(E)$-part of  \eqref{fwJacobi} yields 
\begin{gather}
 [[X,Y]_E,Z]_E -[t(\imath_Y a),Z]_E-t\left( \imath_Z\left([\imath_X,\ED]b -\imath_Y\ED a  + \imath_X \imath_Y H  -T(a,b)\right)\right)= \nonumber \\
  = [X,[Y,Z]_E]_E -[X,t(\imath_Zb)]_E -t\left( a\left([Y,Z]_E-t(\imath_Z b)\right)\right) - \left(X \leftrightarrow Y \,\,\text{and}\,\, a \leftrightarrow b\right)\, ,
\end{gather}
which, after the dust clears and using the definitions \eqref{genCar} and \eqref{Tdefn} for $\ED$ and $T$, respectively, yields the the first of the three equations  
and the right hand equation of \eqref{repr2}. Establishing the three quadratic equations on the structural data for $E$ and $V$ shows that they form a semi-strict Lie 2-algebroid.

Alternatively one may also proceed as follows in the last step: Given the above data, we can define a vector field of the form \eqref{Q} on  $E[1]\oplus V[2]$. Vector fields of degree -1 can then be identified with sections of $W$. Equation~\eqref{vJacobi} then translates into $0=\frac12[[[[Q,Q],\psi],\phi_1],\phi_2]$. 
 It remains to check that
this equation contains the three Equations \eqref{B3}, \eqref{B5}, and \eqref{B7}. In fact, it contains even all seven components \eqref{B1}--\eqref{B7} of
$Q^2=0$ as one may convince oneself.

A change of splitting of the degree 2 Q-manifold used above corresponds precisely to a change of splitting of the sequence \eqref{Vexact} if one regards the vector fields of degree -1 on that graded manifold.  This then proves the  theorem. 
\end{proof}

We conclude with some related remarks. First, we saw that in the case $\rk V =1$ the discussion of $V$-exact Courant algebroids over some $E$ is at least more involved. Let us restrict to a trivial bundle $V=M \times \R$ so that we can identify $\Hom(E,V)$ with $E^*$. Then, according to the above considerations, the data appearing in Proposition \ref{p:FWder} can be extended by a modification of the bracket by means of two tensors: $\delta_1 = - \delta_2 \sim \delta_6 \in \Gamma(\Lambda^2E \otimes E^*)$ and $\delta_3 \in \Gamma(\Lambda^3 E)$. While the general analysis of this special case gets somewhat involved---one still needs to analyze the consequences of \eqref{fwJacobi}---, we just make two remarks at this point here: First, \emph{every} Lie 2-algebra gives rise to a $V$-exact Courant algebroid over $E$ according to Proposition \ref{p:FWder}, also in the case of $\rk V =1$. Second, one simple example of a $V$-exact Courant algebroid over $E$ with $\rk V = 1$ \emph{not} covered by this parameterization may be provided by the following construction: Choose the Lie 2-algebroid to have all structural data vanish and take, in the above parameterization also $\delta_1=0$ but $\delta_3\equiv h \in \Gamma(\Lambda^3 E)$ arbitrary.  
Then for arbitrary sections in $E \oplus E^*\otimes V \cong E \oplus E^*$ one has 
\begin{equation}
[X \oplus \alpha , Y \oplus \beta ]_W := h(\alpha, \beta) \oplus 0 \, . 
\end{equation} 
It is evident that double brackets always vanish with this definition, and thus Equation \eqref{fwJacobi} is satisfied trivially. But also each of the three terms in \eqref{fwInvar} vanish by the antisymmetry of $h$ in all three arguments and the vanishing of $\Wna$, so that also these two equations are satisfied. 

We intend to come back to a more systematic investigation of line-bundle-twisted exact Courant algebroids over some $E$ elsewhere.

\newpage
\appendix
\section{$L_\infty$-algebroids and Q-structures}\label{s:Linf}
\subsection{$L_\infty$-algebras}\label{s:Linf1}
The notion of an $L_\infty$-algebra goes back to Lada and Stasheff in \cite{SHLA}.  Before recalling their definition, we remind the reader of the Koszul sign.  Let $V_\bullet$ be a $\Z$-graded vector space.  We define the graded symmetric powers as
\[  S^\bullet V:= \bigoplus_{n\ge0} \otimes^n V/\langle v\otimes w-(-1)^{|v|\,|w|}w\otimes v \, | \; v,w\in V\rangle \, ,
\] where $\langle \dots\rangle$ denotes the two-sided ideal generated by the given elements.  We denote the product in $S^\bullet V$ as $\cdot$ or omit the dot for simplicity. The Koszul sign of a permutation $\sigma\in S_n$ w.r.t.~$n$ homogeneous elements $v_1,\dots,v_n\in V_\bullet$ is now
\[  v_1\dotsm v_n = \Koszul(\sigma)v_{\sigma1}\dotsm v_{\sigma n}.
\]  Note that $\Koszul(\sigma)$ depends on $\sigma$ as well as on the degrees of all $v_k$.  By $(-1)^\sigma$ we denote the usual sign of a permutation $\sigma\in S_n$.

Correspondingly we introduce the graded skew-symmetric products
\[  \wedge^\bullet V = \bigoplus_{n\ge0} \otimes^n V/\langle v\otimes w \boldsymbol{+}(-1)^{|v|\,|w|}w\otimes v\rangle.
\]  We denote the induced product in $\wedge^\bullet V$ as $\wedge$.

\begin{vdef}  An \emph{$L_\infty$-algebra} is a $\Z$-graded vector space $V_\bullet$ together with a family of maps $l_k\:\wedge^kV\to V$ of degree $2-k$ subject to the equations, for all $n\ge1$,
\[\label{LinftyJacobi1}  \sum_{i+j=n+1}(-1)^{i(j+1)}\mkern-20mu \sum_{\sigma\in\Unsh(j,i-1)} (-1)^\sigma\Koszul(\sigma) l_i(l_j(v_{\sigma1},\dots,v_{\sigma j}),v_{\sigma j+1},\dots,v_{\sigma n}) = 0
\] where $v_1,\dots,v_n\in V_\bullet$ are homogeneous, $\Unsh(j,i-1)$ are the unshuffles of type $(j,i-1)$.

An \emph{$n$-term $L_\infty$-algebra} is an $L_\infty$-algebra concentrated in degrees $-n+1,\dots,0$.
\end{vdef}
There is an equivalent definition which requires fewer signs in the Jacobi identities and which we will use henceforth.  We borrow it from \cite{Vor05,MZ12}.
\begin{vdef}  An $L_\infty[1]$-algebra is a graded vector space $V_\bullet=\bigoplus_{n\in\Z} V_n$ together with a family of multilinear operators $a_k\:S^k V\to V$ of degree $+1$, such that the following relations, called Jacobi identities, hold for all $n\ge1$,
\[\label{LinftyJacobi}  \sum_{i+j=n+1} \sum_{\sigma\in\Unsh(j,i-1)} \Koszul(\sigma) a_i(a_j(v_{\sigma1},\dots,v_{\sigma j}),v_{\sigma j+1},\dots,v_{\sigma n}) = 0\, .
\] Here all $v_k\in V$ are homogeneous and $\Unsh(j,i-1)$ are the unshuffles of type $(j,i-1)$.

An \emph{$n$-term $L_\infty[1]$-algebra} is an $L_\infty[1]$-algebra concentrated in degrees $-n,\dots,-1$.
\end{vdef}

\begin{example}{}$\quad$
\vspace{-2mm}
\begin{enumerate}\zero  Given a Lie algebra $(\g,[.,.])$, then $a_2=[.,.]$, $a_k=0$ for $k\ne2$  and $V_{-1}=\g$, $V_k=0$ for $k\ne-1$ is a 1-term $L_\infty[1]$-algebra.
\item  Given a complex $(V_\bullet,d)$, then $a_1=d$ and $a_k=0$ for $k\ne1$ is an $L_\infty[1]$-algebra.
\item  Given a DGLA $(\g_\bullet,[.,.],d)$, then $a_1=d$, $a_2(v,w)=(-1)^{|w|}[v,w]$, $a_k=0$ for $k\ne1,2$ with $V_\bullet=\g_{\bullet-1}$ makes it an $L_\infty[1]$-algebra.
\end{enumerate}
\end{example}

Given a Q-structure on a graded vector space $V_\bullet$ 
with $Q|_0=0$,\footnote{$Q|_0$ means that we compute the vector field at the origin of the vector space, i.e.\ all the coordinates $q^\alpha=0$, but $\partial_{q^\alpha}$ remains.} we denote the components of $Q$ as
\[  Q= \sum_{k\ge1} \tfrac1{k!}C_{\alpha_1,\dots,\alpha_k}^\beta q^{\alpha_1}\dots q^{\alpha_k}\partial_{q^\beta} \;.
\] 
Note that the above sum for Q starts at $k=1$.  The most general vector field of degree 1 would start with a constant term $k=0$ that corresponds to an element $a_0\in V_1$.  Our $L_\infty$-algebras correspond to pointed Q-structures.  This is automatically fulfilled for $\N_-$-graded Q vector spaces.

\begin{lemma}  Given a $\Z$-graded vector space $V$ together with a vector field $Q$ with $Q|_0=0$ and the identification of the components 
\[  a_k(q^*_{\alpha_1},\dots,q^*_{\alpha_k}) = C^\beta_{\alpha_1\dotsm \alpha_k} q^*_{\beta}
\] then $Q$ is a Q-structure iff the $a_k$ form an $L_\infty[1]$-algebra.
\end{lemma}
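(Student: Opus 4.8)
The plan is to translate the single equation $Q^2=0$, written out componentwise in the coordinates $q^\alpha$, into the tower of Jacobi identities \eqref{LinftyJacobi} for the operations $a_k$. First I would fix the dual basis convention: with $q^\alpha$ linear coordinates on $V$, the elements $q^*_\alpha$ form a basis of $V$ of degree equal to that of $q^\alpha$ (up to the relevant degree shift), and under the stated identification $a_k(q^*_{\alpha_1},\dots,q^*_{\alpha_k}) = C^\beta_{\alpha_1\cdots\alpha_k}\, q^*_\beta$ the total symmetry of $C^\beta_{\alpha_1\cdots\alpha_k}$ in its lower indices (graded symmetric, because the $q^{\alpha_i}$ graded-commute) is exactly the statement that $a_k$ is a well-defined map on $S^kV$ of degree $+1$ (each monomial $q^{\alpha_1}\cdots q^{\alpha_k}\partial_{q^\beta}$ raises degree by one). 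So the datum ``vector field $Q$ of degree $1$ with $Q|_0=0$'' is literally the same as ``a family of degree $+1$ maps $a_k\colon S^kV\to V$, $k\ge 1$''; this is the easy half of the equivalence and needs only bookkeeping.

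Next I would compute $Q^2 = \tfrac12[Q,Q]$ in coordinates. Writing $Q = \sum_{j\ge1}\tfrac1{j!}C^\beta_{\alpha_1\cdots\alpha_j}q^{\alpha_1}\cdots q^{\alpha_j}\partial_{q^\beta}$, the derivation $Q$ hits the coefficient monomial of another copy of $Q$; the Leibniz rule produces, for the coefficient of $q^{\gamma_1}\cdots q^{\gamma_{n-1}}\partial_{q^\delta}$ in $Q^2$, a sum over the ways of splitting the $n-1$ outer indices together with one contracted index into a ``lower'' block of size $j$ feeding $C^{(\text{something})}$ and the rest. Collecting terms with $i+j = n+1$, I would check that this is precisely $\sum_{i+j=n+1}\sum_{\sigma\in\Unsh(j,i-1)}\epsilon(\sigma)\,a_i(a_j(v_{\sigma1},\dots,v_{\sigma j}),v_{\sigma j+1},\dots,v_{\sigma n})$ evaluated on $v_k = q^*_{\gamma_k}$, with the Koszul sign $\epsilon(\sigma)$ arising exactly from reordering the graded-commuting $q^{\gamma_k}$'s past one another (and the contracted index past them) to match the unshuffle $\sigma$. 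Because the coefficients $C^\beta_{\cdots}$ of distinct independent monomials $q^{\gamma_1}\cdots q^{\gamma_{n-1}}\partial_{q^\delta}$ are independent, $Q^2=0$ holds iff each such coefficient vanishes, iff \eqref{LinftyJacobi} holds for all $n\ge1$. I would also note $Q|_0=0$ (equivalently $a_0=0$, the ``pointed'' condition) is exactly what lets the sum start at $j=1$ so that no inhomogeneous $k=0$ term appears, and remark that for $\N_-$-graded $V$ this is automatic since there is no degree-$1$ constant.

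The main obstacle I expect is purely a matter of signs: matching the Koszul sign $\epsilon(\sigma)$ coming out of the coordinate computation (where all $q^\alpha$ of odd total degree anticommute and one must also move the dummy contracted index through the product) with the $\epsilon(\sigma)$ in the definition of $L_\infty[1]$-algebra, and confirming there is no extra $(-1)^\sigma$ or degree-dependent prefactor — this is precisely why the paper adopted the $L_\infty[1]$ (rather than $L_\infty$) convention of \cite{Vor05,MZ12}, in which all operations sit in odd degree $+1$ and the signs are minimal. I would do the sign check carefully on $n=1$ ($a_1\circ a_1=0$, i.e.\ $Q^2$ restricted to the linear part), $n=2$ (Leibniz of $a_1$ over $a_2$), and $n=3$ (the homotopy Jacobi identity relating $a_2$'s associator to $a_1,a_3$), and then argue the general $n$ pattern follows by the same combinatorial identity for unshuffles, since the coordinate Leibniz expansion of $[Q,Q]$ organizes itself into exactly the $(j,i-1)$-unshuffle sum. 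Everything else — bilinearity, gradedness, the bijection on the level of data — is routine, so I would not spell it out beyond a sentence.
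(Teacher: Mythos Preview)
Your proposal is correct and follows essentially the same route as the paper: compute $Q^2$ componentwise, identify the coefficient of each independent monomial $q^{\alpha_1}\cdots q^{\alpha_n}\partial_{q^\gamma}$ with the $n$-th Jacobi identity \eqref{LinftyJacobi} evaluated on basis vectors, and use graded symmetry to reduce to ordered index tuples. The paper's proof is terser---it simply writes out $Q^2$ in two forms (raw and reorganized over $\Unsh(n-k+1,k-1)$) and the Jacobi identity on basis vectors, then asserts the match---whereas you are more explicit about the data bijection, the role of $Q|_0=0$, and the sign bookkeeping; your plan to verify $n=1,2,3$ by hand before invoking the general unshuffle combinatorics is a sound way to pin down the Koszul signs that the paper leaves implicit.
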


\begin{proof}  Given the formula for $Q$, we can write down a formula for $Q^2$
\begin{gather*}
 \begin{split} Q^2 &= \sum_{n\ge k\ge1} \frac1{k!(n-k-1)!} C^\beta_{\alpha_1\dotsm\alpha_k} C^\gamma_{\beta,\alpha_{k+1}\dotsm\alpha_n}q^{\alpha_1}\dotsm q^{\alpha_n}\partial_{q^\gamma}.  \\
   &= \sum_{n\ge k\ge 1}\sum_{\substack{\alpha_1\le\alpha_2\le\dots\le\alpha_n\\ \beta,\gamma}}\sum_{\sigma\in\Unsh(n-k+1,k-1)} \Koszul(\sigma)C^\beta_{\alpha_{\sigma1},\dots\alpha_{\sigma k}} C_{\beta\alpha_{\sigma k+1},\dots,\alpha_{\sigma n}}^\gamma q^{\alpha_1}\dotsm q^{\alpha_n} \partial_{q^\gamma}.
 \end{split}
\intertext{where $\le$ is an arbitrary ordering of the indices.  The Jacobi identities for base vectors read as}
 \begin{split} &\sum_{k+l=n+1}\sum_{\sigma\in\Unsh(l,k-1)} \Koszul(\sigma) a_k(a_l( q^*_{\alpha_{\sigma1}},\dots,q^*_{\alpha_{\sigma l}}),q^*_{\alpha_{\sigma l+1}},\dots,q^*_{\alpha_{\sigma n}})  \\
 &= \sum_{\dots}\Koszul(\sigma) C_{\alpha_1\dots\alpha_l}^\beta C_{\beta\alpha_{l+1}\dots\alpha_n}^\gamma q^*_\gamma
 \end{split}
\end{gather*}  Due to the symmetry of the $a_k$ it is sufficient to prove the Jacobi identities for $\alpha_1\le\alpha_2\le\dots\le\alpha_n$.  These can be extracted from the expression for $Q^2$.  Conversely given the Jacobi identities, it is easy to see that all independent terms in $Q^2$ vanish.
\end{proof}

\begin{vdef}  Given two $L_\infty$-algebras $(V,a)$ and $(V',a')$, then a morphism between them is a map $\tilde\phi\:SV\to SV'$ that projects to a family of maps $\phi_k\:S^k V\to V'$, each of degree 0, such that it intertwines the brackets $a$ and $a'$:
\begin{gather}
  \begin{split}
	  \sum_{k=1}^n a'_{n+1-k}\circ\tilde\phi^{(k)}(v_1,\dots,v_n)
	  = \sum_{k=1}^n \phi_{n+1-k}\circ\tilde{a}_k(v_1,\dots,v_n)
  \end{split}
\intertext{where $\tilde{a}_k$ means that we extend the $k$-ary function to $n$ arguments as a derivation, i.e.}
  \begin{split}	\tilde{a}_k(v_1,\dots,v_n)= \sum_{\sigma\in\Unsh(k,n-k)} \Koszul(\sigma) (a_k(v_{\sigma1},\dots,v_{\sigma k}),v_{\sigma k+1},\dots,v_{\sigma n})
  \end{split}
\end{gather}
Let $\Part(n)$ be the partitions of the positive integer $n$ in positive integer summands ordered by size.  For any particular Partition $p=(p_1,\dots,p_k)$, let further $\Unsh(p)$ be the poly-unshuffles of $n$ elements into groups according to the partition.  Finally, let $k=|p|_0$ be the number of summands in $p$.
\[
  \begin{split}
  \tilde\phi^{(k)}(v_1,\dots,v_n) = \sum_{\substack{p\in\Part(n)\\ k=|p|_0}}
  \sum_{\sigma\in\Unsh(p)}
  \Koszul(\sigma) \big(
    &\phi_{p_1}(v_{\sigma 1},\dots,v_{\sigma p_1}), 
     \phi_{p_2}(v_{\sigma(p_1+1)},\dots,v_{\sigma(p_1+p_2)}), \\
	&\dots, \phi_{p_k}(v_{\sigma(n-p_k+1)},\dots,v_{\sigma n}) \big)
  \end{split}
\]
If the higher $\phi_k$, $k\ge2$ vanish, we say that it is a \emph{strong homomorphism}.  If $\phi_1$ is invertible, then we say that $\phi$ is an \emph{isomorphism}.  If $\phi_1=\id_V$ (and $V'_\bullet=V_\bullet$) we say that $\phi$ is an \emph{essential isomorphism}.
\end{vdef}
Explicitely in terms of $(\phi_k)$, the first formulas read:
\begin{gather*}
	a'_1\circ\phi_1 = \phi_1\circ a_1, \\
	a'_2\circ(\phi_1\otimes\phi_1) +a'_1\circ\phi_2 = \phi_1\circ a_2 +\phi_1\circ (\tilde{a}_1\otimes\id) \pm\phi_1\circ(\id\otimes a_1),  \\
  \begin{split}
	a'_3\circ(\phi_1\otimes\phi_1\otimes\phi_1) 
	+&\,a'_2\circ(\phi_1\otimes\phi_2+\phi_2\otimes\phi_1) +a'_1\circ\phi_3 \\
	=&\; \phi_1\circ a_3 +\phi_2\circ(\tilde{a}_2\otimes\id) \\ 
	&\pm\phi_2\circ(\id\otimes a_2) \pm\phi_2\circ(a_2\otimes\id)\circ\sigma_{23} \\ 
	&+\phi_3\circ(\tilde{a}_1\otimes\id\otimes\id) \pm\phi_3\circ(\id\otimes a_1\otimes\id) \pm\phi_3\circ(\id\otimes\id\otimes a_1)
  \end{split}
\end{gather*} where $\sigma_{23}$ is the flip of the second with the third argument.

We see that $\phi_1$ itself is a chain map (
 since $a_1$ squares to zero as a consequence of Eq.~\eqref{LinftyJacobi}, $(V_\bullet, a_1)$ and $(V'_\bullet, a'_1)$ are each a complex). $\phi_2$ is the correction for $\phi_1$ to be a homomorphism of 2-brackets $a_2$ and $a'_2$.  $(\phi_2,\phi_3)$ is the correction for $\phi_1$ to be a homomorphism of 3-brackets $a_3$ and $a'_3$, and so on.

Also the above notion of a morphism of $L_\infty$-structures simplifies considerably in terms of the Q-language: it just reduces to a Q-morphism, i.e.~to a chain map between the graded symmetric algebras (of ``functions''). Similarly one can formulate it in terms of coalgebras \cite[Defn.~4.4]{Kon97}.

The above defintion is compatible with the following one for a morphism between $L_2$-algebras from \cite{Baez03vi}. 
\begin{example} Given two 2-term $L_\infty$-algebra structures $(a_1,a_2,a_3)$ and $(a'_1,a'_2,a'_3)$ on the same $\N_-$-graded vector space $(V_{-1},V_{-2})$, then an \emph{essential isomorphism} of $L_2$-algebras consists of a map $\phi_2\:S^2V_\bullet\to V_\bullet$ of degree 0 subject to the rules
\begin{gather*}  a'_1 = a_1, \\
  \begin{split} &\forall v_i\in V_\bullet:  \\
  &a'_2(v_1,v_2) -a_2(v_1,v_2) = (-1)^{|v_1|}\phi_2(v_1,a_1(v_2)) +\phi_2(a_1(v_1),v_2) -a'_1\circ\phi_2(v_1,v_2),
  \end{split} \\
 \begin{split} a'_3(x,y,z) - a_3(x,y,z) =\;& -\phi_2(x,a_2(y,z)) +\phi_2(a_2(x,y),z) +\phi_2(y,a_2(x,z)) \\
   &-a'_2(\phi_2(x,y),z) +a'_2(y,\phi_2(x,z)) -a'_2(x,\phi_2(y,z)), \\&\forall x,y,z\in V_{-1}.  \end{split}
\end{gather*}
\end{example}
The somewhat lengthy formulas arise naturally in the description with Q-structures.  Namely, a change of splitting of \eqref{sequence} is described by coefficients $(L_{ab}^D)$ as $\tilde{b}^D = b^D +\tfrac12L_{ab}^D\xi^a\xi^b$.  Since we work over a point, these form a linear map $\phi_2\:\wedge^2E\to V$ and then the Formulas~\eqref{split1}--\eqref{split3} give the above formulas.

By the same mechanism it is also possible to obtain the essential isomorphisms of 3-term $L_\infty$-algebra structures:
\begin{example}  Given two 3-term $L_\infty$-algebra structures $(a_1,a_2,a_3,a_4)$ and $(a'_\bullet)$ over the same graded vector space $(V_{-1},V_{-2},V_{-3})$, an essential $L_\infty$-morphism between them is a collection of maps $\phi_2\:S^2V_\bullet\to V_\bullet$, $\phi_3\:S^3V_\bullet\to V_\bullet$ of degree 0 such that
\begin{gather*}  a'_1 = a_1,  \\
  \begin{split} &\forall v_i\in V_\bullet: \\
	  &a'_2(v_1,v_2) -a_2(v_1,v_2) = (-1)^{|v_1|}\phi_2(v_1,a_1(v_2)) +\phi_2(a_1(v_1),v_2) -a'_1\circ\phi_2(v_1,v_2),
  \end{split} \\
  \begin{split} a'_3(v,y,z) -a_3(v,y,z) =\;
   & (-1)^{|v|}\phi_2(v,a_2(y,z)) 
	 +\phi_2(a_2(v,y),z) -(-1)^{|v|}\phi_2(y,a_2(v,z)) \\
   &-a'_2(\phi_2(v,y),z) +a'_2(\phi_2(v,z),y) -a'_2(v,\phi_2(y,z)) \\
   &+\phi_3(a_1(v),y,z) +(-1)^{|v|}\phi_3(v,a_1(y),z) \\
   &-(-1)^{|v|}\phi_3(v,y,a_1(z)) -a'_1\circ\phi_3(v,y,z)
  \end{split} \\
  \begin{split}  a'_4(w,x,y,z) -a_4(w,x,y,z) =\; 
	  &-\phi_2(w,a_3(x,y,z)) -a'_3(\phi_2(w,x),y,z) \\
    &+\phi_3(w,x,a_2(y,z)) -a'_2(\phi_3(w,x,y),z) \\ 	 
	&-a'_2(\phi_2(w,x),\phi_2(y,z)) +\alt(w,x,y,z) \\
	\forall w,x,y,z\in V_{-1}
  \end{split}
\end{gather*}
\end{example}


\subsection{$L_\infty$-algebroids}\label{s:Linf2}
In Section~\ref{s:examQ} we see however that our $L_\infty$-algebra is infinite-dimensional, because the vector spaces are in fact spaces of sections of certain vector bundles.  Beside continuity of the operations, we also see that the bracket $[.,.]\:\Gamma(E)\wedge\Gamma(E)\to\Gamma(E)$ (which is part of $a_2$) fulfills a Leibniz rule, whereas $H\:\wedge^3E\to V$ (which is $a_3$) is $\smooth(M)$-linear.  We make this observation the foundation of the definition of $L_\infty[1]$-algebr\emph{oids} as follows.

\begin{vdef}\begin{enumerate}\item  An \emph{$L_\infty[1]$-algebroid} is a $\Z$-graded vector bundle $\VV$ over a smooth manifold $M$ together with a family of maps $a_k\:S^k V\to V$ and $\rho\:\VV\to TM$ of degree +1, where $V_\bullet=\Gamma(\VV_\bullet)$.  The brackets and $\rho$ have to be $\smooth(M)$-linear, except in degree 2 where the Leibniz rule holds
\[  a_2(X,f\cdot Y) = \rho(X)[f]\cdot Y +f\cdot a_2(X,Y),
\] for $X,Y\in V_\bullet$, $f\in V_0\cong\smooth(M)$ and $\rho(X)=0$ for $|X|\ne-1$.  The $a_k$ are required to fulfill the Jacobi identities \eqref{LinftyJacobi}.

\item  An $\N$-graded $L_\infty[1]$-algebroid is an $\N_-$-graded vector bundle $\VV$ over a smooth manifold $M$ together with a family of maps $a_k\:S^k V\to V$ of degree +1, where $V_\bullet=\Gamma(\VV'_\bullet)$ and $\VV'=\VV$ except in degree 0 where we add $M\times\R$.  The brackets have to be $\smooth(M)$-linear, except in degree 2 where the Leibniz rule holds
\[\label{NLinfLeibn}  a_2(X,f\cdot Y) = a_2(X,f)\cdot Y +f\cdot a_2(X,Y)
\] for $X,Y\in V_\bullet$, $f\in V_0\cong\smooth(M)$
.  The $a_k$ are required to fulfill the Jacobi identities \eqref{LinftyJacobi}.

We call $(\VV_\bullet,a_\bullet)$ an \emph{$n$-term $L_\infty[1]$-algebroid} if $V_\bullet$ is concentrated only in degrees $-n,\dots,0$.
\end{enumerate}
\end{vdef}
\begin{example}
Given a Lie algebroid $(A\to M,\rho,[.,.])$, then this is a 1-term $L_\infty[1]$-algebroid with $a_1=0$, $a_2(X,Y)=[X,Y]$,  and 
%
$a_2(X,f)=\rho(X)[f]$ for $X\in V_1=\Gamma(A)$ and $f\in V_0=\smooth(M)$.  The second Jacobi identity says that $\rho$ is a morphism of brackets, i.e.\ from $a_2$ on $V_1$ to the commutator bracket on $TM$ (which is always fulfilled for Lie algebroids).
\end{example}
In order to show the correspondence with N-manifolds, we will now restrict to $\N$-graded $L_\infty$-algebroids.

\begin{example}  Given a $Qp$-manifold, then this corresponds to an $\N$-graded $L_\infty[1]$-algebroid in the following way.  The $\N$-graded manifold corresponds to a tower of affine fibrations
\[  \M_p\susu \M_{p-1}\susu\dots\susu\M_1=E[1]\susu M  \label{tower}
\] where only the last step is a vector bundle.  The fibers are 
\[  \VV_k[k]\embed \M_k\susu \M_{k-1}  \label{sequence2}
\] and the affine structure is
\[\xymatrix{  \pi^*_{k-1}\VV_k[k]\ar[dr]&\laction& \M_k\ar[dl]  \\
  &\M_{k-1}&
}
\] where $\pi_k\:\M_k\to M$ is the projection to the smooth body.  A splitting of \eqref{sequence2} is equivalent to identifying the affine bundle $\M_k\susu\M_{k-1}$ with a graded vector bundle.  Thus after a choice of splittings on every level, we have identified $\M=\M_p\cong\VV$.  In the following we will assume the choice of such a splitting.  Now we write $Q$ in the form
\[\label{Qp}  Q=\rho^i_a\xi^a\partial_{x^i} +\sum_{k\ge1} C_{\alpha_1,\dots,\alpha_k}^\beta q^{\alpha_1}\dotsm q^{\alpha_k}\partial_{q^\beta}
\] where the coefficients $C$ are smooth functions on the base manifold $M$.  
Within this section we use the convention that the $\alpha_i$ and $\beta$ have degree at least 1 and the term with $\rho$ is the only one that has a partial derivative by $x^i$.  From the transformation behavior of the $C$ under coordinate changes with fixed splitting of Equation~\eqref{sequence2} we see that $\rho^i_a$ as well as $C_{\alpha_1,\dots,\alpha_k}^\beta$ except $C_{a\alpha_2}^\beta$ with $|\alpha_2|=|\beta|$ transform as tensors while the latter transform like structure functions of a bracket $|\alpha_2|=|\beta|=1$ or connections $|\alpha_2|=|\beta|\ge2$, respectively.  The definition of the $a_k$ is now as follows
\[\label{multibr}
  a_k(q^*_{\alpha_1},\dots,q^*_{\alpha_k}) = C_{\alpha_1,\dots,\alpha_k}^\beta q^*_\beta,  \\
\] together with $\smooth(M)$-linearity of $a_k$ for $k\ne 2$ and the definition
\[\label{a2function}  a_2(X,f) = \rho(X)[f]
\] where we interpret $\rho(X)=0$ for $|X|\ne-1$.
\end{example}

\begin{prop}\label{p:Lpalgd}  A $p$-term $L_\infty[1]$-algebroid is isomorphic to a $Qp$-manifold with splitting of the Sequences~\eqref{sequence2}.  Given the $Qp$-manifold, then the $\N_-$-graded vector bundle arises from the splitting and the multi-brackets are defined as in Equations~\eqref{multibr} and \eqref{a2function}.  Conversely, given a $p$-term $L_\infty[1]$-algebroid $(\VV_\bullet,a_\bullet)$, then we can define the $Qp$-manifold $\M=\VV$ and the $Q$-structure from Equations~\eqref{multibr} and \eqref{Qp}.
\end{prop}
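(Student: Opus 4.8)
The strategy is to assemble two results already established in the text: the algebraic Lemma above, which identifies pointed Q-structures on a graded vector \emph{space} with $L_\infty[1]$-algebras, and the coordinate-change analysis carried out for $p=1$ in Section~\ref{s:examQ} (Theorem~\ref{Vain}), which turns a Q1-manifold into a Lie algebroid by separating the tensorial from the non-tensorial components of $Q$. Running both at once, over a genuine base $M$, will give the claim. The one genuinely delicate point, I expect, will be the degree-by-degree bookkeeping of transformation laws together with the exact matching of the ``mixed'' Jacobi identities with the anchor axioms; everything else is a direct, if lengthy, generalisation.

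\textbf{From a split $Qp$-manifold to an $L_\infty[1]$-algebroid.} First I would fix a splitting of every affine fibration in the tower \eqref{tower}; by \eqref{sequence2} this identifies $\M$ with a graded vector bundle $\VV$ over $M$ whose degree-$k$ fibre coordinates are the generators of degree $k$, and I would put $V_\bullet=\Gamma(\VV'_\bullet)$ with $\VV'_0=M\times\R$ as in the definition. Writing $Q$ in these split coordinates as in \eqref{Qp}, with structure functions $C^\beta_{\alpha_1\dots\alpha_k}\in\smooth(M)$ and anchor coefficients $\rho^i_a\in\smooth(M)$, I would \emph{define} the operators $a_k$ by \eqref{multibr} (extended $\smooth(M)$-multilinearly for $k\neq2$) and set $a_2(X,f):=\rho(X)[f]$ as in \eqref{a2function}. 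Well-definedness is the first real step: under a degree-preserving coordinate change compatible with the fixed splitting --- the higher-degree analogue of \eqref{trafo} --- I would show that $\rho^i_a$ and all $C^\beta_{\alpha_1\dots\alpha_k}$ \emph{except} those $C^\beta_{a\alpha_2}$ with $|\alpha_2|=|\beta|$ transform as components of bundle maps $S^k\VV'\to\VV'$ (resp.\ $\rho\colon\VV'\to TM$), while the exceptional coefficients acquire an inhomogeneous term built from $\rho$ and the transition matrices, exactly as in \eqref{trafo2} and \eqref{trafo4}. The tensorial behaviour yields the $\smooth(M)$-linearity of $a_k$ for $k\neq2$, and the inhomogeneous term is precisely what produces the Leibniz rule \eqref{NLinfLeibn} for $a_2$ and nothing stronger.

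\textbf{$Q^2=0$ versus the Jacobi identities.} Next I would expand $Q^2=\tfrac12[Q,Q]$ in the split coordinates and split the equation $Q^2=0$ into its $\partial_{x^i}$-component and its $\partial_{q^\beta}$-components with $|\beta|\geq1$. The $\partial_{q^\beta}$-components are, up to $\smooth(M)$-valued coefficients and additional summands in which the anchor differentiates a $C$ in the $x$-directions, the very same combinatorial expression that appeared in the over-a-point Lemma; collecting monomials in the positive-degree generators one reads off precisely the Jacobi identities \eqref{LinftyJacobi} in which no degree-$0$ argument occurs. The $\partial_{x^i}$-component is $\rho^j_{[a}\rho^i_{b],j}=\tfrac12\rho^i_cC^c_{ab}$ (cf.\ \eqref{B1}), which is exactly the $n=2$ Jacobi identity evaluated on two degree-$-1$ sections and one degree-$0$ section, i.e.\ the statement that the anchor is a morphism of brackets. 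Conversely, any instance of \eqref{LinftyJacobi} that does involve a degree-$0$ slot follows automatically from the $\smooth(M)$-linearity/Leibniz statements together with the other Jacobi identities, so it imposes nothing new. Hence $Q^2=0$ is equivalent to the full set of Jacobi identities \eqref{LinftyJacobi}, and $(\VV_\bullet,a_\bullet)$ is a $p$-term $L_\infty[1]$-algebroid.

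\textbf{Converse, inverse, and functoriality.} For the opposite direction I would take a $p$-term $L_\infty[1]$-algebroid $(\VV_\bullet,a_\bullet)$, set $\M:=\VV$ (an N$p$-manifold with its tautological splitting), read the $C^\beta_{\alpha_1\dots\alpha_k}$ off the $a_k$ via \eqref{multibr} and the $\rho^i_a$ off $a_2(\cdot,x^i)$, and define $Q$ by \eqref{Qp}; the Leibniz rule \eqref{NLinfLeibn} guarantees that the non-tensorial coefficients glue to a well-defined degree-$+1$ vector field, and $Q^2=0$ follows from the Jacobi identities by the equivalence just proved. The two assignments are mutually inverse on coefficients, essentially by inspection. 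Finally, to promote ``isomorphic'' to an equivalence of categories I would argue, exactly as for Lie algebroids in \cite{BKS} and as sketched in the remark after the definition of $L_\infty$-morphisms, that a morphism of $L_\infty[1]$-algebroids is nothing but a Q-morphism of the associated split N-manifolds intertwining the $Q$-structures, a change of splitting acting on the $L_\infty[1]$-algebroid by an essential isomorphism of the form written out in the examples of Subsection~\ref{s:Linf1} (encoded by the cochain $(L^D_{ab})$ via \eqref{split1}--\eqref{split3}). As noted, the only step I expect to require genuine care rather than routine computation is the first one: pinning down the transformation laws of the $C^\beta_{\alpha_1\dots\alpha_k}$ for all $1\leq k\leq p$ and all degree combinations, since only then do the ``mixed'' Jacobi identities align with the anchor axioms.
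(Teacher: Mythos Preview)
Your overall strategy---expanding $Q^2$ in split coordinates and matching the resulting equations against the Jacobi identities \eqref{LinftyJacobi}---is the same as the paper's. There is, however, a concrete gap in your handling of the $\partial_{x^i}$-component. You assert that this component ``is'' the single condition \eqref{B1}, but for $p\geq2$ it also contains the term $\rho^i_a\,C^a_D\,b^D$ (visible in the paper's explicit expansion of $Q^2$), whose vanishing is $\rho\circ a_1|_{V_{-2}}=0$, i.e.\ Eq.~\eqref{B2}. This is the $n=2$ instance of \eqref{LinftyJacobi} evaluated on $(v,f)$ with $v\in V_{-2}$ and $f\in V_0$: a genuine degree-$0$-slot identity that does \emph{not} follow from $\smooth(M)$-linearity, the Leibniz rule, or the degree-$0$-free Jacobi identities. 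Consequently your blanket claim that every degree-$0$-slot instance of \eqref{LinftyJacobi} ``imposes nothing new'' is false as stated. (Incidentally, the anchor-morphism identity you do single out is the $n=3$ case, not $n=2$: it has three arguments.)

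The repair is localized. The $\partial_{x^i}$-component yields exactly two independent conditions, \eqref{B1} and \eqref{B2}, corresponding to the $n=3$ and $n=2$ Jacobi identities with one function argument; for arbitrary $p$ there are no further ones, since $Q(x^i)=\rho^i_a\xi^a$ and the only degree-two monomials in positive-degree generators are $\xi^b\xi^c$ and $b^D$. Once both are recorded, the remaining degree-$0$-slot Jacobi identities \emph{do} reduce---via the Leibniz rule---to these together with the function-free ones; this reduction is precisely what the paper verifies by its explicit $f\cdot v_1$ computations for $n\geq4$ and $n=3$, and you should carry out (or at least indicate) the same step rather than asserting it.
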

\begin{proof}  Given the above formula for the $Q$-structure, we can compute $Q^2$ and obtain analogous to $Qp$-vector spaces
\begin{gather*}
 \begin{split} Q^2 =\,& 2\rho^i_{[a}\rho^j_{b],i}\xi^a\xi^b\partial_{x^i} +\tfrac12C^a_{bc}\rho^i_a\xi^b\xi^c\partial_{x^i} +C^a_D\rho^i_ab^D\partial_{x^i}  \\
  &+\sum_{1+|\alpha_2|+\dots+|\alpha_n|=|\beta|+1} \frac1{(n-1)!}\rho^i_{(a}\partial_{x^i}C^\beta_{\alpha_2\dots\alpha_n)}q^aq^{\alpha_2}\dotsm q^{\alpha_n}\partial_{q^\beta}  \\
  &+\sum_{n\ge k\ge1}\sum_{\substack{|\alpha_1|+\dots+|\alpha_k|=|\beta|+1\\ |\beta|+|\alpha_{k+1}|+\dots+|\alpha_n|=|\gamma|+1}} \frac1{k!(n-k)!}C_{\alpha_1,\dots,\alpha_k}^\beta C_{\beta,\alpha_{k+1},\dots,\alpha_n}^\gamma q^{\alpha_1}\dotsm q^{\alpha_n} \partial_{q^\gamma}  \\
  =\,& 2\rho^i_{[a}\rho^j_{b],i}\xi^a\xi^b\partial_{x^i} +\tfrac12C^a_{bc}\rho^i_a\xi^b\xi^c\partial_{x^i} +C^a_D\rho^i_ab^D\partial_{x^i}  \\
  &+\sum_{k\ge2}\sum_{\substack{\alpha_1\le\alpha_2\le\dots\le\alpha_k\\ |\beta|=|\alpha_1|+\dots+|\alpha_k|-1}} \rho^i_{(a}C^\beta_{\alpha_1,\dots,\alpha_k),i}\xi^a q^{\alpha_1}\dotsm q^{\alpha_k}\partial_{q^\beta}  \\
  &+\sum_{n\ge k\ge 1}\sum_{\substack{\alpha_1\le\alpha_2\le\dots\le\alpha_n\\ |\beta|=|\alpha_1|+\dots+|\alpha_n|-1}}\sum_{\sigma\in\Unsh(n-k+1,k-1)} \Koszul(\sigma)C^\beta_{\alpha_{\sigma1},\dots\alpha_{\sigma k}} C_{\beta\alpha_{\sigma k+1},\dots,\alpha_{\sigma n}}^\gamma q^{\alpha_1}\dotsm q^{\alpha_n} \partial_{q^\gamma}  \end{split}
\intertext{on the other hand the Jacobi identities read for every $n\ge1$}
 \begin{split} &\sum_{k+l=n+1}\sum_{\sigma\in\Unsh(l,k-1)} a_k(a_l(q^*_{\alpha_{\sigma1}},\dots,q^*_{\alpha_{\sigma l}}),q^*_{\alpha_{\sigma l+1}},\dots,q^*_{\alpha_{\sigma n}})  \\
  =\,&
   \sum_{k+l=n+1}\sum_{\sigma\in\Unsh(l,k+1)} \Koszul(\sigma)C_{\alpha_{\sigma1},\dots,\alpha_{\sigma l}}^\beta C_{\beta,\alpha_{\sigma l+1},\dots,\alpha_{\sigma n}}^\gamma q^*_\gamma  \end{split}
\intertext{Given a $Q$-structure, it is again sufficient to prove the Jacobi identities for $\alpha_1\le\dots\le\alpha_n$ in any fixed order of the indices together with the Leibniz rule for $Q^2$ for every $n\ge1$}
 \begin{split}  &\sum_{k=1}^n\sum_{\sigma\in\Unsh(k,n-k)_1} \Koszul(\sigma)a_{n+1-k}(a_k(f\cdot v_1,v_{\sigma2},\dots,v_{\sigma k}),v_{\sigma k+1},\dots,v_{\sigma n})  \\
 &+\sum_{k=1}^n\sum_{\sigma\in\Unsh(k,n-k)^1} \Koszul(\sigma)a_{n+1-k}(a_k(v_{\sigma 1},\dots,v_{\sigma k}),f\cdot v_1,v_{\sigma k+2},\dots,v_{\sigma n})  \end{split}
\intertext{where we denote $\Unsh(k,l)_1$ the unshuffles with $\sigma1=1$ and $\sigma\in\Unsh(k,l)^1$ the unshuffles with $\sigma k+1=1$.  For $n\ge4$ we obtain}
 \begin{split} =\,& f\cdot\text{Jacobi}+\sum_{\sigma\in\Unsh(2,n-2)_1} \Koszul(\sigma)a_2(f,v_{\sigma2})a_{n-1}(v_1,v_{\sigma3},\dots,v_{\sigma n})  \\
 &+\sum_{\sigma\in\Unsh(n-1,1)_1} \Koszul(\sigma) a_2(f,v_{\sigma n}) a_{n-1}(v_1,v_{\sigma2},\dots,v_{\sigma n-1})  \\
 &+\sum_{\sigma=(2,3,\dots,n,1)} \Koszul(\sigma) a_2(a_{n-1}(v_2,\dots,v_n),f) \cdot v_1
 \end{split}
\intertext{For $n=3$ we obtain}
 \begin{split} &a_2(a_2(f\cdot v_1,v_2),v_3) -a_2(a_2(f\cdot v_1,v_3),v_2) -a_2(a_2(v_2,v_3),f\cdot v_1)  \\
 =\,& f\bigg(a_2(a_2(v_1,v_2),v_3) -a_2(a_2(v_1,v_3),v_2) -a_2(a_2(v_2,v_3),v_1)\bigg) \\
     &+\bigg(a_2(a_2(f,v_3),v_2) -a_2(a_2(f,v_2),v_3) -a_2(a_2(v_2,v_3),f) \bigg)\cdot v_1
 \end{split}
\end{gather*}  Similar computations for $n=2$ and $n=1$ lead to the remaining terms in $Q^2$.  We see therefore that $Q^2=0$ implies a consistent definition of $a_k$ as well as the Jacobi identities and Leibniz rule.  Conversely given an $L_\infty[1]$-algebroid the definition of $Q$ is coordinate independent and it follows that $Q^2=0$.
\end{proof}

\subsection{The case $p=2$}
\begin{rem}  Given a $Q2$-manifold, we can identify its components according to Theorem~\ref{theo1} as $[.,.]\:\wedge^2\Gamma(E)\to\Gamma(E)$, $\rho\:E\to TM$, $t\:V\to E$, $H\:\wedge^3 E\to V$, $\Ena\:\Gamma(E)\otimes\Gamma(V)\to\Gamma(V)$.  According to Proposition~\ref{p:Lpalgd} we have the following 2-term $L_\infty$-algebroid: $V_0=\smooth(M)$, $V_{-1}=\Gamma(E)$, $V_{-2}=\Gamma(V)$,
\begin{align*}
  a_1\:\quad V_{-2}\to V_{-1}&: v\mapsto t(v), 0 \text{ otherwise},\\
  a_2\:\quad V_{-1}\otimes V_k\mapsto V_k &: \begin{cases}  
    a_2(\phi,\psi)=[\phi,\psi],\\
    a_2(\psi,f)=\rho(\psi)[f],\\
	a_2(\psi,v)=\Econn_\psi v,\\
	0 \text{ otherwise},\end{cases}  \\
  a_3\:\quad \wedge^3 V_{-1}\to V_{-2} &: a_3|_{\wedge^3 V_{-1}}=H, 
    0 \text{ otherwise},
\end{align*}
where $f\in V_0$, $\phi,\psi,\psi_k\in V_{-1}$, and $v\in V_{-2}$.  Now the defining equations correspond to the following Jacobi identities. $n=1$: from the definition of $a_1$; $n=2$: the two equations in the text of the definition of Lie 2-algebroid; $n=3$: Equations~\eqref{Jac}, \eqref{repr1}, and \eqref{rhomor}; $n=4$: Equation~\eqref{strange}; and the higher Jacobi identities vanish trivially.
\end{rem}

\begin{prop}  Given a $Q2$-manifold with splitting $\sigma$ of the Sequence~\eqref{sequence}, then a change of splitting $\sigma\mapsto \sigma+L^\#$ corresponds to an essential isomorphism of 2-term $L_\infty$-algebras $\phi_2=L$.
\end{prop}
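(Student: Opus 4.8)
The plan is to translate the ``change of splitting'' story, which has already been worked out on both sides, into a single matching statement. On the supergeometric side we know from the discussion around \eqref{trafo}--\eqref{trafo5} and from \eqref{split1}--\eqref{split3} exactly how the structural data $([.,.],\Ena,H)$ of a $Q2$-manifold transform when the section $\sigma$ of \eqref{sequence} is changed by $L\in\Omega^2(E,V)$, which over a point is just a linear map $L\colon\wedge^2E\to V$. On the $L_\infty$-side we have the explicit formulas for an essential isomorphism of $2$-term $L_\infty$-algebras recorded in the Example following the definition of a morphism of $L_\infty$-algebras (the one with $a'_1=a_1$ and the two displayed equations for $a'_2$ and $a'_3$ in terms of $\phi_2$). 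The claim is simply that setting $\phi_2=L$ makes these two descriptions coincide, via the dictionary of the Remark in Section~\ref{s:Linf2} that identifies $a_1=t$, $a_2=([.,.],\rho,\Ena)$, $a_3=H$.

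First I would recall, from Proposition~\ref{p:Lpalgd} and the preceding Example, that a change of splitting $\tilde b^D=b^D+\tfrac12 L^D_{ab}\xi^a\xi^b$ is a degree-preserving coordinate change on the $N2$-manifold, hence is an isomorphism of the associated $L_\infty[1]$-algebroids; over a point it fixes $V_{-1}=E$ and $V_{-2}=V$ and acts as the identity on $V_{-1}$ (since $\tilde\xi^a=\xi^a$), so the induced $L_\infty$-morphism $\tilde\phi$ has $\phi_1=\id$ and its only nontrivial higher component is a map $\phi_2\colon S^2V_\bullet\to V_\bullet$ of degree $0$; because $L$ only mixes two degree-$1$ coordinates into a degree-$2$ one, $\phi_2$ is supported on $\wedge^2 V_{-1}=\wedge^2E$ with values in $V_{-2}=V$, and by inspecting the coordinate change it is precisely $L$. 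This already gives ``essential isomorphism with $\phi_2=L$'' as a structural fact.

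Next I would verify that the three component equations of an essential isomorphism of $2$-term $L_\infty$-algebras are exactly \eqref{split1}--\eqref{split3} under the dictionary. The equation $a'_1=a_1$ says $\tilde t=t$, which matches the observation (after \eqref{trafo5}) that $t$ is unaffected by $L$-contributions. The second equation, $a'_2(v_1,v_2)-a_2(v_1,v_2)=(-1)^{|v_1|}\phi_2(v_1,a_1(v_2))+\phi_2(a_1(v_1),v_2)-a'_1\phi_2(v_1,v_2)$, specialises in the two relevant bidegrees: for $v_1,v_2\in V_{-1}=E$ it reads $\widetilde{[\psi_1,\psi_2]}-[\psi_1,\psi_2]=L(\psi_1,t(\cdot))\text{-type terms}$ — more precisely, using $a_1=t$ and $|v_1|=-1$ it becomes $\widetilde{[\psi_1,\psi_2]}-[\psi_1,\psi_2]=\phi_2(t(\cdot))$ contributions that collapse to $+t(L(\psi_1,\psi_2))$ after the bookkeeping, i.e.\ \eqref{split1}; for $v_1\in V_{-1}$, $v_2\in V_{-2}$ it becomes $^E\widetilde\nabla_{\psi}v - \Econn_\psi v = L(\psi,t(v))$, i.e.\ \eqref{split2}; the purely function-valued piece gives $\tilde\rho=\rho$. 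The third equation, restricted to $x,y,z\in V_{-1}$, is $a'_3-a_3$ expressed through $\phi_2\circ a_2$ and $a'_2\circ\phi_2$, which after substituting $a_2=[.,.]$ and $a'_2=[.,.]+t(L(\cdot,\cdot))$ reproduces $\tilde H=H+\ED L - C$ with $C(\psi_1,\psi_2,\psi_3)=L(t(L(\psi_1,\psi_2)),\psi_3)+\cycl$ — exactly \eqref{split3} (note $\ED$ here is the de Rham-type differential built from $\nabla$, which over a point is the Chevalley--Eilenberg differential of $E$ acting on $V$-valued cochains). Conversely, since $\phi_1=\id$, these three relations determine $\tilde\phi$ uniquely and show any such essential isomorphism arises from a unique $L$.

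The only real bookkeeping obstacle is sign/convention matching: the Example's formulas carry the $(-1)^{|v_1|}$ factors and the $L_\infty[1]$ (versus $L_\infty$) shift, and the $C$-term in \eqref{split3} is, as the text warns, ``slightly tricky''. I expect the main step to be checking that the quadratic correction $-C$ coming from $a'_2\circ(\phi_2\otimes\phi_2)$-type terms in the third equation matches $B(t(B(\psi_1,\psi_2)),\psi_3)+\cycl$ with the right sign; everything else is a direct substitution. Since Proposition~\ref{p:Lpalgd} already guarantees that a change of splitting is \emph{some} $L_\infty$-isomorphism with $\phi_1=\id$, one may also shortcut the sign issue: uniqueness of $\phi_2$ given $\phi_1=\id$ forces it to be the map read off from the coordinate change, namely $L$, and then \eqref{split1}--\eqref{split3} are automatically the component equations. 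I would present the argument in that order — structural identification first, explicit component check second — so the proof is short and the sign computation is only a confirmation.
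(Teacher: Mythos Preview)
Your proposal is correct and follows the same approach as the paper, which in fact provides no separate proof for this proposition: the argument is the remark preceding the $L_2$-algebra Example in Appendix~\ref{s:Linf1}, where it is stated that a change of splitting $\tilde b^D=b^D+\tfrac12 L^D_{ab}\xi^a\xi^b$ yields $\phi_2=L$ and that ``the Formulas~\eqref{split1}--\eqref{split3} give the above formulas'' for an essential isomorphism. Your write-up is more detailed than the paper's --- in particular the structural shortcut via Proposition~\ref{p:Lpalgd} (a degree-preserving coordinate change is automatically an $L_\infty$-isomorphism with $\phi_1=\id$, so $\phi_2$ is forced to be $L$) is a nice addition that the paper does not spell out.
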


\newpage
\addcontentsline{toc}{section}{References}
\newcommand\eprint[2][]{\href{http://arxiv.org/abs/#2}{arXiv: #2}}
\newcommand\doi[1]{\href{http://dx.doi.org/#1}{DOI: #1}}
\newcommand{\etalchar}[1]{$^{#1}$}

\end{document}